\let\old@lstKV@SwitchCases\lstKV@SwitchCases
\def\lstKV@SwitchCases#1#2#3{}
\let\lstKV@SwitchCases\old@lstKV@SwitchCases
    \def\lst@PlaceNumber{\lst@linebgrd}%
\def\lst@PlaceNumber{\llap{\normalfont
                \lst@numberstyle{\thelstnumber}\kern\lst@numbersep}\lst@linebgrd}\\%
\def\lst@PlaceNumber{\rlap{\normalfont
                \kern\linewidth \kern\lst@numbersep
                \lst@numberstyle{\thelstnumber}}\lst@linebgrd}%
\newcolumntype{R}[1]{>{\raggedleft\let\newline\\\arraybackslash\hspace{0pt}}m{#1}}
\newtheorem{assumption}{Assumption}[section]
\theoremstyle{definition}
\newtheorem{definition}{Definition}[section]
\newenvironment{DIFnomarkup}{}{}
\lstdefinelanguage{muf}{
  morekeywords={
    let, in, rec, where, end, if, then, else, do, done, run, 
    fun, const, 
    or, and, not, as, mod, 
    gaussian, beta, bernoulli, binomial, delta, observe, student_t,
    gamma, invgamma, fold, fold_resample,
    var, dist, const, hd, cons
  },
  emph={symbolic, sample},
  emphstyle={\color{@red}},
  sensitive=true,
  morestring=[b]",
  escapechar=\%,
  columns=fullflexible,
  keepspaces=true,
  basicstyle=\ttfamily,
  comment=[l][\color{gray}]{\#},
  mathescape=true
}
\def\zl{\lstinline[basicstyle=\small\ttfamily\color{blue!50!black}]}
\def\zlm{\lstinline[basicstyle=\footnotesize\ttfamily\color{blue!50!black}]}
\definecolor{@red}{RGB}{222, 62, 70}
\definecolor{@lightblue}{RGB}{129, 140, 166}
\definecolor{@green}{RGB}{112, 130, 0}
\definecolor{@lightlightblue}{RGB}{236, 243, 248}
\newcommand{\symbolic}[0]{\mfSymbolic}
\newcommand{\sample}[0]{\mfSample}
\newcommand{\siren}{\textsc{Siren}}
\newcommand{\abs}[1]{\hat{#1}}
\newcommand{\var}{\textit{var}}
\newcommand{\fail}{\textbf{fail}}
\newcommand{\afail}{\widehat{\textbf{fail}}}
\newcommand{\forgetr}{\textsc{forgetr}}
\newcommand{\symvalue}{\textsc{Value}}
\newcommand{\symassume}{\textsc{Assume}}
\newcommand{\symobserve}{\textsc{Observe}}
\newcommand{\swap}{\textsc{swap}}
\newcommand{\aswap}{\widehat{\swap}}
\newcommand{\const}{\textsc{const}}
\newcommand{\aconst}{\widehat{\const}}
\newcommand{\pure}{\textsc{pure}}
\newcommand{\hoist}{\textsc{hoist}}
\newcommand{\ahoist}{\widehat{\hoist}}
\newcommand{\graft}{\textsc{graft}}
\newcommand{\agraft}{\widehat{\graft}}
\newcommand{\prune}{\textsc{prune}}
\newcommand{\aprune}{\widehat{\prune}}
\newcommand{\marginalize}{\textsc{marginalize}}
\newcommand{\amarginalize}{\widehat{\marginalize}}
\newcommand{\ahoisthelper}{\widehat{\textsc{hoist\_helper}}}
\newcommand{\aeval}{\widehat{\textsc{eval}}}
\newcommand{\intervene}{\textsc{intervene}}
\newcommand{\aintervene}{\widehat{\intervene}}
\newcommand{\asymassume}{\widehat{\textsc{Assume}}}
\newcommand{\asymvalue}{\widehat{\textsc{Value}}}
\newcommand{\asymobserve}{\;\widehat{\textsc{Observe}}}
\newcommand{\arandomvar}{\hat{\randomvar}}
\newcommand{\aRandomvar}{\widehat{\Randomvar}}
\newcommand{\aStuff}{\hat{\Stuff}}
\newcommand{\anormal}[2]{{\widehat{\mathcal{N}}({#1}, {#2})}}
\newcommand{\anormalbig}[2]{{\widehat{\mathcal{N}}\left({#1}, {#2}\right)}}
\newcommand{\abern}[1]{{\widehat{\mathrm{B}}({#1})}}
\newcommand{\adeltad}[1]{{\hat{\delta}({#1})}}
\newcommand{\adeltasample}[1]{\hat{\delta_{s}}({#1})}
\newcommand{\agamma}[2]{\widehat{\Gamma}({#1}, {#2})}
\newcommand{\ainvgamma}[2]{\widehat{\Gamma}^{-1}({#1}, {#2})}
\newcommand{\adistr}[1]{\hat{D}^{#1}}
\newcommand{\aDistr}{\hat{\Distr}}
\newcommand{\distr}[1]{D^{#1}}
\newcommand{\Distr}{D}
\newcommand{\Expr}{E}
\newcommand{\expr}[1]{E^{#1}}
\newcommand{\progexpr}{e}
\newcommand{\aExpr}{\hat{\Expr}}
\newcommand{\aexpr}[1]{\hat{E}^{#1}}
\newcommand{\aplus}[2]{{{#1} \; \hat{\texttt{+}} \; {#2}}}
\newcommand{\aminus}[2]{{{#1} \; \hat{\texttt{-}} \; {#2}}}
\newcommand{\amult}[2]{{{#1} \; \hat{\texttt{*}} \; {#2}}}
\newcommand{\adivision}[2]{{#1}\; \hat{/}\; {#2}}
\newcommand{\aite}[3]{{\widehat{\texttt{ite}}({#1}, {#2}, {#3})}}
\newcommand{\constant}{c}
\newcommand{\aconstant}{\hat{\constant}}
\newcommand{\val}{v}
\newcommand{\aval}{\hat{\val}}
\newcommand{\Val}{\mathbb{V}}
\newcommand{\aVal}{\hat{\mathbb{\Val}}}
\newcommand{\listhd}{l_{\mit{hd}}}
\newcommand{\listtl}{l_{\mit{tl}}}
\newcommand{\alisthd}{\hat{l}_{\mit{hd}}}
\newcommand{\alisttl}{\hat{l}_{\mit{tl}}}
\newcommand{\distop}{\ensuremath{\mit{dist}}}
\newcommand{\etop}{\mtt{TopE}}
\newcommand{\dtop}{\mtt{TopD}}
\newcommand{\eunk}[2]{\mtt{UnkE}_{#1}(#2)}
\newcommand{\dunk}[2]{\mtt{UnkD}_{#1}(#2)}
\newcommand{\cunk}{\mtt{UnkC}}
\newcommand{\dom}{\mathrm{dom}}
\newcommand{\rename}{\textsc{rename}}
\newcommand{\narrowjoin}{\textsc{rename\_join}}
\newcommand{\rvnames}{\mit{\widehat{RV}_{canon}}}
\newcommand{\rvname}[1]{\mit{\widehat{RV}^{#1}}}
\newcommand{\initialize}{\textsc{initialize}}
\newcommand{\ainitialize}{\widehat{\initialize}}
\newcommand{\realize}{\textsc{realize}}
\newcommand{\arealize}{\widehat{\realize}}
\newcommand{\score}{\textsc{score}}
\newcommand{\conjdistr}{\textsc{conjugate\_dist}}
\newcommand{\aconjdistr}{\widehat{\conjdistr}}
\newcommand{\scoreval}{s}
\newcommand{\annotation}{\kappa}
\newcommand{\aannotation}{\hat{\annotation}}
\newcommand{\benchmark}[1]{\emph{#1}}
\newcommand{\bOutlier}{\benchmark{Outlier}}
\newcommand{\bEnvnoise}{\benchmark{EnvNoise}}
\newcommand{\bGtree}{\benchmark{Tree}}
\newcommand{\bOutlierheavy}{\benchmark{OutlierHeavy}}
\newcommand{\bSlds}{\benchmark{SLDS}}
\newcommand{\bRunner}{\benchmark{Runner}}
\newcommand{\bRadar}{\benchmark{Radar}}
\newcommand{\bAircraft}{\benchmark{Aircraft}}
\newcommand{\bNoise}{\benchmark{Noise}}
\newcommand{\bSlam}{\benchmark{SLAM}}
\newcommand{\bWheels}{\benchmark{Wheels}}
\newcommand{\ssi}{SSI}
\newcommand{\ds}{DS}
\newcommand{\bp}{SMC w/ BP}
\newcommand{\weight}{w}
\newcommand{\Weight}{W}
\newcommand{\categorical}{\textsc{Cat}}
\newcommand{\draw}{\textsc{Draw}}
\newcommand{\listl}{\ensuremath{l}}
\newcommand{\alistl}{\ensuremath{\hat{l}}}
\newcommand{\varset}{\ensuremath{S_{\mit{rv}}}}
\newcommand{\avarset}[1]{\ensuremath{\hat{S}_{\mit{rv}#1}}}
\newcommand{\particleset}{\ensuremath{S_{\mit{p}}}}
\newcommand{\traceset}{\ensuremath{S_{\mit{t}}}}
\algnewcommand\algorithmicswitch{\textbf{switch}}
\algnewcommand\algorithmiccase{\textbf{case}}
\algnewcommand\algorithmicassert{\texttt{assert}}
\algnewcommand\Assert[1]{\State \algorithmicassert(#1)}%
\newcommand{\plan}[1]{\tiny{\textcolor{gray}{(Plan #1)}}}
\newcommand{\defaultplan}[0]{\tiny{\textcolor{gray}{(Default)}}}
\newcommand{\is}{\mathrel{\leftarrow}}
\newcommand{\mit}[1]{\ensuremath{\mathit{#1}}}
\newcommand{\mtt}[1]{\ensuremath{\texttt{\small{#1}}}}
\newcommand{\mttm}[1]{\ensuremath{\texttt{\footnotesize{#1}}}}
\newcommand{\mkw}[1]{\ensuremath{\textcolor{blue!50!black}{\mtt{#1}}}}
\newcommand{\mkwm}[1]{\ensuremath{\textcolor{blue!50!black}{\mttm{#1}}}}
\newcommand{\sem}[2]{\ensuremath{\llbracket #1 \rrbracket_{#2}}}
\newcommand{\psem}[2]{\{\mkern-4.8mu[ #1 ]\mkern-4.8mu\}_{#2}}
\newcommand{\prog}{\mit{prog}}
\newcommand{\distribution}{\textsc{Distribution}}
\newcommand{\adistribution}{\widehat{\textsc{Distribution}}}
\newcommand{\reachable}{\textsc{reachable}}
\newcommand{\mfInt}{\ensuremath{\mtt{int}}}
\newcommand{\mfReal}{\ensuremath{\mtt{real}}}
\newcommand{\mfUnit}{\ensuremath{\mtt{unit}}}
\newcommand{\amfUnit}{\ensuremath{\widehat{\mfUnit}}}
\newcommand{\mfError}{\ensuremath{\mtt{error}}}
\newcommand{\mfLetFun}[3]{\ensuremath{\mkw{let}\ #1\ \mtt{=} \ \mkw{fun} \ #2 \to \ #3}}
\newcommand{\mfPair}[2]{\ensuremath{\mtt{(} #1 \mtt{,}\ #2 \mtt{)}}}
\newcommand{\mfNil}{\ensuremath{\mkw{nil}}}
\newcommand{\mfCons}[2]{\ensuremath{\mkw{cons}\mtt{(}#1 \mtt{,}\ #2 \mtt{)}}}
\newcommand{\mfOp}[1]{\ensuremath{\mit{op}\mtt{(} #1 \mtt{)}}}
\newcommand{\mfApp}[2]{\ensuremath{#1\mtt{(} #2 \mtt{)}}}
\newcommand{\mfIf}[3]{\ensuremath{\mkw{if} \ #1 \ \mkw{then} \ #2 \ \mkw{else} \ #3}}
\newcommand{\mfFold}[3]{\ensuremath{\mkw{fold}\mtt{(}#1\mtt{,}\ #2 \mtt{,} \ #3 \mtt{)}}}
\newcommand{\mfLetIn}[3]{\ensuremath{\mkw{let}\ #1\ \mtt{=}\ #2\ \mkw{in}\ #3}}
\newcommand{\mfLetRvNoAnn}[3]{\ensuremath{\mkw{let}\ #1 \is #2\ \mkw{in}\ #3}}
\newcommand{\mfLetRv}[4]{\ensuremath{\mkw{let}\ #1 \ #2 \is #3\ \mkw{in}\ #4}}
\newcommand{\mfObserve}[2]{\ensuremath{\mkw{observe}\mtt{(}#1\mtt{,}\ #2 \mtt{)}}}
\newcommand{\mfResample}{\ensuremath{\mkw{resample}}}
\newcommand{\noannotation}{\varepsilon}
\newcommand{\mfSymbolic}{\ensuremath{\mkw{symbolic}}}
\newcommand{\amfSymbolic}{\ensuremath{\widehat{\mfSymbolic}}}
\newcommand{\mfSample}{\ensuremath{\mkw{sample}}}
\newcommand{\amfSample}{\ensuremath{\widehat{\mfSample}}}
\newcommand{\abstr}{\alpha}
\newcommand{\concret}{\gamma}
\newcommand{\Real}{{\mathbb{R}}}
\newcommand{\Randomvar}{{\mathcal{RV}}}
\newcommand{\randomvar}{{X}}
\newcommand{\programvar}{{x}}
\newcommand{\Stuff}{N}
\newcommand{\normal}[2]{{\mathcal{N}({#1}, {#2})}}
\newcommand{\bern}[1]{{\mathrm{B}({#1})}}
\newcommand{\deltad}[1]{{\delta({#1})}}
\newcommand{\SSIdeltasample}[1]{\delta_{s}(#1)}
\newcommand{\deltasample}[1]{\delta_{s}(#1)}
\newcommand{\invgamma}[2]{\Gamma^{-1}({#1}, {#2})}
\newcommand{\SSIgamma}[2]{\Gamma({#1}, {#2})}
\newcommand{\decl}{d}
\newcommand{\plus}[2]{{{#1} \; \texttt{+} \; {#2}}}
\newcommand{\minus}[2]{{{#1} \; \texttt{-} \; {#2}}}
\newcommand{\mult}[2]{{{#1} \; \texttt{*} \; {#2}}}
\newcommand{\division}[2]{{{#1}}\;/\;{{#2}}}
\newcommand{\ite}[3]{{\texttt{ite}({#1}, {#2}, {#3})}}
\newcommand{\bnfarrow}{\Coloneqq}
\newcommand{\Annotation}{A}
\newcommand{\aAnnotation}{\hat{\Annotation}}
\newcommand{\symbstate}{g}
\newcommand{\Symbstate}{G}
\newcommand{\symbstateD}[2]{\symbstate_{#1}({#2})_\mathrm{d}}
\newcommand{\symbstatePV}[2]{\symbstate_{#1}({#2})_\mathrm{a}}
\newcommand{\symbstateS}[2]{\symbstate_{#1}({#2})_\mathrm{n}}
\newcommand{\symbstateP}[2]{\symbstate_{#1}({#2})_\pi}
\newcommand{\symbstateK}[2]{\symbstate_{#1}({#2})_\kappa}
\newcommand{\av}[1]{\aval_{#1}}
\newcommand{\asymbstate}[1]{\hat{g}_{#1}}
\newcommand{\aSymbstate}{\hat{G}}
\newcommand{\asymbstateD}[2]{\asymbstate{#1}({#2})_\mathrm{d}}
\newcommand{\asymbstatePV}[2]{\asymbstate{#1}({#2})_\mathrm{a}}
\newcommand{\asymbstateS}[2]{\asymbstate{#1}({#2})_\mathrm{n}}
\newcommand{\letin}[1]{\ensuremath{\mathit{let} \; #1 \; \textit{in} \;}}
\newcommand{\pclstep}[6]{\ensuremath{{#1}, {#2} \downarrow^{#6} {#3},{#4},{#5}}}
\newcommand{\pclsstepmh}[3]{\ensuremath{{#1} \downdownarrows_{#3} {#2}}}
\newcommand{\pclsstep}[2]{\ensuremath{{#1} \downdownarrows {#2}}}
\newcommand{\pclpstep}[3]{\ensuremath{{#2} \Downarrow_{#1} {#3}}}
\newcommand{\doresample}{\ensuremath{\mathit{r}}}
\newcommand{\false}{\ensuremath{\mathit{false}}}
\newcommand{\true}{\ensuremath{\mathit{true}}}
\newcommand{\atrue}{\widehat{\true}}
\newcommand{\afalse}{\widehat{\false}}
\newcommand{\set}[1]{\ensuremath{\left\{ {#1}\right\}}}
\newcommand{\cpclstep}[2]{\ensuremath{{#1}\; \tilde{\downarrow}\; {#2}}}
\newcommand{\cpclsstep}[2]{\ensuremath{{#1}\; \tilde{\downdownarrows}\; {#2}}}
\newcommand{\cpclpstep}[2]{\ensuremath{{#1}\; \tilde{\Downarrow}\; {#2}}}
\newcommand{\cpclstepstar}[2]{\ensuremath{{#1}\; \tilde{\downarrow}^*\; {#2}}}
\newcommand{\apclstep}[4]{\ensuremath{{#1}, {#2}\; \hat{\downarrow}\; {#3},{#4}}}
\newcommand{\apclsstep}[2]{\ensuremath{{#1}\; \hat{\downdownarrows}\; {#2}}}
\newcommand{\apclpstep}[2]{\ensuremath{{#1}\; \hat{\Downarrow}\; {#2}}}
\newcommand{\weakeq}[3]{{#1} \cong_{#3} {#2}}
\newcommand{\evalset}{S_{c}}
\newcommand{\dset}{S_{D}}
\newcommand{\adset}{\hat{S}_{D}}
\newcommand{\pset}{S_{p}}
\newcommand{\apset}{\hat{S}_{p}}
\newcommand{\vset}{S_{v}}
\newcommand{\marginalized}[3]{\mtt{marginalized}(#1,#2,#3)}
\newcommand{\marginalizedroot}[1]{\mtt{marginalized}(#1)}
\newcommand{\initialized}[2]{\mtt{initialized}(#1,#2)}
\newcommand{\realized}{\mtt{realized}}
\newcommand{\amarginalized}[3]{\widehat{\mtt{marginalized}}(#1,#2,#3)}
\newcommand{\amarginalizedroot}[1]{\widehat{\mtt{marginalized}}(#1)}
\newcommand{\ainitialized}[2]{\widehat{\mtt{initialized}}(#1,#2)}
\newcommand{\arealized}{\widehat{\mtt{realized}}}
\newcommand{\topnode}[1]{\mtt{TopN}(#1)}
\newcommand{\setunk}{\textsc{set\_top}}
\begin{document}

\title{Inference Plans for Hybrid Particle Filtering}         %

\author{Ellie Y. Cheng}
\orcid{0009-0006-6128-0351}             %
\affiliation{
  \institution{MIT CSAIL}            %
  \country{USA}                    %
}
\email{ellieyhc@csail.mit.edu}          %

\author{Eric Atkinson}
\orcid{0000-0002-8396-1258}             %
\affiliation{
  \institution{Binghamton University}            %
  \country{USA}                    %
}
\email{eatkinson2@binghamton.edu}          %

\author{Guillaume Baudart}
\orcid{0000-0003-2230-1616}             %
\affiliation{
  \institution{Université Paris Cité -- CNRS -- Inria -- IRIF}           %
  \country{France}                   %
}
\email{guillaume.baudart@inria.fr}

\author{Louis Mandel}
\orcid{0000-0002-5291-6067}
\affiliation{
  \institution{IBM Research}            %
  \country{USA}                    %
}
\email{lmandel@us.ibm.com}

\author{Michael Carbin}
\orcid{0000-0002-6928-0456}
\affiliation{
  \institution{MIT CSAIL}            %
  \country{USA}                    %
}
\email{mcarbin@csail.mit.edu}

\begin{abstract}

Advanced probabilistic programming languages (PPLs) using \emph{hybrid particle filtering} combine symbolic exact inference and Monte Carlo methods to improve inference performance.
These systems use heuristics to partition random variables within the program into variables that are encoded symbolically and variables that are encoded with sampled values,
and the heuristics are not necessarily aligned with the developer's performance evaluation metrics.
In this work, we present \emph{inference plans}, a programming interface that enables developers to control the partitioning of random variables during hybrid particle filtering.
We further present \siren{}, a new PPL that enables developers to use annotations to specify inference plans the inference system must implement.
To assist developers with statically reasoning about whether an inference plan can be implemented, we present an abstract-interpretation-based static analysis for \siren{} for determining inference plan \emph{satisfiability}. We prove the analysis is sound with respect to \siren{}'s semantics.
Our evaluation applies inference plans to three different hybrid particle filtering algorithms on a suite of benchmarks.
It shows that the control provided by inference plans enables speed ups of 1.76x on average and up to 206x to reach a target accuracy, compared to the inference plans implemented by default heuristics; the results also show that inference plans improve accuracy by 1.83x on average and up to 595x with less or equal runtime, compared to the default inference plans.
We further show that our static analysis is precise in practice, identifying all satisfiable inference plans in 27 out of the 33 benchmark-algorithm evaluation settings.

\end{abstract}

\begin{CCSXML}
<ccs2012>
  <concept>
      <concept_id>10002950.10003648.10003670.10003682</concept_id>
      <concept_desc>Mathematics of computing~Sequential Monte Carlo methods</concept_desc>
      <concept_significance>500</concept_significance>
      </concept>
  <concept>
      <concept_id>10003752.10010124.10010138.10010143</concept_id>
      <concept_desc>Theory of computation~Program analysis</concept_desc>
      <concept_significance>500</concept_significance>
      </concept>
  <concept>
      <concept_id>10003752.10010124.10010138.10011119</concept_id>
      <concept_desc>Theory of computation~Abstraction</concept_desc>
      <concept_significance>500</concept_significance>
      </concept>
</ccs2012>
\end{CCSXML}

\ccsdesc[500]{Mathematics of computing~Sequential Monte Carlo methods}
\ccsdesc[500]{Theory of computation~Program analysis}

% \setcopyright{rightsretained}
% \acmJournal{PACMPL}
% \acmYear{2025} 
% \acmVolume{9} 
% \acmNumber{POPL} 
% \acmArticle{10} 
% \acmMonth{1}
% \acmDOI{10.1145/3704846}

\keywords{Probabilistic programming languages, static analysis, abstract interpretation}  %

\maketitle

\section{Introduction}
\label{sec:intro}

Probabilistic programming languages (PPLs) support primitives for modeling random variables and performing probabilistic inference~\citep{dippl,tolpin2016design,murray2018automated,narayanan2016probabilistic,holtzen2020scaling}. They provide high-level abstractions for probabilistic modeling that hide away the complex details of inference algorithms while leveraging common programming language constructs such as functions, loops, and control flow. 
PPLs serve as an expressive tool for solving such problems. Users can focus on modeling the problem, rather than the details of inference techniques.

\paragraph{Hybrid Inference Systems} 
\begin{wrapfigure}[22]{r}{0.56\textwidth}
  \centering
  \vspace{-3.1em}
  \includegraphics[width=0.53\textwidth]{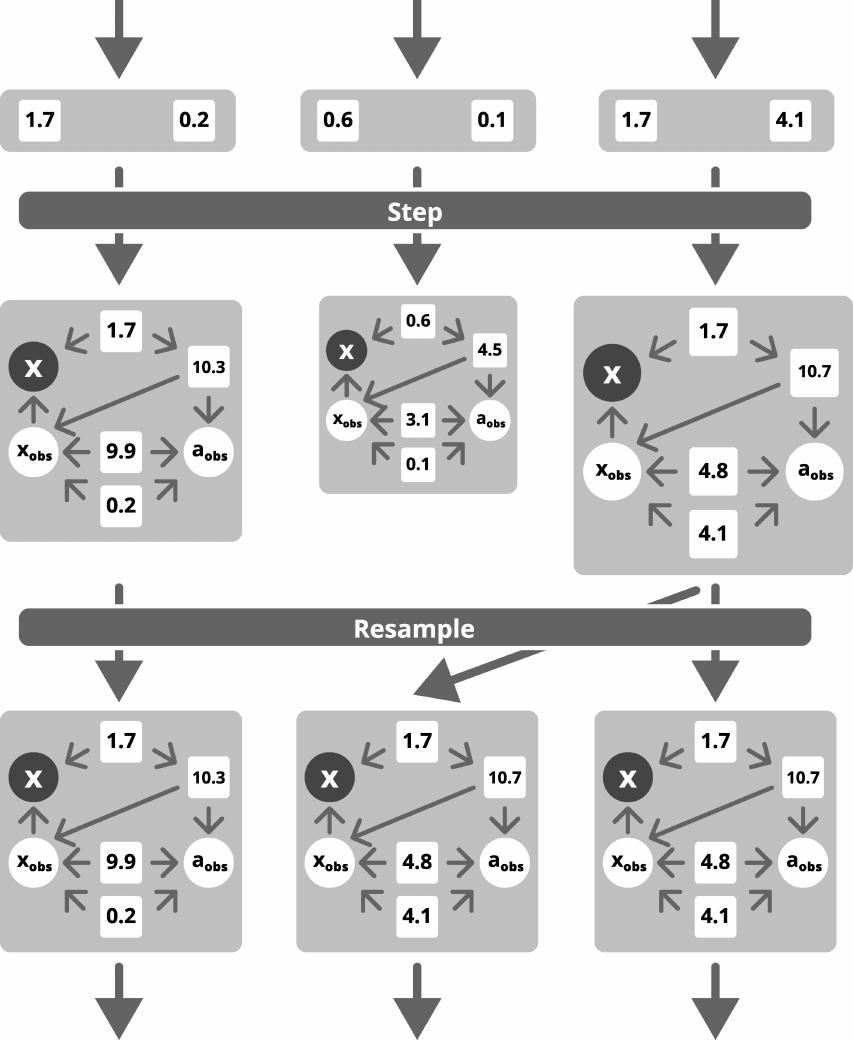}
  \vspace{-0.3em}
  \caption{Hybrid inference with particle filtering.}
  \label{fig:hybrid-inference}
\end{wrapfigure}
Hybrid inference systems -- such as delayed sampling~\citep{murray2018delayed,lunden2017delayed}, semi-symbolic inference~\citep{atkinson2022semi}, Sequential Monte Carlo with 
belief propagation~\citep{azizian2023automatic}, and automatically marginalized MCMC \citep{lai2023automatically} -- automatically incorporate exact inference with Monte Carlo methods
to improve performance. They utilize a \emph{symbolic encoding} of random variables to represent some or all parts of the model, enabling symbolic computation that 
lowers the variance of estimations. 
Hybrid inference algorithms that apply to particle filters~\citep{gordon1993novel} implement an automatic \emph{Rao-Blackwellization} of the particle filter~\citep{doucet2000rao}; hybrid inference algorithms that apply to MCMC algorithms automatically implement \emph{collapsed sampling}~\citep{liu1994collapsed}. 
In this work, we focus on hybrid inference systems that perform symbolic computations dynamically at runtime. %
These systems are able to take advantage of exact inference opportunities that only become available once the inference system replaces some variables with concrete Monte Carlo samples. 

\Cref{fig:hybrid-inference} depicts hybrid inference using particle filtering as the approximate inference method. The algorithm maintains a collection of parallel instances of executions, represented by the big boxes. Each instance contains a symbolic structure that encodes certain random variables symbolically, as shown by the circles in the diagram; other random variables are encoded as constant samples drawn from probability distributions, depicted as squares. 
The sizes of the boxes correspond to the associated \emph{weight}, which indicates how likely these instances are based on observed data (the white circles). The instances execute in parallel until they hit a checkpoint at which they are \emph{resampled}, meaning the distribution of instances is adjusted based on the weights. The arrows between each box indicate the transition or transformation of a particle from one stage to the next, including when a particle is duplicated during the resampling step.

\paragraph{Objective Oblivious Heuristics} 
When a hybrid inference system cannot solve the entire program with symbolic computation, it must partition the random variables into variables that it encodes symbolically and variables it encodes with concrete sampled values.
Different partitions make different subsets of random variables more accurate. Choosing which partition to use then depends on 1)~the probabilistic model and 2)~the  metrics used by the developer to evaluate the program. 
Hybrid inference systems automatically choose a partition to use based on the program structure using built-in heuristics. However, they are oblivious to the objectives of the developer and to how the developer measures performance. Their selected partition might not produce good inference performance as a result.
In these cases, developers need an interface for applying alternative heuristics that incorporate their evaluation objectives to achieve better performance. %

\paragraph{Inference Plans}
In this work, we present \emph{inference plans}, a programming interface that gives developers fine-grained control over how random variables are partitioned into sampled and symbolic variables during hybrid inference. 
An inference plan consists of a sequence of \emph{distribution encoding} annotations -- \mfSymbolic{} and \mfSample{} -- that specify whether the inference runtime should encode each random variable with a symbolic distribution or an approximate Monte Carlo sample. 
Our evaluation shows that the control provided
by inference plans enables speed ups of 1.76x on average and up to 206x to reach a target accuracy, compared to
the inference plans implemented by default heuristics; the results also show that inference plans improve
accuracy by 1.83x on average and up to 595x with less or equal runtime, compared to the default inference
plans.

In applications, such as control systems~\citep{burkhart1996adaptive, mehra1995adaptive,huang2017new}, robotics~\citep{doucet2000rao}, signal processing~\citep{dunik2017noise}, and data science~\citep{mansinghka2018probabilistic}, developers must tradeoff between accuracy and runtime during program development before deploying the program in production. 
By testing different inference plans during program development, developers can apply heuristics best-suited to their applications and optimize their programs on custom performance metrics.

\paragraph{Satisfiability Analysis} A key challenge in delivering the inference plans interface is that, depending on the program, the inference algorithm may not be able to maintain a random variable symbolically if the corresponding inference problem is too hard for the system to solve exactly.
Furthermore, because hybrid inference systems are dynamic and can change their behavior based on both the program inputs and random samples, the failure to maintain symbolic computation may only manifest in some executions.
This can lead to unpredictable or random accuracy degradations.
We call an inference plan \emph{unsatisfiable} in some execution if in that execution, the algorithm cannot evaluate the program while encoding the $\mfSymbolic$ annotated variables symbolically.
We present in this work an abstract-interpretation-based program analysis that can statically determine whether or not an inference plan is satisfiable in all executions for the given program. Against a suite of 11 benchmarks, using the 3 algorithms, our analysis identifies all satisfiable plans in 27 out of the 33 benchmark-algorithm evaluation settings.

\paragraph{Contributions}
In this paper, we present the following contributions:
\begin{itemize}
    \item We present \siren{}, a first-order functional PPL.
    \siren{} introduces distribution encoding annotations that programmers can use to assert an overall specification for how variables are represented by the inference algorithm; we term this specification an inference plan. \siren{} implements several existing hybrid particle filtering systems, unified via the hybrid inference interface, an extension of the symbolic interface from~\citep{atkinson2022semi}. We define the syntax and semantics of the language in Section~\ref{sec:background}. 
    \item We present an \emph{inference plan satisfiability analysis}, which determines statically if the annotated inference plan is satisfiable for all executions of a program. We formalize this analysis via abstract interpretation and present a proof of its soundness in Section~\ref{sec:analysis}.
    \item We implement the hybrid inference interface with semi-symbolic inference, delayed sampling, and Sequential Monte Carlo with belief propagation, and empirically show in \Cref{sec:eval} that inference plans speed up inference by 1.76x on average and up to 206x to reach target accuracy, compared to the default inference plans. We also show that inference plans improve accuracy by 1.83x on average and up to 595x with less or equal runtime, compared to the default plans.
    \item We empirically evaluate the precision of our analysis in \Cref{sec:eval}. Against a suite of 11 benchmarks, using the 3 algorithms, our analysis identifies all satisfiable plans in 27 out of the 33 benchmark-algorithm evaluation settings.
\end{itemize}

Inference plans enable developers to apply alternative heuristics to hybrid particle filtering algorithms within a probabilistic program. These customizations can improve inference performance, all the while maintaining the separation of probabilistic modeling and inference.

\section{Example}
\label{sec:example}

\begin{wrapfigure}[9]{r}{0.34\textwidth}
  \vspace{-1.2em}
  \centering
  \includegraphics[width=0.32\textwidth]{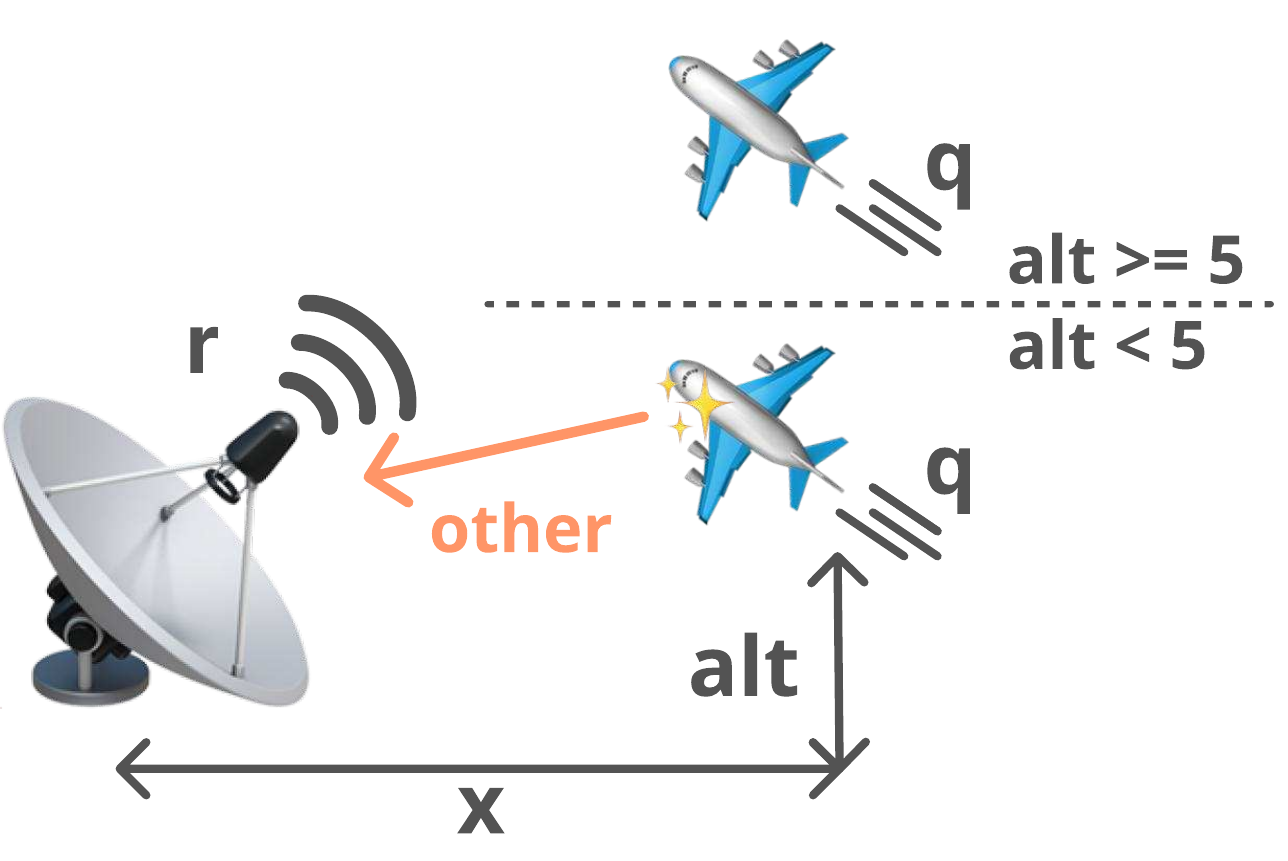}
\vspace{-0.8em}
  \caption{Diagram of a radar tracker.}
  \label{fig:example-diagram}
\end{wrapfigure}

To demonstrate how a developer can use inference plans to customize and improve inference performance, we present a simplified example adapted from \citet{bilik2010maneuvering}. 
\Cref{fig:example-diagram} shows a cartoon diagram of a radar tracker.
The goal of the radar tracker is to track the movement of an aircraft by estimating the position \zl{x} and altitude \zl{alt} of the aircraft over time.
The radar tracker can be specified as a probabilistic model. 
The model captures both the aircraft's \emph{movement noise} and also the radar's \emph{measurement noise}.
The movement noise \zl{q} captures the uncertainty in the model's belief of the aircraft movement relative to its previous position due to external factors such as turbulence.
The measurement noise captures inaccuracies in the radar measurements. For example, measurements naturally have white noise from radio interference. Additionally, when the aircraft is at lower altitudes, the measurements can be affected by spiking noise induced by electromagnetic waves reflecting off of the aircraft at an angle~\citep{bilik2010maneuvering}. In our example tracker, when the aircraft is at an altitude greater than 5, the measurement noise is modeled by only the white noise \zl{r}. Otherwise, it has additional noise \zl{other}.

\subsection{Specification in \siren{}}

\Cref{fig:example} presents two versions of the \siren{} program that implements the radar tracker. 
The programs share the same probabilistic model, but differ in the annotations used to specify random variable encodings.
Each program models the movement noise \zl{q} and white noise \zl{r} defined on \Cref{line:q-x,line:r-x} with Inverse-Gamma distributions.
The \zl{step} function defined on Lines~\ref{line:step1-x}-\ref{line:step2-x} updates the predicted position and altitude and conditions the model on radar measurement data.

On \Cref{line:x-x,line:a-x}, the \zl{x} positions and \zl{alt} altitudes are modeled as Gaussian random walks, i.e. as Gaussian distributions with a mean equal to the respective previous value. They have variance equal to \zl{q}. 
\Cref{line:other-x,line:if-x} model the measurement noise~\zl{v}. 
If the current estimated altitude \zl{alt} is less than 5, the program models the measurement noise as \zl{r+other}; the variable \zl{other} models the additional spiking noise as a separate, time-varying Inverse-Gamma distribution. Otherwise, the program models the measurement noise as only the white noise \zl{r}. 
The measured position and altitude are modeled as Gaussian distribution centered around the estimated position \zl{x} and estimated altitude \zl{alt}, respectively, and with the measurement noise as variance.

\Cref{line:obsx-x,line:obsalt-x} condition the model on the measured position being equal to the data value \zl{x_o} and on the measured altitude being equal to \zl{a_o}. The \zl{fold_resample} operation on Line~\ref{line:fold-x} iterates \zl{step} on the measurement data (provided by program inputs stored in variable \zl{data}) with initial position 0, altitude 10, and the noises \zl{q} and \zl{r}.
The program returns the final accumulator: the list of estimated positions, estimated altitudes, the movement noise, and the white noise.

\newcommand{\Hilight}{\makebox[0pt][l]{\color{lightgray}\rule[-4pt]{0.82\linewidth}{14pt}}}
\begin{figure}
\centering
\begin{subfigure}{0.46\textwidth}
\begin{lstlisting}[basicstyle=\footnotesize\ttfamily,aboveskip=0em,numbers=left,escapechar=|,linebackgroundcolor={\ifnum\value{lstnumber}=3\color{@lightlightblue}\else\ifnum\value{lstnumber}=11\color{@lightlightblue}\fi\fi}]
let step = fun ((x_o,a_o),(xs,alts,q,r)) ->|\label{line:step1-x}|
  let x0,alt0 = hd(xs),hd(alts) in
  let symbolic x <- gaussian(x0,q) in|\label{line:x-x}|
  let sample alt <- gaussian(alt0,q) in|\label{line:a-x}|
  let sample other <- invgamma(1.,10.) in|\label{line:other-x}|
  let v = if alt < 5 then r+other else r in|\label{line:if-x}|
  let () = observe(gaussian(x,v),x_o) in|\label{line:obsx-x}|
  let () = observe(gaussian(alt,v),a_o) in|\label{line:obsalt-x}|
  (cons(x,xs),cons(alt,alts),q,r)|\label{line:step2-x}|
let sample q <- invgamma(1.,1.) in|\label{line:q-x}|
let sample r <- invgamma(1.,1.) in|\label{line:r-x}|
fold_resample(step,data,([0.],[10.],q,r))|\label{line:fold-x}|
\end{lstlisting}
\caption{Annotated with \emph{Symbolic x Inference Plan}.}
\label{fig:example-annotated-x}
\end{subfigure}
\quad
\begin{subfigure}{0.46\textwidth}
\begin{lstlisting}[basicstyle=\footnotesize\ttfamily,aboveskip=0em,numbers=left,escapechar=|,linebackgroundcolor={\ifnum\value{lstnumber}=3\color{@lightlightblue}\else\ifnum\value{lstnumber}=11\color{@lightlightblue}\fi\fi}]
let step = fun ((x_o,a_o),(xs,alts,q,r)) ->|\label{line:step1-r}|
  let x0,alt0 = hd(xs),hd(alts) in
  let sample x <- gaussian(x0,q) in|\label{line:x-r}|
  let sample alt <- gaussian(alt0,q) in|\label{line:a-r}|
  let sample other <- invgamma(1.,10.) in|\label{line:other-r}|
  let v = if alt < 5 then r+other else r in|\label{line:if-r}|
  let () = observe(gaussian(x,v),x_o) in|\label{line:obsx-r}|
  let () = observe(gaussian(alt,v),a_o) in|\label{line:obsalt-r}|
  (cons(x,xs),cons(alt,alts),q,r)|\label{line:step2-r}|
let sample q <- invgamma(1.,1.) in|\label{line:q-r}|
let symbolic r <- invgamma(1.,1.) in|\label{line:r-r}|
fold_resample(step,data,([0.],[10.],q,r))|\label{line:fold-r}|
\end{lstlisting}
\caption{Annotated with \emph{Symbolic r Inference Plan}.}
\label{fig:example-annotated}
\end{subfigure}
\vspace{-0.2em}
\caption{A program written in \siren{} that implements a radar tracking. 
}
\vspace{-1em}
\label{fig:example}
\end{figure}

\subsection{Inference Plans}
Each random variable in \Cref{fig:example} has an optional \emph{distribution encoding} annotation that specifies how the inference system should encode the variable.
The inference system should encode a variable annotated with \mfSymbolic{} as a symbolic expression representing the corresponding distribution, and should encode one annotated with \mfSample{} with a concrete sample drawn from the distribution.
The two programs in \Cref{fig:example} differ in only the annotations on \Cref{line:x-x,line:r-x}. 

The annotations control the execution of \siren{}'s hybrid particle filtering implementation -- a combination of particle filtering with symbolic computation. 
Each particle contains a symbolic state, which collects the symbolic distributions of symbolically-encoded random variables. The \siren{} runtime treats variables encoded using samples as constant values during program execution.

The \Cref{fig:hybrid-inference} diagram shows an execution of the \Cref{fig:example-annotated-x} program. In one particle of the execution, the variable \zl{q} is bound to the constant value $1.7$, a random sample drawn from the Inverse-Gamma distribution on \Cref{line:q-x}. Then, on the first iteration, the symbolic random variable $\randomvar_x$ created on \Cref{line:x-x} has the symbolic distribution $\normal{0.}{1.7}$. This particle also has \zl{alt} bound to the constant sample $10.3$ and \zl{r} to $0.2$; the observed variable on \Cref{line:obsx-x} has the distribution $\normal{\randomvar_x}{0.2}$.
In another particle, \zl{q} is bound to $0.6$, \zl{r} to $0.1$, \zl{alt} to $4.5$, and \zl{other} to $3.1$. Then, $\randomvar_x$ has the symbolic distribution $\normal{0.}{0.6}$ and the \Cref{line:obsx-x} variable has $\normal{\randomvar_x}{3.2}$, as \zl{r+other} evaluates to 3.2.

The inference algorithm uses symbolic computation to evaluate the symbolic components of the particle. For example, in \Cref{fig:example-annotated-x}, the observed variable on \Cref{line:obsx-x} has a conditional distribution dependent on the symbolic variable $\randomvar_x$. The algorithm symbolically transforms the conditional distribution into a marginal distribution so that the algorithm can condition the model on the input value. 
During evaluation, the particle accumulates a weight from conditioning on input values. Then, at resampling checkpoints, a new collection of particles is resampled from the existing collection. 
In \Cref{fig:example} the resampling step occurs at the end of each \zl{fold_resample} iteration.

\subsubsection{Accuracy and Performance}

To deploy the radar tracker, the developer must ensure that it will achieve adequate performance. 
In this aircraft tracking application, the program must be within the acceptable margins of error and allowable latency or it can lead to catastrophic collisions~\citep{ali2015causal}.
This is challenging because the accuracy and runtime of hybrid particle filtering depend on both the number of particles and the inference plan used to perform inference.

\paragraph{Particle Count} While the runtime of a hybrid particle filter typically varies proportionally to the particle count, the relationship between accuracy and particle count is difficult to determine.
In general, developers need to search for a count that meets their accuracy and runtime constraints.

\paragraph{Inference Plan} A hybrid inference system's accuracy and runtime also depend on the inference plan.
This enables developers to use inference plans to control the inference system's performance and improve upon the system's default performance (i.e.\ its accuracy and runtime under the \emph{default inference plan} automatically selected by the system when no annotations are present in the program).
However, as inference plans operate by adjusting the partition between symbolic and sampled variables, the effect on accuracy and runtime is often hard to predict. 
Keeping variables symbolic \emph{can} reduce the number of particles the system requires to achieve adequate accuracy, but this behavior is not guaranteed.
Additionally, different inference plans can achieve higher accuracy for \emph{different} variables.
In an application such as radar tracking 
where highly accurate results are required within a time constraint, the developer can determine 
the program's behavior by executing it with different inference plans to build a \emph{performance profile}.
The developer can then choose the best inference plan and particle count to use in production.

\paragraph{Performance Profile}
In the tracking example, the most important variables are the target's position and altitude, so the developer would like to optimize the program for the highest accuracy for \zl{x} and \zl{alt}. 
However, the program runtime cannot exceed the acceptable latency, or the estimations will be too out of date. For this example, we assume 3 seconds as the maximum allowable runtime.

\Cref{fig:example-results} presents a performance profile for the programs in \Cref{fig:example}.
These graphs present scatter plots of accuracy and latency over a range of particle counts and for a variety of different inference plans.
Each graph uses an accuracy metric measuring the error of the inference algorithm's estimate of either \zl{x} or \zl{alt}. 
The red squares present the time and accuracy tradeoff for the default inference plan that makes no annotations. 
An alternative inference plan -- called the \emph{Symbolic x Inference Plan} because it annotates \zl{x} with $\mfSymbolic$ -- is the one from Figure~\ref{fig:example-annotated-x}, and is shown in green diamonds. 
An additional alternative plan -- called \emph{Symbolic r Inference Plan} because it annotates \zl{r} symbolic -- is the one from \Cref{fig:example-annotated}, and is shown in purple circles.

\begin{wrapfigure}[18]{r}{0.53\textwidth}
  \vspace{-1.5em}
  \centering
  \includegraphics[width=0.52\textwidth]{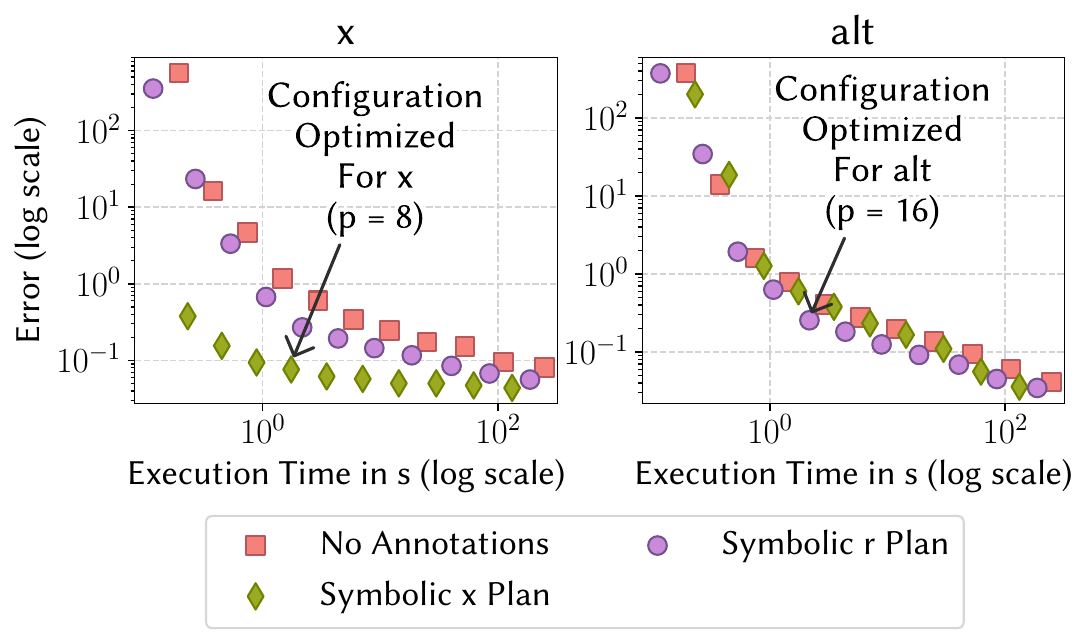}
  \caption{Accuracy and runtime performance of the Figure~\ref{fig:example} programs.
  Each scatter plot presents the program execution time and accuracy for particle counts $p$ ranging from 1 to 1024 and multiple inference plans.
  Each data point is an experiment that -- across 100 runs -- measures the median runtime and the 90th percentile of the Mean Squared Error for the relevant random variable -- \zl{x} or \zl{alt}.}
  \label{fig:example-results}
\end{wrapfigure}

\paragraph{Inference Plan Comparison} The developer can conclude from the performance profile in \Cref{fig:example-results} that, given a time constraint of 3 seconds, the best plan to optimize for \zl{x} accuracy is the \emph{Symbolic x Plan}. 
This demonstrates the value of using inference plans to control the inference system's behavior, 
as the default behavior of the system is oblivious to the developer's objectives and cannot adapt accordingly. 
Consequently, the default plan is inferior in accuracy.

The profile also shows that the \emph{Symbolic r Plan} achieves the best accuracy for \zl{alt} within the 3-second constraint. This plan also achieves better accuracy than the default plan on both \zl{x} and \zl{alt}. 
None of the plans achieves the best accuracy within 3 seconds for both variables, so the developer has to decide which is more critical to optimize for. 
The differing performance outcomes further illustrate the importance of inference plans: Developers can apply alternative inference plans to achieve the best performance on the variables they care about the most.

\subsubsection{Unsatisfiable Annotations}
According to the performance profiles, the developer may select either the \emph{Symbolic x Plan} of \Cref{fig:example-annotated-x} with $8$ particles or the \emph{Symbolic r Plan} of \Cref{fig:example-annotated} with $16$ particles as the configuration to deploy in production with acceptable accuracy and latency.
However, this profile also demonstrates the challenges in drawing conclusions from empirical data: The \emph{Symbolic r Plan} is \emph{unsatisfiable} in general.

To collect the performance profile, we generated the data for the performance profile assuming the aircraft stays at its cruising altitude of 10. If we instead generate data to model a descent where the aircraft eventually descends to an altitude below 5, the input altitude data passed to the variable \zl{a_o} on \Cref{line:step1-x} will be different. Then, the performance profile may no longer be valid.
Namely, with the original generated cruising data, the estimated altitude was never less than 5. However, with the generated descent data, the probability of the estimated altitude being less than 5 is significantly higher and, as a result, so is the probability of encountering an \emph{unsatisfiable annotation}. The $\mfSymbolic$ annotation on \zl{r} is unsatisfiable in those executions because
the inference runtime cannot evaluate the program while encoding \zl{r} symbolically.

In Figure~\ref{fig:example-annotated}, if the altitude of the aircraft is at or above 5, then the observed variables on \Cref{line:obsx-r,line:obsalt-r} have symbolic Gaussian distributions with variance $\randomvar_r$.
The Inverse-Gamma distribution of $\randomvar_r$ is conjugate with the Gaussian distribution, which means that the inference system can find a closed-form solution to the model and can encode \zl{r} symbolically.
However, if the altitude of the aircraft drops below 5, the program specifies that  
the observed variables have symbolic Gaussian distributions with variance equal to the sum of two Inverse-Gamma random variables: \zl{r} and \zl{other}.
This sum does not have an Inverse-Gamma distribution and is not conjugate with the Gaussian distributions on \Cref{line:obsx-r,line:obsalt-r}.
Without a conjugacy relationship to exploit, the inference system cannot solve the model analytically, even though the developer annotated variable \zl{r} as \mfSymbolic{}.

\begin{wrapfigure}[21]{r}{0.55\textwidth}
  \vspace{-1.5em}
  \centering
  \begin{subfigure}{0.54\textwidth}
    \centering
    \includegraphics[width=1\textwidth]{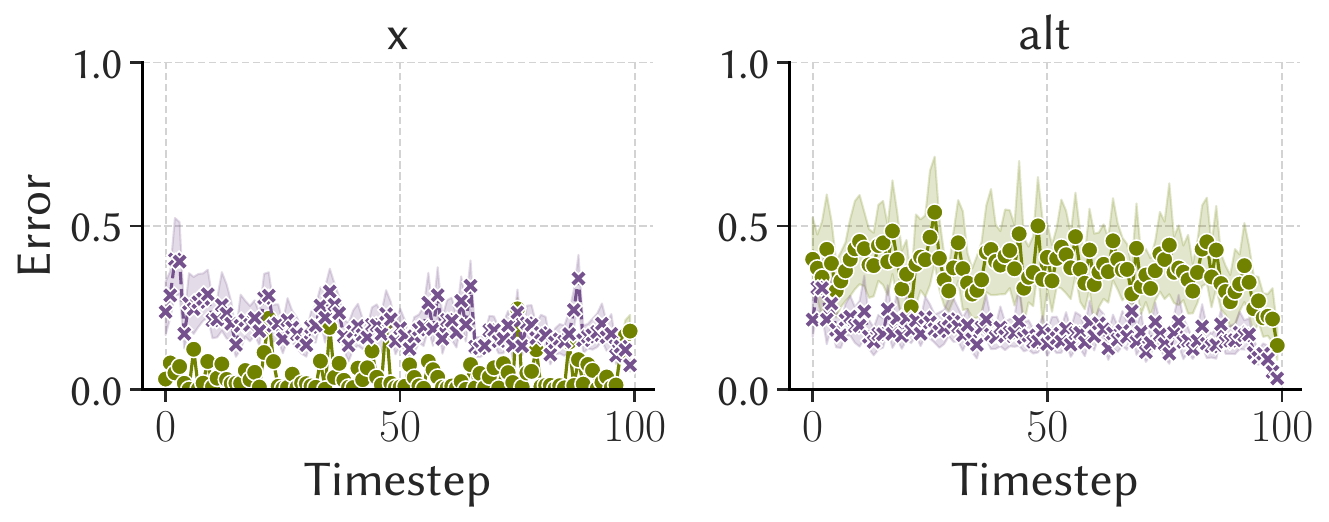}
    \vspace{-1.8em}
    \caption{Aircraft cruising at altitude 10.}
    \label{fig:example-results2-good}
  \end{subfigure}
  \begin{minipage}{0.54\textwidth}
    \hspace{0.01em}
      \begin{subfigure}{1\textwidth}
    \centering
    \includegraphics[width=1\textwidth]{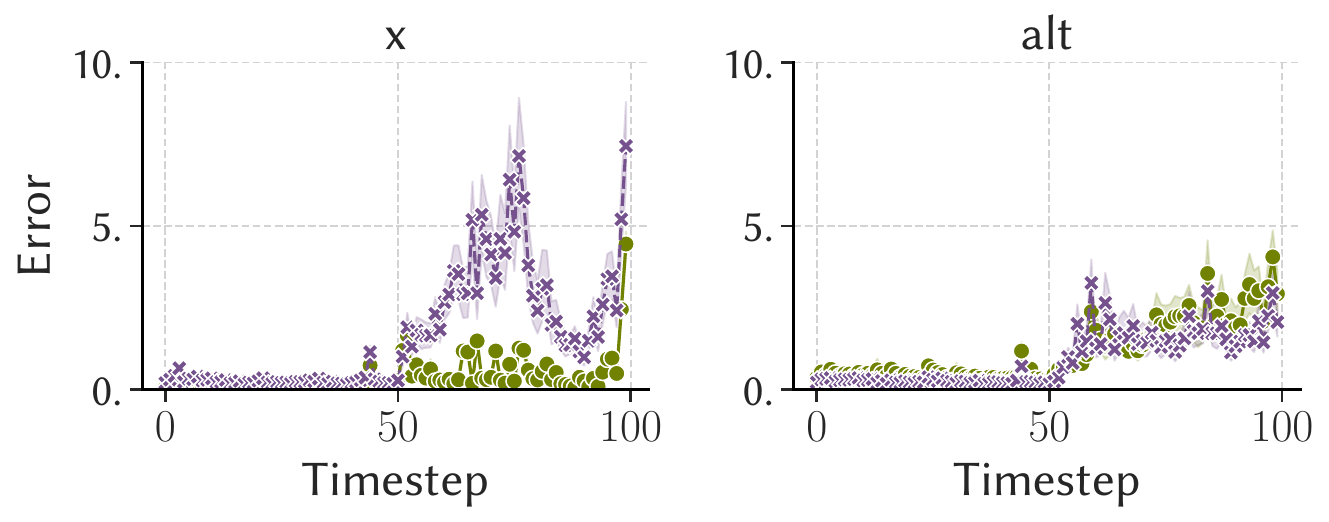}
    \label{fig:example-results2-bad}
  \end{subfigure}
  \begin{subfigure}{1\textwidth}
    \vspace{-1.5em}
    \hspace{1.8em}
    \includegraphics[width=0.92\textwidth]{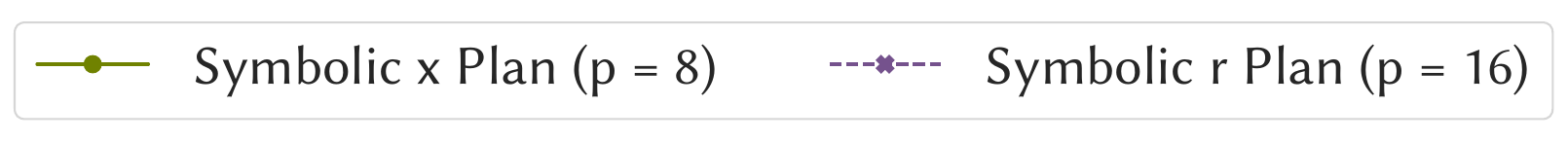}
    \vspace{-1.7em}
    \caption{Aircraft descending from altitude 10 to 0.}
    \label{fig:example-results2-bad}
  \end{subfigure}
\end{minipage}
\vspace{-0.6em}
  \caption{Accuracy of \zl{x} and \zl{alt} over 100 timesteps at altitudes, measured as the Squared Error of the estimated value to the true value at that timestep.}
  \label{fig:example-results-2}
\end{wrapfigure}

\paragraph{Dynamic Encoding Cast}
In such a scenario, \siren{}, like other hybrid inference systems \citep{atkinson2022semi,lunden2017delayed,murray2018delayed,azizian2023automatic}, will, conceptually, \emph{dynamically cast} the offending $\mfSymbolic$ annotation to a $\mfSample$ annotation, changing the underlying distribution encoding to a concrete sample. 
However, \Cref{fig:example-results-2} illustrates the impact of such a coercion. 
\Cref{fig:example-results-2} shows the accuracy of the position and altitude of a simulated aircraft
over 100 timesteps using the two programs in the two different flight modes.
In \Cref{fig:example-results2-good}, where the aircraft is cruising at altitude 10, the pattern observed in the performance profiles is replicated. The \emph{Symbolic x Plan} consistently achieves better accuracy for the \zl{x} position, and the \emph{Symbolic r Plan} for \zl{alt}. In either case, both plans have errors less than 1 for both variables at all timesteps.
However, when the aircraft is descending, this pattern is broken. In \Cref{fig:example-results2-bad}, the aircraft is now operating in a noisier environment. 
The accuracy and runtime of a probabilistic model inherently depend on the conditioned inputs, so both plans experience an error spike in \zl{alt}. The \emph{Symbolic x Plan} still maintains a relatively low error for \zl{x}, whereas the \emph{Symbolic r Plan} has a significant error spike at around 60 timesteps such that the estimation error of \zl{x} is a magnitude larger than before.
The error spike cannot be explained by the noisier environment alone, because the \emph{Symbolic x Plan} does not exhibit an \zl{x} error spike of the same magnitude.
The accuracy degradation in \zl{x} by the \emph{Symbolic r Plan} is due to a dynamic encoding cast on an unsatisfiable annotation.

\paragraph{Performance Degradation} A dynamic encoding cast has implications for inference performance, as it means that the runtime cannot implement the annotated inference plan and has to implement a different inference plan to continue execution. 
While flexible and enables the program execution to continue, it can cause imprecision that is unacceptable in certain applications.
In this example, casting the unsatisfiable annotation causes a significant decrease in accuracy in \zl{x}. The significant accuracy degradation negates the advantage the \emph{Symbolic r Plan} has over the \emph{Symbolic x Plan} for \zl{alt}. 
Given the possible impact on accuracy, the developer should use the \emph{Symbolic x Plan} instead.

\subsection{Inference Plan Satisfiability Analysis}
To enable developers to ensure their program is free from dynamic encoding casts, \siren{} performs the \emph{inference plan satisfiability analysis} to determine whether the annotated inference plan is satisfiable during all possible executions of the program using the hybrid inference algorithm. 

\paragraph{Abstract Interpretation}
The analysis uses an abstract interpretation of the program. It maintains abstract symbolic distributions of random variables and uses abstract expressions to over-approximate program executions. Consider the unsatisfiable plan from \Cref{fig:example-annotated}. On \Cref{line:if-r}, because the analysis does not know the exact value of \zl{alt}, it uses an abstract value to over-approximate the subexpressions in the branches.
If \zl{alt} is above 5, the abstract subexpression is $\arandomvar_r$, referring to the abstract random variable from \Cref{line:r-r}. If \zl{alt} is below 5, the subexpression is $\aplus{\arandomvar_r}{\cunk}$. The abstract value $\cunk$ represents some unspecified constant.
All random samples are constant values, so
\zl{other} evaluates to $\cunk$. The analysis over-approximates the subexpressions as a single abstract expression: $\eunk{}{\{\arandomvar_r\}}$ -- an unspecified abstract expression that references $\arandomvar_r$. 

\paragraph{Unsatisfiable Inference Plan}
The analysis also approximates how the system decides when to perform symbolic computations.
The \Cref{line:obsx-r} variable has the abstract distribution $\anormal{\cunk}{\eunk{}{\{\arandomvar_r\}}}$. The variance is the expression computed on \Cref{line:if-r}. 
The inference algorithm cannot perform symbolic computations when the variance is a complex expression. Because \zl{r} (corresponding to $\arandomvar_r$) is annotated with \mfSymbolic{}, the variable will be dynamically cast in those instances. The analysis rejects the program, correctly identifying that the inference plan is unsatisfiable in some executions.

\paragraph{Satisfiable Inference Plan}
The satisfiability analysis will always reject unsatisfiable inference plans, but it may be overly conservative and erroneously reject satisfiable plans.  
Nevertheless, the analysis is precise in practice. For example, it correctly determines the program in \Cref{fig:example-annotated-x} is satisfiable.  
At \Cref{line:obsx-r}, the observed variable has the symbolic distribution $\anormal{\arandomvar_x}{\cunk}$, where $\arandomvar_x$ refers to the symbolic random variable created on \Cref{line:x-r}. The variance is an unknown constant, the mean is a linear expression, and $\arandomvar_x$ also has a Gaussian symbolic distribution. 
This model only consists of linear-Gaussian distributions -- a class of probabilistic models that the inference algorithm can solve entirely symbolically. Thus, \zl{x} will always be encoded symbolically as the annotation requires, so the analysis accepts the inference plan.

\subsection{Summary}

Using annotations, developers can select alternative inference plans that are better aligned with their objectives. With the satisfiability analysis, the developer can further guarantee the inference plan will always be satisfiable in any execution. 
If \siren{} successfully compiles a program, then any variable annotated with \mfSymbolic{} will always be represented symbolically in any execution, and any variable annotated with \mfSample{} will always be sampled. 
This provides developers with the guarantee that their program performance will not be degraded by unsatisfiable annotations.

\section{Language Syntax and Semantics}
\label{sec:background}

We present the syntax and semantics of the first-order functional PPL, \siren{}, adapted from \citet{staton17}. We have extended the language to support distribution encoding annotations and adapted it to use hybrid inference, and specify its semantics via the \emph{hybrid inference interface}.

\subsection{Syntax}

\begin{wrapfigure}[11]{r}{0.48\textwidth}
\small
\vspace{-3em}
\centering
$
\begin{array}{@{}r@{\ }r@{\ }l@{}}
\\[0.5em]
\val &::=& c \mid \programvar \mid \mfPair{\val}{\val} \mid \mfOp{\val} \mid \mfNil \mid \mfCons{\val}{\val}
\\[0.5em]
\progexpr &::=& \val \mid \mfApp{f}{\val} \mid \mfIf{\val}{\progexpr}{\progexpr} \\
  &&\mid \mfLetIn{\programvar}{\progexpr}{\progexpr} 
  \mid \mfFold{f}{\val}{\val} \\
  &&\mid \mfLetRv{\annotation}{\programvar}{\mfOp{\val}}{\progexpr}
  \mid \mfObserve{\val}{\val}\\
  &&\mid \mfResample{}
\\[0.5em]
\annotation &::=& \varepsilon \mid \mfSymbolic \mid \mfSample
\\[0.5em]
\decl &::=& \mfLetFun{f}{\programvar}{\progexpr}
\\[0.5em]
\prog &::=& \decl^* \ \progexpr
\end{array}
$
\vspace{-0.7em}
    \caption{Syntax of the \siren{} language.}
    \label{fig:muf-syntax}
\end{wrapfigure}

\Cref{fig:muf-syntax} presents the syntax of \siren{}.
An expression $\progexpr$ is a value $\val$ (constant; variable; pair; or the application of an operator, e.g., arithmetic operation, distribution, or list), a function application, a conditional, or a local definition.
We add the classic \mkw{fold} operator as well.
\siren{} supports probabilistic operators. The expression 
\mfLetRvNoAnn{\programvar}{\mit{op}(\val)}{e} introduces a new local random variable $\programvar$ with distribution $\mit{op}(\val)$ to be used in $\progexpr$.
Optionally, a \mfSymbolic{} or \mfSample{} annotation adorns a random variable declaration. 
The \mfObserve{\val_1}{\val_2}{} expression conditions the model on a variable with distribution $\val_1$ having value $\val_2$.
The \mfResample{} operator instructs the program to perform resampling for particle filtering. The \zl{fold_resample} operation used in \Cref{sec:example} is syntactic sugar for applying \mfResample{} at the end of all \zl{fold} iterations. 
Finally, a program is a sequence of function declarations $\decl$ with a main expression.

\subsection{Operational Semantics}
\label{sec:operational-semantics}

While \siren{} has an ideal measure-based semantics (see Appendix~\ref{appendix:idealsem}),
the measure is, in general, intractable.
An alternative is to interpret the model as a \emph{weighted sampler} that returns a value and a \emph{score} measuring the likelihood of the result with respect to the model.
To approximate the posterior distribution, a weighted sampler launches a set of independent executions of the sampler, the \emph{particles}, and returns
a categorical distribution that associates each value with its score. 
\siren{} implements a \emph{particle filter} which occasionally resamples the set of particles according to their score during executions. 
Following~\cite{LundenBB21}, we add an explicit \mfResample{} operator to the language to enable programs to explicitly trigger resampling.\footnote{Automatic selection of resampling locations for optimal performance is an open problem~\citep{LundenBB21}.}
We present the operational semantics of \siren{} which is a big-step semantics extended with checkpoints for resampling. 

\subsubsection{Hybrid Inference Interface}
Hybrid particle filtering algorithms reduce variance in particle filters by computing closed-form distributions where possible and only drawing random samples if symbolic computation fails. 
We first present definitions for the hybrid inference interface, an extension of the symbolic interface~\citep{atkinson2022semi}, that underpins our operational semantics. 

\paragraph{Symbolic Expressions}

\Cref{fig:symb-grammar} presents the grammar of symbolic expressions used by algorithms implementing the hybrid inference interface.
It specifies a grammar of distributions $\Distr$ that includes, but is not limited to, Gaussian, Bernoulli, Inverse-Gamma, 
and Dirac Delta distributions. $\Distr$ can also be a sampled-Delta distribution (denoted as~$\delta_s$), which is a Dirac Delta distribution that is used only to represent a sample drawn from a probability distribution.
Figure~\ref{fig:symb-grammar} further specifies a grammar of expressions $\Expr$ that uses operators to combine constant values $\constant$ and random variables $\randomvar$.
The operators include standard arithmetic and comparison operators and a conditional operator \texttt{ite}.

\paragraph{Symbolic State}
A symbolic state $\symbstate \in \Symbstate = \Randomvar \rightarrow \Annotation \times \Distr \times \Stuff$ is a finite mapping whose domain is the set of random variable names. 
It maps each random variable to an entry $\symbstate(\randomvar)$ consisting of an optional annotation ($\Annotation = \set{\mfSample, \mfSymbolic, \varepsilon}$ where $\varepsilon$ represents no annotation) denoted $\symbstatePV{}{\randomvar}$, a symbolic representation of a distribution $\symbstateD{}{\randomvar}$, and an implementation-specific data field $\symbstateS{}{\randomvar}$.

\begin{figure}
    \small
    \centering
    \begin{minipage}[b][][b]{.6\textwidth}
      \centering
      \begin{align*} 
        \Distr \bnfarrow \; & \normal{\Expr}{\Expr}\;
        \mid \; \bern{\Expr}\;
        \mid \; \invgamma{\Expr}{\Expr} \;
        \mid \; \deltad{\Expr} \;
        \mid \; \deltasample{\Expr} \;
        \mid \; \dots \\
        \Expr \bnfarrow \; & \constant\;
        \mid \; \randomvar\;
        \mid \; \plus{\Expr}{\Expr}\;
        \mid \; \minus{\Expr}{\Expr}\;
        \mid \; \mult{\Expr}{\Expr}\\
        &\mid \; \division{\Expr}{\Expr}\;
        \mid \; \ite{\Expr}{\Expr}{\Expr}\;
        \mid \; \Expr \; \mathit{Cmp} \; \Expr\\
        \mathit{Cmp} \bnfarrow \; & \texttt{=} \;
        \mid \; \texttt{!=} \;
        \mid \; \texttt{<}  \;
        \mid \; \texttt{<=} \qquad
        \constant \in \Val, \randomvar \in \Randomvar
        \end{align*}
        \vspace{-2em}
        \captionof{figure}{Grammar of symbolic expressions.}
        \label{fig:symb-grammar}
    \end{minipage}%
    \begin{minipage}[b][][b]{.4\textwidth}
        \centering
        \begin{align*}
            \symassume &: \Annotation \times \Distr \times \Symbstate 
                \rightarrow \Randomvar \times \Symbstate\\
            \symobserve &: \Randomvar \times \Val \times \Symbstate 
                \rightarrow \Symbstate \times \Real\\
            \symvalue &: \Randomvar \times \Symbstate 
                \rightarrow \Val \times \Symbstate\ %
        \end{align*}
        \vspace{-2em}
        \captionof{figure}{Hybrid Inference Interface.}
        \label{fig:hybrid-interface}
    \end{minipage}
\vspace{-1.5em}
\end{figure}

\paragraph{Interface} The hybrid inference interface uses three operations to manipulate the symbolic state, shown in \Cref{fig:hybrid-interface}.
The $\symassume$ operation takes an annotation, a distribution, and a symbolic state, and returns a new random variable with the updated symbolic state. The $\symobserve$ operation conditions the symbolic state on the input variable having the given value and returns the updated state and a score for the particle filter to use as the weight. The $\symvalue$ operation replaces the input variable with a sample from its distribution, turning it into a sampled-Delta distribution. 
These operations decide whether the runtime samples a random variable or encodes a random variable symbolically in the symbolic state. By doing so, they determine the default inference plan in the absence of annotations.
We will discuss different implementations of the interface in \Cref{sec:instantiations}.

\subsubsection{Unsatisfiable Annotation}
When a distribution encoding annotation is unsatisfiable, the \siren{} runtime performs a dynamic encoding cast, enabling the execution to continue. In particular, the $\symvalue$ operation executes as if the annotation of the input random variable is $\varepsilon$ even if the annotation of the input random variable is $\mfSymbolic$.

\subsubsection{Big-Step Semantics with Checkpoints}

Next, we present the semantics of \siren{}. \Cref{fig:op:sem-particle,fig:op:sem-set,fig:op:sem-model} show a fragment, and the full semantics is in \Cref{appendix:op-sem}. A particle is represented by a pair (expression, symbolic state). A \siren{} program is described by three types of rules: 
\begin{itemize}
    \item \emph{Particle Evaluation.} The evaluation relation $\pclstep{\progexpr{}}{\symbstate}{\progexpr{}'}{\symbstate'}{\weight}{\doresample}$ evaluates a particle $(\progexpr{}, \symbstate)$ and returns an updated particle $(\progexpr{}', \symbstate')$, the associated score $\weight$, and a resample flag $\doresample$ indicating if the evaluation was interrupted.
    \item \emph{Particle Set Evaluation.} The evaluation relation $\pclsstep{\set{\progexpr{}_i, \symbstate_i}_{1 \le i \le N}}{\distr{}}$ gathers the results of a set of particles into a distribution $\distr{}$.
    \item \emph{Model Evaluation.} The evaluation relation $\pclpstep{N}{\progexpr{}}{\distr{}}$ evaluates a program expression $\progexpr{}$ into a distribution $\distr{}$ using $N$ particles.
\end{itemize}

\begin{figure}
\begin{small}
\begin{mathpar}
\inferrule%
{ }
{\pclstep{\val}{\symbstate}
         {\val}{\symbstate}{1}{\false}}

\inferrule%
{ }
{\pclstep{\mfResample}{g}
         {\mfUnit}{g}{1}{\true}}

\inferrule%
{\pure(\progexpr{}_1, \progexpr{}_2) \\
\neg\const(\val)\\\\
 \pclstep{\progexpr{}_1}{\symbstate}
         {\val{}_1}{\symbstate_1}{1}{\false} \\
 \pclstep{\progexpr{}_2}{\symbstate_1}
         {\val{}_2}{\symbstate'}{1}{\false}
}
{\pclstep{\mfIf{\val}{\progexpr{}_1}{\progexpr{}_2}}{\symbstate}
         {\ite{\val}{\val_1}{\val_2}}{\symbstate'}{1}{\false}}

\inferrule%
{\pclstep{\progexpr_1}{\symbstate}
         {\progexpr_1'}{\symbstate'}{\weight}{\true}}
{\pclstep{\mfLetIn{\programvar}{\progexpr_1}{\progexpr_2}}{\symbstate}
         {\mfLetIn{\programvar}{\progexpr_1'}{\progexpr_2}}{\symbstate'}{\weight}{\true}}

\inferrule%
{\pclstep{\progexpr_1}{\symbstate}
         {\val_1}{\symbstate_1}{\weight_1}{\false}\\
 \pclstep{\progexpr_2[\programvar \leftarrow \val_1]}{\symbstate_1}
         {\progexpr_2'}{\symbstate_2}{\weight_2}{\doresample}}
{\pclstep{\mfLetIn{\programvar}{\progexpr_1}{\progexpr_2}}{\symbstate}
         {\progexpr_2'}{\symbstate_2}{\weight_1 * \weight_2}{\doresample}}

\inferrule%
{
    {\pclstep{\mfLetIn{\programvar}{\mfApp{f}{\mfPair{\listhd}{v}}}{\mfFold{f}{\listtl}{\programvar}}}{\symbstate}
         {\progexpr}{\symbstate'}{\weight}{\doresample}}
}
{\pclstep{\mfFold{f}{\mfCons{\listhd}{\listtl}}{\val}}{\symbstate}
         {\progexpr}{\symbstate'}{\weight}{\doresample}}

\inferrule%
{\annotation \in \{\varepsilon, \mfSymbolic\}\\
\symassume(\annotation, \mfApp{\distop}{\val}, \symbstate) = \randomvar, \symbstate_\randomvar\\
 \pclstep{\progexpr[\programvar \leftarrow \randomvar]}{\symbstate_\randomvar}
         {\progexpr'}{\symbstate'}{\weight}{\doresample}}
{\pclstep{\mfLetRv{\annotation}{\programvar}{\mfApp{\distop}{\val}}{\progexpr}}{\symbstate}
         {\progexpr'}{\symbstate'}{\weight}{\doresample}}

\inferrule%
{\symassume(\mfSample, \mfApp{\distop}{\val}, \symbstate) = \randomvar, \symbstate_\randomvar\\
\symvalue(\randomvar, \symbstate_\randomvar) = \val_\programvar, \symbstate_\randomvar'\\
 \pclstep{\progexpr[\programvar \leftarrow \val_\programvar]}{\symbstate_\randomvar'}
         {\progexpr'}{\symbstate'}{\weight}{\doresample}}
{\pclstep{\mfLetRv{\mfSample}{\programvar}{\mfApp{\distop}{\val}}{\progexpr}}{\symbstate}
         {\progexpr'}{\symbstate'}{\weight}{\doresample}}

\inferrule%
{\symassume(\varepsilon, \mfApp{\distop}{\val_1}, \symbstate) = \randomvar, \symbstate_\randomvar\\
 \symvalue^*(\val_2, \symbstate_\randomvar) = \val, \symbstate_\val\\
 \symobserve(\randomvar, \val, \symbstate_\val) = \symbstate', \weight}
{\pclstep{\mfObserve{\mfApp{\distop}{\val_1}}{\val_2}}{\symbstate}
         {\mfUnit}{\symbstate'}{\weight}{\false}}
\end{mathpar}
\end{small}
\vspace{-0.8em}
\caption{Fragment of the particle evaluation rules. The full semantics is in \Cref{appendix:op-sem}.}
\label{fig:op:sem-particle}
\vspace{-0.8em}
\end{figure}

\paragraph{Particle Evaluation} \Cref{fig:op:sem-particle} shows a fragment of the particle evaluation rules.
A constant $\val$ is already fully reduced so the resample flag $\doresample$ is set to \false{}. Since it is a deterministic value, the associated score is~$1$. 
The $\mfResample$ operator interrupts reductions by setting the resample flag $\doresample$ to \true{}; it reduces to the unit value. 
The semantics of $\mfIf{\val}{\progexpr_1}{\progexpr_2}$ consists of two cases.
If~$\progexpr_1$ and~$\progexpr_2$ are pure (i.e. they do not perform any \mkw{observe} or \mkw{resample}) and the condition $\val$ is not a constant (i.e. it contains some symbolic random variables) then the rule reduces to an \mkw{ite} symbolic expression used for symbolic computation.
Since there is no \mkw{observe}, the score is~$1$.
Otherwise, the rule evaluates the condition to a constant value and uses it to decide which branch to execute. This rule is elided in \Cref{fig:op:sem-particle}, but can be found in \Cref{appendix:op-sem}. The \siren{} semantics uses the $\symvalue^*$ helper operation to evaluate an expression to a constant by calling $\symvalue$ on all random variables in the expression before evaluation.
The semantics of a local declaration $\mfLetIn{\programvar}{\progexpr_1}{\progexpr_2}$ depends on \mfResample{} operators. 
If there is a \mfResample{} in $\progexpr_1$ the first rule reduces $\progexpr_1$ up to the first \mfResample{} and stops the evaluation (i.e. the resample flag is set to \true{}). 
Otherwise, $\progexpr_1$ fully reduces without interruption, and the total score is the product of the score of $\progexpr_1$ and $\progexpr_2$.
The expression $\mfFold{f}{\listl}{\val}$ is standard; it evaluates to the accumulator $\val$ if the list is empty. Otherwise, the semantics evaluates the function call of $f$ on the first element of $\listl$ and the accumulator $\val$ and recurses on the rest of the list. We include the rules of function calls, impure $\mkw{if}$-expressions, and applying \mkw{fold} to an empty list in \Cref{appendix:op-sem}.

A random variable $\mfLetRv{\annotation}{\programvar}{\mfApp{\distop}{\val}}{\progexpr}$ uses $\symassume$ to create a new random variable in the symbolic state. 
The annotation $\mfSample$ denotes that the variable must be encoded using samples, $\mfSymbolic$ denotes it must be encoded symbolically, and $\varepsilon$ denotes that the runtime should decide. If a random variable is annotated $\mfSample$, the operation $\symvalue$ draws a random sample from the variable's distribution and updates the symbolic state.
The \mkw{observe} operator uses $\symassume$ to create a new random variable without annotations in the symbolic state and uses $\symobserve$ to condition the random variable and compute the score of the particle. 
The value to condition on must be a constant value, so the rule uses $\symvalue^*$ to turn $\val_2$ into a constant value.

\begin{figure}
\begin{small}
\begin{mathpar}
\inferrule{
  \set{ \pclstep{\progexpr_i}{\symbstate_i}{\val_i}{\symbstate_i'}{\weight_i}{\false} }_{1 \le i \le N}\\
  \set{\distribution(\val_i, \symbstate_i') = \distr{}_i}_{1 \le i \le N}\\
  \Weight = \textstyle{\sum_{1 \le i \le N}} \weight_i
}
{\pclsstep{\set{\progexpr_i, \symbstate_i}_{1 \le i \le N}}{\textstyle{\sum_{1 \le i \le N}} \dfrac{\weight_i}{\Weight} \times \distr{}_i }}

\inferrule{
  \set{ \pclstep{\progexpr_i}{\symbstate_i}
                {\progexpr_i'}{\symbstate_i'}{\weight_i}{\doresample_i} }_{1 \le i \le N}\\
  \textstyle{\bigvee_{1 \le i \le N}}{\doresample_i} \\
  \mu = \categorical\left(\set{ \weight_i, (\progexpr_i', \symbstate_i') }_{1 \le i \le N}\right) \\
  \pclsstep{\set{\draw(\mu)}_{1 \le i \le N}}{\distr{}}
}
{\pclsstep{\set{\progexpr_i, \symbstate_i}_{1 \le i \le N}}{\distr{}}}
\end{mathpar}
\end{small}
\vspace{-0.6em}
\caption{Particle set evaluation rules.}
\label{fig:op:sem-set}
\vspace{-0.9em}
\end{figure}

\paragraph{Particle Set Evaluation}

\Cref{fig:op:sem-set} shows the particle set evaluation rules. 
The rules handle resuming execution from checkpoints on a set of $N$ particles.
If all the particles finish execution (i.e. the resample flag is \false{}), 
the rule gather the computed distributions into a mixture distribution where $\distribution(\val, \symbstate)$ returns the distribution of $\val$ with respect to the symbolic state~$\symbstate$ and $\weight_i / \Weight$ are the normalized scores.
Otherwise, the rule builds a categorical distribution $\mu$ of particles using the weights and resamples a fresh set of particles $\set{\draw(\mu)}$ before resuming the execution. 

\paragraph{Model Evaluation}
\begin{wrapfigure}[5]{r}{0.3\textwidth}
\centering
\vspace{-1.5em}
\begin{minipage}[b][][b]{.3\textwidth}
\begin{small}
\begin{mathpar}
\inferrule{
    \pclsstep{\set{\progexpr, \emptyset}_{1 \le i \le N}}{\distr{}}
}
{\pclpstep{N}{\progexpr}{\distr{}}}
\end{mathpar}
\end{small}
\end{minipage}
\vspace{-0.5em}
\caption{Model evaluation.}
\label{fig:op:sem-model}
\end{wrapfigure}

To evaluate the expression $\progexpr$ with a set of $N$ particles, the model evaluation rule launches the particle set evaluation with $N$ independent particles $(\progexpr, \emptyset)$ where each particle starts with an initially empty symbolic state. The program evaluates to a distribution $\distr{}$. \Cref{fig:op:sem-model} shows the rule.

\subsection{Implementing the Hybrid Inference Interface}
\label{sec:instantiations}
The \siren{} semantics enables inference plans to be used with different hybrid particle filtering algorithms through the hybrid inference interface. 
The hybrid inference interface serves as a barrier between two halves of hybrid particle filtering algorithms. Above the interface is the implementation of the programming language and the particle filtering component of the algorithm (\Cref{fig:op:sem-particle,fig:op:sem-set,fig:op:sem-model}). Below the interface, i.e. the implementation of the interface operations from \Cref{fig:hybrid-interface}, is the specification of how to apply symbolic computation.
Developers can extend the \siren{} runtime with a new hybrid particle filtering algorithm by implementing only the interface operations, without needing to re-implement the programming language or the particle filter.

To illustrate how the interface can be implemented, 
we present the implementation for two algorithms -- semi-symbolic inference~\citep{atkinson2022semi} and delayed sampling~\citep{murray2018delayed,lunden2017delayed} -- to illustrate 1)~how to detect opportunities for symbolic computation during inference and 2)~how to extend the symbolic state with additional runtime information. We note that the interface is more general; for evaluation in \Cref{sec:eval}, we implement a third inference algorithm -- Sequential Monte Carlo with belief propagation~\citep{azizian2023automatic} -- that exhibits both features. 
The full implementations of these algorithms are quite complex, so we defer the full details to the respective works. 
In this section, we focus on presenting only details that pertain to either 1)~how an inference algorithm's internal control flow relates to its inference plans, and 2)~providing a semantic foundation for the analysis and its soundness (in particular, for Section~\ref{sec:abs-instantiation}). 

\subsubsection{Semi-symbolic Inference}
\label{sec:ssi}
Semi-symbolic inference (SSI) implements the hybrid inference interface using a series of helper functions.
For example, SSI defines the $\symvalue$ operation -- which replaces a random variable with a sample from its distribution -- using the $\hoist{}$ and $\intervene{}$ helper functions. The $\hoist{}$ function manipulates the given random variable to have no parent variables and $\intervene{}$ updates the variable with the sample. 
The $\symvalue$ operation 
is defined as:
\begin{align*}
\symvalue(\randomvar, \symbstate) = \;
        \mit{let} \; \symbstate' = \hoist{}(\randomvar, \symbstate) \; \mit{in}\; 
        \mit{let} \; \val = \draw(\symbstate'(\randomvar)_{d}) \mit{in} \;        
        (\val, \intervene(\randomvar, \deltasample{\val}, \symbstate'))
\end{align*}

We defer a full discussion of the implementation of $\symvalue$, $\hoist{}$, and other helper functions to \citep{atkinson2022semi}.
However, the SSI implementation of the hybrid inference interface depends on a key core operation called $\swap$.
A partial definition of $\swap$ is as follows:
\begin{align*}
    \swap(\randomvar_1, \randomvar_2, \symbstate) = \; & 
    \begin{array}[t]{@{}l@{}}
        \mit{match}\; \symbstateD{}{\randomvar_1}, \symbstateD{}{\randomvar_2}\; \mit{with}\\
        \begin{array}[t]{@{}l@{}}
            |\; \normal{\mu_0}{\var_0}, \normal{\mu}{\var}\;
                \mit{if}\; (\mu = a*\randomvar_1+b) \wedge \const(\var_0, \var):\\
            \quad \begin{array}[t]{@{}l@{}}
            \letin{(\mu_0', \var_0') = (\plus{(\mult{a}{\mu_0})}{b}, \mult{(\mult{a}{a})}{\var_0})}\\
            \letin{(\var_0'', \mu_0'') = (\division{1}{(\plus{\division{1}{\var_0}}{\division{1}{\var}})}, \mult{(\plus{\division{\mu_0'}{\var_0'}}{\division{\randomvar_2}{\var}})}{\var_0''})}\\
            (\symbstate[\randomvar_1 \mapsto \normal{\division{(\minus{\mu_0''}{b})}{a}}{\division{\var_0''}{(\mult{a}{a})}}][\randomvar_2 \mapsto \normal{\mu_0'}{\plus{\var_0'}{\var}}], \true)\\
            \end{array}\\
            \dots\\
            |\; \_: \; (\symbstate, \false)
        \end{array}
    \end{array}
\end{align*}
The $\swap$ operation enables the SSI runtime to symbolically transform different conjugate distributions in different cases; here we show the case for linear-Gaussians. When 1) both $\randomvar_1$ and $\randomvar_2$ are Gaussian-distributed, 2) the variance of each distribution is constant (i.e., does not depend on any random variables), and 3) the mean of $\randomvar_2$ is expressible as an affine function of $\randomvar_1$, the \swap{} operation performs linear-Gaussian swapping.
Note that all operations inside the $\swap$ construct symbolic expressions and perform no actual computation (e.g. $\mult{a}{a}$ construct a symbolic expression representing $a^2$).
The $\swap$ operation computes the new parameters of the swapped distributions according to the standard rules for conjugate priors~\cite{conjugate_priors} and updates the new distributions in the symbolic state.
It returns the updated state which represents the same distribution but where $\randomvar_2$ no longer depends on $\randomvar_1$ and  $\randomvar_1$ now depends on $\randomvar_2$. It also returns a $\true$ flag indicating a swap occurred.
If no conjugate distributions are available, the $\swap$ operation returns a $\false$ flag, indicating that exact inference is not possible and the algorithm must use approximate sampling. 

The success and failure of the $\swap$ transformation determine whether the SSI runtime encodes a random variable symbolically or samples it, influencing the inference plan it implements. 
For example, in \Cref{fig:example-annotated-x}, the \zl{x} variable in \Cref{line:x-x} and the observed Gaussian in \Cref{line:obsx-x} are linear-Gaussians. The $\swap$ function will apply the linear-Gaussian swapping, maintaining \zl{x} symbolically. Whereas, if \zl{v} is non-constant, the linear-Gaussian case will not apply. If no other conjugate prior case applies, the SSI runtime will be forced to sample \zl{x}.

\subsubsection{Delayed Sampling}
\label{sec:ds}
Delayed sampling (DS) is a hybrid inference algorithm that also exploits conjugacy relationships \citep{murray2018delayed,lunden2017delayed}.
It is an alternative implementation of the interface that represents the symbolic state using a forest of disjoint trees, where each node in each tree is a random variable. %

While we defer the full discussion of DS to prior work, we note here that compared to SSI, DS specifies additional information about each random variable, as each node is one of 3 types -- Initialized, Marginalized, or Realized -- and the inference plan satisfiability analysis needs to incorporate this additional information. %
In particular, Initialized nodes represent random variables that have a conditional distribution dependent on their parent; Marginalized nodes represent variables that have marginal distributions, and may need to track an optional prior distribution (and a reference to its original parent); and Realized nodes represent variables that have been replaced by a constant value through sampling or observing. 
The DS symbolic state uses the data field $\symbstateS{}{\randomvar}$ to track the node type for each random variable, where $\varset \subseteq \Randomvar$ are the children of the node:
\begin{align*}
    \Stuff ::= \marginalizedroot{\varset} \;|\; &\marginalized{\randomvar}{\Distr}{\varset} \; | \; \initialized{\randomvar}{\varset} \;|\; \realized 
\end{align*}

DS maintains invariants about the symbolic state, including that each tree contains at most one path of Marginalized nodes.
These invariants further influence the inference plans implemented by the DS runtime and require DS to implement the symbolic interface using a series of unique helper functions.
For example, DS implements the $\symvalue$ operation using the helpers $\graft$ and $\realize$:
\begin{align*}
\symvalue(\randomvar, \symbstate) = \;
        \mathit{let}\; \symbstate' = \graft(\randomvar, \symbstate) \; \mathit{in}\;
        \mathit{let}\; \val = \draw(\symbstate'(\randomvar)_{d}) \; \mathit{in}\;
        (\val, \realize(\randomvar, \deltasample{\val}, \symbstate'))
\end{align*}

While we defer the full details to prior work \citep{murray2018delayed, lunden2017delayed}, we note that $\graft$ and $\realize$ utilize the node types to manipulate the symbolic state. These operations determine whether the DS runtime samples random variables as well as the default inference plan when no annotations are provided.

\section{Inference Plan Satisfiability Analysis}
\label{sec:analysis}
Using the \mfSymbolic{} and \mfSample{} distribution encoding annotations, developers can express an inference plan specifying their requirements for how each random variable is encoded. However, the hybrid inference runtime performs a dynamic encoding cast on unsatisfiable annotations, enabling the program execution to continue at the risk of potential accuracy degradation.
In this section, we present the \emph{inference plan satisfiability analysis}, which statically identifies unsatisfiable annotations to assist developers reason about which inference plans to use.
If the analysis passes, the inference is guaranteed to encode all \mfSample{} variables with samples and all \mfSymbolic{} variables symbolically. 
We next formalize the analysis as an abstract interpretation and prove its soundness.

\subsection{Abstract Hybrid Inference}
\label{sec:abs-int}
Our analysis performs an abstract interpretation of the program by relying on an analogous version the hybrid inference interface that operates over the abstract domain. We refer to this version of the interface as the \emph{abstract hybrid inference interface}.
We construct abstract symbolic expressions and abstract symbolic states that the abstract interface operates over and manipulates.
The abstract interface operations mirror the concrete operations, except that $\asymobserve$ and $\asymvalue$ do not perform scoring or sampling.

\begin{figure}
\small
\begin{align*}
\aDistr \bnfarrow \; & 
\color{lightgray} 
\anormal{\aExpr}{\aExpr}\;
\mid \; \abern{\aExpr}\;
\mid \; \ainvgamma{\aExpr}{\aExpr} \;
\mid \; \adeltad{\aExpr}\;
\mid \; \adeltasample{\aExpr}\;
\mid \; \dots \;
\color{black}
\mid \; \dtop\;
\mid \; \dunk{}{\avarset{}}\;\\
\aExpr \bnfarrow \; & 
\color{lightgray}
\aconstant\;
\mid \; \arandomvar\;
\mid \; \aplus{\aExpr}{\aExpr}\;
\mid \; \dots \;
\color{black}
\mid \; \cunk\;
\mid \; \etop\;
\mid \; \eunk{}{\avarset{}}\\
\color{lightgray}
\abs{\mathit{Cmp}} \bnfarrow \; & 
\color{lightgray}
\hat{\texttt{=}} \;
\mid \; \hat{\texttt{!=}} \;
\mid \; \hat{\texttt{<}}  \;
\mid \; \hat{\texttt{<=}} \qquad
\aconstant \in \aVal, \arandomvar \in \aRandomvar, \color{black} \avarset{} \subseteq \aRandomvar
\end{align*}
\vspace{-1.9em}
\caption{Grammar of abstract symbolic expressions. Grayed-out expressions are identical to those in Figure~\ref{fig:symb-grammar}.}
\label{fig:abs-symb-grammar}
\vspace{-0.9em}
\end{figure}

\paragraph{Abstract Symbolic Expressions} 
Figure~\ref{fig:abs-symb-grammar} shows the grammar of abstract expressions.
For every symbolic expression, there is a corresponding abstract symbolic expression. 
Abstract expressions can also be $\cunk$, representing all constants, or $\etop$, representing all possible expressions. Additionally, they can also be the $\eunk{}{\avarset{}}$ expression, where $\avarset{}$ is a set of abstract random variables; the $\eunk{}{\avarset{}}$ expression represents expressions that reference any number of the random variables in $\avarset{}$.
Likewise, abstract distributions also can be $\dtop$ or $\dunk{}{\avarset{}}$.

Abstract symbolic expressions are equipped with a partial order, which we summarize as follows:
$$
\begin{array}{@{}rclcrcl@{}}
    \aconstant & \leq & \cunk & & \cunk & \leq & \arandomvar \\
    \arandomvar & \leq & \eunk{}{\{\arandomvar\}} & & & & \\
    \eunk{}{\avarset{}} & \leq & \etop & & \dunk{}{\avarset{}} & \leq & \dtop\\
    \eunk{}{\avarset{}} & \leq & \eunk{}{\avarset{}'} & \Leftarrow & \avarset{} &\subseteq& \avarset{}'\\
    \aplus{\eunk{}{\avarset{,1}}}{\eunk{}{\avarset{,2}}} & \leq & \eunk{}{\avarset{}'} & \Leftarrow & \avarset{}' &=& \avarset{,1} \cup \avarset{,2}\\
    \aplus{\aexpr{}_1}{\aexpr{}_2} & \leq & \aplus{\aexpr{\prime}_1}{\aexpr{\prime}_2} &\Leftarrow& \multicolumn{3}{c}{\aexpr{}_1 \leq \aexpr{\prime}_1,\; \aexpr{}_2 \leq \aexpr{\prime}_2}\\
    &&&\dots\\
\end{array}
$$

\noindent
The abstract expression $\cunk$ subsumes all constants and is itself considered a constant. Abstract random variables subsume constants. 
The $\eunk{}{\{\arandomvar\}}$ expression subsumes the variable $\arandomvar$. The $\eunk{}{\avarset{}}$ expression is a refinement of the top expression $\etop$ and the relative ordering between $\eunk{}{\avarset{}}$ expressions is defined by their variable sets.
Likewise, $\dunk{}{\avarset{}}$ is a refinement of $\dtop$.
A plus expression subsumes another plus expression if the subexpressions also subsumes the subexpressions of the other expression. Other complex expressions and distributions are analogous. 

Multiple abstract expressions can be over-approximated as one expression via a \emph{joining} expression, defined as taking the least upper bound according to the partial ordering. The partial ordering is designed to maintain more precise abstract expressions and distributions by recursing on subexpressions if the top-level expression type matches. 
For example, the join of $\aplus{\arandomvar_1}{\hat{1}}$ and $\aplus{\arandomvar_2}{\hat{2}}$ produces $\aplus{\eunk{}{\{\arandomvar_1, \arandomvar_2\}}}{\cunk}$. Random variables are not equivalent to each other, so their join can only be the $\eunk{}{\{\arandomvar_1, \arandomvar_2\}}$ expression. 

Retaining a precise representation of expressions is essential to the precision of the analysis because the hybrid inference algorithms depend on identifying expressions of certain classes (e.g. linear-Gaussians) to perform symbolic computation. However, symbolic computations can also cause the expression size to grow exponentially and infinitely, which is computationally expensive. To make the abstract domain finite, while trading off between precision and runtime, the analysis widens abstract expressions that are over the expression tree depth threshold $T$ to $\eunk{}{\avarset{}}$.  We use $T=5$ in our implementation.

\paragraph{Abstract Symbolic State}
An abstract symbolic state $\asymbstate{}$ is a finite mapping of abstract random variables (which reside in their own namespace) to tuples of annotations, abstract distributions, and the implementation-specific abstract data field: $\asymbstate{} \in \aSymbstate = \aRandomvar \rightarrow \aAnnotation \times \aDistr \times \aStuff$. 
Each entry is a constraint on entries in the concrete state. For example, $\asymbstateD{}{\arandomvar} = \anormal{\cunk}{\cunk}$ requires the concrete state to map the corresponding variable to Gaussian distributions with constant parameters.

$\aAnnotation = \set{\amfSample, \amfSymbolic, \noannotation}$ is the abstraction of annotations that is equipped with a partial ordering that defines its join operation: $\noannotation \leq \amfSample \leq \amfSymbolic$. The partial ordering is designed such that only a single abstract annotation needs to be tracked for each abstract random variable. An abstract random variable represents one or more concrete random variables, and so its abstract annotation must represent one or more concrete annotations.
A \mfSample{} annotation is always satisfiable because the \siren{} semantics collapses the random variable to a sampled value upon instantiation. This means the analysis does not need to remember $\amfSample{}$ annotations. However, a \mfSymbolic{} annotation could be an unsatisfiable annotation if the \siren{} runtime needs to sample it to continue execution. Then, the analysis must identify if any concrete random variable annotated with $\mfSymbolic$ will be sampled in the program, so $\amfSymbolic$ is the greatest in the partial ordering, to encompass the presence of one or more $\mfSymbolic$ annotations.

Abstract states are equipped with a partial order and a join operation:
\begin{align*}
    \asymbstate{1} \leq \asymbstate{2} \iff \textstyle{\forall_{\arandomvar \in \dom(\asymbstate{1})}}\; \asymbstate{1}(\arandomvar) \leq \asymbstate{2}(\arandomvar)
\end{align*}

\noindent
During the analysis, the join operation may introduce random variables that are unreachable from the abstract expression accompanying the abstract state. 
However, additional unreachable random variables in a symbolic state do not affect executions, so abstract symbolic states are equivalent for a given expression if all reachable variables from the expression have the same entries. This notion is useful when comparing two abstract symbolic states during the analysis.
We define a weak equivalence between two abstract symbolic states that only compares random variables that are reachable from the given expression to capture this notion.
\begin{align*}
    \weakeq{\asymbstate{1}}{\asymbstate{2}}{\progexpr} \iff \textstyle{\forall_{\arandomvar \in \reachable(\progexpr,\; \asymbstate{1},\; \asymbstate{2}) }}\; \asymbstate{1}(\arandomvar) = \asymbstate{2}(\arandomvar)
\end{align*}

\paragraph{Precision of Joining Expressions}
\label{sec:join-expr}
\begin{wrapfigure}[6]{r}{0.45\textwidth}
  \vspace{-1.5em}
  \centering
  \begin{lstlisting}
let symbolic x1 <- gaussian(1.,1.) in
let symbolic x2 <- gaussian(0.,1.) in
let x = if cond then x1+1. else x2+2. in
observe(gaussian(x,5.), obs)
  \end{lstlisting}
  \vspace{-1.8em}
  \caption{Example program.}
  \label{fig:join-example}
\end{wrapfigure}
When a program contains data-dependent or stochastic control flow, the static analysis does not know which branch would be evaluated. In such cases, the analysis must over-approximate the true states of the program by joining the abstract expressions and symbolic states from the branches. It then loses critical information about the structures of the 
subexpressions. Hybrid inference relies on matching symbolic expressions 
to detect exact 
inference opportunities, 
so the over-approximation can significantly impact the precision of the analysis. 

For example, consider the program in \Cref{fig:join-example}, where \mkw{cond} and \mkw{obs} are constant values. The variables \mkw{x1} and \mkw{x2} refer to the abstract random variables $\arandomvar_1$ and $\arandomvar_2$. 
No matter which branch the runtime executes, the observed Gaussian is a linear-Gaussian.
In SSI, the parent variable in both cases  would remain as symbolic expressions, so the inference plan is satisfiable for all possible executions.
However, the analysis does not know the value of \zl{cond}, so it must over-approximate the program state by joining $\aplus{\arandomvar_1}{\hat{1}}$ and $\aplus{\arandomvar_2}{\hat{2}}$ into $\aplus{\eunk{}{\{\arandomvar_1, \arandomvar_2\}}}{\cunk}$. It concludes the resulting abstract observed Gaussian is not necessarily linear-Gaussian, as the $\eunk{}{\{\arandomvar_1, \arandomvar_2\}}$ expression also represents expressions that are non-linear to $\arandomvar_1$ and $\arandomvar_2$. Consequently, it cannot be sure $\arandomvar_1$ and $\arandomvar_2$ will not be sampled and cannot determine the inference plan is satisfiable even though it is. 

We define a special operation for joining expressions that takes advantage of the fact that abstract random variables can be renamed without compromising soundness, which we prove in \Cref{appendix:proofs}. The key idea is that abstract random variables and concrete random variables exist in different namespaces. A single abstract variable can represent more than one concrete variable if its abstract symbolic state entry over-approximates the entries of those concrete variables.
By renaming two otherwise disparate variables to the same name, their entries are forced to be joined into one. 
We define the special join of two expressions $\aexpr{}_1$ and $\aexpr{}_2$ under symbolic states $\asymbstate{1}$ and $\asymbstate{2}$ as:
\begin{align*}
  \narrowjoin(\aexpr{}_1, \aexpr{}_2, \asymbstate{1}, \asymbstate{2})  = %
  \letin{(\aexpr{}_3, \asymbstate{3}) = \rename(\aexpr{}_1, \aexpr{}_2, \asymbstate{2})}
  (\aexpr{}_1 \sqcup \aexpr{}_3, \asymbstate{1} \sqcup \asymbstate{3})
\end{align*}

\noindent
where $\sqcup$ refers to the basic join operation implied by the partial orders for abstract expressions and abstract symbolic states.
The $\rename(\aexpr{}_1, \aexpr{}_2, \asymbstate{2})$ function returns renamed versions of $\aexpr{}_2$ and $\asymbstate{2}$ that maximize the similarities between $\aexpr{}_1$ and $\aexpr{}_2$ with capture-avoiding substitution.

For the program in \Cref{fig:join-example}, the analysis renames the variable $\arandomvar_2$ to $\arandomvar_1$. 
The joined expression is the more structurally precise expression $\aplus{\arandomvar_1}{\cunk}$, and 
the joined state assigns $\arandomvar_1$ to $\anormal{{\cunk}}{\hat{1}}$. Then, the observed variable has the distribution $\anormal{\aplus{\arandomvar_1}{\cunk}}{5}$.
$\narrowjoin$ retains the shared structure in the expressions, and the analysis recognizes the observed variable as a linear-Gaussian.

\paragraph{Fail}
A program execution may encounter an unsatisfiable annotation that is dynamically cast to be satisfiable. When the analysis detects the possibility of this event (i.e. when a $\amfSymbolic$ abstract random variable could be sampled), it returns $\afail$, the top of all abstract values: Any value joined with $\afail$ results in $\afail$. Any operation that receives $\afail$ as input also returns $\afail$ as the output.

\subsection{Abstract Interpretation Rules}
\begin{figure}
  \begin{small}
  \begin{mathpar}
  \inferrule%
  {
   \asymvalue^*(\aval, \asymbstate{}) = \cunk, \asymbstate{\aval} \\
   \apclstep{\progexpr_1}{\asymbstate{v}}
           {\aval_1}{\asymbstate{1}'} \\
   \apclstep{\progexpr_2}{\asymbstate{v}}
           {\aval_2}{\asymbstate{2}'}\\
   \narrowjoin(\aval_1, \aval_2, \asymbstate{1}', \asymbstate{2}') = \aval'', \asymbstate{}'
  }
  {\apclstep{\mfIf{\aval}{\progexpr_1}{\progexpr_2}}{\asymbstate{}}
           {\aval''}{\asymbstate{}'}}
  
  \inferrule%
  {
    \apclstep{\mfApp{f}{\mfPair{\alistl}{\aval}}}{\asymbstate{}}
             {\aval_f}{\asymbstate{f}}\\
    \narrowjoin(\aval, \aval_f, \asymbstate{}, \asymbstate{f}) = \aval_j, \asymbstate{j}\\
    \aval = \aval_j\\
    \weakeq{\asymbstate{}}{\asymbstate{j}}{(\alistl,\aval)}
  }
  {\apclstep{\mfFold{f}{\alistl}{\aval}}{\asymbstate{}}
            {\aval}{\asymbstate{}}}
  
  \inferrule%
  {
    \apclstep{\mfApp{f}{\mfPair{\alistl}{\aval}}}{\asymbstate{}}
             {\aval_f}{\asymbstate{f}}\\
    \narrowjoin(\aval, \aval_f, \asymbstate{}, \asymbstate{f}) = \aval_j, \asymbstate{j}\\
    \apclstep{\mfFold{f}{\alistl}{\aval_j}}{\asymbstate{j}}
             {\aval'}{\asymbstate{}'}
  }
  {\apclstep{\mfFold{f}{\alistl}{\aval}}{\asymbstate{}}
           {\aval'}{\asymbstate{}'}}
  
  \inferrule{
    \adset = \set{\adistribution(\aval,\asymbstate{}') \;|\; (\progexpr,\asymbstate{}) \in \apset, (\progexpr,\asymbstate{} \;\abs{\downarrow}\; \aval,\asymbstate{}') }
  }
  {\apclsstep{\apset}{\textstyle{\bigsqcup_{\adistr{}_i \in \adset}}\; \adistr{}_i}}

  \inferrule{
    \ensuremath{\set{{\progexpr}, {\emptyset}}\; \hat{\downdownarrows}\; \afail}
    }
    {\apclpstep{\progexpr}{\afail}}
  
  \inferrule{
      \apclsstep{\set{{\progexpr}, {\emptyset}}}
              {\adistr{}}
  }
  {\apclpstep{\progexpr}{\mit{satisfiable}}}
  
  \end{mathpar}
  \end{small}
  \vspace{-0.9em}
  \caption{Fragment of the abstract interpretation rules. The full set of rules is in \Cref{appendix:abs-interp-rules}. 
  }
  \label{fig:abstract-interp}
\vspace{-0.5em} 
\end{figure}

Figure~\ref{fig:abstract-interp} presents a fragment of the interpretation rules of a \siren{} program using the abstract hybrid inference operations. 
We present here only the rules that differ from the concrete semantics and include the full abstract semantics in \Cref{appendix:abs-interp-rules}.

\paragraph{Conditionals}
When $\asymvalue^*$ returns $\cunk$, the analysis cannot determine which branch of a conditional is taken, so it interprets both branches and joins the resulting abstract expressions with $\narrowjoin$ to approximate the program execution.

\paragraph{Fold.}
If the \zl{fold} operation receives a list argument $\alistl$ that is not a constant list, such as $\eunk{}{\emptyset}$, the analysis over-approximates the operation by computing the abstract fixpoint of the function $f$. 
The analysis first interprets $f$ on $(\alistl, \aval)$. Since $\alistl$ is not a constant list, it is either $\eunk{}{\avarset{}}$ or $\etop$, each of which also over-approximates any particular item in the list, respectively. The analysis computes $(\aval_j, \asymbstate{j})$ using $\narrowjoin$, which are the over-approximations of the current inputs and the inputs of the next iteration for $f$. If they are weakly equal to the current inputs, no further application of $f$ could be different; the fixpoint computation stops when $\aval$ and $\aval_j$ are equal and $\asymbstate{}$ and $\asymbstate{j}$ are weakly equal with respect to $(\alistl, \aval)$.

During fixpoint computations, the analysis could be joining $\eunk{}{\avarset{}}$ expressions. However, $\avarset{}$ can grow arbitrarily large. To bound the growth, the analysis widens the joined expression by converting $\eunk{}{\avarset{}}$ to $\etop$ if $|\avarset{}| \geq N$ for some parameter $N$. Our implementation uses $N=4$, but the parameter may be adjusted for greater precision at the cost of more fixpoint iterations.

\paragraph{Particle Set and Model Evaluation}
Unlike the concrete semantics, the abstract semantics spawns only a singleton set of particles to evaluate, as there are no weights to consider.
All possible particles from a program are accounted for in a single abstract particle evaluation. Thus, there is no abstract equivalent of the resampling step, and \mfResample{} is a no-op.
The abstract interpretation of a program is then simply whether the abstract particle evaluation rules encounter failures or not. 

\subsection{Implementing the Abstract Hybrid Inference Interface}
\label{sec:abs-instantiation}
The analysis, analogous to the concrete semantics, relies on the abstract hybrid inference interface. As such, the analysis is also unified across different hybrid particle filtering algorithms. 
Only the implementation of the interface is required to extend the analysis to a new algorithm. 
This section presents how the analysis implements an abstract version of the hybrid interface for SSI and DS. 
While we defer the full details to \Cref{appendix:abs-interface}, we note the similarity of the abstract operations to the concrete operations from Section~\ref{sec:instantiations}, except that the abstract operations have extensions to handle the analysis's imprecision.
We present here the abstract version of SSI's $\swap$ operation and how the analysis incorporates the additional information in DS's node types.

\subsubsection{Semi-Symbolic Inference}
Each SSI operation has an abstract version that mirrors the concrete operation and differs only in how it handles abstract values like $\dunk{}{\avarset{}}$. For instance, the $\asymvalue$ operation depends on $\ahoist$ and $\aintervene$, and it uses $\cunk$ instead of drawing values. However, for $\asymvalue$, there is the additional difference that it returns $\afail$ if the input variable has the $\amfSymbolic$ annotation i.e. the concrete random variables represented by the abstract random variable may have an unsatisfiable annotation.
\begin{align*}
\asymvalue(\arandomvar, \asymbstate{}) = \;
        \begin{array}[t]{@{}l@{}}
        \mit{if}\; \asymbstatePV{}{\arandomvar} = \amfSymbolic\; \mit{then}\;
        \afail\;\\
        \mit{else}\;
        \mathit{let} \; \asymbstate{}' = \ahoist(\arandomvar, \asymbstate{}) \; \mathit{in}\; 
        (\cunk, \aintervene(\arandomvar, \adeltasample{\cunk}, \asymbstate{}'))
        \end{array}
\end{align*}

To show how implementations handle abstract values, we next describe the abstract $\swap$ operation used in SSI. The abstract operation simulates the concrete $\swap$ operation as defined in \Cref{sec:ssi} by detecting conjugacy and performing the equivalent computation where it can:
\\
\vspace{-1em}
\begin{align*}
  \aswap(\arandomvar_1, \arandomvar_2, \asymbstate{}) = \; & 
    \begin{array}[t]{@{}l@{}}
        \mit{match}\; \asymbstateD{}{\arandomvar_1}, \asymbstateD{}{\arandomvar_2}\; \mit{with}\\
        \begin{array}[t]{@{}l@{}}
            |\; \anormal{\mu_0}{\var_0}, \anormal{\mu}{\var}\;
                \mit{if}\; (\mu = \aplus{\amult{a}{\arandomvar_1}}{b}) \wedge \aconst(\var_0, \var):\\
            \quad \begin{array}[t]{@{}l@{}}
            \letin{(\mu_0', \var_0') = (\aplus{(\amult{a}{\mu_0})}{b}, \amult{(\amult{a}{a})}{\var_0})}\\
            \letin{(\var_0'', \mu_0'') = (\adivision{\hat{1}}{(\aplus{\adivision{\hat{1}}{\var_0}}{\adivision{\hat{1}}{\var}})}, \amult{(\aplus{\adivision{\mu_0'}{\var_0'}}{\adivision{\arandomvar_2}{\var}})}{\var_0''})}\\
            (\asymbstate{}[\arandomvar_1 \mapsto \anormal{\adivision{(\aminus{\mu_0''}{b})}{a}}{\adivision{\var_0''}{(\amult{a}{a})}}][\arandomvar_2 \mapsto \anormalbig{\mu_0'}{\aplus{\var_0'}{\var}}], \true)\\
            \end{array}\\
            \dots\\
            |\; \dunk{}{\_}, \_ : \; (\setunk(\arandomvar_1, \asymbstate{}), \afalse)\\
            |\; \_: \; (\asymbstate{}, \afalse)
        \end{array}
    \end{array}
\end{align*}

\noindent
Because the analysis is performed at compile-time, it can only perform a best-effort detection of conjugates, given that parameters might be represented by the opaque $\eunk{}{\avarset{}}$ and $\dunk{}{\avarset{}}$ expressions.
If the abstract distribution of the parent variable $\arandomvar_1$ is $\dunk{}{\avarset{}}$ or $\dtop$, the analysis recursively sets  $\arandomvar_1$ and its ancestors to $\dtop$ using $\setunk(\arandomvar_1, \asymbstate{})$. During this process, if the random variable or any of its ancestors (i.e. the parents of the variable and their parents and so on) are annotated $\amfSymbolic$, the analysis cannot be sure if the variable is a conjugate prior nor that it is not a conjugate prior (meaning that it will be sampled), so the analysis conservatively $\afail$s. 

For example, consider an abstract symbolic state where that the parent variable $\arandomvar_1$ has the abstract distribution $\dunk{}{\{\arandomvar_3\}}$, and variable $\arandomvar_3$ has $\anormal{1.}{1.}$. Also, $\arandomvar_3$ has the $\amfSymbolic{}$ annotation. Because the parent variable $\arandomvar_1$ can be any distribution referencing $\arandomvar_3$, the analysis cannot determine if $\arandomvar_1$ has a conjugate prior distribution to the child variable $\arandomvar_2$. Subsequently, it cannot determine if $\arandomvar_1$ will be sampled or not and will invoke $\setunk(\arandomvar_1, \asymbstate{})$, resulting in $\arandomvar_1$ and $\arandomvar_3$ both having the distribution $\dtop$, because it also cannot determine the representations of upstream distributions. $\arandomvar_3$ has the $\amfSymbolic$ annotation, so the analysis will return $\afail$.

\subsubsection{Delayed Sampling}
The abstract node types of DS can be Initialized, Marginalized, or Realized. Initialized and Marginalized nodes still track their parent variables, their prior distributions (using abstract expressions), and their children. The fourth abstract node type, \mtt{TopN}, is the top of all node states. It indicates that the analysis does not know the node's type. No prior and no parent to the random variable are tracked for \mtt{TopN}, only its children.
\begin{align*}
    \aStuff ::=\; & \amarginalizedroot{\avarset{}} \;|\; \amarginalized{\arandomvar}{\adistr{}}{\avarset{}} \; | \; \ainitialized{\arandomvar}{\avarset{}} \;|\; \arealized \; | \; \topnode{\avarset{}}
\end{align*}

\noindent
Here, $\avarset{} \subseteq \aRandomvar$ represents the set of possible child random variables. The abstract node states are equipped with a partial ordering:
$$
\begin{array}{@{}rclcl@{}}
    \arealized & \leq & \topnode{\emptyset}  & & \\
    \amarginalizedroot{\avarset{}} & \leq & \amarginalizedroot{\avarset{}'} & \Leftarrow & \avarset{} \subseteq \avarset{}' \\
    \amarginalizedroot{\avarset{}} & \leq & \topnode{\avarset{}'} & \Leftarrow & \avarset{} \subseteq \avarset{}' \\
    \amarginalized{\arandomvar}{\adistr{}}{\avarset{}} & \leq & \amarginalized{\arandomvar'}{\adistr{\prime}}{\avarset{}'} & \Leftarrow & (\arandomvar, \adistr{}) \leq (\arandomvar', \adistr{\prime}), \avarset{} \subseteq \avarset{}' \\
    \amarginalized{\arandomvar}{\adistr{}}{\avarset{}} & \leq & \topnode{\avarset{}'} & \Leftarrow & \avarset{} \subseteq \avarset{}' \\
    \ainitialized{\arandomvar}{\avarset{}} & \leq & \ainitialized{\arandomvar'}{\avarset{}'} & \Leftarrow & \arandomvar \leq \arandomvar', \avarset{} \subseteq \avarset{}' \\
    \ainitialized{\arandomvar}{\avarset{}} & \leq & \topnode{\avarset{}'} & \Leftarrow & \avarset{} \subseteq \avarset{}' \\
    \topnode{\avarset{}} & \leq & \topnode{\avarset{}'} & \Leftarrow & \avarset{} \subseteq \avarset{}'
\end{array}
$$

The join operation defined by the partial ordering is used when the analysis has to compute the join of two abstract symbolic states.
Even though Initialized and Marginalized node states may be tracking parent and prior distributions, \mtt{TopN} does not. This is because as soon as an abstract node type becomes \mtt{TopN}, the analysis recursively over-approximates the parent and Marginal children to the node by setting them to \mtt{TopN} node states as well. During this process, if any of these random variables has a $\amfSymbolic$ annotation, the analysis returns $\afail$.

For example, consider two abstract symbolic states $\asymbstate{1}$ and  $\asymbstate{2}$. In both states, the variable $\arandomvar_1$ has the node type 
$\amarginalizedroot{\{\arandomvar_2\}}$. $\arandomvar_2$ has the node type $\ainitialized{\arandomvar_1}{\emptyset}$ in $\asymbstate{1}$ but $\amarginalized{\arandomvar_1}{\anormal{\arandomvar_1}{1.}}{\arandomvar_3}$ in $\asymbstate{2}$. The symbolic state $\asymbstate{2}$ also has $\arandomvar_3$ having the node type $\amarginalized{\arandomvar_2}{\anormal{\arandomvar_2}{1.}}{\emptyset}$ and the $\amfSymbolic$ annotation. The join operation between these two states results in $\arandomvar_2$ having \mtt{TopN} node type. Because delayed sampling uses the node types to determine whether to sample variables, the analysis would not be able to determine whether its parents and children would be sampled. As such, it would then set $\arandomvar_1$ and $\arandomvar_3$ to both be \mtt{TopN} node types as well. The resulting state would have $\arandomvar_1$ having the node type $\topnode{\{\arandomvar_2\}}$, $\arandomvar_2$ having $\topnode{\{\arandomvar_3\}}$, and $\arandomvar_3$ having $\topnode{\emptyset}$. $\arandomvar_3$ has the $\amfSymbolic$  annotation, so the analysis will return $\afail$.

\subsection{Properties}
\label{sec:properties}
In this section, we show that the inference plan satisfiability analysis is sound. 
The approach is mostly standard~\citep{cousot1977abstract,cousot1992abstract}, except for how it handles random variable names and the variable sets in abstract expressions.
We will highlight these nonstandard elements throughout the formal development. 
First, we define the collecting semantics for sets of program states that serves as the basis of our soundness proof. The collecting semantics accumulates from program executions the information relevant to the program properties under study. 
The abstract states computed by the analysis must over-approximate the collected concrete states to ensure soundness.
Next, we define the abstraction and concretization functions that relate abstract values to concrete values.
Finally, we present key lemmas and theorems that prove the analysis is sound.

\subsubsection{Collecting Semantics}
\begin{wrapfigure}[5]{r}{0.4\textwidth}
  \centering
  \vspace{-1.1em}
  \begin{minipage}[b][][b]{.4\textwidth}
    \begin{small}
    \begin{mathpar}
    \inferrule%
    {\evalset = \set{ (\progexpr',\symbstate',\doresample) \;|\; (\progexpr, \symbstate \downarrow^{\doresample} \progexpr', \symbstate', w)}}
    {\cpclstep{\progexpr, \symbstate}{\evalset}}
    \end{mathpar}
\end{small}
\end{minipage}
  \vspace{-0.5em}
  \caption{Collecting particle evaluation rule.}
  \label{fig:collecting-particle}
\end{wrapfigure}
The collecting semantics is a forward collecting semantics~\citep{cousot1992abstract} based on our operational semantics. 
The program states collected differ between our three types of evaluation rules. Even though the operational semantics uses weight values and performs resampling, 
the collecting semantics ignores these aspects. 
The analysis only depends on the possible particles produced during program execution. 
The resampling step does not introduce any new particles to the execution, only duplicating or removing existing particles.
Additionally, weight values do not affect the representation of random variables. 
As such, weights are not collected and the resampling step in the particle set evaluation rules is a no-op.

\paragraph{Particle Evaluation.}
The collecting semantics of particle evaluation collects any particle that can result from applying concrete particle evaluation rules to the particle, shown in \Cref{fig:collecting-particle}. The rules may produce more than one evaluated particle due to data-dependent or randomized control flow. The collecting semantics returns all such possible evaluated particles with the resample flags, dropping weight values. We call these tuples \emph{configurations}.

When a \mfSymbolic{} distribution encoding annotation is unsatisfiable, the \siren{} runtime performs a dynamic encoding cast by sampling the annotated random variable anyway, enabling the execution to continue. The cast is the event that the inference plan satisfiability analysis must detect.
In the collecting semantics, the $\symvalue$ function must return the $\fail$ value if the annotation of the input random variable is \mfSymbolic{}. The $\fail$ value propagates in the standard way.
\begin{align*}
\symvalue(\randomvar, \symbstate) = \;
        \begin{array}[t]{@{}l@{}}
        \mit{if}\; \symbstatePV{}{\randomvar} = \mfSymbolic\; \mit{then}\;
        \fail\;\\
        \mit{else}\;
        \mit{let} \; \symbstate' = \hoist{}(\randomvar, \symbstate) \; \mit{in}\; 
        \mit{let} \; \val = \draw(\symbstate'(\randomvar)) \mit{in} \;        
        (\val, \intervene(\randomvar, \deltasample{\val}, \symbstate'))
        \end{array}
\end{align*}

\paragraph{Particle Set and Model Evaluation}
While the bulk of the soundness proof refers to the collecting particle evaluation $\tilde{\downarrow}$, the top-level theorems also refer to collecting analogs of the particle set and model evaluation semantics of Figure~\ref{fig:op:sem-set} and~\ref{fig:op:sem-model}.
The relation $(\cpclsstep{\pset}{\dset})$ means that the set of particles $\pset$ evaluates to the set of distributions $\dset$, and the definition of $\tilde{\downdownarrows}$ refers to the definition of $\tilde{\downarrow}$.
Similarly, the relation $(\cpclpstep{\progexpr}{\dset})$ means that the program $\progexpr$ evaluates to the set of distributions $\dset$, and the definition $\tilde{\Downarrow}$ depends on $\tilde{\downdownarrows}$.
The definitions for both $\tilde{\downdownarrows}$ and $\tilde{\Downarrow}$ are in Appendix~\ref{appendix:proofs}.

\subsubsection{Abstraction}
The abstraction function $\abstr$ maps sets of concrete values to an abstract value. We define the function first for singleton sets of concrete values.
The abstraction of sets of multiple values is then the join of the corresponding abstracted values.

Concrete random variables and abstract random variables have different namespaces. 
To account for this, the abstraction function assumes the existence of a default mapping $\rvnames : \Randomvar \rightarrow \aRandomvar$ that maps concrete variable names to abstract variable names.
The abstractions of both random variables and symbolic states use this to produce the appropriate name in abstract values.

\begin{definition}[Abstraction Function]
We define the abstraction function $\abstr$ as follows. 
The default mapping function $\rvnames{}$ maps every concrete variable to a unique, canonical abstract variable.
\[
\begin{array}{@{}rclcrcl@{}}
\abstr(\set{\constant}) &=& \aconstant & &  \abstr(\set{\randomvar}) &=& \rvnames(\randomvar)\\
\abstr(\{\plus{\expr{}_1}{\expr{}_2}\}) &=& \aplus{\abstr(\{\expr{}_1\})}{\abstr(\{\expr{}_2\})} & & \abstr(\set{(\val_1, \val_2)}) &=& (\abstr(\set{\val_1}), \abstr(\set{\val_2}))\\
\abstr(\set{\mfSymbolic}) &=& \amfSymbolic & &  \\
&&&\cdots\\
\abstr(\set{\symbstate}) &=& \multicolumn{5}{l}{
\left\{ \rvnames(\randomvar) \mapsto \abstr(\set{\symbstate(\randomvar)})\; \middle|\; \randomvar \in \dom(\symbstate)\right\}}\\
\abstr(\set{\fail}) &=& \afail & &\abstr(\vset) &=& \bigsqcup_{\val \in \vset} \abstr(\set{v})
\end{array}
\]
\end{definition}
For example, consider a particle $(\expr{}, \symbstate)$ with symbolic expression $\expr{} = \plus{\randomvar_1}{1}$ and symbolic state
$
\symbstate = \{ \randomvar_1 \mapsto (\mfSymbolic, \SSIgamma{1}{1}, \realized) \}
$.
Assuming that $\rvnames$ maps $\randomvar_1$ to the abstract variable $\arandomvar_a$, we have that $\abstr(\set{\expr{}, \symbstate}) = (\aplus{\arandomvar_a}{\abs{1}}, \abstr(\set{\symbstate}))$ where $\abstr(\set{\symbstate}) = \{\arandomvar_a \mapsto ( \amfSymbolic, \agamma{\abs{1}}{\abs{1}}, \arealized) \}$.

\subsubsection{Concretization} 
The concretization function $\concret$ plays the opposite role to the abstraction function: it maps every abstract value to a set of concrete values.
While we formalize abstraction with a default mapping function, 
the concretization needs to account for all possible mappings.
We first define a version of the concretization function that is parameterized by a surjective function $\rvname{} : \Randomvar \rightarrow \aRandomvar$ that maps concrete random variables to abstract random variables.
The function must be surjective since every abstract random variable must have a corresponding concrete random variable. The function does not have to be injective, because an abstract variable can represent multiple concrete variables that share properties in the symbolic state.
The concretizations for $\eunk{}{\avarset{}}$ and $\dunk{}{\avarset{}}$ incorporate the variable set $\avarset{}$ by ensuring that the concretization includes only those expressions whose \emph{free variables} are a subset of $\avarset{}$.
We formalize this using the operation $\mathit{FV}(\expr{}, \rvname{})$ that returns the set of free variables in $\expr{}$, mapped to abstract names using $\rvname{}$.
Finally, we define the concretization function by taking the union over all possible name-mapping functions;
the set resulting from the concretization function is closed under name re-mappings.

\begin{definition}[Concretization Function]
We define the concretization function $\concret$ as follows.
First, we define $\concret$ as a function that takes in an abstract state and the name-mapping function $\rvname{}$. The function $\rvname{}$ is surjective and it maps concrete random variables into abstract random variables.\\
\vspace{-1em}
\[
\begin{array}{@{}rclcrcl@{}}
\concret(\aconstant, \rvname{}) &=& \{ \constant \} & & \concret(\cunk, \rvname{}) &=& \Val \\
\concret(\arandomvar, \rvname{}) &=& \{ \randomvar \;|\; \rvname{}(\randomvar) = \arandomvar \} \bigcup \Val & & \concret(\eunk{}{\avarset{}}, \rvname{}) &=& {\{\expr{} \; | \; \mathit{FV}(\expr{}, \rvname{}) \subseteq \avarset{} \} } \\
\concret(\etop, \rvname{}) &=& \Expr & & \concret(\amfSymbolic, \rvname{}) &=& \set{\mfSymbolic, \mfSample, \varepsilon}\\
\concret(\aplus{\aexpr{}_1}{\aexpr{}_2}, \rvname{}) &=& \multicolumn{5}{l}{
\{ \plus{\expr{}_1}{\expr{}_2}\ |\ \expr{}_1 \in \concret(\aexpr{}_1, \rvname{}), \expr{}_2 \in \concret(\aexpr{}_2, \rvname{})\}}\\
\concret((\aval_1, \aval_2), \rvname{}) &=& \multicolumn{5}{l}{\{ (\val_1, \val_2) \ |\ \val_1 \in \concret(\aval_1, \rvname{}), \val_2 \in \concret(\aval_2,\rvname{}) \}}\\
&&&\cdots\\
\concret(\asymbstate{}, \rvname{}) &=& \multicolumn{5}{l}{ 
\{ \symbstate{} \;|\; 
\textstyle{\forall_{\randomvar}}\;
\mit{if}\; \rvname{}(\randomvar) \in \dom(\asymbstate{})\; \mit{then}\;
\symbstate(\randomvar) \in \asymbstate{}(\rvname{}(\randomvar))
\;\mit{else}\;
\randomvar \notin \dom(\symbstate)
\} }\\
\end{array}
\]

\noindent
Now, we define $\concret$ by taking the union over all possible surjective naming functions: 
$$ \concret(\aval) = \left\{\;
\val \;\middle|\; 
\val \in \concret(\aval, \rvname{}), 
\rvname{} \in \randomvar \rightarrow \arandomvar, 
\rvname{}\; \mit{is}\; \mit{surjective}
\;\right\}$$
\end{definition}
For example, the concrete symbolic state $\set{ \randomvar_1 \mapsto (\mfSample, \deltad{1}), \randomvar_2 \mapsto (\mfSample, \deltad{2}) }$  is included in the concretization $\concret(\{ \arandomvar_a \mapsto (\amfSample, \dunk{}{\emptyset}) \})$. Conversely, the concrete symbolic state $\set{\randomvar_1 \mapsto (\mfSample, \normal{\randomvar_2}{1.}), \randomvar_2 \mapsto (\mfSample, \deltasample{1})}$ is not.
Additionally, the concretization $\concret(\adistr{})$ where $\adistr{} = \anormal{\arandomvar_a}{\aplus{\arandomvar_b}{\hat{1.}}}$ includes both $\normal{\randomvar_0}{\plus{\randomvar_1}{1.}}$ and $\normal{\randomvar_1}{\plus{\randomvar_0}{1.}}$.

\subsubsection{Soundness of Analysis}
We now present the key ideas and properties necessary to prove the soundness of the analysis and defer the full formalization and proofs to \Cref{appendix:proofs}.
Our treatment of soundness is limited in that we assume the analysis has a sound implementation of the symbolic interface, and show that under this assumption, the overall analysis is sound. We formalize this assumption as follows:
\begin{assumption}[Abstract Hybrid Inference Interface Soundness]
For every $i \in \{\textsc{Assume},\allowbreak \textsc{Value},\allowbreak \textsc{Observe}\}$ and input values $\val_i$, we have that $i(\val_i) \in \concret(\abs{i}(\abstr(\{\val_i\})))$.
\label{assumption:syminterface}
\end{assumption}

Because of the join operation on abstract symbolic states, an abstract operation might compute an abstract symbolic state that has variables that are not reachable from the computed expression. The concretization of abstract symbolic states retains those unreachable variables in the concrete symbolic states. 
Symbolic states with different domains are not strictly equal. 
However, unreachable variables do not alter the evaluated expression nor the reachable entries in the resulting symbolic state. 
To account for this property, we define a weak equivalence relation for concrete symbolic states, analogous to the weak equivalence relation for abstract states.

The formalization uses an auxiliary operation $\tilde{\downarrow}^*$ for repeatedly evaluating a particle until it has terminated, which we define precisely in \Cref{appendix:proofs}. We write configuration sets that have weakly equivalent symbolic states as $\weakeq{\evalset}{\evalset'}{}  \iff \evalset' = \set{(\progexpr, \symbstate', \doresample) \;|\; (\progexpr,\symbstate,\doresample) \in \evalset, \weakeq{\symbstate}{\symbstate'}{\progexpr}}$.
We use an auxiliary operation to drop the resample flag in configurations:
    $\forgetr(\evalset) = \set{(\progexpr, \symbstate)\;|\; (\progexpr,\symbstate,\doresample) \in \evalset}$.

We first show the analysis is sound when the particle evaluation terminates, and resuming particle evaluation preserves the soundness of the analysis. 

\begin{lemma}[Terminating Particle Evaluation Soundness]
    \label{lem:terminating-particle-short}
For every particle $(\progexpr, \symbstate)$, such that $(\cpclstep{\progexpr, \symbstate}{\evalset})$ and 
$\textstyle{\forall_{(\val,\; \symbstate',\; \doresample) \in \evalset}}\; (\val = \fail) \vee \neg \doresample$, 
we have $(\progexpr, \abstr(\{\symbstate\}) \;\abs{\downarrow}\; \aval', \asymbstate{}')$ and 
there exists a configuration set $\evalset'$ such that $\weakeq{\evalset}{\evalset'}{}$ and 
$\forgetr(\evalset') \subseteq \concret((\aval', \asymbstate{}'))$.
\end{lemma}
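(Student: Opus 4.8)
The plan is to prove a strengthened statement by induction on the height of the concrete (collecting) evaluation derivations underlying the configurations in $\evalset$ --- this measure is well-defined because \siren{} has no general recursion, so both call depth and the number of \mkw{fold} iterations are bounded across all configurations of a given particle. The strengthening is twofold. First, instead of fixing the initial abstract state to $\abstr(\{\symbstate\})$, we allow any abstract state $\asymbstate{}$ and any surjective naming function $\rvname{} : \Randomvar \to \aRandomvar$ under which $\symbstate$ agrees, on the variables reachable from $\progexpr$, with some element of $\concret(\asymbstate{}, \rvname{})$; and we conclude $\forgetr(\evalset') \subseteq \concret((\aval', \asymbstate{}'), \rvname{}')$ for a \emph{possibly different} witnessing naming $\rvname{}'$. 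Taking $\asymbstate{} = \abstr(\{\symbstate\})$ and $\rvname{} = \rvnames$ recovers the lemma, using the Galois-connection fact $\symbstate \in \concret(\abstr(\{\symbstate\}), \rvnames)$. Second, we simultaneously establish that the abstract evaluation is \emph{monotone} in its initial abstract state with respect to $\leq$; this lets us feed over-approximating (rather than exactly abstracted) sub-states into the inductive hypothesis, and it underwrites termination and soundness of the \mkw{fold} fixpoint.

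The induction is a case analysis on $\progexpr$. For a value the concrete step is the identity, and the claim follows from the hypothesis that $\symbstate$ concretizes $\asymbstate{}$ under $\rvname{}$ together with the name-correspondence invariant maintained throughout (abstract evaluation substitutes $\arandomvar$ for a program variable exactly where concrete evaluation substitutes a $\randomvar$ with $\rvname{}(\randomvar) = \arandomvar$). For $\mfLetIn{\programvar}{\progexpr_1}{\progexpr_2}$, the resample-free hypothesis forces the concrete sequencing rule; we apply the inductive hypothesis to $\progexpr_1$, substitute, apply it again to $\progexpr_2[\programvar \leftarrow \val_1]$ from the over-approximating state returned for $\progexpr_1$ (invoking monotonicity), and drop the product of weights. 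For $\mfLetRv{\annotation}{\programvar}{\mfOp{\val}}{\progexpr}$ and $\mfObserve{\val_1}{\val_2}$ we invoke Assumption~\ref{assumption:syminterface} to place the concrete $\symassume$, $\symvalue$, and $\symobserve$ results --- including the iterated $\symvalue$ behind $\symvalue^*$ --- inside the concretizations of their abstract counterparts, then apply the inductive hypothesis to the continuation, discarding scores. For $\mfFold{f}{\listl}{\val}$ with $\listl$ a constant list the abstract rule unfolds one step exactly as the concrete one, and the inductive hypothesis applies; if $\listl$ has been over-approximated to a non-constant abstract list (which can happen only below an abstract conditional join), we appeal to an auxiliary lemma that the abstract \mkw{fold} fixpoint terminates --- by finiteness of the abstract domain, which the expression-depth and variable-set widenings enforce --- and over-approximates every finite number of unfoldings of $f$, in particular the one the concrete derivation performs.

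For the conditional $\mfIf{\val}{\progexpr_1}{\progexpr_2}$ we split on which concrete rule each configuration used. Configurations that keep a symbolic condition (pure branches, non-constant condition) take the concrete symbolic-$\mathit{ite}$ rule, matched by the abstract symbolic-$\mathit{ite}$ rule, and we assemble the abstract $\mathit{ite}$ from the inductive hypothesis on both branches. Configurations whose condition is forced to a constant by $\symvalue^*$ take a definite branch; the abstract uses the branch-joining rule, interpreting both branches from the common post-$\asymvalue^*$ state --- which over-approximates every per-configuration state obtained after sampling the condition's variables, since $\asymvalue^*$ sends each such variable to $\adeltasample{\cunk}$ --- and joins the outcomes with $\narrowjoin$. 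Finally, $\fail$ is handled uniformly: whenever the concrete $\symvalue$ is applied to a $\mfSymbolic$-annotated variable it returns $\fail$, but the concretization hypothesis forces the corresponding abstract annotation to be $\amfSymbolic$, so the abstract $\asymvalue$ returns $\afail$, the top of the abstract domain, which then propagates through every subsequent rule; since $\concret(\afail)$ is everything, the required inclusion is immediate in that case.

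I expect the main obstacle to be the name bookkeeping around $\narrowjoin$. Because abstract and concrete random variables occupy separate namespaces and $\rename$ may deliberately collapse distinct abstract variables to force their state entries to merge, the soundness argument must exploit the closure of the concretization under re-namings: $\rename$ can only enlarge a concretization (it is itself an over-approximation), and the underlying join $\sqcup$ over-approximates its arguments, so $\concret((\aexpr{}_1 \sqcup \aexpr{}_3,\ \asymbstate{1} \sqcup \asymbstate{3}))$ contains the concrete results of \emph{both} branches of a conditional --- possibly under different witnessing namings, which is precisely why the strengthened conclusion quantifies existentially over $\rvname{}'$. The remaining obligations are routine but fiddly: threading the free-variable side conditions of $\eunk{}{\avarset{}}$ and $\dunk{}{\avarset{}}$ through the widening steps (so that $\mathit{FV}$ of each concretized expression lands in the recorded set), and carrying the weak-equivalence relation so that unreachable variables introduced by joins are ignored on both the abstract and the concrete side, which is exactly what licenses passing to $\evalset'$ rather than $\evalset$. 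The companion facts used here as black boxes --- soundness of $\narrowjoin$ and renaming-invariance of $\concret$ --- are established separately in \Cref{appendix:proofs}.
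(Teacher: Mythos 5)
Your proposal is correct and follows essentially the same route as the paper: structural induction on the concrete evaluation derivation, a reduction to the per-configuration statement via weak equivalence, and case-wise appeals to Assumption~\ref{assumption:syminterface}, the renaming/$\narrowjoin$ soundness lemmas, and weakening. The only real difference is presentational: where you strengthen the induction hypothesis to quantify over over-approximating initial abstract states (plus monotonicity), the paper instead factors that concern into a separate Substitution Soundness lemma and keeps the induction statement fixed at $\abstr(\{\symbstate\})$ — the two devices do the same work.
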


\begin{lemma}[Preservation]
\label{lem:preservation-short}
If $(\pclstep{\progexpr}{\symbstate}{\progexpr'}{\symbstate'}{\weight}{\doresample})$, then there exists abstract values $\aval, \aval'$ and abstract symbolic states  $\asymbstate{}, \asymbstate{}', \asymbstate{}''$ such that 1)\ $(\apclstep{\progexpr}{\abstr(\{\symbstate\})}{\aval}{\asymbstate{}}) \iff (\apclstep{\progexpr'}{\abstr(\{\symbstate'\})}{\aval'}{\asymbstate{}'})$, 2)\ $\weakeq{\asymbstate{}'}{\asymbstate{}''}{\aval'}$, and 3)\ $\concret((\aval', \asymbstate{}'')) \subseteq \concret((\aval, \asymbstate{}))$.
\end{lemma}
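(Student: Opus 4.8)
The plan is to prove the lemma by induction on the derivation of the concrete evaluation step $\pclstep{\progexpr}{\symbstate}{\progexpr'}{\symbstate'}{\weight}{\doresample}$, with a case analysis on the last rule applied. In every case I take $(\aval,\asymbstate{})$ to be the result of abstractly interpreting $\progexpr$ under $\abstr(\{\symbstate\})$ and $(\aval',\asymbstate{}')$ to be the result of abstractly interpreting $\progexpr'$ under $\abstr(\{\symbstate'\})$. Since the abstract interpretation is deterministic, condition~(1) then reduces to showing that the abstract interpretation of $\progexpr$ is defined exactly when that of $\progexpr'$ is; the substantive content is the concretization inclusion~(3), and the slack permitted by~(2) exists precisely to absorb the fact that a concrete step may leave behind random variables in the symbolic state that are unreachable from the residual value, so that $\asymbstate{}'$ and $\asymbstate{}''$ may legitimately differ on those variables.

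The axiom rules --- for values, for \mfResample{} (a no-op abstractly), and for \mkw{fold} on the empty list --- are immediate: either $\progexpr$ and $\symbstate$ are unchanged, or $\progexpr'$ is a trivial value while $\abstr(\{\symbstate\}) = \abstr(\{\symbstate'\})$, and one takes $\asymbstate{}'' = \asymbstate{}'$ and invokes reflexivity of $\subseteq$. The congruence rules --- the \mkw{let}-binding whose bound expression is interrupted by a \mfResample{}, and \mkw{fold} on a nonempty list, which reduces to the unfolded \mkw{let}-expression --- have a strict sub-derivation on a sub-term as their sole premise, so the induction hypothesis applies directly; since the matching abstract rule is itself a congruence (the abstract semantics has no resampling step), one chains the inductive biconditional through the residual abstract evaluation, using that evaluating the shared continuation from a more precise abstract input yields a more precise abstract output. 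This last point relies on a monotonicity property of the abstract interpretation (and of the abstract interface operations), which must be established as an auxiliary result and is proved by the same style of induction.

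The remaining, substantive cases --- the \mkw{let}-binding whose bound expression fully reduces (substituting a value into the body), the random-variable declaration in either form (the \mfSymbolic{}/unannotated rule and the \mfSample{} rule), \mkw{observe}, conditionals (both the symbolic-condition rule, which builds an abstract conditional expression, and the data-dependent rule, which may require joining the branch results via $\narrowjoin$), and function application --- are where the argument departs from the standard template and where the main difficulty lies. In each, $\progexpr'$ is essentially the residual of a concrete premise obtained after invoking one or more interface operations ($\symassume$, $\symvalue$, $\symobserve$) and performing a substitution; I apply the induction hypothesis to that premise and then must show that abstractly interpreting $\progexpr$ carries out the abstract counterparts of exactly those operations, each of which over-approximates its concrete counterpart by Assumption~\ref{assumption:syminterface}. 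Two obstacles remain. First, \emph{name reconciliation}: a concrete step introduces fresh concrete random-variable names, which $\abstr$ routes through the canonical map $\rvnames$, whereas abstractly interpreting $\progexpr$ allocates its own fresh abstract name inside $\asymassume$; matching these up requires the capture-avoiding renaming soundness developed for $\rename$ and $\narrowjoin$ in \Cref{appendix:proofs}, together with the fact that the concretization of an abstract symbolic state is closed under re-mappings of variable names. Second, a \emph{substitution lemma} relating $\abstr(\{\progexpr_0[\programvar \leftarrow \val]\})$ to the abstract interpretation of $\progexpr_0$ with $\programvar$ bound to $\abstr(\{\val\})$, so that the inductive conclusion on the substituted premise can be transported back to $\progexpr$. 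I expect the name reconciliation across reductions that allocate fresh variables to be the principal obstacle; once the monotonicity, substitution, and renaming-soundness auxiliaries are in place, the rest is a fairly mechanical traversal of the rule set.
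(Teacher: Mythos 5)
Your plan is correct and follows the same skeleton as the paper's proof: structural induction on the derivation of $\downarrow$, discharging the congruence and substitution steps with exactly the auxiliary results the paper uses (a substitution/monotonicity lemma, the rename-join soundness for $\narrowjoin$, abstract weakening, and \Cref{assumption:syminterface} for the interface operations). The one place where your decomposition differs is which cases you treat as ``substantive.'' The paper dispatches \emph{every} rule with $\doresample = \false$ --- values, \mkw{observe}, the pure-conditional rule, and any sub-evaluation that fully reduces to a value --- in a single stroke by invoking the already-established Terminating Particle Evaluation Soundness lemma (\Cref{lem:terminating-particle-short}), which is proved first and already contains all of the interface-operation reasoning, fresh-name reconciliation, and concretization bookkeeping you describe. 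Preservation then only has to do genuine work for the interrupted ($\doresample = \true$) rules: \mfResample{} itself, the \mkw{let}/\mkw{fold}/conditional/declaration rules whose continuation is cut off at a checkpoint, where the induction hypothesis is applied to the interrupted premise and \Cref{lem:terminating-particle-short} to the terminated premises, with \Cref{lem:join,lem:substitution-short,lem:abs-weakening} gluing the results together. Your version instead re-derives that terminating-case content inline inside preservation, which is sound but duplicates the earlier lemma; reorganizing around it would shrink the case analysis considerably and is where you should look if the inline treatment of \mkw{observe} and the declaration rules starts to feel unwieldy.
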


It follows that the analysis is sound for evaluating any particle until termination. 

\begin{lemma}[Particle Evaluation Soundness]
\label{thm:particle-short}
For every particle $(\progexpr, \symbstate)$, such that $(\cpclstepstar{\progexpr, \symbstate}{\evalset})$, we have
$(\progexpr, \abstr(\{\symbstate\}) \;\abs{\downarrow}\; \aval, \asymbstate{})$ and a configuration set $\evalset'$ such that $\weakeq{\evalset}{\evalset'}{}$ and 
$\forgetr(\evalset')  \subseteq \concret((\aval, \asymbstate{}))$.
\end{lemma}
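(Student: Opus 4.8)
The plan is to prove Lemma~\ref{thm:particle-short} by induction on the derivation of $\cpclstepstar{\progexpr, \symbstate}{\evalset}$: each use of the single-round collecting step $\cpclstep{}{}$ crosses at most one resampling checkpoint and then the judgment recurses on every resumed particle, so this is effectively an induction on the number of checkpoints crossed. The two previously established lemmas supply the two cases. In the base case, one round of $\tilde{\downarrow}$ produces only terminal or $\fail$ configurations, i.e. $\cpclstep{\progexpr, \symbstate}{\evalset}$ with every $(\val, \symbstate', \doresample) \in \evalset$ having $\val = \fail$ or $\neg\doresample$; then $\cpclstepstar{\progexpr, \symbstate}{\evalset}$ holds with the same $\evalset$, and Lemma~\ref{lem:terminating-particle-short} (Terminating Particle Evaluation Soundness) delivers $(\progexpr, \abstr(\{\symbstate\}) \;\abs{\downarrow}\; \aval, \asymbstate{})$ and the weakly-equivalent $\evalset'$ verbatim. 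Throughout, recall that the abstract semantics has no resampling and evaluates $\progexpr$ in a single shot, so the goal in every case is to show this one abstract run exists and its concretization weakly covers all collected configurations.

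For the inductive case, write $\cpclstep{\progexpr, \symbstate}{\evalset_0}$ and split $\evalset_0$ into terminal configurations (of the form $(\val, \symbstate', \false)$ or with $\val = \fail$) and resume configurations $(\progexpr_j, \symbstate_j, \true)$ with $\progexpr_j$ not a value; by definition of $\tilde{\downarrow}^*$, $\evalset$ is the terminal configurations of $\evalset_0$ together with $\bigcup_j \evalset_j$ where $\cpclstepstar{\progexpr_j, \symbstate_j}{\evalset_j}$. For a resume configuration $j$, unfolding the collecting step yields an operational step $\progexpr, \symbstate \downarrow^{\true} \progexpr_j, \symbstate_j$; the induction hypothesis on $(\progexpr_j, \symbstate_j)$ gives $(\progexpr_j, \abstr(\{\symbstate_j\}) \;\abs{\downarrow}\; \aval_j, \asymbstate{}_j)$ and $\evalset'_j$ with $\weakeq{\evalset_j}{\evalset'_j}{}$ and $\forgetr(\evalset'_j) \subseteq \concret((\aval_j, \asymbstate{}_j))$, and Lemma~\ref{lem:preservation-short} (Preservation) on that step produces the original abstract run $(\progexpr, \abstr(\{\symbstate\}) \;\abs{\downarrow}\; \aval, \asymbstate{})$ together with an $\asymbstate{}_j''$ with $\weakeq{\asymbstate{}_j}{\asymbstate{}_j''}{\aval_j}$ and $\concret((\aval_j, \asymbstate{}_j'')) \subseteq \concret((\aval, \asymbstate{}))$. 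Since concretization is insensitive to variables unreachable from $\aval_j$, the bound from the induction hypothesis transfers through the weak equivalence to some $\evalset''_j$ with $\weakeq{\evalset_j}{\evalset''_j}{}$ and $\forgetr(\evalset''_j) \subseteq \concret((\aval, \asymbstate{}))$; the same $(\aval, \asymbstate{})$ is obtained for every resume configuration because $\abs{\downarrow}$ is deterministic. For a terminal configuration $(\val, \symbstate')$, applying Lemma~\ref{lem:preservation-short} to its step and using that the abstract value rule reduces $\val$ (resp.\ $\fail$) to its abstraction shows $(\val, \symbstate') \in \concret((\abstr(\{\val\}), \abstr(\{\symbstate'\}))) \subseteq \concret((\aval, \asymbstate{}))$ up to weak equivalence, via the standard fact that $v \in \concret(\abstr(\{v\}))$. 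Taking $\evalset'$ to be the (weakly adjusted) terminal configurations together with the $\evalset''_j$ gives the required configuration set.

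The hard part will be the bookkeeping around weak equivalence and unreachable random variables, which is precisely the nonstandard ingredient the authors flagged: the concrete symbolic states accumulated across successive checkpoint rounds differ from the concretizations of the abstract states only in variables unreachable from the current expression, so each composition of Lemma~\ref{lem:preservation-short} with the induction hypothesis must invoke that adding or removing unreachable variables changes neither evaluation nor concretization, and that $\cong$ is transitive and commutes with $\concret$ in the appropriate sense; these ancillary facts about $\reachable$ and $\cong$ will have to be proven alongside. A secondary point requiring care is that $\abs{\downarrow}$ must be a partial function on $(\progexpr, \asymbstate{})$, so that the abstract runs Lemma~\ref{lem:preservation-short} associates with the various configurations of $\evalset_0$ all coincide; this holds because the abstract interpretation rules (Figure~\ref{fig:abstract-interp} and Appendix~\ref{appendix:abs-interp-rules}) collapse concrete nondeterministic control flow deterministically via $\narrowjoin$ and the abstract fold fixpoint.
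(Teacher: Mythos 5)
Your proposal matches the paper's proof: the paper also proceeds by structural induction on the derivation of $\tilde{\downarrow}^*$, dispatches the base cases with Lemma~\ref{lem:terminating-particle-short}, and handles the inductive case by applying the induction hypothesis to each resumed configuration and composing with Lemma~\ref{lem:preservation-short}, using the weak-equivalence adjustment to transfer the concretization bound across unreachable variables. Your explicit split of terminal versus resumed configurations and the remark that $\abs{\downarrow}$ must behave deterministically are just finer-grained bookkeeping of the same argument.
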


Additionally, every distribution resulting from a particle set evaluation can be traced back to a particle in the particle set and be equivalently derived by evaluating the particle until termination. 

\begin{lemma}[Particle Trace]
    \label{lem:particle-trace-short}
If $(\cpclsstep{\pset}{\dset})$, we have for all $\;\distr{} \in \dset$, there exists $(\progexpr, \symbstate) \in \pset$ such that $(\cpclstepstar{\progexpr, \symbstate}{\evalset})$ and $\;\distr{} \in \set{\distribution(\val,\symbstate) \;|\; (\val,\symbstate,\doresample) \in \evalset }$.
\end{lemma}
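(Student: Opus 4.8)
The plan is to prove the statement by induction on the derivation of $\cpclsstep{\pset}{\dset}$, whose shape mirrors the operational particle-set evaluation of \Cref{fig:op:sem-set} except that, as noted for the collecting semantics, the resampling step is a no-op: it neither introduces nor drops configurations from the collected set, and weights are ignored. Accordingly there are two cases: a base case in which every particle's collecting evaluation terminates at a value, and a recursive case in which some particle's evaluation is interrupted at a $\mfResample{}$ checkpoint.

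\emph{Base case.} Here the derivation uses the rule in which, for each particle $(\progexpr,\symbstate)\in\pset$, every configuration reachable by one collecting step $\cpclstep{\progexpr,\symbstate}{\evalset'}$ is of the form $(\val,\symbstate'',\false)$, and $\dset$ collects the distributions $\distribution(\val,\symbstate'')$ over all particles and all such terminal configurations. Fix $\distr{}\in\dset$. By definition there is $(\progexpr,\symbstate)\in\pset$ with $\cpclstep{\progexpr,\symbstate}{\evalset'}$ and a configuration $(\val,\symbstate'',\false)\in\evalset'$ such that $\distr{}=\distribution(\val,\symbstate'')$. Because $\val$ is a value with resample flag $\false$, a single collecting step is already a terminating derivation of $\tilde\downarrow^*$; that is, $\cpclstepstar{\progexpr,\symbstate}{\evalset}$ for a configuration set $\evalset$ with $(\val,\symbstate'',\false)\in\evalset$, and therefore $\distr{}\in\set{\distribution(\val,\symbstate)\mid(\val,\symbstate,\doresample)\in\evalset}$, as required. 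This uses the precise definition of $\tilde\downarrow^*$ from \Cref{appendix:proofs}, namely that a step landing on a value counts as termination.

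\emph{Recursive case.} Here the derivation uses the rule in which some particle's collecting evaluation yields a configuration with resample flag $\true$. The collecting semantics then evaluates each particle of $\pset$ up to its next checkpoint, forms the resumed particle set $\pset'$ consisting of all resulting configurations with their resample flags dropped (resampling being a no-op), and recursively derives $\cpclsstep{\pset'}{\dset}$ via a strictly smaller derivation. Fix $\distr{}\in\dset$. By the induction hypothesis there is $(\progexpr',\symbstate')\in\pset'$ with $\cpclstepstar{\progexpr',\symbstate'}{\evalset'}$ and a configuration $(\val,\symbstate'',\doresample)\in\evalset'$ with $\distr{}=\distribution(\val,\symbstate'')$. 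By construction of $\pset'$, this $(\progexpr',\symbstate')$ is obtained from some $(\progexpr,\symbstate)\in\pset$ by a single collecting step $\cpclstep{\progexpr,\symbstate}{\evalset''}$ with $(\progexpr',\symbstate',\doresample')\in\evalset''$. Prepending that step to the terminating $\tilde\downarrow^*$ derivation from $(\progexpr',\symbstate')$ --- using the prefix-closure/transitivity property of $\tilde\downarrow^*$ recorded in \Cref{appendix:proofs} --- yields $\cpclstepstar{\progexpr,\symbstate}{\evalset}$ whose terminal configurations include those of $\evalset'$; in particular $(\val,\symbstate'',\doresample)\in\evalset$, so $\distr{}\in\set{\distribution(\val,\symbstate)\mid(\val,\symbstate,\doresample)\in\evalset}$, completing the case.

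\emph{Main obstacle.} The argument is conceptually routine; the effort goes into bookkeeping. First, one must state and use the exact definition of $\tilde\downarrow^*$ so that a partial collecting evaluation composed with a terminating one is again a terminating one --- this is the transitivity property invoked in the recursive case. Second, and more delicately, a single collecting step $\tilde\downarrow$ returns a \emph{set} of configurations, because of data-dependent and stochastic control flow, so ``the particle at a checkpoint'' is really one configuration among several, and the recursion in the particle-set rule ranges over the union of all of them; threading a chosen $\distr{}\in\dset$ back through this set-valued step to a single ancestor $(\progexpr,\symbstate)\in\pset$ is where precision is needed. Everything else is exactly the content of the remark that resampling in the collecting semantics contributes no new particles and only duplicates or removes existing ones, which is what licenses treating it as a no-op and guarantees that every element of $\dset$ has an ancestor in $\pset$.
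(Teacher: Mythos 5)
Your proof follows essentially the same route as the paper's: structural induction on the derivation of $\tilde{\downdownarrows}$, tracing a distribution back through the base (all-terminated) rule directly and through the recursive rule by composing the one-step collecting evaluation with the $\tilde{\downarrow}^*$ derivation obtained from the induction hypothesis, exactly as licensed by the definition of $\tilde{\downarrow}^*$ in \Cref{appendix:proofs}. The one omission is that the collecting particle-set semantics has a \emph{third} rule, in which some configuration evaluates to $\fail$ and $\dset = \set{\fail}$; you assert there are only two cases, so this rule is unhandled. It is dispatched in one line by the $\fail$ clause of \Cref{def:particle-termination} (which yields $\set{(\fail,\fail,\fail)}$), but it should be stated. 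A further small point, which the paper's own proof also glosses over: in the recursive case the ancestor particle you identify may itself have produced only $\false$-flagged configurations (another particle in the set triggered the resample), in which case the composition goes through the terminating rule of $\tilde{\downarrow}^*$ rather than the recursive one; your ``prefix-closure'' phrasing covers only the latter.
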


From the particle trace property with the fact the analysis is sound when evaluating a particle until termination, we have that the analysis is sound with respect to evaluating sets of particles. The soundness of the model evaluation follows. 

\begin{theorem}[Particle Set Evaluation Soundness]
    \label{thm:particle-set-short}
    For every particle set $\pset$, and distribution set $\dset$ such that $(\cpclsstep{\pset}{\dset})$, we have that \apclsstep{\set{\progexpr, \abstr(\{\symbstate\}) \;|\; (\progexpr, \symbstate) \in \pset}}{\adistr{}} and $\dset \subseteq \concret(\adistr{})$.
\end{theorem}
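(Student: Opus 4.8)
The plan is to derive the statement by composing the two per-particle results already established — \Cref{lem:particle-trace-short} (Particle Trace) and \Cref{thm:particle-short} (Particle Evaluation Soundness) — and pushing the resulting concrete distributions through the join that the abstract particle set evaluation rule takes over the abstract particles, using monotonicity of the concretization.

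First I would fix $\pset$ and $\dset$ with $(\cpclsstep{\pset}{\dset})$ and set $\apset = \{(\progexpr, \abstr(\{\symbstate\})) \mid (\progexpr, \symbstate) \in \pset\}$. Since resampling is a no-op in the collecting semantics, $(\cpclsstep{\pset}{\dset})$ evaluates every particle of $\pset$ to termination, so \Cref{thm:particle-short} applied to each $(\progexpr, \symbstate) \in \pset$ supplies a terminating derivation $(\progexpr, \abstr(\{\symbstate\}) \;\abs{\downarrow}\; \aval, \asymbstate{})$. Hence the set $\adset$ occurring in the abstract particle set rule is well-defined and finite, and that rule derives $(\apclsstep{\apset}{\adistr{}})$ with $\adistr{} = \bigsqcup_{\adistr{}_i \in \adset} \adistr{}_i$. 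If $\adistr{} = \afail$, then $\dset \subseteq \concret(\adistr{})$ holds trivially because $\afail$ is the top abstract value and its concretization contains every concrete distribution; so assume $\adistr{} \neq \afail$.

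Next, given an arbitrary $\distr{} \in \dset$, \Cref{lem:particle-trace-short} yields a particle $(\progexpr, \symbstate) \in \pset$ and a configuration set $\evalset$ with $(\cpclstepstar{\progexpr, \symbstate}{\evalset})$ and $\distr{} = \distribution(\val, \symbstate_f)$ for some $(\val, \symbstate_f, \doresample) \in \evalset$. Applying \Cref{thm:particle-short} to this particle produces $(\progexpr, \abstr(\{\symbstate\}) \;\abs{\downarrow}\; \aval', \asymbstate{}')$ and a configuration set $\evalset'$ with $\weakeq{\evalset}{\evalset'}{}$ and $\forgetr(\evalset') \subseteq \concret((\aval', \asymbstate{}'))$. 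From $\weakeq{\evalset}{\evalset'}{}$ there is $(\val, \symbstate_f', \doresample) \in \evalset'$ with $\weakeq{\symbstate_f}{\symbstate_f'}{\val}$, and since $\distribution$ inspects only the symbolic-state entries reachable from its value argument, $\distr{} = \distribution(\val, \symbstate_f) = \distribution(\val, \symbstate_f')$. Combining $(\val, \symbstate_f') \in \forgetr(\evalset') \subseteq \concret((\aval', \asymbstate{}'))$ with soundness of $\adistribution$ as an over-approximation of $\distribution$ (the same kind of over-approximation property as \Cref{assumption:syminterface}) gives $\distr{} \in \concret(\adistribution(\aval', \asymbstate{}'))$. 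Because $(\progexpr, \abstr(\{\symbstate\})) \in \apset$ and $\adset$ collects $\adistribution$ over all derivable $(\aval, \asymbstate{})$, we have $\adistribution(\aval', \asymbstate{}') \in \adset$, hence $\adistribution(\aval', \asymbstate{}') \leq \adistr{}$, and monotonicity of $\concret$ yields $\distr{} \in \concret(\adistr{})$. As $\distr{}$ was arbitrary, $\dset \subseteq \concret(\adistr{})$, which is the claim; model evaluation soundness then follows immediately from the model evaluation rule applied to $\set{\progexpr, \emptyset}$.

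I expect the main obstacle to be the bookkeeping around random-variable names and weak equivalence: the abstract state $\asymbstate{}'$ computed for a traced particle agrees with the concrete final state only up to the weak equivalence that joins may introduce (adding and renaming unreachable random variables), so the argument hinges on $\distribution$ and its abstract counterpart $\adistribution$ being invariant under weak equivalence and stable under the name-remapping freedom built into $\concret$. Establishing the soundness of $\adistribution$ relative to $\distribution$ in the style of \Cref{assumption:syminterface} is a secondary point requiring care. Beyond these, the argument is a routine composition of \Cref{lem:particle-trace-short} and \Cref{thm:particle-short} with monotonicity of the concretization.
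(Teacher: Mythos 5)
Your proposal is correct and follows essentially the same route as the paper's proof: it composes the Particle Trace lemma with Particle Evaluation Soundness and then concludes via the abstract particle set rule and monotonicity of concretization. You in fact spell out several steps the paper compresses into a single ``by definition'' (the $\afail$ case, invariance of $\distribution$ under weak equivalence, and the soundness of $\adistribution$ relative to $\distribution$), which is a faithful and somewhat more careful rendering of the same argument.
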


\begin{corollary}[Model Evaluation Soundness]
If $\cpclpstep{\progexpr}{\{\fail\}}$, then $\apclpstep{\progexpr}{\afail}$.
\end{corollary}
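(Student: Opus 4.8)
The plan is to obtain the corollary as an almost immediate consequence of Theorem~\ref{thm:particle-set-short} (Particle Set Evaluation Soundness), instantiated at the singleton particle set from which the model-evaluation rules start.

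First I would unfold both sides down to the level of particle-set evaluation. In the collecting semantics $\tilde{\Downarrow}$ is defined from $\tilde{\downdownarrows}$ exactly as the concrete model-evaluation rule of Figure~\ref{fig:op:sem-model} relates model evaluation to particle-set evaluation, so the hypothesis $\cpclpstep{\progexpr}{\set{\fail}}$ amounts to $\cpclsstep{\set{\progexpr, \emptyset}}{\set{\fail}}$, using that the $N$ identical copies of the initial particle collapse to a single element when viewed as a set and that resampling is a no-op in the collecting semantics. On the abstract side the model-evaluation rule for $\apclpstep{\progexpr}{\afail}$ has premise exactly $\apclsstep{\set{\progexpr, \emptyset}}{\afail}$. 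Hence it suffices to show: from $\cpclsstep{\set{\progexpr, \emptyset}}{\set{\fail}}$, conclude $\apclsstep{\set{\progexpr, \emptyset}}{\afail}$.

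Next I would invoke Theorem~\ref{thm:particle-set-short} with $\pset = \set{\progexpr, \emptyset}$. Since $\abstr(\set{\emptyset})$ is the empty abstract symbolic state, the theorem yields an abstract distribution $\adistr{}$ with $\apclsstep{\set{\progexpr, \emptyset}}{\adistr{}}$ and $\set{\fail} \subseteq \concret(\adistr{})$; in particular $\fail \in \concret(\adistr{})$. It then remains only to argue $\adistr{} = \afail$, after which the abstract model-evaluation rule delivers $\apclpstep{\progexpr}{\afail}$.

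The hard part will be this last step, namely establishing that $\afail$ is the \emph{only} abstract value whose concretization contains $\fail$, so that $\fail \in \concret(\adistr{})$ forces $\adistr{} = \afail$. This is a strictness/propagation property: $\fail$ is not a genuine distribution, so the concretization of any $\adistr{}$ drawn from the distribution grammar of Figure~\ref{fig:abs-symb-grammar} (including $\dtop$, $\dunk{}{\avarset{}}$, and joins thereof) is a set of honest distributions that never contains $\fail$, whereas $\fail$ always propagates to $\afail$ and $\fail \in \concret(\afail)$. Formalizing this requires checking, against the full definitions in Appendix~\ref{appendix:proofs}, that neither the abstract particle-set evaluation rule (which folds the abstract distributions with $\bigsqcup$) nor the concretization of abstract distribution sets can absorb or generate $\fail$ except through $\afail$ itself; given the stated facts that $\fail$ propagates in the standard way and that any value joined with $\afail$ is $\afail$, this is routine but must be carried out carefully. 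Everything else is definitional unfolding together with one application of the already-proven theorem.
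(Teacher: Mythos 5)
Your proposal is correct and matches the paper's (implicitly omitted) argument: the corollary is obtained by unfolding both model-evaluation rules to particle-set evaluation on the singleton $\set{(\progexpr,\emptyset)}$, applying Theorem~\ref{thm:particle-set-short}, and observing that $\fail \in \concret(\adistr{})$ forces $\adistr{} = \afail$ since $\fail$ is not a distribution and only $\afail$ concretizes to a set containing it. You correctly identify that last strictness property as the only step needing any real checking.
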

Overall, the soundness results show that if the analysis does not produce $\afail$, the collecting semantics does not produce $\fail$ and therefore every execution of the program is satisfiable with respect to the inference plan.

\section{Evaluation}
\label{sec:eval}
In this section, we empirically evaluate the efficacy of \siren{} on a set of probabilistic programs. 
We also empirically evaluate how good the inference plan satisfiability analysis is at identifying whether an inference plan is satisfiable. We seek to answer these research questions: 

\textbf{\emph{RQ1.}} Can inference plans improve hybrid particle filtering performance? In other words, does there exist an inference plan that improves program performance compared to the default plan?

\textbf{\emph{RQ2.}} How precise is the inference plan satisfiability analysis? Section~\ref{sec:properties} proves the analysis is sound, so it will never state an unsatisfiable inference plan is satisfiable. The task remains to empirically determine whether the analysis can detect satisfiable inference plans in practice.

\textbf{\emph{RQ3.}} How long does the inference plan satisfiability analysis take?

\subsection{Benchmarks}
We evaluate the performance of different hybrid particle filtering algorithms on a set of benchmark programs. 
We describe here the 11 benchmarks and identify the variables evaluated for accuracy.
The following benchmarks are benchmarks with multiple inference plans from prior work on SSI and DS by \citet{atkinson2022semi} and \citet{baudart2020reactive}: \bOutlier{}, \bGtree{}, \bSlam{}, and \bWheels{}. We describe the programs and the evaluated variables in \Cref{appendix:benchmarks}. We also added the following additional benchmarks, each of which cannot be solved purely with exact inference. They demonstrate the advantages of using inference plans for improving performance against the default behavior.

\bAircraft{} is the program presented in \Cref{sec:example}. 

\bNoise{} is a one-dimensional particle filter with a hidden state modeled by Gaussian distributions (\zl{x}) with variance modeled by an Inverse-Gamma distribution (\zl{q}). Observations are made on Gaussian distributions centered around the previous state with variance also modeled by an inverse-Gamma distribution (\zl{r}). This program is adapted from~\citet{dunik2017noise}. 

\bRadar{} is a radar tracker with glint noise modeled as a random, rarely occurring spike~\citep{wu1993target}. The observation noise is modeled by the sum of two independent Inverse-Gamma distributions (\zl{r} and \zl{other}) if the \zl{env} random variable -- modeled by a Bernoulli -- indicates a spike. This differs from the \bAircraft{} program in that it only models the Gaussian-distributed \zl{x} position, and the Bernoulli random variable determines whether to observe a spike in the measurement noise.

\bEnvnoise{} is similar to \bRadar{}, but the noise variable \zl{other} is modeled by the more flexible Beta distribution~\citep{ma2011bayesian,arazo2019unsupervised}.

\bOutlierheavy{} models a one-dimensional particle filter where there might be sensor errors producing outlier observations. The hidden state is modeled by Gaussian distributions (\zl{xt}) and the sensor error rate as a Beta prior (\zl{outlier_prob}) to a Bernoulli. The regular observations are made on Gaussian distributions and the outlier observations are modeled by a long-tailed location-scale $t$ distribution as used in \citet{chang2014robust}. This program is an extension of the \bOutlier{} benchmark implemented by~\citet{atkinson2022semi} and adapted from~\citet{minka2013expectation}.

\bSlds{} is a switching linear dynamical system adapted from \citet{obermeyer2019functional}. The model switches between two nonlinear Kalman filters that each have unknown measurement noises. The switching label follows Markovian dynamics and is modeled by two Beta priors (\zl{trans_prob0} and \zl{trans_prob1}). The filters use Gaussian distributed hidden states (\zl{x0} and \zl{x1}). The measurement noises are modeled by Inverse-Gamma distributions (\zl{obs_noise0}, \zl{obs_noise1}).

\bRunner{}, adapted from \citet{azizian2023automatic}, models the 2-D position (\zl{x}, \zl{y}) and speed (\zl{sx}, \zl{xy}) of a runner based on speedometer readings and the altitude, modeled by Gaussian distributions.

We include the source code and the annotations of each evaluated inference plan for each benchmark in \Cref{appendix:benchmarks}.

\subsection{Methodology}

We implemented \siren{} in Python. In addition to semi-symbolic inference and delayed sampling, we also implemented a third hybrid inference algorithm: SMC with belief propagation (SMC with BP)~\citep{azizian2023automatic}. The algorithm swaps parent-child dependencies using conjugate distributions similar to SSI and maintains node types in the data field like DS. The implementation is available at \url{https://github.com/psg-mit/siren/tree/main}.

\subsubsection{RQ1 Methodology}
To determine \siren{}'s performance for \emph{RQ1},
we execute each benchmark 100 times for 100 timesteps with an exponentially increasing particle count from 1 to 1024.
We execute each benchmark using all satisfiable inference plans, except for \bSlds{} with SSI and DS, where due to the large number of inference plans, we sort the plans by the number of \mfSymbolic{} variables in descending order and only compare the first 4 plans (and any plans tied with those) against the plan with all $\mfSample$d variables and the default plan. We set the timeout to 300 seconds.
We measure the accuracy by the Mean Squared Error of the posterior expected value of the variable compared to its ground truth value, which is available as all data were generated by sampling from the prior. For each benchmark, we compute the speedup each inference plan achieves compared to the default plan to reach the \emph{target accuracy}, defined as the 90th percentile of error by the default plan using the greatest particle count evaluated that did not timeout. Following \citet{atkinson2022semi} and \citet{baudart2020reactive}, reaching target accuracy is defined as $\log(P_{90\%}(\mit{loss})) - \log(\mit{loss}_{\mit{target}}) < 0.5$. We also compute the summary statistics of the increase in accuracy each plan achieves compared to the default plan with less than or equal runtime. We conduct experiments on a 60-vCPU Intel Xeon Cascade Lake (up to 3.9 GHz) node with 240 GB RAM. 

\subsubsection{RQ2 Methodology}
To determine \siren{}'s analysis precision for \emph{RQ2}, we enumerated all satisfiable and unsatisfiable inference plans, and measured if the inference plan satisfiability analysis correctly determines the satisfiability of each plan. 

\subsubsection{RQ3 Methodology}
For each program, algorithm, and inference plan (satisfiable or not), we measured the runtime of the analysis to answer RQ3. 

\subsection{Results}

\subsubsection{RQ1 Results}

Across all benchmarks, variables, and inference algorithms, using the best inference plans produces an average speedup of 1.76x to reach the same target accuracy compared to the default plans, with a maximum speedup of 206x. The best inference plans achieve 1.83x better accuracy with equal or less runtime compared to the default plans, with a maximum increase of 595x. See \Cref{appendix:evaluation} for the breakdowns for each benchmark.

Figure~\ref{fig:performance-results-1} plots the experiment results of \bOutlier{} and \bNoise{} using the SSI algorithm. 
For each particle count, we plot the median runtime to the 90th percentile of error for each evaluated variable. 
For the plots of the remaining algorithms and benchmarks, see Appendix~\ref{appendix:evaluation}.

\begin{figure}[t]
  \centering
  \begin{subfigure}[c]{1\textwidth}
    \centering
    \includegraphics[width=0.5\textwidth]{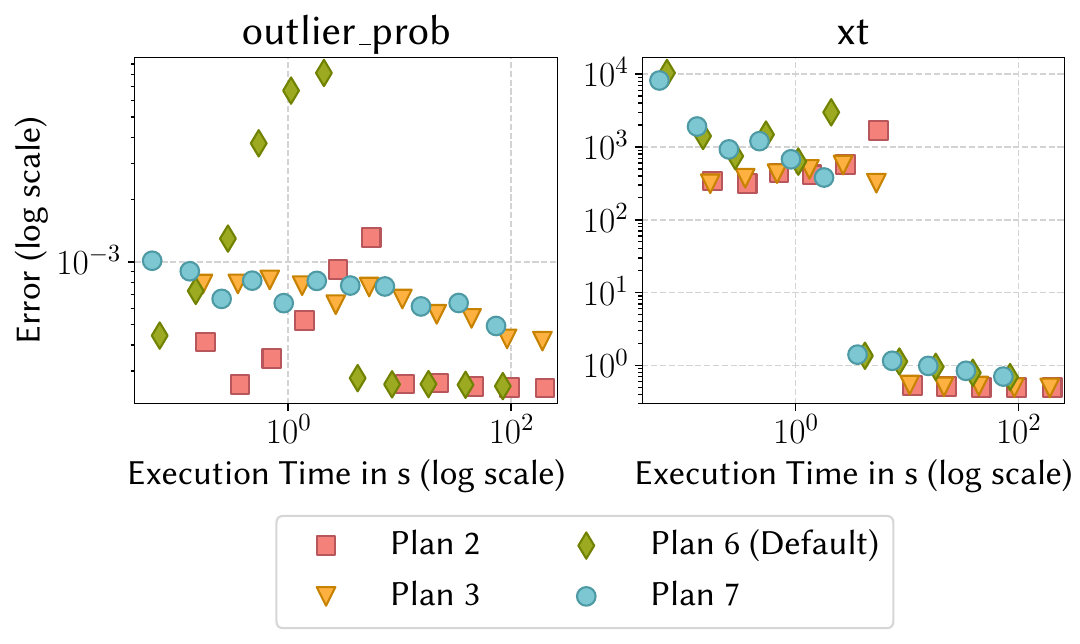}
    \caption{\bOutlier{}.}
  \end{subfigure}%
  \\
  \begin{subfigure}[c]{1\textwidth}
    \centering
    \includegraphics[width=0.75\textwidth]{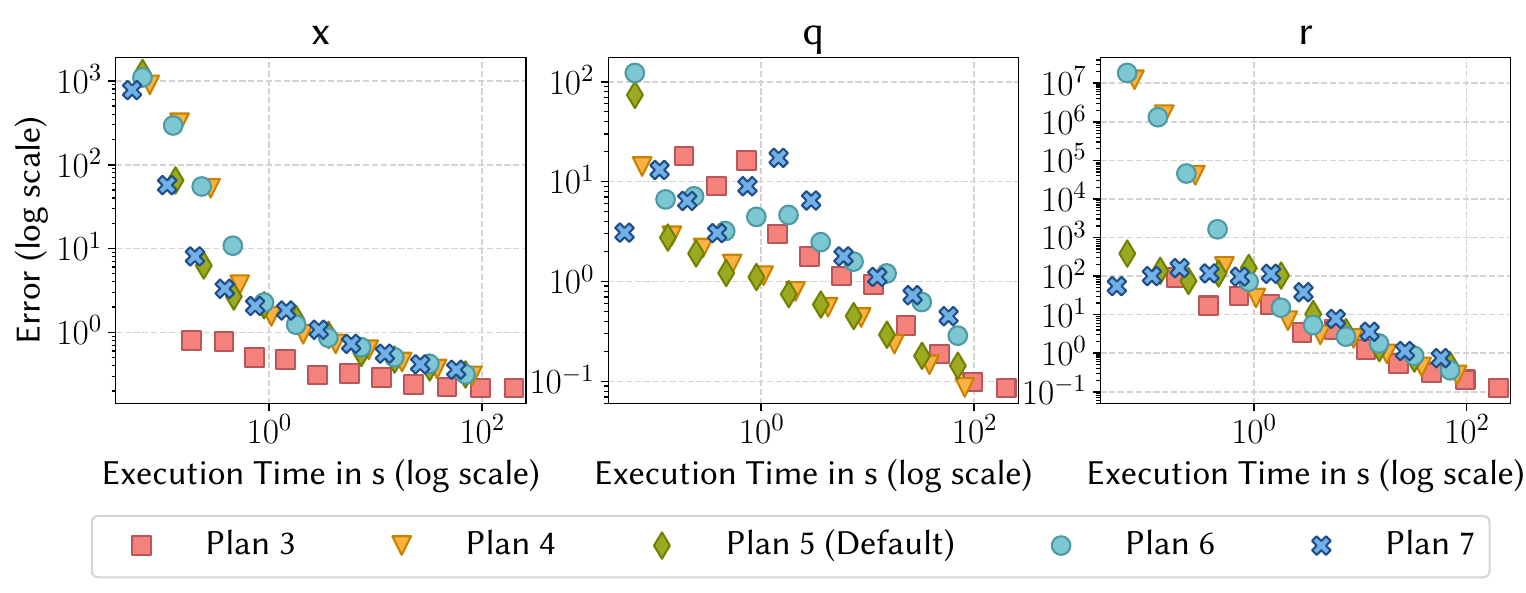}
    \caption{\bNoise{}.}
  \end{subfigure}%
  \caption{For each particle count, the plots show the median execution time to the 90th percentile of error for each variable using different satisfiable inference plan.}
  \label{fig:performance-results-1}
  \vspace{-0.8em}
\end{figure}

The \bOutlier{} results in \Cref{fig:performance-results-1} show that, when the execution time is greater than 10 seconds, the annotated Plan 2 achieves better accuracy for \zl{xt} than the default Plan 6 and would be preferred in a context where the developer cares the most about \zl{xt}. Overall, aggregated across particle counts, the best accuracy achieved by any plan with less or equal runtime to the default plan is 2.75x better than the default for \zl{outlier_prob} and 1.89x for \zl{xt}.
Additionally, while the default Plan 6 has the fastest execution time to reach target accuracy for \zl{outlier_prob}, the alternative Plan 7 achieves the fastest execution time to reach target accuracy for \zl{xt} offering a speedup of 1.16x over the default plan.
Thus, the best inference plan to use depends on the context in which the system is deployed.

For \bNoise{}, the best accuracy achieved by any plan with less or equal runtime to the default plan is 2.36x better than the default plan for \zl{x}, 1.33x for \zl{q}, and 2.83x for \zl{r} on average across particle counts.
The results in \Cref{fig:performance-results-1} show that the annotated Plan 3 achieves the lowest error for \zl{x}, but the alternative Plan 4 and the default Plan 5 achieve the lowest error for \zl{q}. In terms of efficiency, the default Plan 5 has the fastest execution time to reach target accuracy for \zl{q}. But the alternative Plan 3 has the fastest execution time for \zl{x} and \zl{r}, with a speedup of 19x and 1.33x over the default plan, respectively. 
Thus, the inference plan that should be used depends on the context -- in particular which variable the developer considers most important.

Overall, these results demonstrate that inference performance depends strongly on the inference plan, that the plan that should be used to execute the program is context-dependent, and that annotated inference plans enable substantial speedups and increase in accuracy over the default.

\subsubsection{RQ2 Results}
To evaluate the analysis precision for \emph{RQ2}, we run the analysis on all 11 benchmarks and 3 inference algorithms, and summarize the results in Table~\ref{tab:analysis-results}. 
We manually count the total number of satisfiable plans 
in each case as well as how many of these plans the analysis identifies.
The analysis identifies all satisfiable plans in 27 out of the 33 evaluated settings.
This shows that the analysis is precise in practice. We describe below the edge cases where the analysis does not identify satisfiable plans.

\begin{table}
  \small
  \centering
  \caption{Number of satisfiable inference plans identified by the inference plan satisfiability analysis out of the total number of satisfiable inference plans for each benchmark and algorithm. }
  \begin{tabular}[h]{lccccccccccc}
    \toprule
    & \multicolumn{11}{c}{Identified Satisfiable Plans / Satisfiable Plans}\\
    \cmidrule(lr){2-12}
    Algorithm & \bNoise{} & \bRadar{} & \emph{EnvN} & \emph{Out} & \emph{OutH} & \bGtree{} & \bSlds{} & \bRunner{} & \bWheels{} & \bSlam{} & \emph{Air} \\
\midrule
\ssi & 5/5 & 3/3 & 3/3 & 4/4 & 2/2 & 3/4 & 28/36 & 4/4 & 3/4 & 3/4 & 3/3 \\
\ds & 4/4 & 2/2 & 2/2 & 2/2 & 2/2 & 3/3 & 16/16 & 1/1 & 1/3 & 2/2 & 2/2 \\
\bp & 2/2 & 2/2 & 2/2 & 2/2 & 1/1 & 3/4 & 4/4 & 4/4 & 3/3 & 1/1 & 2/2 \\
\midrule
Total Plans & 8 & 32 & 32 & 8 & 8 & 4 & 128 & 16 & 4 & 4 & 32 \\
  \bottomrule
  \end{tabular}
  \label{tab:analysis-results}
\end{table}

\paragraph{Aliasing}
The loss of precision in \bSlds{} executed with SSI is due to \emph{aliasing}.
When joining expressions in conditionals and \zl{fold} fixpoint computations, the analysis loses information. 
This introduces an aliasing problem because inconsistent branch conditions are not detected. This is illustrated in the following program,
\begin{lstlisting}
  let symbolic x1 <- gaussian(0.,1.) in
  let symbolic var1 <- invgamma(1.,1.) in
  observe(gaussian(if cond then x1 else 1., if !cond then var1 else 1.), obs)
\end{lstlisting}
where \mkw{cond} and \mkw{obs} are constants. 
Because \mkw{cond} and \mkw{!cond} are inconsistent branch conditions and the \mkw{else}-branches are both constants, in any execution SSI only needs a conjugacy relation for \emph{either} \mkw{x1} or \mkw{var1}.
Such a single-variable conjugacy relation always exists, so annotating both \mkw{x1} and \mkw{var1} $\mfSymbolic$ will not throw an error in any execution.
However, the analysis approximates the observed Gaussian distribution as $\anormal{\arandomvar_{x1}}{\arandomvar_{\mit{var}1}}$, meaning that the distribution is potentially depending on \emph{both} \mkw{x1} and \mkw{var1}, which would require a conjugacy relationship for both variables simultaneously. 
SSI does not support this, so the analysis concludes that SSI may throw an error when in fact no such error-throwing execution exists.
This problem manifests in \bSlds{} with SSI but does not affect any other benchmarks. 

\paragraph{Widening Expressions}
In \bSlam{} with SSI, the analysis widens abstract expressions to 
$\eunk{}{\avarset{}}$ when the expression tree depth is over the preset
threshold because large symbolic expressions can be computationally expensive. It also widens $\eunk{}{\avarset{}}$ to \etop{} when the number of variables in $\avarset{}$ is over the preset threshold to hasten the convergence of the fixpoint computation during a \mkw{fold} because $\avarset{}$ can grow indefinitely large.
However, 
the \etop{} expression is not precise enough for the analysis to detect conjugacies that SSI needs to perform symbolic computation in \bSlam{} when all the variables are annotated \mfSymbolic.

\paragraph{Syntactic Partial Ordering Comparisons}
The analysis fails to identify satisfiable plans in \bGtree{} with SSI and SMC with BP and \bWheels{} with SSI and DS because the partial ordering of abstract expressions performs comparisons syntactically. For example, $\amult{\cunk}{\arandomvar}$ and $\aplus{\amult{\cunk}{\arandomvar}}{\cunk}$ do not share syntactic expression structure, so joining the expressions produces $\eunk{}{\{\arandomvar\}}$. The over-approximated expression $\eunk{}{\{\arandomvar\}}$ is not considered an affine expression with respect to $\arandomvar$, causing the analysis to fail to identify linear-Gaussian conjugacies. 

\subsubsection{RQ3 Results}
\label{app:analysis-runtime}
We summarize the time taken by the analysis averaged over the all inference plans for each evaluated setting in Table~\ref{tab:analysis-time}.
Overall, the analysis takes less than 1 second for all benchmarks, algorithms, and inference plans, except for SLAM with SSI using the default Plan 0. Plan 0 annotates all variables with symbolic, which SSI can implement on SLAM. However, the symbolic computation here results in performing exact inference on a program with only Bernoullis, which is intractable as the conjugacy transformation exponentially increases the expression size. By lowering the widening thresholds, this time could be sped up, at the expense of precision. Nevertheless, the analysis is in general fast to perform.

\begin{table}
  \small
  \centering
  \caption{Time taken by the analysis averaged over all inference plans for each benchmark and algorithm in seconds. The analysis time for \bSlam{} with SSI is exceptionally long due to the intractable symbolic computation of full exact inference.}
  \begin{tabular}[h]{lccccccccccc}
    \toprule
    Algorithm & \bNoise{} & \bRadar{} & \emph{EnvN} & \emph{Out} & \emph{OutH} & \bGtree{} & \bSlds{} & \bRunner{} & \bWheels{} & \bSlam{} & \emph{Air} \\
\midrule
\ssi & 0.40 & 0.41 & 0.42 & 0.41 & 0.41 & 0.40 & 0.50 & 0.54 & 0.45 & 40.93 & 0.46 \\
\ds & 0.39 & 0.40 & 0.41 & 0.41 & 0.40 & 0.38 & 0.50 & 0.54 & 0.45 & 0.53 & 0.46 \\
\bp & 0.39 & 0.40 & 0.42 & 0.41 & 0.40 & 0.38 & 0.49 & 0.54 & 0.45 & 0.51 & 0.46 \\
  \bottomrule
  \end{tabular}
  \label{tab:analysis-time}
\end{table}

\section{Related Work}
\label{sec:related-work}

\paragraph{Hybrid Inference}
\siren{} supports hybrid inference algorithms based on particle filtering. Other Monte Carlo inference methods can also be combined with exact inference; Hakaru~\citep{narayanan2016probabilistic}, Autoconj~\citep{hoffman2018autoconj}, and automatic marginalization~\citep{lai2023automatically} perform static transformations to solve the model analytically, and allow the rest to be solved with Monte Carlo methods such as Metropolis-Hastings and HMC. 
These Monte Carlo methods have the same key feature as particle filtering, which is that there is a subset of random variables that, when reduced to constant values, allows the algorithm to analytically solve the rest of the inference problem. This paradigm naturally leads to the concept of partitioning random variables in a probabilistic model into \mfSample{} and \mfSymbolic{} random variables.
In systems that perform static transformations, the partitioning is inherently known at compile time. However, the concept of inference plans can still provide an explicit interface for reasoning about these partitions.
In the dynamic setting where the partitioning is only entirely determined at run time, a static analysis such as the one in \Cref{sec:analysis} is necessary to determine the satisfiability of inference plans.

To the best of our knowledge, no prior works have combined dynamic symbolic computations on Monte Carlo methods other than particle filtering. Developing such an algorithm is out of the scope of this work. Nevertheless, as a proof of concept of how the key ideas presented in this work extend to other Monte Carlo-based methods, we present an alternative semantics for \siren{} using a basic Metropolis-Hastings implementation combined with symbolic computations in \Cref{appendix:mh-siren}. We show that using inference plans can improve performance and that the analysis is still precise.

\paragraph{Programmable Inference}
Inference plans is an instance of programmable inference, where the programming system hands over control of the inference procedure to the user.
Other works in the programmable inference space~\citep{cusumano2019gen,mansinghka2014venture,mansinghka2018probabilistic,tehrani2020bean,lew2019trace} hand over control to the user at varying stages of inference for different inference paradigms. Our interface applies specifically to enable users to use alternative heuristics for hybrid inference algorithms. 

\paragraph{Probabilistic Program Analyses}
Several efforts have been made on using program analyses to detect structure to optimize in probabilistic programs~\citep{ritchie2016deep, huang2017compiling, gorinova2020automatic, zhou2020divide, cheng2021flip}. 
Other works also use program analyses to statically infer properties about the outputs or resource usage of probabilistic programs~\citep{gorinova2021conditional,lee2023smoothness,atkinson21statically,ngo2018bounded,monniaux2001abstract,monniaux2000abstract,cousot2012probabilistic,di2000concurrent,smith2008probabilistic,trilla2020probabilistic,wang2023newtonian}.
Our analysis infers properties about the runtime behavior of the probabilistic inference algorithm.

\section{Conclusion}
\label{sec:conclusion}

In this work, we present \siren{}, a new probabilistic programming language for hybrid inference.
\siren{} enables developers to use \emph{inference plans} to control the partitioning of random variables into sampled and symbolic variables.
To assist programmers in reasoning about inference plans in hybrid inference systems, \siren{} employs a static analysis that determines if an inference plan is satisfiable in all possible executions of the program.
Our design of the hybrid inference interface enables \siren{} to work with multiple hybrid inference algorithms, including semi-symbolic inference, delayed sampling, and SMC with belief propagation.

The promise of PPLs is to separate the task of probabilistic modeling from the complex low-level details of building an inference algorithm.
However, to achieve good performance in practice, developers often need control over the behavior of the inference system.
\siren{} brings custom hybrid inference to the paradigm of probabilistic programming: developers can adjust the behavior of the inference algorithm to achieve better performance while maintaining the separation of modeling and inference.

\section*{Data-Availability Statement}
The artifact of this work is available on Zenodo~\citep{cheng_2024_13924216}.

\begin{DIFnomarkup}
\begin{acks}                            %
This material is based upon work supported in-part by the National Science Foundation Graduate Research Fellowship under Grant No. 2141064, Sloan Foundation, SRC JUMP 2.0 (CoCoSys), and Amazon MIT Science Hub.
We thank Alex Renda, Charles Yuan, Tian Jin, Jesse Michel, and Logan Weber for helpful feedback
on this work.
\end{acks}
\end{DIFnomarkup}

\bibliography{main}

% \received{2024-07-07}
% \received[accepted]{2024-11-07}

\par\bigskip\noindent{\small\normalfont{Received 2024-07-07; accepted 2024-11-07}\par}
\label{lastpage}

\newpage
\appendix
\section{Ideal Semantics}
\label{appendix:idealsem}

The ideal semantics of \siren{} is a measure-based semantics adapted from~\cite{staton17} and presented in \Cref{fig:sem-muf}. The type $t$ of an expression is \mfUnit, \mfReal, \mfInt, a product type, or a list. In the paper, we assume all programs are well-typed. 
In the ideal semantics, types are interpreted as measurable spaces.
Given an environment $\gamma$, the semantics of an expression $e$ of type $t$, $\sem{e}{\gamma} : \Sigma_{\sem{t}{}} \to [0, \infty]$ is a \emph{measure} which maps a measurable set of values of type $t$ to a positive score.\footnote{We write $\Sigma_A$ for a $\sigma$-algebra on $A$.}

The semantics of a deterministic value $v$ is the Dirac Delta measure on the corresponding value~($\delta_{\sem{v}{\gamma}}(U) = 1$ if $v \in U$ and $0$ otherwise).
The semantics of a local definition \mfLetIn{x}{e_1}{e_2}{} corresponds to integration, i.e., integrate the semantics of $e_2$ over all possible values for $e_1$.
The semantics of a local random variable is very similar, but instead of an arbitrary expression, a distribution is a value that is already a measure.
In this ideal semantics, annotations \mfSample{} and \mfSymbolic{} are simply ignored.
Finally, \mfObserve{v_1}{v_2} conditions the model using the likelihood of the value $v_2$ in the distribution $v_1$, i.e., the value of the \emph{probability density function} (pdf) of $d$ in $v$.\footnote{For simplicity, we also write \mit{pdf}{} for the \emph{probability mass function} of discrete distributions}

The semantics of the \mkw{fold} operator is defined with a local definition and a function call.
On an empty list, \mfFold{f}{\mfNil}{a} returns the value of the accumulator $a$.
Otherwise \mfFold{f}{\mfCons{h}{t}}{a} first applies the function $f$ on the head of the list $h$ and the accumulator $a$ to compute the next accumulator value, and then recursively applies \mkw{fold} on the new value and the tail of the list $t$.

Finally, the semantics of a program incrementally builds the environment $\gamma$ required to compute the measure corresponding to the main expression.
Because of the \mkw{observe} construct, this measure is, in general, \emph{unnormalized}.
We thus renormalize it to return the corresponding probability distribution.
If normalization is not possible, we return an error.

\[
\begin{array}{@{}lrl@{}}
\sem{\mit{prog} = d_1\ \dots \ d_n \ e}{} =
\begin{array}[t]{@{}l@{}}
\mit{let} \ \gamma = \sem{d_1 \ \dots \ d_n}{[]} \ \mit{in}\\
\mit{let}\ \mit{norm} = \psem{e}{\gamma}(\sem{\mit{typeOf}(e)}{}) \ \mit{in}\\
\begin{cases}
\left. \psem{e}{\gamma} \middle/ \mit{norm} \right. &\textit{if \; $0 < \mit{norm} < \infty$}\\
\left. \mfError \right. & \textit{otherwise}
\end{cases}
\end{array}
\end{array}
\]

\begin{figure}[H]
\small
\[
\begin{array}{@{}lcl@{}}
\sem{\mfUnit}{} &=& \textit{discrete measurable space with one element $()$}\\
\sem{\mfReal}{} &=& \textit{reals with Borel sets}\\
\sem{\mfInt}{} &=& \textit{integers}\\
\sem{t_1 \times t_2}{} &=& \textit{product space $\sem{t_1}{} \times \sem{t_2}{}$}\\
\sem{t\ \mtt{list}}{} &=& \textit{product space $\left\{\bigtimes_n \ \sem{t}{} \mid n \in \mathbb{N} \right\}$}\\
\\
\sem{c}{\gamma} &=& c\\
\sem{x}{\gamma} &=& \gamma(x)\\
\sem{\mfPair{v_1}{v_2}}{\gamma} &=& (\sem{v _1}{\gamma}, \sem{v_2}{\gamma})\\
\sem{\mfNil}{\gamma} &=& []\\
\sem{\mfCons{v_1}{v_2}}{\gamma} &=& \sem{v_1}{\gamma} :: \sem{v_2}{\gamma}\\
\\
\psem{v}{\gamma} &=& \delta_{\sem{v}{\gamma}}\\
\psem{\mfApp{f}{v}}{\gamma} &=& \gamma(f)(\sem{v}{\gamma})\\
\psem{\mfIf{v}{e_1}{e_2}}{\gamma} &=& \mit{if}\ \sem{v}{\gamma} \ \mit{then} \ \psem{e_1}{\gamma} \ \mit{else} \ \psem{e_2}{\gamma}\\
\psem{\mfLetIn{x}{e_1}{e_2}}{\gamma} &=& \int \psem{e_1}{\gamma}(dr)\ \psem{e_2}{\gamma + [x \is r]}\\
\psem{\mfLetRv{\annotation}{x}{\mfOp{v}}{e}}{\gamma} &=& \int \sem{\mfOp{v}}{\gamma}(dr)\ \psem{e}{\gamma + [x \is r]}\\
\psem{\mfObserve{v_1}{v_2}}{\gamma} &=& \mit{pdf}(\sem{v_1}{\gamma})(\sem{v_2}{\gamma}) * \delta_{()}\\
\psem{\mfFold{f}{\mfNil}{v}}{\gamma} &=& \psem{v}{\gamma}\\
\psem{\mfFold{f}{\mfCons{h}{t}}{v}}{\gamma} &=& 
\psem{\mfLetIn{a}{f(h, v)}{\mfFold{f}{t}{a}}}{\gamma}
\\
\\
\sem{\mfLetFun{f}{x}{e}}{\gamma} &=& \gamma + \left[f \is \left(\lambda v. \ \psem{e}{\gamma + [x \is v]}\right)\right]\\
\sem{d_1 \ d_2}{\gamma} &=& \sem{d_2}{\sem{d_1}{\gamma}}
\end{array}
\]
\caption{Ideal measure-based semantics of \siren.}
\label{fig:sem-muf}
\end{figure}

\newpage

\section{Operational Semantics}
\label{appendix:op-sem}

\subsection{Big Step Semantics with Checkpoints}
We present the full operational semantics with resampling checkpoints in \Cref{fig:op:sem-full}.
\begin{figure}[H]
\begin{small}
\begin{mathpar}
\inferrule%
{ }
{\pclstep{\val}{\symbstate}
         {\val}{\symbstate}{1}{\false}}

\inferrule%
{ }
{\pclstep{\mfResample}{g}
         {\mfUnit}{g}{1}{\true}}

\inferrule%
{\mfLetFun{f}{\programvar}{\progexpr} \\
 \pclstep{\progexpr[\programvar \leftarrow \val]}{\symbstate}
         {\progexpr'}{\symbstate'}{\weight}{\doresample}}
{\pclstep{\mfApp{f}{\val}}{\symbstate}
         {\progexpr'}{\symbstate'}{\weight}{\doresample}}

\inferrule%
{\pure(\progexpr{}_1, \progexpr{}_2)\\
\neg\const(\val)\\
 \pclstep{\progexpr{}_1}{\symbstate}
         {\val{}_1}{\symbstate_1}{1}{\false}\\
 \pclstep{\progexpr{}_2}{\symbstate_1}
         {\val{}_2}{\symbstate'}{1}{\false} 
}
{\pclstep{\mfIf{\val}{\progexpr{}_1}{\progexpr{}_2}}{\symbstate}
         {\ite{\val}{\val_1}{\val_2}}{\symbstate'}{1}{\false}}
         
\inferrule%
{\symvalue^*(\val, \symbstate) = \true, \symbstate_\val \\
 \pclstep{\progexpr_1}{\symbstate_\val}
         {\progexpr_1'}{\symbstate'}{\weight}{\doresample}}
{\pclstep{\mfIf{\val}{\progexpr_1}{\progexpr_2}}{\symbstate}
         {\progexpr_1'}{\symbstate'}{\weight}{\doresample}}

\inferrule%
{\symvalue^*(\val, \symbstate) = \false, \symbstate_\val \\
 \pclstep{\progexpr_2}{\symbstate_\val}
         {\progexpr_2'}{\symbstate'}{\weight}{\doresample}}
{\pclstep{\mfIf{\val}{\progexpr_1}{\progexpr_2}}{\symbstate}
         {\progexpr_2'}{\symbstate'}{\weight}{\doresample}}

\inferrule%
{\pclstep{\progexpr_1}{\symbstate}
         {\progexpr_1'}{\symbstate'}{\weight}{\true}}
{\pclstep{\mfLetIn{\programvar}{\progexpr_1}{\progexpr_2}}{\symbstate}
         {\mfLetIn{\programvar}{\progexpr_1'}{\progexpr_2}}{\symbstate'}{\weight}{\true}}

\inferrule%
{\pclstep{\progexpr_1}{\symbstate}
         {\val_1}{\symbstate_1}{\weight_1}{\false}\\
 \pclstep{\progexpr_2[\programvar \leftarrow \val_1]}{\symbstate_1}
         {\progexpr_2'}{\symbstate_2}{\weight_2}{\doresample}}
{\pclstep{\mfLetIn{\programvar}{\progexpr_1}{\progexpr_2}}{\symbstate}
         {\progexpr_2'}{\symbstate_2}{\weight_1 * \weight_2}{\doresample}}

\inferrule%
{ }
{\pclstep{\mfFold{f}{\mfNil}{\val}}{\symbstate}
         {\val}{\symbstate}{1}{\false}}

\inferrule%
{
    {\pclstep{\mfLetIn{\programvar}{\mfApp{f}{\mfPair{\listhd}{v}}}{\mfFold{f}{\listtl}{\programvar}}}{\symbstate}
         {\progexpr}{\symbstate'}{\weight}{\doresample}}
}
{\pclstep{\mfFold{f}{\mfCons{\listhd}{\listtl}}{\val}}{\symbstate}
         {\progexpr}{\symbstate'}{\weight}{\doresample}}

\inferrule%
{\annotation \in \{\varepsilon, \mfSymbolic\}\\
\symassume(\annotation, \mfApp{\distop}{\val}, \symbstate) = \randomvar, \symbstate_\randomvar\\
 \pclstep{\progexpr[\programvar \leftarrow \randomvar]}{\symbstate_\randomvar}
         {\progexpr'}{\symbstate'}{\weight}{\doresample}}
{\pclstep{\mfLetRv{\annotation}{\programvar}{\mfApp{\distop}{\val}}{\progexpr}}{\symbstate}
         {\progexpr'}{\symbstate'}{\weight}{\doresample}}

\inferrule%
{\symassume(\mfSample, \mfApp{\distop}{\val}, \symbstate) = \randomvar, \symbstate_\randomvar\\
\symvalue(\randomvar, \symbstate_\randomvar) = \val_\programvar, \symbstate_\randomvar'\\
 \pclstep{\progexpr[\programvar \leftarrow \val_\programvar]}{\symbstate_\randomvar'}
         {\progexpr'}{\symbstate'}{\weight}{\doresample}}
{\pclstep{\mfLetRv{\mfSample}{\programvar}{\mfApp{\distop}{\val}}{\progexpr}}{\symbstate}
         {\progexpr'}{\symbstate'}{\weight}{\doresample}}

\inferrule%
{\symassume(\varepsilon, \mfApp{\distop}{\val_1}, \symbstate) = \randomvar, \symbstate_\randomvar\\
 \symvalue^*(\val_2, \symbstate_\randomvar) = \val, \symbstate_\val\\
 \symobserve(\randomvar, \val, \symbstate_\val) = \symbstate', \weight}
{\pclstep{\mfObserve{\mfApp{\distop}{\val_1}}{\val_2}}{\symbstate}
         {\mfUnit}{\symbstate'}{\weight}{\false}}

\inferrule{
  \set{ \pclstep{\progexpr_i}{\symbstate_i}{\val_i}{\symbstate_i'}{\weight_i}{\false} }_{1 \le i \le N}\\
  \set{\distribution(\val_i, \symbstate_i') = \distr{}_i}_{1 \le i \le N}\\
  \Weight = \textstyle{\sum_{1 \le i \le N}} \weight_i
}
{\pclsstep{\set{\progexpr_i, \symbstate_i}_{1 \le i \le N}}{\textstyle{\sum_{1 \le i \le N}} \dfrac{\weight_i}{\Weight} \times \distr{}_i }}

\inferrule{
  \set{ \pclstep{\progexpr_i}{\symbstate_i}
                {\progexpr_i'}{\symbstate_i'}{\weight_i}{\doresample_i} }_{1 \le i \le N}\\
  \textstyle{\bigvee_{1 \le i \le N}}{\doresample_i} \\
  \mu = \categorical\left(\set{ \weight_i, (\progexpr_i', \symbstate_i') }_{1 \le i \le N}\right) \\
  \pclsstep{\set{\draw(\mu)}_{1 \le i \le N}}{\distr{}}
}
{\pclsstep{\set{\progexpr_i, \symbstate_i}_{1 \le i \le N}}{\distr{}}}

\inferrule{
    \pclsstep{\set{\progexpr, \emptyset}_{1 \le i \le N}}{\distr{}}
}
{\pclpstep{N}{\progexpr}{\distr{}}}

\end{mathpar}
\end{small}
\caption{Big-step semantics with checkpoints for the resample step. The semantics are described by three rules: particle evaluation rules, denoted by $\downarrow^r$; particle set evaluation rules, denoted by $\downdownarrows$; and model evaluation rules, denoted by $\Downarrow$. 
}
\label{fig:op:sem-full}
\end{figure}

\subsection{Implementing Hybrid Inference Interface}
We present here the definitions for the interface operations. We defer the auxiliary operations to the respective works.

\subsubsection{Semi-symbolic Inference}
The interface operations are implemented using \textsc{hoist} and \textsc{intervene}. We discuss \textsc{hoist} here, but leave other auxiliary operations to \citet{atkinson2022semi}.
\begin{align*}
\symassume(\annotation, \distr{}, \symbstate) &= \; 
    \letin{\symbstate' = \symbstate[\randomvar_\mit{new} \mapsto (\annotation,\distr{})]}
    (\randomvar_\mit{new}, \symbstate')\\
\symvalue(\randomvar, \symbstate) &= \;
    \mit{let} \; \symbstate' = \textsc{hoist}(\randomvar, \symbstate) \; \mit{in}\; 
    \mit{let} \; v = \textsc{draw}(\symbstate'(\randomvar)_d) \mit{in} \;        
    (v, \textsc{intervene}(\randomvar, \SSIdeltasample{v}, \symbstate'))\\
\symobserve(\randomvar, \val, \symbstate) &= \;
    \begin{array}[t]{@{}l@{}}
    \mit{let} \; \symbstate' = \textsc{hoist}(\randomvar, \symbstate) \; \mit{in}\; 
    \mit{let} \; \scoreval = \score(\symbstate'(\randomvar)_d) \mit{in} \;        
    (\textsc{intervene}(\randomvar, \deltad{\val}, \symbstate'), \scoreval)
    \end{array}
\end{align*}

\paragraph{\textsc{hoist}}
We present the definition of \textsc{hoist} in \Cref{alg:hoist}.
The $\textsc{hoist\_helper}$ operation recursively calls itself on the parents of $\randomvar_\mit{cur}$ in topological order. The topological ordering is a requirement to prevent creating cycles in the dependencies of the random variables as we perform swaps. After each recursive call, the hoisted parent is added to the next calls' $\texttt{roots}$ set. Members of the $\texttt{roots}$ set are not hoisted or swapped. The $\texttt{roots}$ set enforces that subsequent parents will only be descendants of the earlier parents so no cycles will be created.
Then, $\textsc{hoist\_helper}$ iterates through all parents in reverse topological order and swaps $\randomvar_\mit{cur}$ with each.
If $\randomvar_\mit{cur}$ cannot be swapped with one of its parents, the operation throws an exception that is caught at the outermost level.

The $\textsc{hoist}$ operation calls $\textsc{hoist\_helper}$ with an empty $\texttt{roots}$ set, thus turning $\randomvar_\mit{in}$ into a root.
In the event of an exception, it forces the parent of the offending variable to be sampled using the \textsc{Value} operation.
It then uses the $\textsc{eval}^*$ operation to eliminate the resulting Delta distribution from the symbolic state.
After thus eliminating this variable, it again attempts to \textsc{hoist} the variable $\randomvar_\mit{in}$.

\begin{algorithm}[H]
  \small
  \begin{algorithmic}
  \Function{hoist\_helper}{$\randomvar_\mit{cur}$, \texttt{roots}, $\symbstate$}
  \State \texttt{parents} $\gets$ $\textsc{topo\_sort}(\textsc{get\_parents}(\randomvar_\mit{cur}, \symbstate))$
  \State \texttt{roots}' $\gets$ \texttt{roots};\; $\symbstate' \gets \symbstate$
  \For{$\randomvar_\mit{par} \in $ \texttt{parents}}
  \If{$\randomvar_\mit{par} \not \in $ \texttt{roots}}
  \State $\symbstate' \gets$ \Call{hoist\_helper}{$\randomvar_\mit{par}$, \texttt{roots}', $\symbstate'$}$;\;$ \texttt{roots}' $\gets \randomvar_\mit{par} ::$ \texttt{roots}'
  \EndIf
  \EndFor
  \State $\symbstate'' \gets \symbstate'$
  \For{$\randomvar_\mit{par} \in $ \textsc{reverse}(\texttt{parents})}
  \If{$\randomvar_\mit{par} \not \in $ \texttt{roots}'}
  \State $(\symbstate'', \texttt{conjugate}) \gets$ \Call{swap}{$\randomvar_\mit{par}$, $\randomvar_\mit{cur}$, $\symbstate''$}
  \If{not \texttt{conjugate}}
  \State \textbf{throw} $(\randomvar_\mit{par}, \randomvar_\mit{cur})$
  \EndIf
  \EndIf
  \EndFor
  \State \Return $\symbstate''$
  \EndFunction
  \Function{hoist}{$\randomvar_\mit{in}$, $\symbstate$}
  \try
  \State \Return \Call{hoist\_helper}{$\randomvar_\mit{in}$, $\{\}$, $\symbstate$}
  \catch{$(\randomvar_\mit{par}, \randomvar_\mit{child})$}
  \State $(\_, \symbstate') \gets$ \Call{\textsc{value}}{$\randomvar_{\mit{par}}, \symbstate$}$;\;$ $\symbstate'' \gets \textsc{eval}^*(\randomvar_\mit{child}, \symbstate');\; $ \Return \Call{hoist}{$\randomvar_\mit{in}$, $\symbstate''$}
  \endtry
  \EndFunction
  \end{algorithmic}
  \caption{\textsc{hoist}: Hoisting a random variable to be a root depending on no other random variables.}
  \label{alg:hoist}
\end{algorithm}

\subsubsection{Delayed Sampling}
The interface operations for DS are implemented using $\graft$ and other helper operations. 
The recursive structure of $\graft$, when combined with the additional helper functions $\prune$ -- which realizes a Marginalized child -- and $\marginalize$ -- which symbolically marginalizes a conditional distribution -- maintains the invariant that each tree has a single Marginalized path.
The operation $\conjdistr$ takes a symbolic distribution $\distr{}$ and ensures that $\distr{}$ has at most one parent and that the parent must be a conjugate prior. If $\distr{}$ has more than one parent or the parent is not a conjugate prior, those parents are sampled with $\symvalue$. 
The operation $\initialize$ is used for inserting nodes into the tree. It takes a new random variable $\randomvar$ and a symbolic distribution $\distr{}$ and checks if $\distr{}$ is a conditional distribution with parents. If $\randomvar$ has a single parent that is a conjugate prior to $\distr{}$, then $\randomvar$ is assigned the Initialized node state, with $\distr{}$ as the prior. If $\randomvar$ has no parents, then $\randomvar$ is assigned the Marginalized state with no parent. The operation also inserts the annotation into the entry of the random variable. We discuss $\graft$ here, but we defer the full definitions of these operations to \citet{lunden2017delayed,murray2018delayed}.
\begin{align*}
  \symassume(\annotation, \distr{}, \symbstate) &= \; 
      \begin{array}[t]{@{}l@{}}
        \letin{\distr{\prime}, \symbstate' = \conjdistr(\distr{}, \symbstate)}\\
      (\randomvar_\mit{new}, \initialize(\randomvar_\mit{new}, \annotation, \distr{\prime}, \symbstate'))\\
      \end{array}\\
  \symvalue(\randomvar, \symbstate) &= \;
      \mit{let} \; \symbstate' = \textsc{graft}(\randomvar, \symbstate) \; \mit{in}\; 
      \mit{let} \; v = \draw(\symbstate'(\randomvar)_d) \mit{in} \;        
      (v, \textsc{realize}(\randomvar, \SSIdeltasample{v}, \symbstate'))\\
  \symobserve(\randomvar, \val, \symbstate) &= \;
      \begin{array}[t]{@{}l@{}}
      \mit{let} \; \symbstate' = \textsc{graft}(\randomvar, \symbstate) \; \mit{in}\; 
      \mit{let} \; \scoreval = \score(\symbstate'(\randomvar)_d) \mit{in} \;        
      (\textsc{realize}(\randomvar, \deltad{\val}, \symbstate'), \scoreval)
      \end{array}
  \end{align*}

\paragraph{Graft}
DS observes or samples a random variable that is a \emph{terminal} node (i.e. a Marginalized node with no Marginalized children). DS turns variables into terminal nodes with a \emph{grafting} algorithm that recursively converts Initialized nodes into Marginalized nodes by symbolically marginalizing the parent of Initialized nodes and removing edges to its Marginalized child by sampling. We present the definition of $\graft$ below, which uses helper operations \textsc{marginalized\_child}, $\prune$, and $\marginalize$. \textsc{marginalized\_child} returned a Marginalized child node, and $\prune$ recursively samples the downstream Marginalized children using $\symvalue$. $\marginalize$ symbolically transforms an Initialized node with a conditional distribution into a Marginalized node with a marginal distribution. The $\graft$ operation uses the node type to determine which operations convert the node into a terminal node: If the node is a Marginalized node, $\graft$ retrieves its Marginalized child and removes the edge to that subtree using $\prune$; if the node is an Initialized node, $\graft$ ensures the parent node is also terminal by recursively calling $\graft$ and then symbolically marginalizing with $\marginalize$. At the end of $\graft$, the variable $\randomvar$ is a terminal node.
\begin{align*}
\graft(\randomvar, \symbstate) = \;& 
\begin{array}[t]{@{}l@{}}
    \mit{match}\; \symbstateS{}{\randomvar}\; \mit{with}\\
    \begin{array}[t]{@{}l@{}}
        |\; \marginalizedroot{\varset} \;|\; \marginalized{\randomvar'}{\distr{}}{\varset}:\\
        \quad \begin{array}[t]{@{}l@{}}
            \letin{\randomvar_\mit{child} = \textsc{marginalized\_child}(\varset)}
            \prune(\randomvar_\mit{child}, \symbstate)
        \end{array}\\
        |\; \initialized{\randomvar'}{\varset}:
        \begin{array}[t]{@{}l@{}}
            \letin{\symbstate' = \graft(\randomvar', \symbstate)}
            \marginalize(\randomvar, \symbstate')
        \end{array}\\
    \end{array}
\end{array}
\end{align*}

\section{Inference Plan Satisfiability Analysis}
\label{appendix:abs-interp-rules}

\subsection{Interpretation Rules}
We present the full set of rules in \Cref{fig:abstract-interp-full}.

\begin{figure}[H]
  \centering
  \begin{small}
  \begin{mathpar}
  \inferrule%
  { }
  {\apclstep{\aval}{\asymbstate{}}
            {\aval}{\asymbstate{}}}

  \inferrule%
  { }
  {\apclstep{\mfResample}{\asymbstate{}}
           {\amfUnit}{\asymbstate{}}}
  
  \inferrule%
  {\mfLetFun{f}{\programvar}{\progexpr} \\
   \apclstep{\progexpr[\programvar \leftarrow \aval]}{\asymbstate{}}
           {\aval'}{\asymbstate{}'}}
  {\apclstep{\mfApp{f}{\aval}}{\asymbstate{}}
           {\aval'}{\asymbstate{}'}}

  \inferrule%
  {\pure(\progexpr_1, \progexpr_2) \\
  \neg\aconst(\aval)\\
   \apclstep{\progexpr_1}{\asymbstate{}}
           {\aval_1}{\asymbstate{1}} \\
   \apclstep{\progexpr_2}{\asymbstate{1}}
           {\aval_2}{\asymbstate{}'}
  }
  {\apclstep{\mfIf{\aval}{\progexpr_1}{\progexpr_2}}{\asymbstate{}}
           {\aite{\aval}{\aval_1}{\aval_2}}{\asymbstate{}'}}
  
  \inferrule%
  {\asymvalue^*(\aval, \asymbstate{}) = \atrue, \asymbstate{\aval}\\
   \apclstep{\progexpr_1}{\asymbstate{\aval}}
           {\aval_1}{\asymbstate{}'}}
  {\apclstep{\mfIf{\aval}{\progexpr_1}{\progexpr_2}}{\asymbstate{}}
           {\aval_1}{\asymbstate{}'}}
  
  \inferrule%
  {\asymvalue^*(\aval, \asymbstate{}) = \afalse, \asymbstate{\aval}\\
   \apclstep{\progexpr_2}{\asymbstate{\aval}}
           {\aval_2}{\asymbstate{}'}}
  {\apclstep{\mfIf{\aval}{\progexpr_1}{\progexpr_2}}{\asymbstate{}}
           {\aval_2}{\asymbstate{}'}}

  \inferrule%
  {\asymvalue^*(\aval, \asymbstate{}) = \cunk, \asymbstate{\aval} \\
   \apclstep{\progexpr_1}{\asymbstate{v}}
           {\aval_1}{\asymbstate{1}'} \\
   \apclstep{\progexpr_2}{\asymbstate{v}}
           {\aval_2}{\asymbstate{2}'}\\
   \narrowjoin(\aval_1, \aval_2, \asymbstate{1}', \asymbstate{2}') = \aval'', \asymbstate{}'
  }
  {\apclstep{\mfIf{\aval}{\progexpr_1}{\progexpr_2}}{\asymbstate{}}
           {\aval''}{\asymbstate{}'}}
  
  \inferrule%
  {\apclstep{\progexpr_1}{\asymbstate{}}
           {\aval_1}{\asymbstate{1}}\\
   \apclstep{\progexpr_2[\programvar \leftarrow \aval_1]}{\asymbstate{1}}
           {\aval_2}{\asymbstate{2}}}
  {\apclstep{\mfLetIn{x}{\progexpr_1}{\progexpr_2}}{\asymbstate{}}
           {\aval_2}{\asymbstate{2}}}

  \inferrule%
  { }
  {\apclstep{\mfFold{f}{\mfNil}{\aval}}{\asymbstate{}}
           {\aval}{\asymbstate{}}}

  \inferrule%
  {
    \apclstep{\mfLetIn{\programvar}{\mfApp{f}{\mfPair{\alisthd}{\aval}}}{\mfFold{f}{\alisttl}{\programvar}}}{\asymbstate{}}
         {\aval'}{\asymbstate{}'}
  }
  {\apclstep{\mfFold{f}{\mfCons{\alisthd}{\alisttl}}{\aval}}{\asymbstate{}}
           {\aval'}{\asymbstate{}'}}
  
  \inferrule%
  {
    \apclstep{\mfApp{f}{\mfPair{\alistl}{\aval}}}{\asymbstate{}}
             {\aval_f}{\asymbstate{f}}\\
    \narrowjoin(\aval, \aval_f, \asymbstate{}, \asymbstate{f}) = \aval_j, \asymbstate{j}\\
    \aval = \aval_j\\
    \weakeq{\asymbstate{}}{\asymbstate{j}}{(\alistl,\aval)}
  }
  {\apclstep{\mfFold{f}{\alistl}{\aval}}{\asymbstate{}}
            {\aval}{\asymbstate{}}}
  
  \inferrule%
  {
    \apclstep{\mfApp{f}{\mfPair{\alistl}{\aval}}}{\asymbstate{}}
             {\aval_f}{\asymbstate{f}}\\
    \narrowjoin(\aval, \aval_f, \asymbstate{}, \asymbstate{f}) = \aval_j, \asymbstate{j}\\
    \apclstep{\mfFold{f}{\alistl}{\aval_j}}{\asymbstate{j}}
             {\aval'}{\asymbstate{}'}
  }
  {\apclstep{\mfFold{f}{\alistl}{\aval}}{\asymbstate{}}
           {\aval'}{\asymbstate{}'}}
  
  \inferrule%
  {\annotation \in \{\varepsilon, \mfSymbolic\}\\
  \asymassume(\annotation, \distop(\aval), \asymbstate{}) = \arandomvar, \asymbstate{\arandomvar}\\
   \apclstep{\progexpr[\programvar \leftarrow \arandomvar]}{\asymbstate{\arandomvar}}
           {\aval'}{\asymbstate{}'}}
  {\apclstep{\mfLetRv{\annotation}{\programvar}{\distop(\aval)}{\progexpr}}{\asymbstate{}}
           {\aval'}{\asymbstate{}'}}

  \inferrule%
  {\asymassume(\amfSample, \distop(\aval), \asymbstate{}) = \arandomvar, \asymbstate{\arandomvar}\\
   \asymvalue(\arandomvar, \asymbstate{\arandomvar}) = \aval_\programvar, \asymbstate{\arandomvar}'\\
   \apclstep{\progexpr[\programvar \leftarrow \aval_\programvar]}{\asymbstate{\arandomvar}'}
           {\aval'}{\asymbstate{}'}}
  {\apclstep{\mfLetRv{\mfSample}{\programvar}{\distop(\aval)}{\progexpr}}{\asymbstate{}}
           {\aval'}{\asymbstate{}'}}

  \inferrule%
  { \asymassume(\varepsilon, \distop(\aval_1), \asymbstate{}) = \arandomvar, \asymbstate{\arandomvar}\\
   \asymvalue^*(\aval_2, \asymbstate{\arandomvar}) = \aval, \asymbstate{\aval}\\
   \asymobserve({\arandomvar}, \aval, \asymbstate{\aval}) = \asymbstate{}'}
  {\apclstep{\mfObserve{\distop(\aval_1)}{\aval_2}}{\asymbstate{}}
           {\amfUnit}{\asymbstate{}'}}

  \inferrule{
    \adset = \set{\adistribution(\aval,\asymbstate{}') \;|\; (\progexpr,\asymbstate{}) \in \apset, (\progexpr,\asymbstate{} \;\abs{\downarrow}\; \aval,\asymbstate{}') }
  }
  {\apclsstep{\apset}{\textstyle{\bigsqcup_{\adistr{}_i \in \adset}}\; \adistr{}_i}}

\inferrule{
  \ensuremath{\set{{\progexpr}, {\emptyset}}\; \hat{\downdownarrows}\; \afail}
  }
  {\apclpstep{\progexpr}{\afail}}

\inferrule{
    \apclsstep{\set{{\progexpr}, {\emptyset}}}
            {\adistr{}}
}
{\apclpstep{\progexpr}{\mit{satisfiable}}}  
  \end{mathpar}
  \end{small}
  
  \caption{Abstract interpretation rules. Failure rules are elided.}
  \label{fig:abstract-interp-full}
\vspace{-1em}
\end{figure}

\subsection{Implementing Abstract Inference Interface}
\label{appendix:abs-interface}
The abstract interface operations mirror their concrete counterparts, except that they do not perform any scoring or sampling. 

\subsubsection{Semi-symbolic Inference}

\begin{align*}
\asymassume(\aannotation, \adistr{}, \asymbstate{}) &= \; 
    \letin{\asymbstate{}' = \asymbstate{}[\arandomvar_\mit{new} \mapsto (\aannotation,\adistr{})]}
    (\arandomvar_\mit{new}, \asymbstate{}')\\
\asymvalue(\arandomvar, \asymbstate{}) &= \;
    \begin{array}[t]{@{}l@{}}
    \mit{if}\; \asymbstatePV{}{\arandomvar} = \amfSymbolic\; \mit{then}\;
    \afail\;\\
    \mit{else}\;
    \mit{let} \; \asymbstate{}' = \ahoist(\arandomvar, \asymbstate{}) \; \mit{in}\;      
    (\cunk, \aintervene(\arandomvar, \adeltasample{\cunk}, \asymbstate{}'))
    \end{array}\\
\asymobserve(\arandomvar, \aval, \asymbstate{}) &= \;
    \begin{array}[t]{@{}l@{}}
    \mit{let} \; \asymbstate{}' = \ahoist(\arandomvar, \asymbstate{}) \; \mit{in}\; 
    \aintervene(\arandomvar, \adeltad{\aval}, \asymbstate{}')
    \end{array}
\end{align*}

The abstract $\ahoist$ operation is the same algorithm as the concrete $\hoist{}$, where the helper functions are replaced with their abstract counterparts. The helper function $\widehat{\textsc{get\_parents}}$, like its concrete counterpart, returns parent random variables of $\arandomvar_\mit{cur}$.

\begin{algorithm}[H]
  \small
  \begin{algorithmic}
  \Function{$\ahoisthelper$}{$\arandomvar_\mit{cur}$, \texttt{roots}, $\asymbstate{}$}
  \State \texttt{parents} $\gets$ $\textsc{topo\_sort}(\widehat{\textsc{get\_parents}}(\arandomvar_\mit{cur}, \asymbstate{}))$
  \State \texttt{roots}' $\gets \texttt{roots}; \; \asymbstate{}' \gets \asymbstate{}$
  \For{$\arandomvar_\mit{par} \in $ \texttt{parents}}
  \If{$\arandomvar_\mit{par} \not \in $ \texttt{roots}}
  \State $\asymbstate{}' \gets$ \Call{$\ahoisthelper$}{$\arandomvar_\mit{par}$, \texttt{roots}', $\asymbstate{}'$}$;\;$ \texttt{roots}' $\gets \arandomvar_\mit{par} ::$ \texttt{roots}'
  \EndIf
  \EndFor
  \State $\asymbstate{}'' \gets \asymbstate{}'$
  \For{$\arandomvar_\mit{par} \in $ \textsc{reverse}(\texttt{parents})}
  \If{$\arandomvar_\mit{par} \not \in $ \texttt{roots}'}
  \State $(\asymbstate{}'', \texttt{conjugate}) \gets$ \Call{$\aswap$}{$\arandomvar_\mit{par}$, $\arandomvar_\mit{cur}, \asymbstate{}''$}
  \If{not \texttt{conjugate}}
  \State \textbf{throw} $(\arandomvar_\mit{par}, \arandomvar_\mit{cur})$
  \EndIf
  \EndIf
  \EndFor
  \State \Return $\asymbstate{}''$
  \EndFunction
  \Function{$\ahoist$}{$\arandomvar_\mit{in}$, $\asymbstate{}$}
  \try
  \State \Return \Call{$\ahoisthelper$}{$\arandomvar_\mit{in}$, $\{\}$, $\asymbstate{}$}
  \catch{$(\arandomvar_\mit{par}, \arandomvar_\mit{child})$}
  \State $(\_, \asymbstate{}') \gets$ \Call{$\asymvalue$}{$\arandomvar_{\mit{par}}, \asymbstate{}$}$;\;$ $\asymbstate{}'' \gets \aeval^*(\arandomvar_\mit{child}, \asymbstate{}')$ 
  \State \Return \Call{$\ahoist$}{$\arandomvar_\mit{in}$, $\asymbstate{}''$}
  \endtry
  \EndFunction
  \end{algorithmic}
  \caption{Abstract \hoist{}.}
  \label{alg:abs-hoist}
\end{algorithm}

\subsubsection{Delayed Sampling}
\begin{align*}
  \asymassume(\aannotation, \adistr{}, \asymbstate{}) &= \; 
      \begin{array}[t]{@{}l@{}}
        \letin{\adistr{\prime}, \asymbstate{}' = \aconjdistr(\adistr{}, \asymbstate{})}\\
      (\arandomvar_\mit{new}, \ainitialize(\arandomvar_\mit{new}, \aannotation, \adistr{\prime}, \asymbstate{}'))\\
      \end{array}\\
  \asymvalue(\arandomvar, \asymbstate{}) &= \;
      \begin{array}[t]{@{}l@{}}
      \mit{if}\; \asymbstatePV{}{\arandomvar} = \amfSymbolic\; \mit{then}\;
      \afail\;\\
      \mit{else}\;
      \mit{let} \; \asymbstate{}' = \agraft(\arandomvar, \asymbstate{}) \; \mit{in}\; 
      (\cunk, \arealize(\arandomvar, \adeltasample{\cunk}, \asymbstate{}'))
      \end{array}\\
  \symobserve(\arandomvar, \aval, \asymbstate{}) &= \;
      \begin{array}[t]{@{}l@{}}
      \mit{let} \; \asymbstate{}' = \agraft(\arandomvar, \asymbstate{}) \; \mit{in}\; 
      \arealize(\arandomvar, \adeltad{\cunk}, \asymbstate{}')
      \end{array}
  \end{align*}

The implementation of the interface using DS uses the abstract version of the grafting algorithm. If an abstract node has more than one marginal child it is unclear which marginal child is pruned. Thus, $\agraft$ invokes $\setunk$ on each child. $\aprune$ is only called on Marginalized nodes.
\begin{align*}
\agraft(\arandomvar, \asymbstate{}) = \;& 
\begin{array}[t]{@{}l@{}}
    \mit{match}\; \asymbstateS{}{\arandomvar}\; \mit{with}\\
    \begin{array}[t]{@{}l@{}}
        |\; \amarginalizedroot{\avarset{}} \;|\; \amarginalized{\arandomvar'}{\adistr{}}{\avarset{}}:\\
        \quad \begin{array}[t]{@{}l@{}}
          \letin{\avarset{,\mit{children}} = \widehat{\textsc{marginalized\_child}}(\avarset{})}\\
          \mit{if}\; |\avarset{,\mit{children}}| > 1 \; \mit{then}\; \\
          \quad \mit{for}\; \arandomvar_\mit{child} \in \avarset{,\mit{children}}: \setunk(\arandomvar_\mit{child}, \asymbstate{})\\
          \mit{else}\; 
          \letin{\{\arandomvar_\mit{child}\} = \avarset{,\mit{children}}}\;
          \aprune(\arandomvar_\mit{child}, \asymbstate{})
        \end{array}\\
        |\; \ainitialized{\arandomvar'}{\avarset{}}:\\
        \quad \begin{array}[t]{@{}l@{}}
            \letin{\asymbstate{}' = \agraft(\arandomvar', \asymbstate{})}
            \amarginalize(\arandomvar, \asymbstate{}')
        \end{array}\\
        |\; \topnode{\avarset{}}: \setunk(\arandomvar, \asymbstate{})
    \end{array}
\end{array}\\
\end{align*}

\section{Additional Proof Details}
\label{appendix:proofs}

We present here additional details necessary for showing that the inference plan satisfiability analysis is sound.

\subsection{Collecting Semantics}

\paragraph{Particle Set and Model Evaluation}
The collecting semantics of the particle set evaluation operates on a single set of particles. The analysis is interested in the possible particles produced during program execution; how the particles group together into particle sets is not relevant, which is why the semantics does not collect sets of sets. The given set of particles is first evaluated using the collecting semantics of the particle evaluation (which operates on a set of particles) to produce a set of configurations. If any of the particles return $\fail$, the semantics return a singleton set containing \fail.
If all of the particles have a \false{} resample flag (i.e. they all terminated), the semantics return a set of distributions using the \textsc{Distribution} operation.
If at least one of the particles has a \true{} resample flag, the collecting semantics iterates on the particle set evaluation rule. 
\begin{small}
    \begin{mathpar}
    \inferrule%
    { 
        \evalset = \set{(\progexpr',\symbstate',\doresample') \;|\;  (\progexpr,\symbstate) \in \pset, (\cpclstep{\progexpr, \symbstate}{\evalset'}), (\progexpr',\symbstate',\doresample') \in \evalset'}\\
        \textstyle{\bigwedge_{(\progexpr, \symbstate, \doresample) \in \evalset} }\neg \doresample\\
        \textstyle{\bigwedge_{(\progexpr, \symbstate, \doresample) \in \evalset}}\ (\progexpr \neq \fail) \\
        \dset = \set{\distribution(\val,\symbstate) \;|\; (\val, \symbstate, \doresample) \in \evalset}
    }
    {\cpclsstep{\pset}{\dset}}

    \inferrule%
    { 
      \evalset = \set{(\progexpr',\symbstate',\doresample') \;|\;  (\progexpr,\symbstate) \in \pset, (\cpclstep{\progexpr, \symbstate}{\evalset'}), (\progexpr',\symbstate',\doresample') \in \evalset'}\\
        \textstyle{\bigvee_{(\progexpr, \symbstate, \doresample) \in \evalset} } \doresample\\
        \textstyle{\bigwedge_{(\progexpr, \symbstate, \doresample) \in \evalset}}\ (\progexpr \neq \fail) \\
        \pset' = \forgetr(\evalset)\\
        \cpclsstep{\pset'}{\dset}
    }
    {\cpclsstep{\pset}{\dset}}

    \inferrule%
    { 
      \evalset = \set{(\progexpr',\symbstate',\doresample') \;|\;  (\progexpr,\symbstate) \in \pset, (\cpclstep{\progexpr, \symbstate}{\evalset'}), (\progexpr',\symbstate',\doresample') \in \evalset'}\\
        \textstyle{\bigvee_{(\progexpr, \symbstate, \doresample) \in \evalset}} (\progexpr = \fail)
    }
    {\cpclsstep{\pset}{\set{\fail}}}
    \end{mathpar}
\end{small}

The collecting semantics of the model evaluation rules constructs a particle from the given program with an empty symbolic state and evaluates with the particle set evaluation rules.
\begin{small}
\begin{mathpar}
  \inferrule%
  {
  \cpclsstep{\set{\progexpr, \emptyset{}}}{\dset}\\
  }
  {\cpclpstep{\progexpr}{\dset}}
  \end{mathpar}
\end{small}

\subsection{Abstraction and Concretization of Delayed Sampling Node Type}
The abstraction and concretization of node type are recursively defined. We show here the abstraction and concretization of the Marginalized nodes. The cases for other nodes are symmetric.
\begin{align*}
\abstr(\{\marginalized{\randomvar}{\distr{}}{\varset}\}) &= \amarginalized{\rvnames(\randomvar)}{\abstr(\{\distr{}\})}{\{\rvnames(\randomvar_s) \;|\; \randomvar_s \in \varset\}}\\
\concret(\amarginalized{\arandomvar}{\adistr{}}{\avarset{}}, \rvname{}) &= 
\set{ \marginalized{\randomvar}{\distr{}}{\varset} \;|\; 
\substack{
\rvname{}(\randomvar) = \arandomvar, \distr{} \in \concret(\adistr{}, \rvname{}), \\
S \subseteq \{\randomvar_s \;|\;  \rvname{}(\randomvar_s) \in \avarset{} \} }
}
\end{align*}

\subsection{Proofs}

First, we define a weak equivalence comparison between two symbolic states that only compares random variables that are reachable from the given expression. 
\begin{align*}
  \weakeq{\symbstate_{1}}{\symbstate_{2}}{\progexpr} \iff \textstyle{\forall_{\randomvar \in \reachable(\progexpr,\; \symbstate_{1},\; \symbstate_{2}) }}\; \symbstate_{1}(\randomvar) = \symbstate_{2}(\randomvar)
\end{align*}

With this weak equivalence relation, we can prove the weakening properties of concrete and abstract symbolic states. Informally, this property states that adding random variables that are not already in a symbolic state does not alter the evaluated expression nor the original entries in the resulting symbolic state.

\begin{lemma}[Weakening]
  \label{lem:weakening}
  If $(\pclstep{\progexpr}{\symbstate{}}{\progexpr'}{\symbstate{}'}{\weight}{\doresample})$ and $\weakeq{\symbstate{}}{\symbstate{}_1}{\progexpr}$, then $(\pclstep{\progexpr}{\symbstate{}_1}{\progexpr'}{\symbstate{}_1'}{\weight}{\doresample})$ and $\weakeq{\symbstate{}'}{\symbstate{}_1'}{\progexpr'}$.
\end{lemma}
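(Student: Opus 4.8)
\textbf{Proof proposal for the Weakening Lemma (Lemma~\ref{lem:weakening}).}

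The plan is to proceed by structural induction on the derivation of $(\pclstep{\progexpr}{\symbstate{}}{\progexpr'}{\symbstate{}'}{\weight}{\doresample})$, using the operational semantics rules from \Cref{fig:op:sem-full}. The key observation driving the proof is that the evaluation of an expression $\progexpr$ only inspects and modifies entries of the symbolic state corresponding to random variables that are syntactically reachable from $\progexpr$ (via the interface operations $\symassume$, $\symvalue$, $\symobserve$, and the transitive closure of the parent/child dependencies they traverse). So if $\symbstate$ and $\symbstate_1$ agree on all variables reachable from $\progexpr$, the same derivation steps apply with $\symbstate_1$ in place of $\symbstate$, and the resulting states $\symbstate'$ and $\symbstate_1'$ agree on everything reachable from $\progexpr'$.

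First I would handle the base cases: values $\val$ (the state is unchanged, and $\reachable(\val, \symbstate', \symbstate_1')$ is contained in $\reachable(\val, \symbstate, \symbstate_1)$ since no new variables appear), $\mfResample$ (state unchanged), and $\mkw{fold}$ on $\mfNil$ (state unchanged). For the inductive cases — $\mkw{let}$, function application, both $\mkw{if}$ cases, $\mkw{fold}$ on $\mfCons$, the three $\mkw{let}\; x \is \cdots$ rules, and $\mkw{observe}$ — I would: (1) argue that the variables reachable from the relevant subexpression are a subset of those reachable from $\progexpr$, so the inductive hypothesis applies to the first premise; (2) show that the intermediate state produced is weakly equivalent for the continuation expression, allowing the inductive hypothesis to apply to the second premise; and (3) conclude that $\symbstate'$ and $\symbstate_1'$ agree on variables reachable from $\progexpr'$. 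The delicate cases are those invoking the hybrid inference interface. Here I would rely on \Cref{assumption:syminterface} (abstract interface soundness) together with an analogous — and I would state it explicitly as a needed auxiliary fact — property of the \emph{concrete} interface operations: that $\symassume$, $\symvalue$, and $\symobserve$ are themselves stable under weakly-equivalent inputs, i.e. they read and write only the reachable portion of the symbolic state, and $\symassume$ introduces a fresh variable name not present in either state. This "locality of the interface" fact, which follows from the definitions of $\hoist$/$\graft$ and their helpers (they topologically traverse only ancestors of the given variable), is what makes the reachability argument go through.

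The main obstacle I anticipate is precisely pinning down this locality property for the interface operations, because $\hoist$ (and $\graft$ in DS) performs $\swap$ operations that \emph{rewrite} the distributions of a variable's ancestors — so "reachable from $\progexpr$" must be interpreted as the full transitive closure through the dependency edges recorded in the symbolic state, and I must check that $\swap$ never touches a variable outside this closure. A subtlety is that $\hoist$ may call $\symvalue$ on a parent when a conjugacy check fails, which recursively realizes variables; I would argue by a nested induction (on the number of random variables in $\dom(\symbstate)$, or on the structure of the dependency forest) that even this fallback stays within the reachable set. A secondary wrinkle is freshness: $\symassume$ picks $\randomvar_\mit{new}$; I would assume (as is standard) that the fresh-name generator is deterministic given the state's domain, or alternatively phrase the lemma modulo $\alpha$-renaming of fresh random variables, so that the two executions pick the same new name. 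Once these interface-locality facts are established, the structural induction itself is routine bookkeeping about which subexpressions inherit the weak-equivalence hypothesis.
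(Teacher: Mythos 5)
Your overall strategy matches the paper's: structural induction on the derivation of $\downarrow$, powered by a ``locality'' property of the interface operations ($\symassume$, $\symvalue$, $\symobserve$ only read and write variables reachable from their arguments). The paper simply asserts these locality facts case by case (e.g.\ ``since $\symassume$ cannot affect any unreachable variables from $\distop(\val)$\ldots''), whereas you correctly flag that they need to be established from the definitions of $\hoist$/$\graft$ and that freshness of $\randomvar_{\mit{new}}$ needs care; that part of your plan is, if anything, more careful than what the paper writes down.

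There is, however, one concrete gap: the lemma as stated is too weak to serve as its own induction hypothesis, and your step (2) is exactly where this bites. In the sequencing cases (the $\mkw{let}$ rule, the pure-$\mkw{if}$ rule that evaluates both branches, $\mkw{fold}$ on a cons cell), after applying the IH to the first premise you only obtain $\weakeq{\symbstate_2}{\symbstate_2'}{\val_1}$ --- weak equivalence relative to the \emph{value} produced by $\progexpr_1$. To discharge the second premise you need $\weakeq{\symbstate_2}{\symbstate_2'}{\progexpr_2[\programvar \leftarrow \val_1]}$, which also covers variables reachable from $\progexpr_2$'s other free occurrences; those variables may have been \emph{modified} by the evaluation of $\progexpr_1$ (e.g.\ a shared ancestor rewritten by $\swap$ inside $\hoist$), so ``they agreed initially'' does not suffice, and the unstrengthened IH says nothing about them. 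The paper resolves this by first proving a strictly more general statement: in addition to the conclusion of the lemma, for \emph{every} expression $\progexpr''$ with $\weakeq{\symbstate}{\symbstate_1}{\progexpr''}$ one also has $\weakeq{\symbstate'}{\symbstate_1'}{\progexpr''}$, i.e.\ the two evaluations preserve agreement on everything that agreed before, not just on what the result mentions. With that strengthening your step (2) becomes immediate; without it the induction does not close. You should state and prove the generalized property and derive the lemma as a corollary.
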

\begin{proof}
  We first prove a more general property: If $(\pclstep{\progexpr}{\symbstate{}}{\progexpr'}{\symbstate{}'}{\weight}{\doresample})$ and $\weakeq{\symbstate{}}{\symbstate{}_1}{\progexpr}$, then $(\pclstep{\progexpr}{\symbstate{}_1}{\progexpr'}{\symbstate{}_1'}{\weight}{\doresample})$ and $\weakeq{\symbstate{}'}{\symbstate{}_1'}{\progexpr'}$, and for all $\progexpr''$ such that $\weakeq{\symbstate{}}{\symbstate{}_1}{\progexpr''}$, we have $\weakeq{\symbstate{}'}{\symbstate{}_1'}{\progexpr''}$.

  By structural induction on derivations of $\downarrow$:
  \begin{itemize}
      \item \pclstep{\val}{\symbstate}{\val}{\symbstate}{1}{\false}: By definition of weak equivalence. 
      For every $\randomvar \in \symbstate$, if $\randomvar \in \reachable(\val, \symbstate, \symbstate_1)$, then $\val = \randomvar$ and $\symbstate(\randomvar) = \symbstate_1(\randomvar)$ so $(\pclstep{\val}{\symbstate_1}{\val}{\symbstate_1}{1}{\false})$. Otherwise, $\randomvar$ is unreachable from $\val$, so it remains unreachable in $\symbstate_1$ and $(\pclstep{\val}{\symbstate_1}{\val}{\symbstate_1}{1}{\false})$. 
      \item \pclstep{\mfResample}{\symbstate}{\mfUnit}{\symbstate}{1}{\true}: By definition of weak equivalence.
      \item \pclstep{\mfApp{f}{\val}}{\symbstate}{\val'}{\symbstate'}{\weight}{\doresample}: 
      Because $\weakeq{\symbstate}{\symbstate_1}{\mfApp{f}{\val}}$, so $\weakeq{\symbstate}{\symbstate_1}{\progexpr[\programvar \leftarrow \val]}$.
      By inductive hypothesis, $(\pclstep{\progexpr[\programvar \leftarrow \val]}{\symbstate_1}{\progexpr'}{\symbstate_1'}{\weight}{\doresample})$ and $\weakeq{\symbstate'}{\symbstate_1'}{\progexpr'}$ and for all $\progexpr''$ such that $\weakeq{\symbstate{}}{\symbstate{}_1}{\progexpr''}$, we have $\weakeq{\symbstate'}{\symbstate{}_1'}{\progexpr''}$.
      Then, by definition $(\pclstep{\mfApp{f}{\val}}{\symbstate_1}{\val'}{\symbstate_1'}{\weight}{\doresample})$ and for all $\progexpr''$ such that $\weakeq{\symbstate{}}{\symbstate{}_1}{\progexpr''}$, we have $\weakeq{\symbstate'}{\symbstate{}_1'}{\progexpr''}$.
      \item \pclstep{\mfIf{\val}{\progexpr_1}{\progexpr_2}}{\symbstate}{\ite{\val}{\val_1}{\val_2}}{\symbstate'}{1}{\false}: 
        Because $\weakeq{\symbstate}{\symbstate_1}{\progexpr}$, so $\weakeq{\symbstate}{\symbstate_1}{\val}$ and $\weakeq{\symbstate}{\symbstate_1}{\progexpr_1}$ and $\weakeq{\symbstate}{\symbstate_1}{\progexpr_2}$.
        By definition, we have $(\pclstep{\progexpr_1}{\symbstate}{\val_1}{\symbstate_2}{1}{\false})$ and $(\pclstep{\progexpr_2}{\symbstate_2}{\val_2}{\symbstate'}{1}{\false})$.
        By inductive hypothesis, $(\pclstep{\progexpr_1}{\symbstate_1}{\val_1'}{\symbstate_2'}{1}{\false})$ and $\weakeq{\symbstate_2}{\symbstate_2'}{\val_1}$ and for all $\progexpr''$ such that $\weakeq{\symbstate{}}{\symbstate{}_1}{\progexpr''}$, we have $\weakeq{\symbstate{}_2}{\symbstate{}_2'}{\progexpr''}$. Then, $\weakeq{\symbstate_2}{\symbstate_2'}{\progexpr_2}$ and $\weakeq{\symbstate_2}{\symbstate_2}{\val}$.
        By inductive hypothesis, we then have $(\pclstep{\progexpr_2}{\symbstate_2'}{\val_2}{\symbstate_1'}{1}{\false})$ and $\weakeq{\symbstate'}{\symbstate_1'}{\val_2}$ and for all $\progexpr''$ such that $\weakeq{\symbstate_2}{\symbstate_2'}{\progexpr''}$, we have $\weakeq{\symbstate'}{\symbstate{}_1'}{\progexpr''}$.
        Then, $\weakeq{\symbstate'}{\symbstate_1'}{\val_1}$ and $\weakeq{\symbstate'}{\symbstate_1'}{\val}$.
        Then, $(\pclstep{\mfIf{\val}{\progexpr_1}{\progexpr_2}}{\symbstate_1}{\ite{\val}{\val_1}{\val_2}}{\symbstate_1'}{1}{\false})$ and $\weakeq{\symbstate'}{\symbstate_1'}{\progexpr}$. Also, for all $\progexpr''$ such that $\weakeq{\symbstate{}}{\symbstate{}_1}{\progexpr''}$, then $\weakeq{\symbstate{}_2}{\symbstate{}_2'}{\progexpr''}$ and then $\weakeq{\symbstate{}'}{\symbstate{}_1'}{\progexpr''}$.
      \item \pclstep{\mfIf{\val}{\progexpr_1}{\progexpr_2}}{\symbstate}{\val_1}{\symbstate'}{\weight}{\doresample}: 
      Because $\weakeq{\symbstate}{\symbstate_1}{\progexpr}$, so $\weakeq{\symbstate}{\symbstate_1}{\val}$ and $\weakeq{\symbstate}{\symbstate_1}{\progexpr_1}$. Since $\symvalue$ cannot affect any unreachable variables from $\val$, there exists $\symbstate_v'$ such that for all $\progexpr''$, if $\weakeq{\symbstate}{\symbstate_1}{\progexpr''}$, then $\weakeq{\symbstate_v}{\symbstate_v'}{\progexpr''}$.
      By inductive hypothesis, we have $(\pclstep{\progexpr_1}{\symbstate_v'}{\progexpr_1'}{\symbstate_1'}{\weight}{\doresample})$ and $\weakeq{\symbstate'}{\symbstate_1'}{\progexpr_1'}$ and for all $\progexpr''$ such that $\weakeq{\symbstate_v}{\symbstate{}_v'}{\progexpr''}$, we have $\weakeq{\symbstate'}{\symbstate_1'}{\progexpr''}$. It follows that for all $\progexpr''$, if $\weakeq{\symbstate}{\symbstate_1}{\progexpr''}$, then $\weakeq{\symbstate'}{\symbstate_1'}{\progexpr''}$.
      \item \pclstep{\mfIf{\val}{\progexpr_1}{\progexpr_2}}{\symbstate}{\val_2}{\symbstate'}{\weight_2}{\doresample}: Because $\weakeq{\symbstate}{\symbstate_1}{\progexpr}$, so $\weakeq{\symbstate}{\symbstate_1}{\val}$ and $\weakeq{\symbstate}{\symbstate_1}{\progexpr_2}$. Since $\symvalue$ cannot affect any unreachable variables from $\val$, there exists $\symbstate_v'$ such that for all $\progexpr''$, if $\weakeq{\symbstate}{\symbstate_1}{\progexpr''}$, then $\weakeq{\symbstate_v}{\symbstate_v'}{\progexpr''}$.
      By inductive hypothesis, we have $(\pclstep{\progexpr_2}{\symbstate_v'}{\progexpr_2'}{\symbstate_1'}{\weight}{\doresample})$ and $\weakeq{\symbstate'}{\symbstate_1'}{\progexpr_2'}$ and for all $\progexpr''$ such that $\weakeq{\symbstate_v}{\symbstate{}_v'}{\progexpr''}$, we have $\weakeq{\symbstate'}{\symbstate_1'}{\progexpr''}$. It follows that for all $\progexpr''$, if $\weakeq{\symbstate}{\symbstate_1}{\progexpr''}$, then $\weakeq{\symbstate'}{\symbstate_1'}{\progexpr''}$.
      \item \pclstep{\mfLetIn{\programvar}{\progexpr_1}{\progexpr_2}}{\symbstate}{\mfLetIn{\programvar}{\progexpr_1'}{\progexpr_2}}{\symbstate'}{\weight}{\true}:
      Because $\weakeq{\symbstate}{\symbstate_1}{\progexpr}$, so $\weakeq{\symbstate}{\symbstate_1}{\progexpr_1}$. 
      By inductive hypothesis, we have $(\pclstep{\progexpr_1}{\symbstate_1}{\progexpr_1'}{\symbstate_1'}{\weight}{\true})$ and $\weakeq{\symbstate'}{\symbstate_1'}{\progexpr_1'}$ and for all $\progexpr''$ such that $\weakeq{\symbstate}{\symbstate{}_1}{\progexpr''}$, we have $\weakeq{\symbstate'}{\symbstate_1'}{\progexpr''}$. 
      \item \pclstep{\mfLetIn{\programvar}{\progexpr_1}{\progexpr_2}}{\symbstate}{\progexpr_2'}{\symbstate_2}{\weight_1*\weight_2}{\doresample}: 
      Because $\weakeq{\symbstate}{\symbstate_1}{\progexpr}$, so $\weakeq{\symbstate}{\symbstate_1}{\progexpr'}$ and $\weakeq{\symbstate}{\symbstate_1}{\progexpr_2}$. 
      By definition, we have $(\pclstep{\progexpr_1}{\symbstate}{\val_1}{\symbstate_2}{\weight_1}{\false})$ and $(\pclstep{\progexpr_2[\programvar \leftarrow v_1]}{\symbstate_2}{\progexpr_2'}{\symbstate'}{\weight_2}{\doresample})$.
      By inductive hypothesis, we have $(\pclstep{\progexpr_1}{\symbstate_1}{\val_1}{\symbstate_2'}{\weight_1}{\false})$ and $\weakeq{\symbstate_2}{\symbstate_2'}{\val_1}$ and for all $\progexpr''$ such that $\weakeq{\symbstate}{\symbstate_1}{\progexpr''}$, we have $\weakeq{\symbstate_2}{\symbstate_2'}{\progexpr''}$.
      By inductive hypothesis, we have $(\pclstep{\progexpr_2[\programvar \leftarrow v_1]}{\symbstate_2'}{\progexpr_2'}{\symbstate_1'}{\weight_2}{\doresample})$ and $\weakeq{\symbstate'}{\symbstate_1'}{\progexpr_2'}$ and for all $\progexpr''$ such that $\weakeq{\symbstate_2}{\symbstate_2'}{\progexpr''}$, we have $\weakeq{\symbstate'}{\symbstate_1'}{\progexpr''}$. It follows that for all $\progexpr''$, if $\weakeq{\symbstate}{\symbstate_1}{\progexpr''}$, then $\weakeq{\symbstate'}{\symbstate_1'}{\progexpr''}$.
      \item \pclstep{\mfFold{f}{\mfNil}{\val}}{\symbstate}{\val}{\symbstate}{1}{\false}: By definition of weak equivalence.
      \item \pclstep{\mfFold{f}{\mfCons{l_{\mit{hd}}}{l_{\mit{tl}}}}{\val}}{\symbstate}{\val'}{\symbstate'}{\weight}{\doresample}: 
      By inductive hypothesis, we have the fact that $(\pclstep{\mfLetIn{\programvar}{\mfApp{f}{\mfPair{l_{\mit{hd}}}{\val}}}{\mfFold{f}{l_{tl}}{\programvar}}}{\symbstate_1}{\progexpr'}{\symbstate_1'}{\weight}{\doresample})$ and $\weakeq{\symbstate'}{\symbstate_1'}{\progexpr'}$ and for all $\progexpr''$ such that $\weakeq{\symbstate}{\symbstate{}_1}{\progexpr''}$, we have $\weakeq{\symbstate'}{\symbstate_1'}{\progexpr''}$. 
      \item \pclstep{\mfLetRv{\annotation}{\programvar}{\distop(\val)}{\progexpr}}{\symbstate}{\val}{\symbstate'}{\weight}{\doresample}: 
      Because $\weakeq{\symbstate}{\symbstate_1}{\progexpr}$, so $\weakeq{\symbstate}{\symbstate_1}{\progexpr[\programvar \leftarrow \randomvar]}$. 
      Since $\symassume$ cannot affect any unreachable variables from $\distop(\val)$, there exists $\symbstate_\randomvar'$ such that for all $\progexpr''$, if $\weakeq{\symbstate}{\symbstate_1}{\progexpr''}$, then $\weakeq{\symbstate_\randomvar}{\symbstate_\randomvar'}{\progexpr''}$.
      By inductive hypothesis, we have $(\pclstep{\progexpr[\programvar \leftarrow \randomvar]}{\symbstate_\randomvar}{\progexpr'}{\symbstate'}{\weight}{\doresample})$ and $\weakeq{\symbstate'}{\symbstate_1'}{\progexpr'}$ and for all $\progexpr''$ such that $\weakeq{\symbstate_\randomvar}{\symbstate_\randomvar'}{\progexpr''}$, we have $\weakeq{\symbstate'}{\symbstate_1'}{\progexpr''}$. It follows that for all $\progexpr''$, if $\weakeq{\symbstate}{\symbstate_1}{\progexpr''}$, then $\weakeq{\symbstate'}{\symbstate_1'}{\progexpr''}$.
      \item \pclstep{\mfLetRv{\mfSample}{\programvar}{\distop(\val)}{\progexpr}}{\symbstate}{\val}{\symbstate'}{\weight}{\doresample}:
      Because $\weakeq{\symbstate}{\symbstate_1}{\progexpr}$, so $\weakeq{\symbstate}{\symbstate_1}{\progexpr[\programvar \leftarrow \randomvar]}$. 
      Since $\symassume$ and $\symvalue$ cannot affect any unreachable variables from $\distop(\val)$, there exists $\symbstate_\randomvar''$ such that for all $\progexpr''$, if $\weakeq{\symbstate}{\symbstate_1}{\progexpr''}$, then $\weakeq{\symbstate_\randomvar'}{\symbstate_\randomvar''}{\progexpr''}$.
      By inductive hypothesis, we have $(\pclstep{\progexpr[\programvar \leftarrow \randomvar]}{\symbstate_\randomvar}{\progexpr'}{\symbstate'}{\weight}{\doresample})$ and $\weakeq{\symbstate'}{\symbstate_1'}{\progexpr'}$ and for all $\progexpr''$ such that $\weakeq{\symbstate_\randomvar}{\symbstate_\randomvar'}{\progexpr''}$, we have $\weakeq{\symbstate'}{\symbstate_1'}{\progexpr''}$. It follows that for all $\progexpr''$, if $\weakeq{\symbstate}{\symbstate_1}{\progexpr''}$, then $\weakeq{\symbstate'}{\symbstate_1'}{\progexpr''}$.
      \item \pclstep{\mfObserve{\distop(\val_1)}{\val_2}}{\symbstate}{\mfUnit}{\symbstate'}{\weight}{\false}: By definition of weak equivalence and that $\symassume$, $\symvalue^*$, and $\symobserve$ cannot affect any unreachable random variables from $\progexpr$.
  \end{itemize}
\end{proof}

\begin{lemma}[Abstract Weakening]
  \label{lem:abs-weakening}
  If $(\apclstep{\progexpr}{\asymbstate{}}{\aval'}{\asymbstate{}'})$ 
  and $\weakeq{\asymbstate{}}{\asymbstate{1}}{\progexpr}$,
  then $(\apclstep{\progexpr}{\asymbstate{1}}{\aval'}{\asymbstate{1}'})$ and $\weakeq{\asymbstate{}'}{\asymbstate{1}'}{\aval'}$.
\end{lemma}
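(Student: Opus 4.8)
The plan is to mirror the proof of Lemma~\ref{lem:weakening} (Weakening): the abstract interpretation rules of Figure~\ref{fig:abstract-interp-full} are structurally analogous to the concrete particle-evaluation rules, and the abstract interface operations $\asymassume$, $\asymvalue$, $\asymobserve$ (with their helpers $\ahoist$, $\aswap$, $\agraft$, $\setunk$, $\aeval$) mirror their concrete counterparts. As in the concrete case, I would first strengthen the statement to a more robust induction hypothesis: if $(\apclstep{\progexpr}{\asymbstate{}}{\aval'}{\asymbstate{}'})$ and $\weakeq{\asymbstate{}}{\asymbstate{1}}{\progexpr}$, then $(\apclstep{\progexpr}{\asymbstate{1}}{\aval'}{\asymbstate{1}'})$, $\weakeq{\asymbstate{}'}{\asymbstate{1}'}{\aval'}$, and moreover for every $\progexpr''$ with $\weakeq{\asymbstate{}}{\asymbstate{1}}{\progexpr''}$ we have $\weakeq{\asymbstate{}'}{\asymbstate{1}'}{\progexpr''}$. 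This extra clause is exactly what makes the sequencing cases (let-in, impure if, the recursive unfolding of fold) compose, just as in the concrete proof.

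Then proceed by structural induction on the derivation of $\hat{\downarrow}$. The base cases (abstract values, $\mfResample$, $\mfFold$ on $\mfNil$) follow immediately from the definition of weak equivalence, and the first-order compositional cases (function application, the pure conditional that builds a symbolic conditional expression, let-in, $\mfLetRv$ with $\varepsilon$ or $\amfSymbolic$, $\mfLetRv$ with $\mfSample$, and $\mfObserve$) are handled exactly as in Lemma~\ref{lem:weakening}, relying on the auxiliary fact that each abstract interface operation cannot alter the entries of random variables unreachable from its arguments --- the abstract analog of the remark about $\symassume$, $\symvalue^*$, $\symobserve$ used in the concrete proof. The one extra subtlety is that $\asymvalue$ may return $\afail$: but $\afail$ propagates uniformly through every rule, so if the derivation under $\asymbstate{}$ yields $\afail$ then so does the one under $\asymbstate{1}$ and the conclusion holds trivially; otherwise the annotation of the variable being sampled is not $\amfSymbolic$ and the argument proceeds as before, using weak equivalence to line up $\ahoist$/$\agraft$ and $\aintervene$/$\arealize$ on the reachable fragment.

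The genuinely new work lies in the unknown-condition $\mfIf$ rule and the two $\mfFold$-fixpoint rules, all of which invoke $\narrowjoin$. For these I would isolate a compatibility lemma: if $\weakeq{\asymbstate{}_1}{\asymbstate{}_1'}{(\progexpr_1,\progexpr_2)}$ and $\weakeq{\asymbstate{}_2}{\asymbstate{}_2'}{(\progexpr_1,\progexpr_2)}$, then $\narrowjoin(\aval_1,\aval_2,\asymbstate{}_1,\asymbstate{}_2)$ and $\narrowjoin(\aval_1,\aval_2,\asymbstate{}_1',\asymbstate{}_2')$ produce the same abstract expression and weakly-equivalent states with respect to that expression. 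This reduces to two sub-facts: that $\rename$ (capture-avoiding, acting only by a surjective abstract-name remapping) respects weak equivalence, and that the basic join $\sqcup$ on abstract expressions and abstract states --- being computed pointwise on the reachable variables --- respects weak equivalence; the widening steps (collapsing $\eunk{}{\avarset{}}$ to $\etop$ when $|\avarset{}| \ge N$, and depth-bounding expressions) are deterministic functions of the reachable expression structure and hence commute with the argument. With this lemma the fold cases become routine: the fixpoint-termination side condition ($\aval = \aval_j$ together with $\weakeq{\asymbstate{}}{\asymbstate{}_j}{(\alistl,\aval)}$) is itself a weak-equivalence statement and is preserved, and the recursive case applies the induction hypothesis first to $f$ on $(\alistl,\aval)$ and then to the recursive $\mfFold$ call on the joined inputs.

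I expect the $\narrowjoin$/$\rename$ compatibility lemma to be the main obstacle, because $\rename$ introduces fresh abstract variable names and one must track carefully how the set $\reachable(\dots)$ of reachable variables transforms across the renaming (and how the surjectivity requirement on name maps interacts with the two sides of the weak equivalence). Everything outside of that --- the base cases, the compositional cases, and the $\afail$ bookkeeping --- is a mechanical transcription of the concrete Weakening proof and should present no difficulty.
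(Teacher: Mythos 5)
Your proposal matches the paper's proof: the paper likewise first strengthens the statement to the induction hypothesis that weak equivalence is preserved for \emph{every} $\progexpr''$ with $\weakeq{\asymbstate{}}{\asymbstate{1}}{\progexpr''}$, then proceeds by structural induction on $\hat{\downarrow}$, discharging the compositional cases via the observation that the abstract interface operations cannot affect variables unreachable from their arguments. The $\narrowjoin$-compatibility property you isolate as a separate lemma is exactly what the paper invokes inline (``by definition of weak equivalence and $\narrowjoin$'') in the unknown-conditional and fold-fixpoint cases, so the two arguments are the same in substance.
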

\begin{proof}
  We first prove a more general property: If $(\apclstep{\progexpr}{\asymbstate{}}{\progexpr'}{\asymbstate{}'})$ and $\weakeq{\asymbstate{}}{\asymbstate{1}}{\progexpr}$, then $(\apclstep{\progexpr}{\asymbstate{1}}{\progexpr'}{\asymbstate{1}'})$ and $\weakeq{\asymbstate{}'}{\asymbstate{1}'}{\progexpr'}$, and for all $\progexpr''$ such that $\weakeq{\asymbstate{}}{\asymbstate{1}}{\progexpr''}$, we have $\weakeq{\asymbstate{}'}{\asymbstate{1}'}{\progexpr''}$.
  
  By structural induction on derivations of $\;\hat{\downarrow}$:

  \begin{itemize}
    \item $\apclstep{\aval}{\asymbstate{}}{\aval}{\asymbstate{}}$: By definition of weak equivalence. 
    For every $\arandomvar \in \asymbstate{}$, if $\arandomvar \in \reachable(\aval, \asymbstate{}, \asymbstate{1})$, then $\aval = \arandomvar$ and $\asymbstate{}(\arandomvar) = \asymbstate{1}(\arandomvar)$ so $(\apclstep{\aval}{\asymbstate{1}}{\aval}{\asymbstate{1}})$. Otherwise, $\arandomvar$ is unreachable from $\aval$, so it remains unreachable in $\asymbstate{1}$ and $(\apclstep{\aval}{\asymbstate{1}}{\aval}{\asymbstate{1}})$. 
    \item $\apclstep{\mfResample}{\asymbstate{}}{\amfUnit}{\asymbstate{}}$: By definition of weak equivalence.
    \item $\apclstep{\mfApp{f}{\aval}}{\asymbstate{}}{\aval'}{\asymbstate{}'}$: By inductive hypothesis, 
    $(\apclstep{\progexpr[\programvar \leftarrow \aval]}{\asymbstate{1}}{\progexpr'}{\asymbstate{1}'})$ and $\weakeq{\asymbstate{}'}{\asymbstate{1}'}{\progexpr'}$ and for all $\progexpr''$ such that $\weakeq{\asymbstate{}}{\asymbstate{1}}{\progexpr''}$, we have $\weakeq{\asymbstate{}'}{\asymbstate{1}'}{\progexpr''}$.
    \item $\apclstep{\mfIf{\aval}{\progexpr_1}{\progexpr_2}}{\asymbstate{}}{\aite{\aval}{\aval_1}{\aval_2}}{\asymbstate{}'}$: 
      Because $\weakeq{\asymbstate{}}{\asymbstate{1}}{\progexpr}$, so $\weakeq{\asymbstate{}}{\asymbstate{1}}{\aval}$ and $\weakeq{\asymbstate{}}{\asymbstate{1}}{\progexpr_1}$ and $\weakeq{\asymbstate{}}{\asymbstate{1}}{\progexpr_2}$.
      By definition, we have $(\apclstep{\progexpr_1}{\asymbstate{}}{\aval_1}{\asymbstate{2}})$ and $(\apclstep{\progexpr_2}{\asymbstate{2}}{\aval_2}{\asymbstate{}'})$.
      By inductive hypothesis, $(\apclstep{\progexpr_1}{\asymbstate{1}}{\aval_1'}{\asymbstate{2}'})$ and $\weakeq{\asymbstate{2}}{\asymbstate{2}'}{\aval_1}$ and for all $\progexpr''$ such that $\weakeq{\asymbstate{}}{\asymbstate{}}{\progexpr''}$, we have $\weakeq{\asymbstate{2}}{\asymbstate{2}'}{\progexpr''}$. Then, $\weakeq{\asymbstate{2}}{\asymbstate{2}'}{\progexpr_2}$ and $\weakeq{\asymbstate{2}}{\asymbstate{2}}{\aval}$.
      By inductive hypothesis, we then have $(\apclstep{\progexpr_2}{\asymbstate{2}'}{\aval_2}{\asymbstate{1}'})$ and $\weakeq{\asymbstate{}'}{\asymbstate{1}'}{\aval_2}$ and for all $\progexpr''$ such that $\weakeq{\asymbstate{2}}{\asymbstate{2}'}{\progexpr''}$, we have $\weakeq{\asymbstate{}'}{\asymbstate{1}'}{\progexpr''}$.
      Then, $\weakeq{\asymbstate{}'}{\asymbstate{1}'}{\aval_1}$ and $\weakeq{\asymbstate{}'}{\asymbstate{1}'}{\aval}$.
      Then, $(\apclstep{\mfIf{\val}{\progexpr_1}{\progexpr_2}}{\asymbstate{1}}{\aite{\aval}{\aval_1}{\aval_2}}{\asymbstate{1}'})$ and $\weakeq{\asymbstate{}'}{\asymbstate{1}'}{\progexpr}$. Also, for all $\progexpr''$ such that $\weakeq{\asymbstate{}}{\asymbstate{1}}{\progexpr''}$, then $\weakeq{\asymbstate{2}}{\asymbstate{2}'}{\progexpr''}$ and then $\weakeq{\asymbstate{}'}{\asymbstate{1}'}{\progexpr''}$.
    \item $\apclstep{\mfIf{\aval}{\progexpr_1}{\progexpr_2}}{\asymbstate{}}{\aval_1}{\asymbstate{}'}$: Since $\asymvalue$ cannot affect any unreachable variables from $\aval$, there exists $\asymbstate{\aval}'$ such that for all $\progexpr''$, if $\weakeq{\asymbstate{}}{\asymbstate{1}}{\progexpr''}$, then $\weakeq{\asymbstate{\aval}}{\asymbstate{\aval}'}{\progexpr''}$. Then, by inductive hypothesis.
    \item $\apclstep{\mfIf{\aval}{\progexpr_1}{\progexpr_2}}{\asymbstate{}}{\aval_2}{\asymbstate{}'}$: Since $\asymvalue$ cannot affect any unreachable variables from $\aval$, there exists $\asymbstate{\aval}'$ such that for all $\progexpr''$, if $\weakeq{\asymbstate{}}{\asymbstate{1}}{\progexpr''}$, then $\weakeq{\asymbstate{\aval}}{\asymbstate{\aval}'}{\progexpr''}$. Then, by inductive hypothesis.
    \item $\apclstep{\mfIf{\aval}{\progexpr_1}{\progexpr_2}}{\asymbstate{}}{\aval''}{\asymbstate{}'}$ : Since $\asymvalue$ cannot affect any unreachable variables from $\aval$, there exists $\asymbstate{\aval}'$ such that for all $\progexpr''$, if $\weakeq{\asymbstate{}}{\asymbstate{1}}{\progexpr''}$, then $\weakeq{\asymbstate{\aval}}{\asymbstate{\aval}'}{\progexpr''}$. Then, apply inductive hypothesis and by definition of weak equivalence and $\narrowjoin$.
    \item $\apclstep{\mfLetIn{\programvar}{\progexpr_1}{\progexpr_2}}{\asymbstate{}}{\aval_2}{\asymbstate{2}}$: By inductive hypothesis. 
    \item $\apclstep{\mfFold{f}{\mfNil}{\aval}}{\asymbstate{}}{\aval}{\asymbstate{}}$: By definition of weak equivalence.
    \item $\apclstep{\mfFold{f}{\mfCons{\alisthd}{\alisthd}}{\aval}}{\asymbstate{}}{\aval'}{\asymbstate{}'}$: By inductive hypothesis.
    \item $\apclstep{\mfFold{f}{\alistl}{\aval}}{\asymbstate{}}{\aval}{\asymbstate{}}$: 
    By inductive hypothesis, $\apclstep{\mfApp{f}{\mfPair{\alistl}{\aval}}}{\asymbstate{1}}{\aval_f}{\asymbstate{f}'}$ and $\weakeq{\asymbstate{f}'}{\asymbstate{f}'}{\aval_f}$. By definition of weak equivalence and $\narrowjoin$, for all $\progexpr''$, if $\weakeq{\asymbstate{}}{\asymbstate{1}}{\progexpr''}$ and $\weakeq{\asymbstate{f}}{\asymbstate{f}'}{\progexpr''}$, then $\weakeq{\asymbstate{j}}{\asymbstate{j}'}{\progexpr''}$. Then by definition of weak equivalence.
    \item $\apclstep{\mfFold{f}{\alistl}{\aval}}{\asymbstate{}}{\aval'}{\asymbstate{}'}$: 
    By inductive hypothesis, $\apclstep{\mfApp{f}{\mfPair{\alistl}{\aval}}}{\asymbstate{1}}{\aval_f}{\asymbstate{f}'}$ and $\weakeq{\asymbstate{f}'}{\asymbstate{f}'}{\aval_f}$. By definition of weak equivalence and $\narrowjoin$, for all $\progexpr''$, if $\weakeq{\asymbstate{}}{\asymbstate{1}}{\progexpr''}$ and $\weakeq{\asymbstate{f}}{\asymbstate{f}'}{\progexpr''}$, then $\weakeq{\asymbstate{j}}{\asymbstate{j}'}{\progexpr''}$. Then by inductive hypothesis.

    \item $\apclstep{\mfLetRv{\annotation}{\programvar}{\distop(\aval)}{\progexpr'}}{\asymbstate{}}{\aval'}{\asymbstate{}'}$: Since $\asymassume$ cannot affect any unreachable variables from $\distop(\aval)$, there exists $\asymbstate{\arandomvar'}$ such that for all $\progexpr''$, if $\weakeq{\asymbstate{}}{\asymbstate{1}}{\progexpr''}$, then $\weakeq{\asymbstate{\arandomvar}}{\asymbstate{\arandomvar'}}{\progexpr''}$.
    \item $\apclstep{\mfObserve{\distop(\aval_{1})}{\aval_2}}{\asymbstate{}}{\amfUnit}{\asymbstate{}'}$: By definition of weak equivalence and that $\symassume$, $\symvalue^*$, and $\symobserve$ cannot affect any unreachable random variables from $\progexpr$.
  \end{itemize}
\end{proof}

The next lemma states the soundness condition for the $\rename$ operation.
Namely, this specifies that $\textsc{rename}(\aexpr{}_1, \aexpr{}_2, \asymbstate{})$ produces a result that preserves the concretization of the pair $(\aexpr{}_2, \asymbstate{})$.

\begin{lemma}[Renaming Soundness]
For any abstract expressions $\aexpr{}_1$ and $\aexpr{}_2$, and any abstract symbolic state $\asymbstate{}$, if $(\aexpr{\prime}_2, \asymbstate{}') = \rename(\aexpr{}_1, \aexpr{}_2, \asymbstate{})$ then $\concret((\aexpr{\prime}_2, \asymbstate{}')) = \concret((\aexpr{}_2, \asymbstate{}))$.
\label{lem:rename}
\end{lemma}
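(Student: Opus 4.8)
The plan is to exploit the fact that $\rename$ acts purely on abstract random-variable names, which inhabit a namespace disjoint from the concrete one, together with the fact that $\concret$ is defined by a union over \emph{all} surjective naming functions $\rvname{} : \Randomvar \to \aRandomvar$. By construction, $\rename(\aexpr{}_1, \aexpr{}_2, \asymbstate{})$ applies a capture-avoiding substitution $\sigma$ on abstract random variables to both $\aexpr{}_2$ and $\asymbstate{}$, so that $\aexpr{\prime}_2 = \sigma(\aexpr{}_2)$ and $\asymbstate{}' = \sigma(\asymbstate{})$, where $\sigma(\asymbstate{})$ renames both the domain elements and the abstract variables occurring inside entries. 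Since the substitution is capture-avoiding it is injective on the names it touches; as $\aRandomvar$ is countably infinite, extend $\sigma$ to a permutation $\sigma : \aRandomvar \to \aRandomvar$. The claim then reduces to $\concret((\sigma(\aexpr{}_2), \sigma(\asymbstate{}))) = \concret((\aexpr{}_2, \asymbstate{}))$ for an arbitrary permutation $\sigma$.

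\textbf{Step 1: renaming invariance at a fixed naming function.} First I would prove, by simultaneous structural induction over all the syntactic categories of abstract values, that for every permutation $\sigma$ of $\aRandomvar$, every abstract value $\aval$, and every surjective $\rvname{}$, $\concret(\sigma(\aval), \rvname{}) = \concret(\aval, \sigma^{-1}\circ\rvname{})$. The base cases $\aconstant$, $\cunk$, $\etop$, $\amfSymbolic$ are immediate because $\sigma$ does not affect them. For a variable $\arandomvar$: $\concret(\sigma(\arandomvar),\rvname{}) = \{\randomvar \mid \rvname{}(\randomvar)=\sigma(\arandomvar)\}\cup\Val = \{\randomvar \mid (\sigma^{-1}\circ\rvname{})(\randomvar)=\arandomvar\}\cup\Val = \concret(\arandomvar,\sigma^{-1}\circ\rvname{})$. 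For $\eunk{}{\avarset{}}$ and $\dunk{}{\avarset{}}$ the key point is $\mathit{FV}(\expr{},\sigma^{-1}\circ\rvname{}) = \sigma^{-1}(\mathit{FV}(\expr{},\rvname{}))$, so $\mathit{FV}(\expr{},\rvname{}) \subseteq \sigma(\avarset{})$ iff $\mathit{FV}(\expr{},\sigma^{-1}\circ\rvname{}) \subseteq \avarset{}$. The compound cases (pairs, arithmetic operators, $\ite$, the distribution constructors) follow directly from the induction hypotheses on subterms. For abstract symbolic states, use $\dom(\sigma(\asymbstate{})) = \sigma(\dom(\asymbstate{}))$ and $\sigma(\asymbstate{})(\sigma(\arandomvar)) = \sigma(\asymbstate{}(\arandomvar))$; unfolding the concretization clause for states, the membership test on each entry is discharged by the induction hypothesis (an entry is again an abstract value), and the ``otherwise $\randomvar \notin \dom(\symbstate)$'' condition transports identically since $\rvname{}(\randomvar) \in \sigma(\dom(\asymbstate{}))$ iff $(\sigma^{-1}\circ\rvname{})(\randomvar) \in \dom(\asymbstate{})$.

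\textbf{Step 2: closing over naming functions.} Then I would conclude by taking unions. By definition $\concret((\sigma(\aexpr{}_2),\sigma(\asymbstate{}))) = \bigcup_{\rvname{}} \concret((\sigma(\aexpr{}_2),\sigma(\asymbstate{})), \rvname{})$ ranging over surjective $\rvname{}$, which by Step 1 equals $\bigcup_{\rvname{}} \concret((\aexpr{}_2,\asymbstate{}), \sigma^{-1}\circ\rvname{})$. Since $\sigma$ is a permutation, the map $\rvname{} \mapsto \sigma^{-1}\circ\rvname{}$ is a bijection on the set of surjective naming functions, with inverse $\rvname{} \mapsto \sigma\circ\rvname{}$ (surjectivity is preserved both ways as $\sigma^{\pm 1}$ is surjective). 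Reindexing the union therefore gives $\bigcup_{\rvname{}'} \concret((\aexpr{}_2,\asymbstate{}), \rvname{}') = \concret((\aexpr{}_2,\asymbstate{}))$, which is the desired equality.

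\textbf{Anticipated obstacle.} I expect the main difficulty to be the symbolic-state case of the invariance lemma in Step 1: the concretization of an abstract state quantifies over all concrete variables and abstract names occur both as map keys and inside the distribution entries, so one must be careful that the clause ``$\symbstate(\randomvar) \in \asymbstate{}(\rvname{}(\randomvar))$'' is read as membership in the \emph{concretization of that entry under the same} $\rvname{}$ — which is precisely why the induction has to be simultaneous over expressions, distributions, node data, and states, and why $\sigma$ must be applied consistently to keys and to occurrences. A secondary point, which I would record explicitly, is the precise property of $\rename$ being used: that it returns a capture-avoiding (hence injective) renaming, so that it extends to a permutation of $\aRandomvar$; if $\rename$ is only specified informally, this injectivity is the hypothesis that the argument relies on.
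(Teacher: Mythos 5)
Your proposal is correct and follows essentially the same route as the paper's proof: both arguments rest on the observation that $\concret$ is defined as a union over all surjective naming functions $\rvname{}$, so an injective renaming of abstract variables merely reindexes that union. The paper reindexes by patching $\rvname{}$ pointwise on concrete variables (sending $\randomvar$ to $\arandomvar'$ and a fresh concrete variable back to $\arandomvar$), whereas you postcompose with a permutation of $\aRandomvar$ and prove the invariance $\concret(\sigma(\aval),\rvname{}) = \concret(\aval,\sigma^{-1}\circ\rvname{})$ by structural induction --- a more explicit formalization of the same idea, and you correctly flag injectivity of the capture-avoiding substitution as the load-bearing hypothesis.
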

\begin{proof}
The key idea is that, for any $\rvname{}$ used in the computation of $\concret((\aexpr{}_2, \asymbstate{}))$, we can construct a $\rvname{\prime}$ that incorporates the renaming of $\arandomvar$ to $\arandomvar'$. 
Suppose that in $\rvname{}$, the concrete random variable $\randomvar$ is mapped to $\arandomvar$. Let $\rvname{\prime}$ be a copy of $\rvname{}$, except that $\randomvar$ maps to $\arandomvar'$ in $\rvname{\prime}$, and a fresh concrete random variable $\randomvar_\mit{new}$ maps to $\arandomvar$. Then, the computation of the union in $\concret((\aexpr{\prime}_2, \asymbstate{}'))$ is simply a reordering of the one from $\concret((\aexpr{}_2, \asymbstate{}))$.\\
\end{proof}

We now prove a lemma that establishes the soundness of the $\narrowjoin$ operation.
This lemma states that the $\narrowjoin$ operation soundly over-approximates its input, given that additional unused random variables may be present in the over-approximation.

\begin{lemma}[Rename-Join Expression Soundness]
For any two abstract expressions $\aexpr{}_1$ and $\aexpr{}_2$, abstract symbolic states $\asymbstate{1}$ and $\asymbstate{2}$, there exists some abstract symbolic states $\asymbstate{1}', \asymbstate{2}'$ such that $\weakeq{\asymbstate{1}}{\asymbstate{1}'}{\aexpr{}_1}$, $\weakeq{\asymbstate{2}}{\asymbstate{2}'}{\aexpr{}_2}$, and
$(\concret((\aexpr{}_1, \asymbstate{1}')) \cup \concret((\aexpr{}_2, \asymbstate{2}'))) \subseteq \concret(\narrowjoin(\aexpr{}_1, \aexpr{}_2, \asymbstate{1}, \asymbstate{2}))$.
\label{lem:join}
\end{lemma}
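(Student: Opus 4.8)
The plan is to unfold the definition of $\narrowjoin$ and reduce the claim to two pieces: the soundness of the renaming step (already available as Lemma~\ref{lem:rename}) and the soundness of the basic join $\sqcup$ on abstract expressions and abstract symbolic states. Recall $\narrowjoin(\aexpr{}_1, \aexpr{}_2, \asymbstate{1}, \asymbstate{2}) = (\aexpr{}_1 \sqcup \aexpr{}_3,\; \asymbstate{1} \sqcup \asymbstate{3})$ where $(\aexpr{}_3, \asymbstate{3}) = \rename(\aexpr{}_1, \aexpr{}_2, \asymbstate{2})$. By Lemma~\ref{lem:rename}, $\concret((\aexpr{}_3, \asymbstate{3})) = \concret((\aexpr{}_2, \asymbstate{2}))$, so it suffices to show $\concret((\aexpr{}_1, \asymbstate{1}')) \cup \concret((\aexpr{}_3, \asymbstate{3}')) \subseteq \concret((\aexpr{}_1 \sqcup \aexpr{}_3,\; \asymbstate{1} \sqcup \asymbstate{3}))$ for appropriately chosen weakly-equivalent $\asymbstate{1}', \asymbstate{3}'$ (I would simply take $\asymbstate{1}' = \asymbstate{1}$ and $\asymbstate{3}'$ the renamed state, with the weak-equivalence clauses then following from Lemma~\ref{lem:rename} and reflexivity; the weakening slack is only needed because $\sqcup$ on states may introduce unreachable variables via the union of domains).

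The core step is therefore a monotonicity/soundness statement for $\sqcup$: if $\aexpr{} \leq \aexpr{}'$ in the partial order on abstract expressions, then $\concret(\aexpr{}, \rvname{}) \subseteq \concret(\aexpr{}', \rvname{})$ for every naming function $\rvname{}$, and likewise $\asymbstate{} \leq \asymbstate{}'$ implies $\concret(\asymbstate{}, \rvname{}) \supseteq$ or $\subseteq$ the appropriate containment (states get \emph{larger} concretizations as they get \emph{weaker}, i.e.\ the domain shrinks or entries go up). Combined with the fact that $\aexpr{}_1, \aexpr{}_3 \leq \aexpr{}_1 \sqcup \aexpr{}_3$ and $\asymbstate{1}, \asymbstate{3} \leq \asymbstate{1} \sqcup \asymbstate{3}$ (least upper bound property), this gives the two inclusions $\concret((\aexpr{}_i, \asymbstate{i})) \subseteq \concret((\aexpr{}_1 \sqcup \aexpr{}_3, \asymbstate{1} \sqcup \asymbstate{3}))$ whose union is the desired containment. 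The monotonicity of $\concret$ in the expression argument is proved by induction on the derivation of $\aexpr{} \leq \aexpr{}'$, checking each rule of the partial order from Section~\ref{sec:abs-int}: the base cases ($\aconstant \leq \cunk$, $\cunk \leq \arandomvar$, $\arandomvar \leq \eunk{}{\{\arandomvar\}}$, $\eunk{}{\avarset{}} \leq \eunk{}{\avarset{}'}$ when $\avarset{} \subseteq \avarset{}'$, $\eunk{}{\avarset{}} \leq \etop$, $\dunk{}{\avarset{}} \leq \dtop$) follow directly from the definitions of $\concret(\aconstant, \rvname{}) = \{\constant\}$, $\concret(\cunk,\rvname{}) = \Val$, $\concret(\arandomvar, \rvname{}) = \{\randomvar \mid \rvname{}(\randomvar) = \arandomvar\} \cup \Val$, and $\concret(\eunk{}{\avarset{}}, \rvname{}) = \{\expr{} \mid \mathit{FV}(\expr{}, \rvname{}) \subseteq \avarset{}\}$; the crucial case is $\aplus{\eunk{}{\avarset{,1}}}{\eunk{}{\avarset{,2}}} \leq \eunk{}{\avarset{}'}$ when $\avarset{}' = \avarset{,1} \cup \avarset{,2}$, which holds because the free variables of a sum are the union of the free variables of the summands; the structural cases ($\aplus{\aexpr{}_1}{\aexpr{}_2} \leq \aplus{\aexpr{\prime}_1}{\aexpr{\prime}_2}$ etc.) follow from the inductive hypotheses applied to the definition $\concret(\aplus{\aexpr{}_1}{\aexpr{}_2}, \rvname{}) = \{\plus{\expr{}_1}{\expr{}_2} \mid \expr{}_1 \in \concret(\aexpr{}_1, \rvname{}),\, \expr{}_2 \in \concret(\aexpr{}_2, \rvname{})\}$. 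Finally one takes the union over all surjective $\rvname{}$ to pass from the $\rvname{}$-indexed concretizations to the full $\concret$, and the state-level argument is analogous, using the definition of $\concret(\asymbstate{}, \rvname{})$ and the entrywise/domain-based partial order on states.

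The main obstacle I anticipate is the interaction between the nonstandard variable-set component of $\eunk{}{\avarset{}}$/$\dunk{}{\avarset{}}$ and the name-mapping functions $\rvname{}$: one must be careful that $\mathit{FV}(\expr{}, \rvname{})$ — the free variables of $\expr{}$ mapped through $\rvname{}$ — is monotone in the right direction, and that the rename step in $\narrowjoin$ does not spuriously shrink a variable set in a way that drops some concretization (this is precisely why Lemma~\ref{lem:rename} is stated as an \emph{equality} of concretizations rather than an inclusion, and why the lemma statement here allows the weak-equivalence slack $\weakeq{\asymbstate{1}}{\asymbstate{1}'}{\aexpr{}_1}$, $\weakeq{\asymbstate{2}}{\asymbstate{2}'}{\aexpr{}_2}$ on the states). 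A secondary subtlety is the join on abstract symbolic states: because $\asymbstate{1} \sqcup \asymbstate{3}$ has domain $\dom(\asymbstate{1}) \cup \dom(\asymbstate{3})$, a concrete state in $\concret((\aexpr{}_i, \asymbstate{i}))$ may lack entries that the joined state demands; handling this is exactly what the weak-equivalence relaxation in the conclusion is for, so I would thread that through carefully rather than claiming a literal subset of $\concret(\narrowjoin(\cdots))$ on the nose. Widening (the $\eunk{}{\avarset{}} \to \etop$ collapse when $|\avarset{}| \geq N$) only enlarges the abstract value and hence only enlarges the concretization, so it is compatible with the inclusion and needs no special treatment here.
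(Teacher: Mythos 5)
Your decomposition — unfold $\narrowjoin$, reduce the renamed pair to the original via Lemma~\ref{lem:rename}, and then establish soundness of the basic $\sqcup$ by monotonicity of $\concret$ with respect to the partial order — is the same route the paper takes, and you actually spell out the monotonicity induction that the paper compresses into ``it follows from the definition of basic expression join, abstract symbolic state join, and concretization.'' That part is fine.

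The gap is in your choice of the weakly-equivalent states. Taking $\asymbstate{1}' = \asymbstate{1}$ does not work: by the definition of $\concret(\asymbstate{}, \rvname{})$, a concrete state $\symbstate \in \concret(\asymbstate{1}, \rvname{})$ must \emph{omit} every $\randomvar$ with $\rvname{}(\randomvar) \notin \dom(\asymbstate{1})$, so for any $\arandomvar \in \dom(\asymbstate{3}) \setminus \dom(\asymbstate{1})$ such a $\symbstate$ has no entry where $\asymbstate{1} \sqcup \asymbstate{3}$ demands one, and hence $\symbstate \notin \concret(\asymbstate{1} \sqcup \asymbstate{3}, \rvname{})$. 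The subset relation in the lemma's conclusion is literal; the only slack is in the choice of $\asymbstate{1}'$ and $\asymbstate{2}'$, so that choice has to do real work. The paper's proof resolves this by explicitly merging domains: $\asymbstate{1}'$ maps each $\arandomvar \in \dom(\asymbstate{2}) \setminus \dom(\asymbstate{1})$ to $\asymbstate{2}(\arandomvar)$ (and symmetrically for $\asymbstate{2}'$, and similarly for a state $\asymbstate{3}'$ extending the renamed $\asymbstate{3}$ with the leftover entries of $\asymbstate{1}$), which keeps weak equivalence because the added variables are unreachable from $\aexpr{}_1$ and $\aexpr{}_2$ respectively, while making the domains line up with $\dom(\asymbstate{1} \sqcup \asymbstate{3})$. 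It also needs one fact you do not mention: $\rename$ never renames a variable outside $\dom(\asymbstate{2})$ onto one inside it, so variables in $\dom(\asymbstate{1}) \setminus \dom(\asymbstate{2})$ either enter $\dom(\asymbstate{3})$ or stay unreachable from $\aexpr{}_3$, which is what makes the extended $\asymbstate{3}'$ well-defined and lets Lemma~\ref{lem:rename} transfer to the primed states. You flag the domain problem in your last paragraph but leave it unresolved (``thread that through carefully''); the explicit domain-merging construction is the missing step.
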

\begin{proof}
Let $\asymbstate{1}'$ and $\asymbstate{2}'$ be the following abstract symbolic states:
\begin{align*}
\asymbstate{1}' &= \left\{ \arandomvar \mapsto \substack{\begin{cases}
    \asymbstate{1}(\arandomvar) & \arandomvar \in \dom(\asymbstate{1})\\
    \asymbstate{2}(\arandomvar) & \arandomvar \in \dom(\asymbstate{2}) \setminus \dom(\asymbstate{1})
\end{cases}}\right\}\\
\asymbstate{2}' &= \left\{ \arandomvar \mapsto \substack{\begin{cases}
    \asymbstate{2}(\arandomvar) & \arandomvar \in \dom(\asymbstate{2}) \\
    \asymbstate{1}(\arandomvar) & \arandomvar \in \dom(\asymbstate{1}) \setminus \dom(\asymbstate{2})
\end{cases}}\right\}
\end{align*}
By construction, $\weakeq{\asymbstate{1}}{\asymbstate{1}'}{\aexpr{}_1}$ and $\weakeq{\asymbstate{2}}{\asymbstate{2}'}{\aexpr{}_2}$.

During rename-join, we have an intermediate, renamed abstract expression and abstract symbolic state, $\aexpr{}_3, \asymbstate{3} = \rename(\aexpr{}_1, \aexpr{}_2, \asymbstate{2})$. 
The operation may rename a random variable $\randomvar_2 \in \dom(\asymbstate{2})$ to a random variable $\arandomvar_1 \in \dom(\asymbstate{1})$ and $\arandomvar_1$ may or may not be in $\dom(\asymbstate{2})$. 
However, the operation never renames a random variable $\arandomvar_1 \notin \dom(\asymbstate{2})$ to a random variable $\arandomvar_2 \in \dom(\asymbstate{2})$. So the random variables in $\dom(\asymbstate{1}) \setminus \dom(\asymbstate{2})$ either becomes a part of $\dom(\asymbstate{3})$ or remains unreachable from $\aexpr{}_3$. 
Thus, we have $\aexpr{}_3, \asymbstate{3}' = \rename(\aexpr{}_1, \aexpr{}_2, \asymbstate{2}')$ where
\begin{align*}
    \asymbstate{3}' &= \left\{ \arandomvar \mapsto \substack{\begin{cases}
        \asymbstate{3}(\arandomvar) & \arandomvar \in \dom(\asymbstate{3})\\
        \asymbstate{1}(\arandomvar) & \arandomvar \in \dom(\asymbstate{1}) \setminus \dom(\asymbstate{3})
    \end{cases}}\right\}
\end{align*}

It follows from the definition of basic expression join, abstract symbolic state join, and concretization that $(\concret((\aexpr{}_1, \asymbstate{1}')) \cup \concret((\aexpr{}_3, \asymbstate{3}'))) \subseteq \concret((\aexpr{}_1 \sqcup \aexpr{}_3, \asymbstate{1} \sqcup \asymbstate{3}))$.
Following from Lemma~\ref{lem:rename}, the concrete set of the intermediate, renamed abstract expression and abstract symbolic state preserves the concrete set of the original: 
$\concret((\aexpr{}_3, \asymbstate{3}')) = \concret((\aexpr{}_2, \asymbstate{2}'))$.
Therefore, we have $(\concret((\aexpr{}_1, \asymbstate{1}'))\; \cup\; \concret((\aexpr{}_2, \asymbstate{2}'))) \subseteq \concret(\narrowjoin(\aexpr{}_1, \aexpr{}_2, \asymbstate{1}, \asymbstate{2}))$.

\end{proof}

The next lemma states that substituting abstract values that over-approximates the original abstract value, soundly over-approximates the original results.

\begin{lemma}[Substitution Soundness]
    \label{lem:substitution}
    If $(\apclstep{\progexpr}{\asymbstate{1}}{\av{1}'}{\asymbstate{1}'})$ and $(\apclstep{\progexpr[\av{1} \leftarrow \av{2}]}{\asymbstate{2}}{\av{2}'}{\asymbstate{2}'})$ and $\concret((\av{1}, \asymbstate{1})) \subseteq \concret((\av{2}, \asymbstate{2}))$, then $\concret((\av{1}', \asymbstate{1}')) \subseteq \concret((\av{2}', \asymbstate{2}'))$.
\end{lemma}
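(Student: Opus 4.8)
The plan is to prove this by structural induction on the height of the derivation of $(\apclstep{\progexpr}{\asymbstate{1}}{\av{1}'}{\asymbstate{1}'})$, case-analyzing the last rule applied (\Cref{fig:abstract-interp-full}) and, in each case, inspecting the last rule of the given right-hand derivation $(\apclstep{\progexpr[\av{1} \leftarrow \av{2}]}{\asymbstate{2}}{\av{2}'}{\asymbstate{2}'})$. Since substituting one abstract value for another never changes the top-level syntactic shape of an expression, the two derivations end with rules from the same family (e.g.\ both conditional rules, both $\mkw{fold}$ rules), although the specific member of the family can differ. As in the other soundness results of this appendix, I will read the conclusion modulo weak equivalence of symbolic states: $\narrowjoin$ can introduce abstract random variables unreachable from $\av{2}'$, so the sharp statement is that there is a $\asymbstate{2}''$ with $\weakeq{\asymbstate{2}'}{\asymbstate{2}''}{\av{2}'}$ and $\concret((\av{1}', \asymbstate{1}')) \subseteq \concret((\av{2}', \asymbstate{2}''))$, with the analogous domain mismatch on the concrete side absorbed by the concrete weak-equivalence relation. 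The tools I expect to need are: monotonicity of the abstract interface operations $\asymassume$, $\asymvalue$, $\asymvalue^*$, $\asymobserve$ with respect to concretization containment (which I take as a companion to \Cref{assumption:syminterface}); soundness of $\narrowjoin$ and $\rename$ (\Cref{lem:join}, \Cref{lem:rename}); abstract weakening (\Cref{lem:abs-weakening}); monotonicity of the abstract expression and distribution constructors in their arguments, immediate from the partial order; and a routine auxiliary claim, proved by induction on values, that substituting an over-approximating abstract value into a value-expression preserves concretization containment.

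First I would dispatch the straightforward cases: $\aval$, $\mfResample$, $\mfApp{f}{\aval}$, $\mfLetIn{\programvar}{\progexpr_1}{\progexpr_2}$, the two random-variable rules, and the $\mkw{observe}$ rule. Each follows by applying the induction hypothesis to the sub-derivations and then invoking monotonicity of whichever interface operation and expression constructor assembles the result. Two bookkeeping points recur here: in the $\asymassume$-based rules a fresh abstract random variable $\arandomvar_{\mit{new}}$ is allocated, which I would align on both sides and accommodate by extending the surjective name-mapping $\rvname{}$ so that a fresh concrete variable maps to $\arandomvar_{\mit{new}}$ (possible precisely because the concretization quantifies over all surjective name mappings); and the fact that symbolic-state domains may differ is exactly the weak-equivalence slack flagged above, handled just as in the proof of \Cref{lem:abs-weakening}.

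The delicate cases are conditionals and $\mkw{fold}$, because coarsening $\av{1}$ to $\av{2}$ can weaken a rule-selecting side condition or make $\asymvalue^*$ less precise. For conditionals, the hypothesis $\concret((\av{1}, \asymbstate{1})) \subseteq \concret((\av{2}, \asymbstate{2}))$ already forces $\av{2}$ to reference (an over-approximation of) every random variable that $\av{1}$ references, so the $\neg\aconst(\cdot)$ side condition of the $\widehat{\texttt{ite}}$-rule is preserved whenever it held on the left; and when the condition does collapse to a constant, monotonicity of $\asymvalue^*$ ensures its verdict can only move from a concrete boolean up to $\cunk$ and never the reverse, so the branch explored by the left derivation is always among those the right derivation explores. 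I would then combine the induction hypothesis on that branch (using monotonicity of $\asymvalue^*$ to relate the two post-condition states) with \Cref{lem:join} to subsume the left result --- whether a single branch value or an $\aite{\aval}{\aval_1}{\aval_2}$ expression --- inside the $\narrowjoin$ the right derivation forms, appealing to \Cref{lem:abs-weakening} to reconcile the weak-equivalence slack. The main obstacle I anticipate is the $\mkw{fold}$ case in which $\av{1}$ is a syntactically concrete list but $\av{2}$ abstracts it to some $\eunk{}{\avarset{}}$ or $\etop$: the left derivation unrolls the $\mkw{fold}$ finitely many times via the $\mfNil$/$\mfCons$ rules, whereas the right derivation must use the fixpoint rule, so I need that the abstract fixpoint over-approximates every finite unrolling. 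I would establish this by an inner induction on the length of the underlying list, showing that the value/state pair after $k$ concrete unrollings is subsumed (up to weak equivalence) by the iterate maintained during the fixpoint computation, with \Cref{lem:join} handling each unrolling step, \Cref{lem:abs-weakening} handling the domain slack, and convergence of the iteration coming from the widening that caps $\avarset{}$ at the threshold $N$. Finally, the $\afail$ cases are easy: if the left derivation produces $\afail$ it is because $\asymvalue$ was invoked on a $\amfSymbolic$-annotated variable, and since $\concret(\amfSymbolic)$ is the full set of concrete annotations, any state over-approximating one with such an entry must itself carry that entry as $\amfSymbolic$, so the right derivation also reaches $\afail$; and if the right derivation produces $\afail$ the inclusion is immediate since $\afail$ is the top abstract value.
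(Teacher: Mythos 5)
Your proposal matches the paper's own proof: the paper also argues by structural induction on the abstract derivation, case-analyzing the last rule and discharging each case with the induction hypothesis, rename-join soundness (\Cref{lem:join}), and abstract weakening (\Cref{lem:abs-weakening}). You are in fact somewhat more explicit than the paper's per-case one-liners about the delicate mixed cases --- a concrete list on the left against the fixpoint rule on the right, the monotonicity needed from the abstract interface operations, and the weak-equivalence slack introduced by $\narrowjoin$ --- all of which the paper leaves implicit.
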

\begin{proof}
    By structural induction on derivations of $\;\hat{\downarrow}$, with appeal to \Cref{lem:join,lem:abs-weakening}. The cases are as follows:

    \begin{itemize}
      \item $\apclstep{\aval}{\asymbstate{1}}{\aval}{\asymbstate{1}}$:
      By induction on the value syntax. If $\aval$ is $c$, $\programvar$, or $\mfNil$, then by definition $\concret((\aval, \asymbstate{})) \subseteq \concret((\aval_{2}, \asymbstate{2}))$. Otherwise, apply inductive hypothesis. For example, if $\aval$ is $\mfPair{\aval_1}{\aval_4}$, then $\aval_2' = \mfPair{\aval_2}{\aval_4}$. It follows that $\concret((\mfPair{\aval_1}{\aval_4}, \asymbstate{})) \subseteq \concret((\mfPair{\aval_2}{\aval_4}, \asymbstate{2}))$. The other cases are analogous.
      \item $\apclstep{\mfResample}{\asymbstate{1}}{\amfUnit}{\asymbstate{1}}$: There can be no substitution. 
      \item $\apclstep{\mfApp{f}{\aval}}{\asymbstate{1}}{\aval_1'}{\asymbstate{1}'}$: 
        By inductive hypothesis.
      \item $\apclstep{\mfIf{\aval}{\progexpr_1}{\progexpr_2}}{\asymbstate{1}}{\aite{\aval}{\aval_1^0}{\aval_2^0}}{\asymbstate{1}'}$: 
        By inductive hypothesis.
      \item $\apclstep{\mfIf{\aval}{\progexpr_1}{\progexpr_2}}{\asymbstate{1}}{\aval_1'}{\asymbstate{1}'}$ where $\asymvalue^*(\aval, \asymbstate{1}) = \atrue, \asymbstate{\aval}$: 
        By inductive hypothesis.
      \item $\apclstep{\mfIf{\aval}{\progexpr_1}{\progexpr_2}}{\asymbstate{1}}{\aval_1'}{\asymbstate{1}'}$ where $\asymvalue^*(\aval, \asymbstate{1}) = \afalse, \asymbstate{\aval}$: 
        By inductive hypothesis.
      \item $\apclstep{\mfIf{\aval}{\progexpr_1}{\progexpr_2}}{\asymbstate{1}}{\aval_1'}{\asymbstate{1}'}$ where $\asymvalue^*(\aval, \asymbstate{1}) = \aval', \asymbstate{\aval}$:
        By definition, 
        $(\apclstep{\progexpr_1}{\asymbstate{\aval}}{\aval_1^0}{\asymbstate{1}^0})$ and 
        $(\apclstep{\progexpr_2}{\asymbstate{\aval}}{\aval_2^0}{\asymbstate{2}^0})$ and 
        $(\apclstep{\progexpr_1[\aval_1 \leftarrow \aval_2]}{\asymbstate{\aval}'}{\aval_1^1}{\asymbstate{1}^1})$ and
        $(\apclstep{\progexpr_2[\aval_1 \leftarrow \aval_2]}{\asymbstate{\aval}'}{\aval_2^1}{\asymbstate{2}^1})$.
        By definition, $\concret(\asymbstate{\aval}) \subseteq \concret(\asymbstate{\aval}')$.
        By inductive hypothesis, we have $\concret((\aval_1^0, \asymbstate{1}^0)) \subseteq \concret((\aval_1^1, \asymbstate{1}^1))$
        and
        $\concret((\aval_2^0, \asymbstate{2}^0)) \subseteq \concret((\aval_2^1, \asymbstate{2}^1))$. Then, apply \Cref{lem:join} and \Cref{lem:abs-weakening}.
      \item $\apclstep{\mfLetIn{\programvar}{\progexpr_1}{\progexpr_2}}{\asymbstate{1}}{\aval_1'}{\asymbstate{1}'}$:
        Then, by inductive hypothesis.
      \item $\apclstep{\mfFold{f}{\mfNil}{\aval}}{\asymbstate{}}{\aval}{\asymbstate{}}$: By definition.
      \item $\apclstep{\mfFold{f}{\mfCons{\alisthd}{\alisthd}}{\aval}}{\asymbstate{1}}{\aval_1'}{\asymbstate{1}'}$:
        By inductive hypothesis.
        
      \item $\apclstep{\mfFold{f}{\alistl}{\aval}}{\asymbstate{1}}{\aval_1'}{\asymbstate{1}'}$ where $\aval = \aval_j$ and $\weakeq{\asymbstate{}}{\asymbstate{j}}{(\alistl, \aval)}$:
        By definition, (\apclstep{\mfApp{f}{\mfPair{\alistl}{\aval}}}{\asymbstate{1}}{\av{f}}{\asymbstate{f}}) and (\apclstep{\mfFold{f}{\alistl}{\av{j}}}{\asymbstate{j}}{\av{1}'}{\asymbstate{1}'}) and (\apclstep{\mfApp{f}{\mfPair{\alistl}{\aval}}[\aval_1 \leftarrow \aval_2]}{\asymbstate{2}}{\av{f}'}{\asymbstate{f}'}) and \\(\apclstep{\mfFold{f}{\alistl}{\av{j}'}[\aval_1 \leftarrow \aval_2]}{\asymbstate{j}'}{\av{2}'}{\asymbstate{2}'}).
        By inductive hypothesis, $\concret((\aval_f, \asymbstate{f})) \subseteq \concret((\aval_f', \asymbstate{f}'))$.
        By \Cref{lem:join}, there exists $\asymbstate{1}^*, \asymbstate{2}^*, \asymbstate{f}^*, \asymbstate{f}^{\prime*}$ where $\weakeq{\asymbstate{1}}{\asymbstate{1}^*}{\aval}$  and $\weakeq{\asymbstate{2}}{\asymbstate{2}^*}{\aval[\aval_1 \leftarrow \aval_2]}$ and $\weakeq{\asymbstate{f}}{\asymbstate{f}^*}{\aval_f}$ and $\weakeq{\asymbstate{f}'}{\asymbstate{f}^{\prime*}}{\aval_f'}$, such that $\concret((\aval, \asymbstate{1}^*)) \cup \concret((\aval_f, \asymbstate{f}^*)) \subseteq \concret((\av{j}, \asymbstate{j}))$ and
        $\concret((\aval[\aval_1 \leftarrow \aval_2], \asymbstate{2}^*)) \cup \concret((\av{f}', \asymbstate{f}^{\prime*})) \subseteq \concret((\av{j}', \asymbstate{j}'))$.
        By \Cref{lem:abs-weakening}, $\concret((\aval_1, \asymbstate{1}^*)) \subseteq \concret((\aval_2, \asymbstate{2}^{*}))$ and $\concret((\av{f}, \asymbstate{f}^*)) \subseteq \concret((\av{f}', \asymbstate{f}^{\prime*}))$. By join definition, $\concret((\av{j}, \asymbstate{j})) \subseteq \concret((\av{j}', \asymbstate{j}'))$. By definition of weak equivalence.
      \item $\apclstep{\mfFold{f}{\alistl}{\aval}}{\asymbstate{1}}{\aval_1'}{\asymbstate{1}'}$: 
        By definition, (\apclstep{\mfApp{f}{\mfPair{\alistl}{\aval}}}{\asymbstate{1}}{\av{f}}{\asymbstate{f}}) and (\apclstep{\mfFold{f}{\alistl}{\av{j}}}{\asymbstate{j}}{\av{1}'}{\asymbstate{1}'}) and (\apclstep{\mfApp{f}{\mfPair{\alistl}{\aval}}[\aval_1 \leftarrow \aval_2]}{\asymbstate{2}}{\av{f}'}{\asymbstate{f}'}) and (\apclstep{\mfFold{f}{\alistl}{\av{j}'}[\aval_1 \leftarrow \aval_2]}{\asymbstate{j}'}{\av{2}'}{\asymbstate{2}'}).
        By inductive hypothesis, $\concret((\aval_f, \asymbstate{f})) \subseteq \concret((\aval_f', \asymbstate{f}'))$.
        By \Cref{lem:join}, there exists $\asymbstate{1}^*, \asymbstate{2}^*, \asymbstate{f}^*, \asymbstate{f}^{\prime*}$ where $\weakeq{\asymbstate{1}}{\asymbstate{1}^*}{\aval}$  and $\weakeq{\asymbstate{2}}{\asymbstate{2}^*}{\aval[\aval_1 \leftarrow \aval_2]}$ and $\weakeq{\asymbstate{f}}{\asymbstate{f}^*}{\aval_f}$ and $\weakeq{\asymbstate{f}'}{\asymbstate{f}^{\prime*}}{\aval_f'}$, such that $\concret((\aval, \asymbstate{1}^*)) \cup \concret((\aval_f, \asymbstate{f}^*)) \subseteq \concret((\av{j}, \asymbstate{j}))$ and
        $\concret((\aval[\aval_1 \leftarrow \aval_2], \asymbstate{2}^*)) \cup \concret((\av{f}', \asymbstate{f}^{\prime*})) \subseteq \concret((\av{j}', \asymbstate{j}'))$.
        By \Cref{lem:abs-weakening}, $\concret((\aval_1, \asymbstate{1}^*)) \subseteq \concret((\aval_2, \asymbstate{2}^{*}))$ and $\concret((\av{f}, \asymbstate{f}^*)) \subseteq \concret((\av{f}', \asymbstate{f}^{\prime*}))$. By join definition, $\concret((\av{j}, \asymbstate{j})) \subseteq \concret((\av{j}', \asymbstate{j}'))$. By definition of weak equivalence. %
        By inductive hypothesis.
      \item $\apclstep{\mfLetRv{\annotation}{\programvar}{\distop(\aval)}{\progexpr'}}{\asymbstate{1}}{\aval_1'}{\asymbstate{1}'}$: 
        By inductive hypothesis.
      \item $\apclstep{\mfLetRv{\sample}{\programvar}{\distop(\aval)}{\progexpr'}}{\asymbstate{1}}{\aval_1'}{\asymbstate{1}'}$: By inductive hypothesis and by definition.
      \item $\apclstep{\mfObserve{\distop(\aval_{1})}{\aval_2}}{\asymbstate{1}}{\amfUnit}{\asymbstate{1}'}$: By definition.
    \end{itemize}

\end{proof}

We define an auxiliary operation that evaluates a particle until it has terminated. 

\begin{definition}[Particle Evaluation until Termination]
  \label{def:particle-termination}
We define the operation $\tilde{\downarrow}^*$ for evaluating an expression on a set of symbolic states until completion.
\begin{small}
  \begin{mathpar}
  \inferrule%
  { 
      \cpclstep{\progexpr, \symbstate}{\evalset}\\
      \textstyle{\bigwedge_{(\progexpr',\symbstate',\doresample') \in \evalset}} \neg \doresample'\\
      \textstyle{\bigwedge_{(\progexpr', \symbstate', \doresample') \in \evalset}}\ (\progexpr' \neq \fail) 
  }
  {\cpclstepstar{\progexpr, \symbstate}{\evalset}}

  \inferrule%
  { 
      \cpclstep{\progexpr{}, \symbstate}{\evalset}\\
      \textstyle{\bigvee_{(e',g',r') \in \evalset}} e' = \fail
  }
  {\cpclstepstar{\progexpr{}, \symbstate}{\set{(\fail, \fail, \fail)}}}

  \inferrule%
  { 
      \cpclstep{\progexpr, \symbstate}{\evalset}\\
      \textstyle{\bigvee_{(\progexpr',\symbstate',\doresample') \in \evalset} } \doresample'\\
      \textstyle{\bigwedge_{(\progexpr',\symbstate',\doresample') \in \evalset}}\ (\progexpr' \neq \fail) \\
      \evalset'' = \set{(\progexpr'',\symbstate'',\doresample'') \;|\; (\progexpr',\symbstate',\doresample') \in \evalset, (\cpclstepstar{\progexpr', \symbstate'}{\evalset'}), (\progexpr'',\symbstate'',\doresample'') \in \evalset' }
  }
  {\cpclstepstar{\progexpr, \symbstate}{\evalset''}}
  \end{mathpar}
\end{small}
\end{definition}

This lemma states that every value in the result of the particle set evaluation of the collecting semantics can be traced back to a particle in the input particle set that produces the value when evaluated until termination.

\begin{lemma}[Particle Trace]
\label{lem:particle-trace}
If $(\cpclsstep{\pset}{\dset})$, we have for all $\;\distr{} \in \dset$, there exists $(\progexpr, \symbstate) \in \pset$ such that $(\cpclstepstar{\progexpr, \symbstate}{\evalset})$ and $\;\distr{} \in \set{\distribution(\val,\symbstate) \;|\; (\val,\symbstate,\doresample) \in \evalset }$.
\end{lemma}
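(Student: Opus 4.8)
The plan is to prove this by induction on the structure of the derivation of $\cpclsstep{\pset}{\dset}$ — equivalently, on the number of resampling rounds — with one case for each of the three rules defining $\tilde{\downdownarrows}$ in the collecting semantics. In all three rules $\dset$ is built from the set $\evalset$ of one-step configurations obtained by applying the collecting particle evaluation $\tilde{\downarrow}$ to every particle of $\pset$ and taking the union. Two elementary facts about $\evalset$ drive every case: each configuration in $\evalset$ is produced by the one-step evaluation of some particle of $\pset$, and the configuration set produced by a single fixed particle is a subset of $\evalset$, so global side-conditions such as ``no configuration is $\fail$'' or ``every resample flag is false'' hold automatically for each per-particle set. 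Throughout I say a configuration set \emph{realises} $\distr{}$ when $\distr{} = \distribution(\val,\symbstate)$ for some configuration $(\val,\symbstate,\doresample)$ belonging to it.

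First I would dispatch the two easy cases. For the terminating rule of $\tilde{\downdownarrows}$, the set $\dset$ is read off directly from $\evalset$, every flag is false, and nothing is $\fail$; given $\distr{} \in \dset$ I pick the witnessing configuration $(\val,\symbstate,\doresample) \in \evalset$, trace it back to a particle $(\progexpr_0,\symbstate_0) \in \pset$ whose one-step evaluation $\cpclstep{\progexpr_0,\symbstate_0}{\evalset'}$ contains it, note that the configurations of $\evalset'$ (being members of $\evalset$) also have all flags false and none $\fail$, and conclude by the terminating rule of $\tilde{\downarrow}^{*}$ that $\cpclstepstar{\progexpr_0,\symbstate_0}{\evalset'}$, a set that already realises $\distr{}$. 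For the $\fail$-rule, $\dset = \set{\fail}$ because some configuration in $\evalset$ has expression component $\fail$; I trace that configuration back to a particle $(\progexpr_0,\symbstate_0)$, invoke the $\fail$-rule of $\tilde{\downarrow}^{*}$ to obtain $\cpclstepstar{\progexpr_0,\symbstate_0}{\set{(\fail,\fail,\fail)}}$, and use that $\fail$ propagates through every operation, in particular $\distribution(\fail,\fail) = \fail$, so this particle realises $\fail$.

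The substantive case is the resume rule, where $\pset' = \forgetr(\evalset)$ and $\cpclsstep{\pset'}{\dset}$ holds by a strictly smaller derivation. Fixing $\distr{} \in \dset$, the induction hypothesis gives a particle $(\progexpr_1,\symbstate_1) \in \pset'$ with $\cpclstepstar{\progexpr_1,\symbstate_1}{\evalset'}$ where $\evalset'$ realises $\distr{}$. Since $\pset' = \forgetr(\evalset)$, there is a flag $\doresample_1$ with $(\progexpr_1,\symbstate_1,\doresample_1) \in \evalset$, hence a particle $(\progexpr_0,\symbstate_0) \in \pset$ whose one-step evaluation $\cpclstep{\progexpr_0,\symbstate_0}{\evalset''}$ contains $(\progexpr_1,\symbstate_1,\doresample_1)$, with $\evalset'' \subseteq \evalset$ and hence no $\fail$ in $\evalset''$. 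To close the loop I need a decomposition property of $\tilde{\downarrow}^{*}$: when a particle's one-step configuration set contains no $\fail$, running $\tilde{\downarrow}^{*}$ on that particle yields the union, over the configurations of that one step, of their individual $\tilde{\downarrow}^{*}$-results; this holds uniformly whether the terminating rule of $\tilde{\downarrow}^{*}$ applies (each terminated configuration being a value and hence a $\tilde{\downarrow}^{*}$-fixpoint) or the resume rule applies. Applied to $(\progexpr_0,\symbstate_0)$, and using determinism and totality of $\tilde{\downarrow}^{*}$ to identify the $\tilde{\downarrow}^{*}$-continuation of the distinguished configuration $(\progexpr_1,\symbstate_1,\doresample_1)$ with the set $\evalset'$ from the induction hypothesis, this shows $\evalset'$ is contained in the configuration set that $\tilde{\downarrow}^{*}$ returns on $(\progexpr_0,\symbstate_0)$, so $(\progexpr_0,\symbstate_0)$ is the required particle.

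I expect the decomposition property, and the two auxiliary facts it depends on, to be the main obstacle: that an uninterrupted particle reduction (resample flag false) always produces a value — needed so that terminated configurations are fixpoints of $\tilde{\downarrow}^{*}$ — and that $\tilde{\downarrow}^{*}$ is well-defined as a total function on configurations, so the $\evalset'$ returned by the induction hypothesis really is the unique continuation of $(\progexpr_1,\symbstate_1,\doresample_1)$. Both are routine inductions over the operational semantics of \siren{}, but they are exactly what licenses the ``take one step, then continue'' argument in the resume case. A minor point to settle along the way is the reading of the failure case, namely that $\fail$ is carried uniformly as a degenerate element through $\dset$ and through $\distribution$.
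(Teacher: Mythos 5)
Your proposal is correct and follows essentially the same route as the paper's proof: structural induction on the derivation of $\tilde{\downdownarrows}$, with each of the three rules discharged by the corresponding rule of the $\tilde{\downarrow}^*$ definition, and the resume case closed by applying the induction hypothesis to $\forgetr(\evalset)$, tracing the resulting particle back one collecting step, and observing that its continuation set is contained in the union formed by the third $\tilde{\downarrow}^*$ rule. The ``decomposition property'' you flag as the main obstacle is in fact definitional---it is exactly the set comprehension in the third rule of the particle-termination definition, so no determinism or totality argument is needed---and your explicit treatment of particles that terminate while others continue is a point the paper's own proof glosses over.
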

\begin{proof}
  By structural induction derivations of $\tilde{\downdownarrows}$.
  In the case of the first $\tilde{\downdownarrows}$ rule, for each $\distr{} \in \dset$, there exists a $\evalset'$ where there is a $(\val, \symbstate, \false) \in \evalset'$ such that $\distr{} = \distribution(\val, \symbstate)$. $\evalset'$ is constructed by the union of each of the $\evalset$ configuration set produced by the collecting single particle evaluation on every particle $(\progexpr, \symbstate)$ in $\particleset$. Then, there exists $(\progexpr, \symbstate) \in \particleset$ and $\evalset$ such that $(\cpclstep{\progexpr, \symbstate}{\evalset})$ and $(\val, \symbstate, \false) \in \evalset$. By \Cref{def:particle-termination}, we have $(\cpclstepstar{\progexpr, \symbstate}{\evalset})$. It follows also that $\distr{} \in \set{\distribution(\val,\symbstate) \;|\; (\val,\symbstate,\false) \in \evalset }$. 

  In the case of the second $\tilde{\downdownarrows}$ rule, by IH, we have that for all $\distr{} \in \dset$, there exists $(\progexpr', \symbstate') \in \particleset'$ such that $(\cpclstepstar{\progexpr', \symbstate'}{\evalset'})$ and $\distr{} \in \set{\distribution(\val,\symbstate) \;|\; (\val,\symbstate,\doresample) \in \evalset' }$. $\particleset'$ is constructed from the union of each of the $\evalset''$ configuration set produced by the collecting single particle evaluation on every $(\progexpr, \symbstate)$ particle in $\particleset$ by forgetting the resample flags from the configurations. Then, for each $(\progexpr', \symbstate') \in \particleset'$, there exists $(\progexpr, \symbstate) \in \particleset$ such that there exists an $\evalset''$ where $(\cpclstep{\progexpr, \symbstate}{\evalset''})$ and $(\progexpr', \symbstate') \in \forgetr(\evalset'')$ and  
  $ \textstyle{\bigvee_{(\progexpr, \symbstate, \doresample) \in \evalset''} } \doresample$.
  Then, by \Cref{def:particle-termination}, we have that $(\cpclstepstar{\progexpr, \symbstate}{\evalset})$ and $\evalset' \subseteq \evalset$. It follows also that $\distr{} \in \set{\distribution(\val,\symbstate) \;|\; (\val,\symbstate,\doresample) \in \evalset }$. 

  For the third $\tilde{\downdownarrows}$ rule, apply \Cref{def:particle-termination}.

\end{proof}

We now prove that the analysis is sound for particle evaluations that terminate.

\begin{lemma}[Terminating Particle Evaluation Soundness]
  \label{lem:terminating-particle}
  For every particle $(\progexpr, \symbstate)$, such that $(\cpclstep{\progexpr, \symbstate}{\evalset})$ and 
  $\textstyle{\forall_{(\val,\; \symbstate',\; \doresample) \in \evalset}}\; (\val = \fail) \vee \neg \doresample$, 
  we have $(\progexpr, \abstr(\{\symbstate\}) \;\abs{\downarrow}\; \aval', \asymbstate{}')$ and 
  there exists a configuration set $\evalset'$ such that $\weakeq{\evalset}{\evalset'}{}$ and 
  $\forgetr(\evalset') \subseteq \concret((\aval', \asymbstate{}'))$.
\end{lemma}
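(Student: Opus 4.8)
The plan is to prove the statement by induction on the derivation of the big-step operational evaluation $\pclstep{\progexpr}{\symbstate}{\progexpr'}{\symbstate'}{\weight}{\doresample}$ underlying the collecting rule that produces $\evalset$, case-splitting on the last rule applied (equivalently, on the outermost syntactic form of $\progexpr$). Since the hypothesis forces every configuration in $\evalset$ to have resample flag $\false$ or to be $\fail$, none of the checkpoint-producing rules (e.g.\ $\mfResample$ at the head, or a $\mfLetIn$ whose first subexpression resamples — all of which yield $\doresample = \true$ with a non-$\fail$ result) can occur, so only the genuinely terminating rules need to be considered. Throughout, $\fail$ and $\afail$ act as short-circuiting top elements: whenever some $(\val, \symbstate', \doresample) \in \evalset$ has $\val = \fail$ — which happens exactly when the collecting $\symvalue$ is invoked on a $\mfSymbolic$-annotated variable — the matching abstract operation ($\asymvalue$, hence $\asymvalue^*$, hence the whole abstract evaluation) returns $\afail$, whose concretization is the set of all concrete values, so the conclusion is immediate; in the remaining cases we may therefore assume no configuration fails.

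For the base case of a value $\progexpr = \val$, the abstract value rule gives $(\abstr(\{\val\}), \abstr(\{\symbstate\}))$, and $\val \in \concret(\abstr(\{\val\}))$, $\symbstate \in \concret(\abstr(\{\symbstate\}))$ follow from a routine sub-induction on value syntax using the definitions of $\abstr$ and $\concret$ (with $\rvnames$ as a valid choice of $\rvname{}$); here $\evalset' = \evalset$. For $\mfApp{f}{\val}$ and $\mfLetIn{\programvar}{\progexpr_1}{\progexpr_2}$, one unfolds the definition (resp.\ evaluates the first subexpression and substitutes its value) and appeals to the induction hypothesis, using \Cref{lem:substitution} to absorb the fact that the abstract value over-approximates the concrete one, and \Cref{lem:weakening}/\Cref{lem:abs-weakening} to re-establish the reachability bookkeeping once the symbolic state has grown. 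For the pure non-constant conditional producing an $\ite$ expression, the induction hypothesis is applied to each of $\progexpr_1, \progexpr_2$ in turn, threading the symbolic state through with the weakening lemmas, and monotonicity of $\concret$ gives that $\concret((\aite{\aval}{\aval_1}{\aval_2}, \asymbstate{}'))$ contains the concrete $\ite{\val}{\val_1}{\val_2}$; the weak-equivalence witness $\evalset'$ is assembled from the per-branch witnesses.

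The delicate cases are the constant-folded conditional, $\mfFold$, and the hybrid-inference primitives. For $\mfIf{\val}{\progexpr_1}{\progexpr_2}$ evaluated via $\symvalue^*$: if $\asymvalue^*$ returns a definite $\atrue$ or $\afalse$, the abstract evaluation follows the same branch that every concrete execution takes, and the induction hypothesis applies directly; the interesting case is $\asymvalue^*$ returning $\cunk$, so the abstract evaluation interprets \emph{both} branches and combines the results with $\narrowjoin$, while the collecting concrete evaluation may produce configurations from either branch. Here \Cref{lem:join} supplies abstract states weakly equivalent to the two branch outcomes whose concretizations are contained in $\concret(\narrowjoin(\cdots))$, and combining this with \Cref{lem:abs-weakening} yields the required weakly-equivalent $\evalset'$ — this is precisely where a genuinely different $\evalset'$, rather than $\evalset$ itself, becomes necessary, because $\narrowjoin$ may introduce abstract random variables unreachable from the joined expression. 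For $\mfFold$, the $\mfNil$ case is immediate and the $\mfCons$ case desugars to a $\mfLetIn$ around a recursive $\mfFold$, reducing to already-handled cases; crucially, because $(\progexpr, \symbstate)$ is a genuine concrete particle its list argument is a constant list, so its abstraction is an abstract $\mfCons$-list and only the syntax-directed abstract $\mfFold$ rules (not the fixpoint rule) are needed. For $\mfLetRv{\annotation}{\programvar}{\mfOp{\val}}{\progexpr}$ and $\mfObserve{\mfOp{\val_1}}{\val_2}$, Assumption~\ref{assumption:syminterface} gives that the concrete results of $\symassume$, $\symvalue$, $\symobserve$ lie in the concretizations of their abstract counterparts applied to abstracted inputs; one then substitutes the freshly allocated random-variable name and applies the induction hypothesis, taking care that a $\rvname{}$ witnessing the concretization can be extended to send the fresh concrete variable to the fresh abstract one.

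The main obstacle I expect is keeping the reachability bookkeeping coherent across the whole induction: tracking how unreachable abstract random variables accumulate through $\narrowjoin$ and through the interface operations, and choosing at each step the right weakly-equivalent $\evalset'$ so that the concrete/abstract correspondences compose. This is exactly what the concrete and abstract weakening lemmas (\Cref{lem:weakening}, \Cref{lem:abs-weakening}) and the rename-join lemma (\Cref{lem:join}) are engineered to handle, so the effort lies in invoking them with the correct witnesses rather than in any new idea; the namespace threading in the random-variable and observe cases is a secondary but similarly bookkeeping-heavy point.
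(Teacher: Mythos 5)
Your proposal is correct and follows essentially the same route as the paper's proof: a reduction to a per-configuration statement followed by structural induction on the concrete evaluation derivation, discharging the value/application/let cases with the substitution and weakening lemmas, the $\cunk$-conditional case with the rename-join and abstract-weakening lemmas, and the $\mfLetRv$/$\mfObserve$ cases with Assumption~\ref{assumption:syminterface}. The additional observations you make (the $\fail$/$\afail$ short-circuiting, and that a concrete particle's list argument abstracts to a syntactic cons-list so the fold fixpoint rule is not needed) are consistent with, and slightly more explicit than, the paper's treatment.
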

\begin{proof}
  By definition, if $\forall_{(\progexpr, \symbstate, \doresample) \in \evalset}$, there exists $\symbstate'$ such that $\weakeq{\symbstate}{\symbstate'}{\progexpr}$ and $\{(\progexpr, \symbstate') \} \subseteq \concret((\aval, \asymbstate{}))$, then there exists $\evalset'$ such that $\weakeq{\evalset}{\evalset'}{}$ and 
  $\forgetr(\evalset') \subseteq \concret((\aval, \asymbstate{}))$. Thus, it suffices to show that for each particle, $(\progexpr, \abstr(\{\symbstate\}) \;\abs{\downarrow}\; \aval', \asymbstate{}')$ and there exists $\symbstate'$ such that $\weakeq{\symbstate}{\symbstate'}{\aval'}$ and $\{(\aval', \symbstate') \} \subseteq \concret((\aval', \asymbstate{}'))$.
  
  By structural induction on derivations of $\downarrow$, with each case using the definitions of abstraction, concretization, and interpretation rules, with appeal to \Cref{assumption:syminterface} and \Cref{lem:join,lem:weakening,lem:abs-weakening,lem:substitution}.
  The cases are as follows:
  \begin{itemize}
    \item \pclstep{v}{g}{v}{g}{1}{\false}: By definition.
    \item \pclstep{\mfApp{f}{v}}{g}{v'}{g'}{w}{\false}: By inductive hypothesis, using \Cref{lem:substitution} to ensure the premise of the inductive hypothesis holds.
    \item \pclstep{\mfIf{v}{e_1}{e_2}}{g}{\ite{v}{v_1}{v_2}}{g'}{1}{\false}: By inductive hypothesis.
    \item \pclstep{\mfIf{v}{e_1}{e_2}}{g}{v_1}{g'}{w_1}{\false}: We have $\asymvalue^*(\av{},\asymbstate{}) = \av{}', \asymbstate{v}$. If $\av{}' = \true$, by inductive hypothesis and \Cref{assumption:syminterface}. Otherwise, by \Cref{lem:join}, \Cref{lem:weakening}, and inductive hypothesis. 
    \item \pclstep{\mfIf{v}{e_1}{e_2}}{g}{v_2}{g'}{w_2}{\false}: We have $\asymvalue^*(\av{},\asymbstate{}) = \av{}', \asymbstate{v}$. If $\av{}' = \false$, by inductive hypothesis and \Cref{assumption:syminterface}. Otherwise, by \Cref{lem:join}, \Cref{lem:weakening}, and inductive hypothesis. 
    \item \pclstep{\mfLetIn{x}{e_1}{e_2}}{g}{v_2}{g_2}{w_1*w_2}{\false}: By inductive hypothesis and \Cref{lem:substitution}.
    \item \pclstep{\mfFold{f}{\mfNil}{v}}{g}{v}{g}{1}{\false}: By definition.
    \item \pclstep{\mfFold{f}{\mfCons{l_{\mit{hd}}}{l_{\mit{tl}}}}{v}}{g}{v'}{g'}{w}{\false}: By inductive hypothesis.
    \item \pclstep{\mfLetRv{\annotation}{x}{\distop(v)}{e}}{g}{v}{g'}{w}{\false}: By inductive hypothesis and \Cref{assumption:syminterface,lem:substitution}.
    \item \pclstep{\mfLetRv{\mfSample}{x}{\distop(v)}{e}}{g}{v}{g'}{w}{\false}: By inductive hypothesis and \Cref{assumption:syminterface,lem:substitution}.
    \item \pclstep{\mfObserve{\distop(v_1)}{v_2}}{g}{\mfUnit}{g'}{w}{\false}: By inductive hypothesis and \Cref{assumption:syminterface}.
  \end{itemize}
\end{proof}

The next lemma proves that resuming particle evaluation from a configuration preserves soundness.

\begin{lemma}[Preservation]
\label{lem:preservation}
If $(\pclstep{\progexpr}{\symbstate}{\progexpr'}{\symbstate'}{\weight}{\doresample})$, then there exists abstract values $\aval, \aval'$ and abstract symbolic states  $\asymbstate{}, \asymbstate{}', \asymbstate{}''$ such that 1)\ $(\apclstep{\progexpr}{\abstr(\{\symbstate\})}{\aval}{\asymbstate{}}) \iff (\apclstep{\progexpr'}{\abstr(\{\symbstate'\})}{\aval'}{\asymbstate{}'})$, 2)\ $\weakeq{\asymbstate{}'}{\asymbstate{}''}{\aval'}$, and 3)\ $\concret((\aval', \asymbstate{}'')) \subseteq \concret((\aval, \asymbstate{}))$.
\end{lemma}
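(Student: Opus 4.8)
The plan is to prove \Cref{lem:preservation} by structural induction on the derivation of the concrete particle-evaluation step $\pclstep{\progexpr}{\symbstate}{\progexpr'}{\symbstate'}{\weight}{\doresample}$, with a case analysis on the last rule applied. Throughout, a judgment $\apclstep{\progexpr}{\abstr(\{\symbstate\})}{\aval}{\asymbstate{}}$ written with a concrete expression $\progexpr$ is read as abstract evaluation of the expression obtained by renaming every concrete random variable $X$ occurring in $\progexpr$ to $\rvnames(X)$, so that $\abstr$ acts uniformly on expressions and on symbolic states. The argument relies on three ingredients established earlier: \Cref{assumption:syminterface} for the interface operations $\symassume$, $\symvalue$, $\symobserve$ (together with their $\fail$/$\afail$ behaviour on $\mfSymbolic$-annotated variables, noting that $\afail$ is the top abstract value so any concretization inclusion against it is vacuous); the substitution-soundness lemma \Cref{lem:substitution}, invoked whenever a concrete reduction substitutes a value for a bound variable; and the weakening lemmas \Cref{lem:weakening} and \Cref{lem:abs-weakening} together with the rename--join lemmas \Cref{lem:rename} and \Cref{lem:join}, which justify the ``extra unreachable variables'' slack recorded by the third state $\asymbstate{}''$ in the statement.

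For the structural reductions the argument is routine. For a value, $\mfResample$, and $\mfFold{f}{\mfNil}{\val}$, redex and contractum coincide and conditions (1)--(3) are immediate. For $\mfApp{f}{\val}$, for a completed or interrupted $\mfLetIn{\programvar}{\progexpr_1}{\progexpr_2}$, for $\mfLetRv{\annotation}{\programvar}{\cdot}{\cdot}$ in each of its three forms, for $\mfObserve{\cdot}{\cdot}$, and for $\mfFold{f}{\mfCons{\listhd}{\listtl}}{\val}$, the matching abstract rule decomposes evaluation of the redex into exactly the sub-evaluations that the concrete rule performs toward the contractum; we apply the inductive hypothesis to those sub-evaluations, thread the resulting concretization inclusions through \Cref{lem:substitution} to handle the variable substitutions, and invoke \Cref{assumption:syminterface} at each interface call. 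The only slightly delicate point is the interrupted $\mfLetIn$ rule, whose contractum is the suspended term $\mfLetIn{\programvar}{\progexpr_1'}{\progexpr_2}$: here we check that the abstract $\mfLetIn$ rule on the suspended term reproduces, up to weak equivalence of abstract states (\Cref{lem:abs-weakening}), the abstract evaluation of the body that the abstract evaluation of the original $\mfLetIn$ performs.

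\textbf{Main obstacle.} The hard cases are the control-flow-resolving rules: the two concrete $\mfIf$ rules that commit to a branch according to $\symvalue^*$, the pure $\mfIf$ rule that contracts to $\ite{\val}{\val_1}{\val_2}$, and the $\mfFold$-unrolling rule, since the matching abstract rule over-approximates by evaluating \emph{both} branches (resp.\ by an abstract fixpoint) and combining the results with $\narrowjoin$. Conditions (2) and (3) must absorb this gap: I must show the abstract result of evaluating the \emph{chosen} branch --- or, in the pure case, the abstract value $\aite{\cdot}{\cdot}{\cdot}$ built from the already-reduced branch values --- is concretization-contained in the $\narrowjoin$ of the two branch results. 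This follows from \Cref{lem:join}, but only after accounting for the fact that $\narrowjoin$ may rename abstract random variables; that is exactly why the statement quantifies over a weakly equivalent $\asymbstate{}''$ rather than demanding $\asymbstate{}' = \asymbstate{}''$, and \Cref{lem:rename} ensures the renaming preserves concretizations. A subcase worth isolating: when $\symvalue^*$ in the concrete rule would force a $\mfSymbolic$-annotated variable (so the collecting semantics yields $\fail$), the abstract $\asymvalue^*$ returns $\afail$, making every inclusion in (3) vacuous, so the abstract side is automatically consistent. Finally, the biconditional (1) is discharged inside the same case analysis: the abstract rules for compound terms are driven by the same sub-evaluations in redex and contractum --- and in the over-approximated $\mfIf$ rule both branches are evaluated regardless of which one the concrete step picks --- so abstract convergence of one side yields abstract convergence of the other.
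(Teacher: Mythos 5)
Your proposal is correct and follows essentially the same route as the paper: structural induction on the concrete derivation, with the interface-soundness assumption, the substitution lemma, and the weakening/rename/join lemmas doing the work, and with the $\narrowjoin$-over-approximation in the branching and fold rules identified as the crux and absorbed exactly as the paper does via conditions (2) and (3). The only organizational difference is that the paper dispatches all $\doresample=\false$ cases wholesale by appealing to the Terminating Particle Evaluation Soundness lemma rather than re-deriving them inline, but this does not change the substance of the argument.
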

\begin{proof}
By structural induction on derivations of $\downarrow$, by definition of abstraction, concretization, and interpretation rules, with appeal to \Cref{lem:terminating-particle,lem:join,lem:substitution,lem:abs-weakening}

The cases are as follows:
\begin{itemize}
  \item For all $\doresample = \false$ cases, we apply \Cref{lem:terminating-particle}.
  \item \pclstep{\mfResample}{g}{\mfUnit}{g}{w}{\true}: By definition.
  \item \pclstep{v}{g}{v}{g}{1}{\false}: By definition.
  \item \pclstep{\mfApp{f}{v}}{g}{e'}{g'}{w}{r}: 
  \begin{itemize}
    \item Destructing definitions yields $(\pclstep{e[x \leftarrow v]}{g}{e'}{g'}{w}{r})$.
    \item $(\Rightarrow)$: Destructing definitions yields $(\apclstep{e[x \leftarrow v]}{\abstr(\{g\})}{\aval}{\asymbstate{}})$. By inductive hypothesis. %
    \item $(\Leftarrow)$: By inductive hypothesis, we have $(\apclstep{e[x \leftarrow v]}{\abstr(\{g\})}{\aval}{\asymbstate{}})$. The remainder follows from definitions.
  \end{itemize}
  \item \pclstep{\mfIf{v}{e_1}{e_2}}{g}{\ite{v}{v_1}{v_2}}{g'}{1}{\false}: 
  \begin{itemize}
    \item Destructing definitions yields $(\pclstep{e_1}{g}{v_1}{g_1}{1}{\false})$ and $(\pclstep{e_2}{g_1}{v_2}{g'}{1}{\false})$. 
    \item $(\Rightarrow)$: Follows from definitions.
    \item $(\Leftarrow)$: By inductive hypothesis, $(\apclstep{e_1}{\abstr(g)}{\av{1}}{\asymbstate{1}})$ and $(\apclstep{e_2}{\abstr(g_1)}{\av{2}}{\asymbstate{}'})$. The remainder follows from definitions. 
  \end{itemize}
  \item \pclstep{\mfIf{v}{e_1}{e_2}}{g}{e_1'}{g'}{w_1}{\true}:
  \begin{itemize}
    \item Destructing definitions yields $\symvalue^*(v, g) = \true, \symbstate_v$ and $(\pclstep{e_1}{g_v}{e_1'}{g'}{w_1}{\true})$.
    \item $(\Rightarrow)$: Destructing definitions yields 2 cases: 
    \begin{itemize}
      \item Case $\asymvalue^*(v, g) = \true, \asymbstate{v}$ and $(\apclstep{e_1}{\asymbstate{v}}{\aval_1}{\asymbstate{}'})$: Follows from the inductive hypothesis.
      \item Case $\asymvalue^*(v, g) = \aval', \asymbstate{v}$ and $(\apclstep{e_1}{\asymbstate{v}}{\aval_1}{\asymbstate{1}'})$ and $(\apclstep{e_2}{\asymbstate{v}}{\aval_2}{\asymbstate{2}'})$ and \\$\narrowjoin(\aval_1, \aval_2, \asymbstate{1}', \asymbstate{2}') = \aval, \asymbstate{}$: By inductive hypothesis, we have $(\apclstep{e_1'}{\abstr(\{\symbstate'\})}{\aval_1}{\asymbstate{1}'})$ and $(\apclstep{e_2'}{\abstr(\{\symbstate'\})}{\aval_2}{\asymbstate{2}'})$. Then, apply \Cref{lem:abs-weakening,lem:join}.
    \end{itemize}
    \item $(\Leftarrow)$: By inductive hypothesis, we have that $(\apclstep{e_1}{\abstr(\{g_v\})}{\aval'}{\asymbstate{}'})$. The remainder follows from the definition of true branch evaluation.
  \end{itemize}
  \item \pclstep{\mfIf{v}{e_1}{e_2}}{g}{e_2'}{g'}{w_2}{\true}: Symmetric to the previous case.
  \item \pclstep{\mfLetIn{x}{e_1}{e_2}}{g}{\mfLetIn{x}{e_1'}{e_2}}{g'}{w}{\true}:
  \begin{itemize}
    \item Destructing definitions yields $(\pclstep{e_1}{g}{e_1'}{g'}{w}{\true})$. 
    \item $(\Rightarrow)$: Destructing definitions yields $(\apclstep{e_1}{\abstr(\{g\})}{\av{1}}{\asymbstate{1}})$ and $(\apclstep{e_2[x \leftarrow \av{1}]}{\asymbstate{1}}{\aval}{\asymbstate{}})$. By inductive hypothesis, we have that $(\apclstep{e_1'}{\asymbstate{}'}{\av{1}}{\asymbstate{1}})$. The remainder follows from definitions. 
    \item $(\Leftarrow)$: Destructing definitions yields $(\apclstep{e_1'}{\abstr(\{g'\})}{\av{1}}{\asymbstate{1}})$ and $(\apclstep{e_2[x \leftarrow \av{1}]}{\asymbstate{1}}{\aval}{\asymbstate{}})$. By inductive hypothesis, $(\apclstep{e_1}{\abstr(\{g\})}{\av{1}}{\asymbstate{1}})$. The remainder follows from definitions.  
  \end{itemize}
  \item \pclstep{\mfLetIn{x}{e_1}{e_2}}{g}{e_2'}{g_2}{w_1*w_2}{\true}:
  \begin{itemize}
    \item Destructing definitions yields $(\pclstep{e_1}{g}{v_1}{g_1}{w_1}{\false})$ and $(\pclstep{e_2[x\leftarrow v_1]}{g_1}{e_2'}{g_2}{w_2}{\true})$.
    \item $(\Rightarrow)$: Destructing definitions yields $(\apclstep{e_1}{\abstr(\{g\})}{\av{1}}{\asymbstate{1}})$ and $(\apclstep{e_2[x \leftarrow \av{1}]}{\asymbstate{1}}{\aval}{\asymbstate{}})$. The remainer follows from applying inductive hypothesis. %
    \item $(\Leftarrow)$: By inductive hypothesis, we have that $(\apclstep{e_2[x \leftarrow v_1]}{\abstr(\{g_1\})}{\aval}{\asymbstate{}})$. Then, by \Cref{lem:terminating-particle}, we have that $(\apclstep{e_1}{\abstr(\{g\})}{\av{1}}{\asymbstate{1}})$. The remainder follows from applying \Cref{lem:substitution}.
  \end{itemize}
  \item \pclstep{\mfFold{f}{\mfCons{\listhd}{\listtl}}{\val}}{\symbstate}{\progexpr'}{\symbstate'}{w}{\true}:
  \begin{itemize}
    \item Destructing definitions yields $\pclstep{\mfLetIn{x}{\mfApp{f}{\mfPair{l_{\mit{hd}}}{v}}}{\mfFold{f}{l_{\mit{tl}}}{x}}}{g}{e'}{g'}{w}{\true}$.
    \item $(\Rightarrow)$: Destructing definitions yields 2 cases
    \begin{itemize}
      \item Case $\pclstep{\mfApp{f}{\mfPair{l_{\mit{hd}}}{v}}}{g}{e'}{g'}{w}{\true}$: Destructing definitions yields $\apclstep{\mfApp{f}{\mfPair{l_{\mit{hd}}}{v}}}{\abstr(\{g\})}{\av{f}}{\asymbstate{f}}$ and $\apclstep{\mfFold{f}{\listtl}{\aval_f}}{\asymbstate{f}}{\aval}{\asymbstate{}}$. By inductive hypothesis, we have that $\apclstep{e'}{\abstr(\{g'\})}{\av{f}'}{\asymbstate{f}'}$. The remainder follows from \Cref{lem:substitution,lem:abs-weakening}.
      \item Case $\pclstep{\mfFold{f}{l_{tl}}{v_f}}{g_f}{e'}{g'}{w}{\true}$: By inductive hypothesis.
    \end{itemize}
    \item $(\Leftarrow)$: By the inductive hypothesis, $\apclstep{\mfLetIn{x}{\mfApp{f}{\mfPair{l_{\mit{hd}}}{v}}}{\mfFold{f}{l_{\mit{tl}}}{x}}}{g}{\aval}{\asymbstate{}}$. Destructing definitions yields $\apclstep{\mfApp{f}{\mfPair{l_{\mit{hd}}}{v}}}{g}{\av{f}}{\asymbstate{f}}$ and $\apclstep{\mfFold{f}{l_{\mit{tl}}}{\av{f}}}{\asymbstate{f}}{\aval}{\asymbstate{}}$. 
  \end{itemize}
  \item \pclstep{\mfLetRv{\annotation}{x}{\distop(v)}{e}}{g}{e'}{g'}{w}{\true}: symmetric to the case for let binding.
  \item \pclstep{\mfLetRv{\mfSample}{x}{\distop(v)}{e}}{g}{e'}{g'}{w}{\true}: symmetric to the case for let binding.
\end{itemize}
\end{proof}

Next, we show that the analysis is sound for evaluating any particle until termination. 

\begin{lemma}[Particle Evaluation Soundness]
\label{lem:particle}
For every particle $(\progexpr, \symbstate)$, such that $(\cpclstepstar{\progexpr, \symbstate}{\evalset})$, we have
$(\progexpr, \abstr(\{\symbstate\}) \;\abs{\downarrow}\; \aval, \asymbstate{})$ and a configuration set $\evalset'$ such that $\weakeq{\evalset}{\evalset'}{}$ and 
$\forgetr(\evalset')  \subseteq \concret((\aval, \asymbstate{}))$.
\end{lemma}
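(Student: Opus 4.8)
The plan is to prove the lemma by structural induction on the derivation of $\cpclstepstar{\progexpr, \symbstate}{\evalset}$, following the three rules of Definition~\ref{def:particle-termination}. Morally, the statement is the ``run to termination'' closure of two facts already in hand: Lemma~\ref{lem:terminating-particle}, which gives soundness when a particle completes without hitting a checkpoint, and Lemma~\ref{lem:preservation}, which says a single resample-interrupted step preserves the soundness relation (abstractly $\mfResample{}$ is a no-op, so the abstract evaluation is unchanged up to weak equivalence). Consequently the induction need not revisit the individual expression forms --- those are handled inside the two invoked lemmas.

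For the base rule --- $\cpclstep{\progexpr, \symbstate}{\evalset}$ with every configuration terminated ($\neg\doresample$) and none equal to $\fail$ --- the hypotheses of Lemma~\ref{lem:terminating-particle} hold verbatim, so I would invoke it directly to obtain $(\progexpr, \abstr(\{\symbstate\}) \;\abs{\downarrow}\; \aval, \asymbstate{})$ and a witness set $\evalset'$ with $\weakeq{\evalset}{\evalset'}{}$ and $\forgetr(\evalset') \subseteq \concret((\aval, \asymbstate{}))$. For the $\fail$ rule --- some configuration of $\evalset$ is $\fail$, and $\cpclstepstar{\progexpr, \symbstate}{\set{(\fail,\fail,\fail)}}$ --- I would first establish an auxiliary claim: whenever there is a collecting derivation $\pclstep{\progexpr}{\symbstate}{\fail}{\symbstate'}{w}{\doresample}$ (necessarily with no internal checkpoint, since a derivation halts at its first $\mfResample{}$), then $(\progexpr, \abstr(\{\symbstate\}) \;\abs{\downarrow}\; \afail, \asymbstate{})$ for some $\asymbstate{}$. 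This follows by structural induction on that derivation, paralleling the proof of Lemma~\ref{lem:terminating-particle} but also covering the case where $\symvalue$ is applied to a $\mfSymbolic$-annotated variable, and using that $\afail$ absorbs under every abstract operation, including $\narrowjoin$ (Lemma~\ref{lem:join}). Since $\afail$ is the top element, $\concret((\afail, \asymbstate{}))$ contains $(\fail, \fail)$, so $\evalset' = \set{(\fail,\fail,\fail)}$ discharges the case.

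The substantive case is the resume rule: $\cpclstep{\progexpr, \symbstate}{\evalset}$, at least one configuration carries a true resample flag, none is $\fail$, and $\evalset''$ is the union, over $(\progexpr_i, \symbstate_i, \doresample_i) \in \evalset$, of the results of $\cpclstepstar{\progexpr_i, \symbstate_i}{\evalset^{(i)}}$. For each $i$ there is an underlying step $\pclstep{\progexpr}{\symbstate}{\progexpr_i}{\symbstate_i}{w_i}{\doresample_i}$; the induction hypothesis applied to $\cpclstepstar{\progexpr_i, \symbstate_i}{\evalset^{(i)}}$ yields $(\progexpr_i, \abstr(\{\symbstate_i\}) \;\abs{\downarrow}\; \aval_i, \asymbstate{i})$ and a config set weakly equivalent to $\evalset^{(i)}$ whose forgetful image lies in $\concret((\aval_i, \asymbstate{i}))$. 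Because these abstract evaluations terminate, Lemma~\ref{lem:preservation} (applied to each concrete step) produces a common abstract result $(\progexpr, \abstr(\{\symbstate\}) \;\abs{\downarrow}\; \aval, \asymbstate{})$ --- the same pair for every $i$, since abstractly $\mfResample{}$ is a no-op and the $\mkw{if}$ and $\mkw{fold}$ rules merge all branches into one value via $\narrowjoin$ --- together with, for each $i$, a state $\asymbstate{i}'$ weakly equivalent to $\asymbstate{i}$ relative to $\aval_i$ with $\concret((\aval_i, \asymbstate{i}')) \subseteq \concret((\aval, \asymbstate{}))$. Chaining the two inclusions --- bridging $\concret((\aval_i, \asymbstate{i}))$ and $\concret((\aval_i, \asymbstate{i}'))$ by adjusting only unreachable variables, which weak equivalence permits, in the spirit of Lemma~\ref{lem:abs-weakening} --- gives, for each $i$, a set weakly equivalent to the one from the induction hypothesis whose forgetful image lies in $\concret((\aval, \asymbstate{}))$; the union over $i$ yields the required $\evalset'$ with $\weakeq{\evalset''}{\evalset'}{}$ and $\forgetr(\evalset') \subseteq \concret((\aval, \asymbstate{}))$.

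I expect the main obstacle to be the bookkeeping in this last case: the collecting semantics packs every control-flow branch --- some already terminated, some suspended at a checkpoint, with possibly differently shaped symbolic states --- into the single set $\evalset$ and then unions their continuations, whereas the abstract semantics commits to one $(\aval, \asymbstate{})$. Showing that this one abstract pair uniformly over-approximates the whole union requires carefully threading the weak-equivalence relation on symbolic states (to absorb the unreachable abstract random variables introduced by $\narrowjoin$) through both the one-step abstraction of Lemma~\ref{lem:preservation} and the iterated applications of the induction hypothesis, and confirming that configurations with a false resample flag --- which Lemma~\ref{lem:preservation} does not itself re-step --- are already subsumed, since their abstract evaluation is the identity step inside the same $\narrowjoin$-merged derivation that Preservation analyzes.
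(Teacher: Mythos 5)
Your proposal is correct and follows essentially the same route as the paper's proof: structural induction on $\tilde{\downarrow}^*$, discharging the two base rules via Lemma~\ref{lem:terminating-particle} and the resume rule by combining the induction hypothesis on each continuation with Lemma~\ref{lem:preservation}, then unioning the resulting configuration sets under weak equivalence. Your extra auxiliary claim for the $\fail$ base case is a more careful elaboration of a step the paper compresses into a direct appeal to Lemma~\ref{lem:terminating-particle}, but it does not change the structure of the argument.
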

\begin{proof}

  By structural induction on $\tilde{\downarrow}^*$. For the two base case rules, apply \Cref{lem:terminating-particle}. 

  In the inductive case with the third rule, 
  we have that $(\cpclstep{\progexpr, \symbstate}{\evalset^1})$. 
  Also, for each $(\progexpr', \symbstate') \in \forgetr(\evalset^1)$, there exists $\evalset^2$ such that $(\cpclstepstar{\progexpr', \symbstate'}{\evalset^2})$. 
  By IH, we have that $(\apclstep{\progexpr'}{\abstr(\{\symbstate'\})}{\aval'}{\asymbstate{}'})$ and there is a configuration set $\evalset^{2\prime}$ such that $\weakeq{\evalset^2}{\evalset^{2\prime}}{}$ and $\forgetr(\evalset^{2\prime}) \subseteq \concret((\aval', \asymbstate{}'))$. 

  By \Cref{lem:preservation}, for each possible particle evaluation included in the collecting particle evaluation, we have $(\apclstep{\progexpr}{\abstr(\{\symbstate\})}{\aval}{\asymbstate{}})$ and there exists $\asymbstate{}^{1\prime}$ such that $\weakeq{\asymbstate{}'}{\asymbstate{}^{1\prime}}{\aval'}$ and $\concret((\aval', \asymbstate{}^{1\prime})) \subseteq \concret((\aval, \asymbstate{}))$. Then, there is a configuration set $\evalset^{2\prime\prime}$ that accounts the additional abstract variables in $\asymbstate{}^{1\prime}$ that are not in $\asymbstate{}'$ such that $\weakeq{\evalset^{2\prime}}{\evalset^{2\prime\prime}}{}$ and $\forgetr(\evalset^{2\prime\prime}) \subseteq \concret((\aval', \asymbstate{}^{1\prime})) \subseteq \concret((\aval, \asymbstate{}))$. 

  Since $\evalset$ is the union of the $\evalset^2$ resulting from each $(\progexpr', \symbstate') \in \forgetr(\evalset^1)$, we have $\evalset' = \bigcup \evalset^{2\prime\prime}$ such that $\weakeq{\evalset}{\evalset'}{}$ and $\forgetr(\evalset') \subseteq \concret((\aval, \asymbstate{}))$.

\end{proof}

Finally, we show that the analysis is sound with respect to evaluating sets of particles.

\begin{theorem}[Particle Set Evaluation Soundness]
    \label{thm:particle-set}
    For every particle set $\pset$, and distribution set $\dset$ such that $(\cpclsstep{\pset}{\dset})$, we have that \apclsstep{\set{\progexpr, \abstr(\{\symbstate\}) \;|\; (\progexpr, \symbstate) \in \pset}}{\adistr{}} and $\dset \subseteq \concret(\adistr{})$.
\end{theorem}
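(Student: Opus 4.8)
The plan is to take as the witness abstract distribution exactly the one produced by the abstract particle set evaluation rule of \Cref{fig:abstract-interp-full}, and then discharge the inclusion $\dset \subseteq \concret(\adistr{})$ pointwise, tracing each concrete output distribution back to a single particle. Concretely, set $\apset = \set{\progexpr, \abstr(\{\symbstate\}) \;|\; (\progexpr, \symbstate) \in \pset}$. The derivation of $\cpclsstep{\pset}{\dset}$ already witnesses that every particle of $\pset$ terminates under the collecting semantics, so \Cref{lem:particle} applies to each and yields an abstract evaluation $\apclstep{\progexpr}{\abstr(\{\symbstate\})}{\aval}{\asymbstate{}}$; hence the abstract particle set rule applies to $\apset$ and gives $\apclsstep{\apset}{\adistr{}}$ with $\adistr{} = \bigsqcup_{\adistr{}_i \in \adset} \adistr{}_i$, where $\adset = \set{\adistribution(\aval, \asymbstate{}') \;|\; (\progexpr, \asymbstate{}) \in \apset,\ \apclstep{\progexpr}{\asymbstate{}}{\aval}{\asymbstate{}'}}$.

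Next I would fix an arbitrary $\distr{} \in \dset$ and apply \Cref{lem:particle-trace} to obtain a particle $(\progexpr, \symbstate) \in \pset$ with $\cpclstepstar{\progexpr, \symbstate}{\evalset}$ and $\distr{} = \distribution(\val, \symbstate_0)$ for some configuration $(\val, \symbstate_0, \doresample) \in \evalset$. Applying \Cref{lem:particle} to this particle gives an abstract result $\apclstep{\progexpr}{\abstr(\{\symbstate\})}{\aval}{\asymbstate{}}$ together with a configuration set $\evalset'$ satisfying $\weakeq{\evalset}{\evalset'}{}$ and $\forgetr(\evalset') \subseteq \concret((\aval, \asymbstate{}))$. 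From the weak equivalence I extract a configuration $(\val, \symbstate_0', \doresample) \in \evalset'$ with $\weakeq{\symbstate_0}{\symbstate_0'}{\val}$; since $\distribution(\val, \cdot)$ depends only on the random variables reachable from $\val$, this yields $\distr{} = \distribution(\val, \symbstate_0')$ with $(\val, \symbstate_0') \in \forgetr(\evalset') \subseteq \concret((\aval, \asymbstate{}))$. Invoking soundness of $\distribution$/$\adistribution$ (a consequence of, or a routine addition to, \Cref{assumption:syminterface}, proved exactly as for the other interface operations), $\distr{} = \distribution(\val, \symbstate_0') \in \concret(\adistribution(\aval, \asymbstate{}))$. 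Finally, because $(\progexpr, \abstr(\{\symbstate\})) \in \apset$, the term $\adistribution(\aval, \asymbstate{})$ is one of the distributions joined into $\adistr{}$, so $\adistribution(\aval, \asymbstate{}) \le \adistr{}$; monotonicity of $\concret$ with respect to the partial order (immediate from the definitions of the orders and of concretization) then gives $\distr{} \in \concret(\adistr{})$. As $\distr{}$ was arbitrary, $\dset \subseteq \concret(\adistr{})$.

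The degenerate case $\dset = \set{\fail}$, coming from the failure rule of the collecting particle set semantics, is handled uniformly by this argument: \Cref{lem:particle-trace} still returns a particle that evaluates under $\tilde{\downarrow}^{*}$ to $\set{(\fail, \fail, \fail)}$, and then $(\fail, \symbstate_0') \in \concret((\aval, \asymbstate{}))$ forces $\aval = \afail$ because $\fail$ lies only in $\concret(\afail)$; since $\afail$ propagates through $\adistribution$ and through every join we get $\adistr{} = \afail$, and $\dset \subseteq \concret(\afail)$ holds trivially. Specializing the theorem to the singleton particle set $\{(\progexpr, \emptyset)\}$ (with $\abstr(\{\emptyset\}) = \emptyset$) then immediately yields the Model Evaluation Soundness corollary.

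Most of this is bookkeeping; the substantive content lives in \Cref{lem:particle-trace} and \Cref{lem:particle} (the latter resting on \Cref{lem:preservation} and \Cref{lem:terminating-particle}), all of which may be assumed here. The step I expect to require the most care is the interaction between $\distribution$ and weak equivalence: one must establish that $\distribution$, and likewise $\adistribution$, ignores the unreachable random variables that state joins introduce into abstract symbolic states (and, symmetrically, that the concretization of an abstract state may carry along), so that replacing $\symbstate_0$ by the weakly equivalent $\symbstate_0'$ and then passing to its concretization are both harmless. Pinning down that lemma, and phrasing the soundness of $\distribution$/$\adistribution$ in the style of \Cref{assumption:syminterface}, is where the real (if modest) work sits; everything else is the routine Galois-connection reasoning of~\citet{cousot1977abstract}.
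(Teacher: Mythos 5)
Your proposal follows essentially the same route as the paper's own proof: invoke \Cref{lem:particle-trace} to pull each $\distr{}\in\dset$ back to a source particle, apply \Cref{lem:particle} to that particle, and then conclude via the abstract particle-set rule and the definition of concretization. The paper compresses your last two paragraphs into a single ``by definition'' step, so your explicit treatment of the soundness of $\distribution$/$\adistribution$, the monotonicity of $\concret$ under the join, the weak-equivalence bookkeeping, and the $\fail$ case is a faithful (and somewhat more careful) expansion of the same argument rather than a different one.
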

\begin{proof}

    By \Cref{lem:particle-trace}, for each $\distr{} \in \dset$, there exists $(\progexpr, \symbstate) \in \pset$ such that (\cpclstepstar{\progexpr, \symbstate}{\evalset}) and $\distr{} \in \set{\distribution(\val,\symbstate) \;|\; (\val,\symbstate,\doresample) \in \evalset}$. By \Cref{lem:particle}, we have $(\progexpr, \abstr(\{\symbstate\}) \;\abs{\downarrow}\; \aval, \asymbstate{})$ and a configuration set $\evalset'$ such that $\weakeq{\evalset}{\evalset'}{}$ and $\forgetr(\evalset')  \subseteq \concret((\aval, \asymbstate{}))$. By definition, $\apclsstep{\set{\progexpr, \abstr(\{\symbstate\}) \;|\; (\progexpr, \symbstate) \in \pset}}{\adistr{}}$ and $\set{\distr{}} \subseteq \concret(\adistr{})$. Then, $\dset \subseteq \concret(\adistr{})$.
\end{proof}

\section{Benchmarks}
\label{appendix:benchmarks}

\subsection{Additional Benchmark Descriptions}
We evaluated on the following benchmarks with multiple inference plans from \citet{atkinson2022semi} and \citet{baudart2020reactive}:

\bOutlier{} models a one-dimensional particle filter where there might be sensor errors producing outlier observations. The hidden state is modeled by Gaussian distributions (\zl{xt}), the sensor error rate as a Beta prior (\zl{outlier_prob}) to a Bernoulli, and observations are made on Gaussian distributions. This program was implemented by~\citet{atkinson2022semi} and adapted from~\citet{minka2013expectation}.

\bGtree{} is a particle filter that makes an observation centered around a single random variable modeled by a Gaussian distribution (\zl{a}) and another centered around random variables (\zl{b}) that are sampled from a Gaussian distribution every timestep. This program was previously implemented by \citet{atkinson2022semi} and adapted from~\citet{lunden2017delayed}. 

\bSlam{} is a Simultaneous Localization and Mapping model of a robot traveling on a black-and-white map (\zl{map}) while estimating its position (\zl{x}). This program was implemented by \citet{atkinson2022semi} and adapted from \citet{doucet2000rao}.

\bWheels{} from ~\citet{atkinson2022semi} models the velocity (\zl{velocity}) and angular velocity (\zl{omega}) of a robot with speed sensors on its two wheels. These variables are modeled by linear-Gaussians. 

\subsection{Source Code and Inference Plans}
We include here the source code and the evaluated inference plans. \verb|PLACEHOLDER| indicates where distribution encodings are added.

\begin{figure}[H]
\begin{lstlisting}[basicstyle=\footnotesize\ttfamily,aboveskip=0em,numbers=left]
let step = fun (zobs, (xs, q, r)) ->
  let prev_x = List.hd(xs) in
  let h = 2. in
  let f = 1.001 in
  let PLACEHOLDER x <- gaussian(f * prev_x, q) in
  let () = observe(gaussian(h*x, r), zobs) in
  let () = resample() in
  (cons(x, xs), q, r)
let PLACEHOLDER q <- invgamma (1., 1.) in
let PLACEHOLDER r <- invgamma (1., 1.) in
let x0 = 0. in
let (xs, q, r) = fold(step, data, ([x0], q, r)) in
let xs = List.tl(List.rev(xs)) in
(xs, q, r)
\end{lstlisting}
  \caption{Source code for \bNoise{}.}
  \label{fig:noise-code}
\end{figure}

\begin{table}[H]
  \small
  \centering
  \caption{Evaluated inference plans for \bNoise{}.}
  \begin{tabular}[h]{llll}
    \toprule
     & \multicolumn{3}{c}{Variables}\\
     \cmidrule(lr){2-4}
    {}                    & \zlm|x|   & \zlm|q|    & \zlm|r|   \\
    \midrule
    Plan 3              &  \symbolic & \sample    & \sample \\
    Plan 4              &  \sample   & \symbolic  & \symbolic \\
    Plan 5              &  \sample   & \symbolic  & \sample \\
    Plan 6              &  \sample   & \sample    & \symbolic \\
    Plan 7              &  \sample   & \sample    & \sample \\
    \bottomrule
  \end{tabular}
  \label{tab:noise-plans}
\end{table}

\begin{figure}[H]
  \begin{lstlisting}[basicstyle=\footnotesize\ttfamily,aboveskip=0em,numbers=left]
let step = fun (obs, (xs,q,r)) ->
  let PLACEHOLDER x <- gaussian(List.hd(xs)+5,q) in
  let PLACEHOLDER env <- bernoulli(0.0001) in
  let PLACEHOLDER other <- invgamma(1,1) in
  let v = if env then r + other else r in
  let () = observe(gaussian(x,v),obs) in
  let () = resample() in
  (cons(x, xs),q,r)
let PLACEHOLDER q <- invgamma(1.,1.) in
let PLACEHOLDER r <- invgamma(1.,1.) in
let (xs,q,r) = 
  fold(step,data,([0.],q,r)) in
(List.tl(List.rev(xs)),q,r)
  \end{lstlisting}
    \caption{Source code for \bRadar{}.}
    \label{fig:radar-code}
  \end{figure}
  
  \begin{table}[H]
    \small
    \centering
    \caption{Evaluated inference plans for \bRadar{}.}
    \begin{tabular}[h]{llllll}
      \toprule
        & \multicolumn{5}{c}{Variables}\\
        \cmidrule(lr){2-6}
      {}                    & \zlm|x|   & \zlm|env|    & \zlm|other| & \zlm|q|    & \zlm|r|   \\
      \midrule
      Plan 15              &  \symbolic & \sample    & \sample  & \sample    & \sample\\
      Plan 29              &  \sample   & \sample    & \sample  & \symbolic  & \sample \\
      Plan 31              &  \sample   & \sample    & \sample  & \sample    & \sample \\
      \bottomrule
    \end{tabular}
    \label{tab:radar-plans}
  \end{table}

\begin{figure}[H]
  \begin{lstlisting}[basicstyle=\footnotesize\ttfamily,aboveskip=0em,numbers=left]
let step = fun (obs, (xs,q,r)) ->
  let x_i = List.hd(xs) in
  let h = 2. in
  let f = 1.001 in
  let PLACEHOLDER x <- gaussian(f * x_i,q) in
  let PLACEHOLDER env <- bernoulli(0.0001) in
  let PLACEHOLDER other <- beta(1,1) in
  let () = if env then
      observe(gaussian(h * x, (r) + 1000*other),obs) 
    else
      observe(gaussian(h * x,r),obs)
  in 
  let () = resample() in
  (cons(x, xs),q,r)
let PLACEHOLDER q <- invgamma(1.,1.) in
let PLACEHOLDER r <- invgamma(1.,1.) in
let (xs,q,r) = 
  fold(step,data,([0.],q,r)) in
(List.tl(List.rev(xs)),q,r)
\end{lstlisting}
  \caption{Source code for \bEnvnoise{}.}
  \label{fig:envnoise-code}
\end{figure}

\begin{table}[H]
  \small
  \centering
  \caption{Evaluated inference plans for \bEnvnoise{}.}
  \begin{tabular}[h]{llllll}
    \toprule
      & \multicolumn{5}{c}{Variables}\\
      \cmidrule(lr){2-6}
    {}                    & \zlm|x|   & \zlm|env|    & \zlm|other| & \zlm|q|    & \zlm|r|   \\
    \midrule
    Plan 15              &  \symbolic & \sample    & \sample  & \sample    & \sample\\
    Plan 29              &  \sample   & \sample    & \sample  & \symbolic  & \sample \\
    Plan 31              &  \sample   & \sample    & \sample  & \sample    & \sample \\
    \bottomrule
  \end{tabular}
  \label{tab:envnoise-plans}
\end{table}

\begin{figure}[H]
  \begin{lstlisting}[basicstyle=\footnotesize\ttfamily,aboveskip=0em,numbers=left]
let step = fun (yobs, (first, outlier_prob, xs)) ->
  let prev_xt = List.hd(xs) in
  let xt_mu = if first then 0. else prev_xt in
  let xt_var = if first then 2500. else 1. in
  let PLACEHOLDER xt <- gaussian(xt_mu, xt_var) in
  let PLACEHOLDER is_outlier <- bernoulli(outlier_prob) in
  let mu = if is_outlier then 0. else xt in
  let var = if is_outlier then 10000. else 1. in
  let () = observe(gaussian(mu, var), yobs) in
  let () = resample() in
  (false, outlier_prob, cons(xt, xs))
let PLACEHOLDER outlier_prob <- beta(100., 1000.) in
let (_, outlier_prob, xs) = fold(step, data, (true, outlier_prob, [0.])) in
let xs = List.tl(List.rev(xs)) in
(outlier_prob, xs)
\end{lstlisting}
  \caption{Source code for \bOutlier{}.}
  \label{fig:outlier-code}
\end{figure}

\begin{table}[H]
  \small
  \centering
  \caption{Evaluated inference plans for \bOutlier{}.}
  \begin{tabular}[h]{llllll}
    \toprule
      & \multicolumn{3}{c}{Variables}\\
      \cmidrule(lr){2-4}
    {}                    & \zlm|xt|   & \zlm|is_outlier| & \zlm|outlier_prob|  \\
    \midrule
    Plan 2                & \symbolic & \sample & \symbolic \\
    Plan 3                & \symbolic & \sample & \sample \\
    Plan 6                & \sample   & \sample & \symbolic \\
    Plan 7                & \sample   & \sample & \sample \\
    \bottomrule
  \end{tabular}
  \label{tab:outlier-plans}
\end{table}

\begin{figure}[H]
  \begin{lstlisting}[basicstyle=\footnotesize\ttfamily,aboveskip=0em,numbers=left]
let step = fun (yobs, (first, outlier_prob, xs)) ->
  let prev_xt = List.hd(xs) in
  let xt_mu = if first then 0. else prev_xt in
  let xt_var = if first then 2500. else 1. in
  let PLACEHOLDER xt <- gaussian(xt_mu, xt_var) in
  let PLACEHOLDER is_outlier <- bernoulli(outlier_prob) in
  let () = if is_outlier then
    observe(student_t(xt, 1, 1.1), yobs)
  else
    observe(gaussian(xt, 1), yobs)
  in
  let () = resample() in
  (false, outlier_prob, cons(xt, xs))
let PLACEHOLDER outlier_prob <- beta(100., 1000.) in
let (_, outlier_prob, xs) = fold(step, data, (true, outlier_prob, [0.])) in
let xs = List.tl(List.rev(xs)) in
(outlier_prob, xs)
\end{lstlisting}
  \caption{Source code for \bOutlierheavy{}.}
  \label{fig:outlierheavy-code}
\end{figure}

\begin{table}[H]
  \small
  \centering
  \caption{Evaluated inference plans for \bOutlierheavy{}.}
  \begin{tabular}[h]{llllll}
    \toprule
      & \multicolumn{3}{c}{Variables}\\
      \cmidrule(lr){2-4}
      {}                    & \zlm|xt|   & \zlm|is_outlier| & \zlm|outlier_prob|  \\
      \midrule
      Plan 6                & \sample    & \sample          & \symbolic \\
      Plan 7                & \sample    & \sample          & \sample \\
    \bottomrule
  \end{tabular}
  \label{tab:outlierheavy-plans}
\end{table}

\begin{figure}[H]
  \begin{lstlisting}[basicstyle=\footnotesize\ttfamily,aboveskip=0em,numbers=left]
let step = fun ((obs_b, obs_a), (a, bs)) ->
  let PLACEHOLDER b <- gaussian(a, 10) in
  let () = observe(gaussian(b, 1000), obs_b) in
  let () = observe(gaussian(a, 1000), obs_a) in
  let () = resample() in
  (a, cons(b, bs))
let PLACEHOLDER a <- gaussian(0, 100) in
let (a, bs) = fold(step, data, (a, [])) in
(a, List.rev(bs))
\end{lstlisting}
  \caption{Source code for \bGtree{}.}
  \label{fig:gtree-code}
\end{figure}

\begin{table}[H]
  \small
  \centering
  \caption{Evaluated inference plans for \bGtree{}.}
  \begin{tabular}[h]{llllll}
    \toprule
      & \multicolumn{2}{c}{Variables}\\
      \cmidrule(lr){2-3}
    {}                    & \zlm|a|   & \zlm|b|   \\
    \midrule
    Plan 0                & \symbolic & \symbolic \\
    Plan 1                & \sample & \symbolic \\
    Plan 2                & \symbolic & \sample \\
    Plan 3                & \sample & \sample \\
    \bottomrule
  \end{tabular}
  \label{tab:gtree-plans}
\end{table}

\begin{figure}[H]
  \begin{lstlisting}[basicstyle=\footnotesize\ttfamily,aboveskip=0em,numbers=left]
let s_transition = fun (p, prev_st) ->
  let PLACEHOLDER s <- bernoulli(p) in
  s
let step = fun (yobs, ((trans_prob0, trans_prob1, obs_noise0, obs_noise1), ss, xs)) ->
  let prev_st = List.hd(ss) in
  let prev_xt = List.hd(xs) in
  (* p(s_t | s_{t-1}) *)
  let trans_prob = if prev_st then trans_prob1 else trans_prob0 in
  let st = s_transition(trans_prob) in
  (* p(x_t | x_{t-1}, s_t) *)
  let PLACEHOLDER x1 <- gaussian(prev_xt, 1.2429) in
  let PLACEHOLDER x0 <- gaussian(prev_xt, 0.02248262) in
  let xt = if st then x1 else x0 in
  (* p(z_t | x_t, s_t) *)
  let var = if st then obs_noise1 else obs_noise0 in
  let () = observe(gaussian(xt, var), yobs) in
  let () = resample() in
  ((trans_prob0, trans_prob1, obs_noise0, obs_noise1), cons(st, ss), cons(xt, xs))
let PLACEHOLDER trans_prob0 <- beta(1., 1.) in
let PLACEHOLDER trans_prob1 <- beta(1., 1.) in
let PLACEHOLDER obs_noise0 <- invgamma(1., 1.) in
let PLACEHOLDER obs_noise1 <- invgamma(1., 1.) in
let ((trans_prob0, trans_prob1, obs_noise0, obs_noise1), ss, xs) =
  fold(step, data, ((trans_prob0, trans_prob1, obs_noise0, obs_noise1), [false], [0])) in
let xs = List.tl(List.rev(xs)) in
let ss = List.tl(List.rev(ss)) in
(trans_prob0, trans_prob1, obs_noise0, obs_noise1, ss, xs)
\end{lstlisting}
  \caption{Source code for \bSlds{}.}
  \label{fig:slds-code}
\end{figure}

\begin{table}[H]
  \small
  \centering
  \caption{Evaluated inference plans for \bSlds{}.}
  \begin{tabular}[h]{llllllll}
    \toprule
      & \multicolumn{7}{c}{Variables}\\
      \cmidrule(lr){2-8}
    {}                    & \zlm|s|   & \zlm|x1|    & \zlm|x0|  & \zlm|trans_prob0|    & \zlm|trans_prob1| & \zlm|obs_noise0|    & \zlm|obs_noise1|   \\
    \midrule
    Plan 67               & \sample   & \symbolic   & \symbolic & \symbolic            & \sample           & \symbolic           & \symbolic \\
    Plan 81               & \sample   & \symbolic   & \sample   & \symbolic            & \symbolic         & \symbolic           & \sample   \\
    Plan 98               & \sample   & \sample     & \symbolic & \symbolic            & \symbolic         & \sample             & \symbolic   \\
    Plan 112              & \sample   & \sample     & \sample   & \symbolic            & \symbolic         & \symbolic           & \symbolic   \\
    Plan 113              & \sample   & \sample     & \sample   & \symbolic            & \symbolic         & \symbolic           & \sample   \\
    Plan 114              & \sample   & \sample     & \sample   & \symbolic            & \symbolic         & \sample             & \symbolic   \\
    Plan 115              & \sample   & \sample     & \sample   & \symbolic            & \symbolic         & \sample             & \sample   \\
    Plan 116              & \sample   & \sample     & \sample   & \symbolic            & \sample           & \symbolic           & \symbolic   \\
    Plan 120              & \sample   & \sample     & \sample   & \sample              & \symbolic         & \symbolic           & \symbolic   \\
    Plan 127              & \sample   & \sample     & \sample   & \sample              & \sample           & \sample             & \sample   \\
    \bottomrule
  \end{tabular}
  \label{tab:slds-plans}
\end{table}

\begin{figure}[H]
  \begin{lstlisting}[basicstyle=\footnotesize\ttfamily,aboveskip=0em,numbers=left]
let alt = fun (x, y) ->
  if 50 < y then 
    10 - 0.01 * x * x + 0.0001 * x * x * x
  else 
    x + 0.1
let step = fun ((sx_obs, sy_obs, a_obs), (sx_l, sy_l, x_l, y_l)) ->
  let sx_i = List.hd(sx_l) in
  let sy_i = List.hd(sy_l) in
  let x_i = List.hd(x_l) in
  let y_i = List.hd(y_l) in
  let PLACEHOLDER sx <- gaussian(sx_i, 0.1) in
  let PLACEHOLDER sy <- gaussian(sy_i, 0.1) in
  let PLACEHOLDER x <- gaussian(x_i + sx_i, 1) in
  let PLACEHOLDER y <- gaussian(y_i + sy_i, 1) in
  let a = alt(x, y) in
  let () = observe(gaussian(sx, 1), sx_obs) in
  let () = observe(gaussian(sy, 1), sy_obs) in
  let () = observe(gaussian(a, 1), a_obs) in
  let () = resample() in
  (cons(sx, sx_l), cons(sy, sy_l), cons(x, x_l), cons(y, y_l))
let (sx_l, sy_l, x_l, y_l) = fold(step, data, ([0], [0], [0.1], [0.1])) in
let sx_l = List.tl(List.rev(sx_l)) in
let sy_l = List.tl(List.rev(sy_l)) in
let x_l = List.tl(List.rev(x_l)) in
let y_l = List.tl(List.rev(y_l)) in
(sx_l, sy_l, x_l, y_l)
\end{lstlisting}
  \caption{Source code for \bRunner{}.}
  \label{fig:runner-code}
\end{figure}

\begin{table}[H]
  \small
  \centering
  \caption{Evaluated inference plans for \bRunner{}.}
  \begin{tabular}[h]{llllll}
    \toprule
      & \multicolumn{4}{c}{Variables}\\
      \cmidrule(lr){2-5}
    {}                    & \zlm|sx|   & \zlm|sy|    & \zlm|x| & \zlm|y| \\
    \midrule
    Plan 3                & \symbolic  & \symbolic   & \sample & \sample \\
    Plan 7                & \symbolic  & \sample     & \sample & \sample \\
    Plan 11               & \sample    & \symbolic   & \sample & \sample \\
    Plan 15               & \sample    & \sample     & \sample & \sample \\
    \bottomrule
  \end{tabular}
  \label{tab:runner-plans}
\end{table}

\begin{figure}[H]
  \begin{lstlisting}[basicstyle=\footnotesize\ttfamily,aboveskip=0em,numbers=left]
let bernoulli_priors = fun i ->
  let PLACEHOLDER cell <- bernoulli(0.5) in
  cell
let move = fun (max_pos, prev_x, cmd) ->
  let wheel_noise = 0.1 in
  let PLACEHOLDER wheel_slip <- bernoulli(wheel_noise) in
  let cmd2 = 
    if (prev_x + cmd) < 0 then 
      0
    else if max_pos < (prev_x + cmd) then
      0
    else
      cmd 
  in
  let x = if wheel_slip then prev_x else (prev_x + cmd2) in
  x
let find = fun (curr, (n, i, found, res)) -> 
  let found2 = if n = i then true else found in
  let res2 = if n = i then curr else res in
  (n, i+1, found2, res2)
let get = fun (map, n) ->
  (* Assumes n is within range *)
  let (_, _, _, res) = fold(find, map, (n, 0, false, -1)) in
  res
let step = fun ((obs, cmd), (max_pos, map, xs)) ->
  let prev_x = if List.len(xs) = 0 then 0 else List.hd(xs) in
  let sensor_noise = 0.1 in
  let x = move(max_pos, prev_x, cmd) in
  let o = get(map, x) in
  let () = observe(bernoulli(if o then (1 - sensor_noise) else sensor_noise), obs) in
  let () = resample() in
  (max_pos, map, cons(x, xs))
let max_pos = 10 in
let map = List.map(bernoulli_priors, List.range(0, max_pos + 1)) in
let (_, map, xs) = fold(step, data, (max_pos, map, [])) in
let x = List.hd(xs) in
(map, x)
\end{lstlisting}
  \caption{Source code for \bSlam{}.}
  \label{fig:slam-code}
\end{figure}

\begin{table}[H]
  \small
  \centering
  \caption{Evaluated inference plans for \bSlam{}.}
  \begin{tabular}[h]{llllll}
    \toprule
      & \multicolumn{2}{c}{Variables}\\
      \cmidrule(lr){2-3}
    {}                    & \zlm|cell|   & \zlm|wheel| \\
    \midrule
    Plan 0                & \symbolic  & \symbolic   \\
    Plan 1                & \symbolic  & \sample     \\
    Plan 2                & \sample    & \symbolic   \\
    Plan 3                & \sample    & \sample     \\
    \bottomrule
  \end{tabular}
  \label{tab:slam-plans}
\end{table}

\begin{figure}[H]
  \begin{lstlisting}[basicstyle=\footnotesize\ttfamily,aboveskip=0em,numbers=left]
let step = fun ((left_wheel_rate, right_wheel_rate), (start, vs, os)) ->
  let prev_o = List.hd(os) in
  let prev_v = List.hd(vs) in
  let wb = 2.0 in 
  let sensor_err_l = 1.0 in
  let sensor_err_r = 0.95 in
  let omega_noise = if start then 2500. else 1. in
  let velocity_noise = if start then 2500. else 1. in
  let PLACEHOLDER omega <- gaussian(prev_o, omega_noise) in
  let PLACEHOLDER velocity <- gaussian(prev_v, velocity_noise) in
  let () = observe(gaussian(velocity - (wb * omega), sensor_err_l), left_wheel_rate) in
  let () = observe(gaussian(velocity + (wb * omega), sensor_err_r), right_wheel_rate) in
  let () = resample() in
  (false, cons(velocity, vs), cons(omega, os))
let (_, vs, os) = fold(step, data, (true, [0.], [0.])) in
let velocity = List.hd(vs) in
let omega = List.hd(os) in
(velocity, omega)
\end{lstlisting}
  \caption{Source code for \bWheels{}.}
  \label{fig:wheels-code}
\end{figure}

\begin{table}[H]
  \small
  \centering
  \caption{Evaluated inference plans for \bWheels{}.}
  \begin{tabular}[h]{llllll}
    \toprule
      & \multicolumn{2}{c}{Variables}\\
      \cmidrule(lr){2-3}
    {}                    & \zlm|omega|   & \zlm|velocity| \\
    \midrule
    Plan 0                & \symbolic  & \symbolic   \\
    Plan 1                & \symbolic  & \sample     \\
    Plan 2                & \sample    & \symbolic   \\
    Plan 3                & \sample    & \sample     \\
    \bottomrule
  \end{tabular}
  \label{tab:wheels-plans}
\end{table}

\begin{figure}[H]
  \begin{lstlisting}[basicstyle=\footnotesize\ttfamily,aboveskip=0em,numbers=left]
let step = fun ((x_obs, alt_obs), (xs,alts,q,r)) ->
  let PLACEHOLDER x <- gaussian(List.hd(xs),q) in
  let PLACEHOLDER alt <- gaussian(List.hd(alts),q) in
  let PLACEHOLDER other <- invgamma(1.,10.) in
  let v = if (alt < 5) then (r + other) else (r) in
  let () = observe(gaussian(x,v),x_obs) in
  let () = observe(gaussian(alt,v),alt_obs) in
  let () = resample() in
  (cons(x, xs), cons(alt, alts), q, r)
let PLACEHOLDER q <- invgamma(1.,1.) in
let PLACEHOLDER r <- invgamma(1.,1.) in
let (xs,alts,q,r) = fold(step,data,([0.],[0.],q,r)) in
let xs = List.tl(List.rev(xs)) in
let alts = List.tl(List.rev(alts)) in
(xs, alts, q, r)
\end{lstlisting}
  \caption{Source code for \bAircraft{}.}
  \label{fig:aircraft-code}
\end{figure}

\begin{table}[H]
  \small
  \centering
  \caption{Evaluated inference plans for \bAircraft{}.}
  \begin{tabular}[h]{llllll}
    \toprule
      & \multicolumn{5}{c}{Variables}\\
      \cmidrule(lr){2-6}
    {}                    & \zlm|x|   & \zlm|alt|    & \zlm|other| & \zlm|q|    & \zlm|r|   \\
    \midrule
    Plan 15              &  \symbolic & \sample    & \sample  & \sample    & \sample\\
    Plan 29              &  \sample   & \sample    & \sample  & \symbolic  & \sample \\
    Plan 31              &  \sample   & \sample    & \sample  & \sample    & \sample \\
    \bottomrule
  \end{tabular}
  \label{tab:aircraft-plans}
\end{table}

\section{Performance Evaluation}
\label{appendix:evaluation}
For each particle count, we plot the median execution time to the 90th percentile of error for each variable of interest using each satisfiable inference plan. Additionally, we plot the median execution time to the median error with 10th and 90th percentile of error to see the estimation variance. We evaluated all possible satisfiable inference plans for each benchmark, except for \bSlds{}, which has 36 satisfiable plans with SSI and 16 satisfiable plans with DS. For this benchmark, we sort the plans by the number of \mfSymbolic{} variables in descending order and only compare the first 4 plans (and any plans tied with those) against the plan with all $\mfSample$d variables and the default plans. Finally, we present the best execution time by any satisfiable inference plan and by the default inference plan for 90\% of the executions to reach target accuracy. The target accuracy is defined as the 90th percentile of error by the default inference plan using the greatest number of particles evaluated that did not produce timeouts.

\subsection{Performance Profiles with 90th Perecentile Error}

\begin{figure}[H]
  \centering
  \begin{subfigure}[c]{0.75\textwidth}
    \centering
    \includegraphics[width=1\textwidth]{figures/noise/smc_ssi_particles.pdf}
    \caption{SSI.}
  \end{subfigure}%
  \\
  \begin{subfigure}[c]{0.75\textwidth}
    \centering
    \includegraphics[width=1\textwidth]{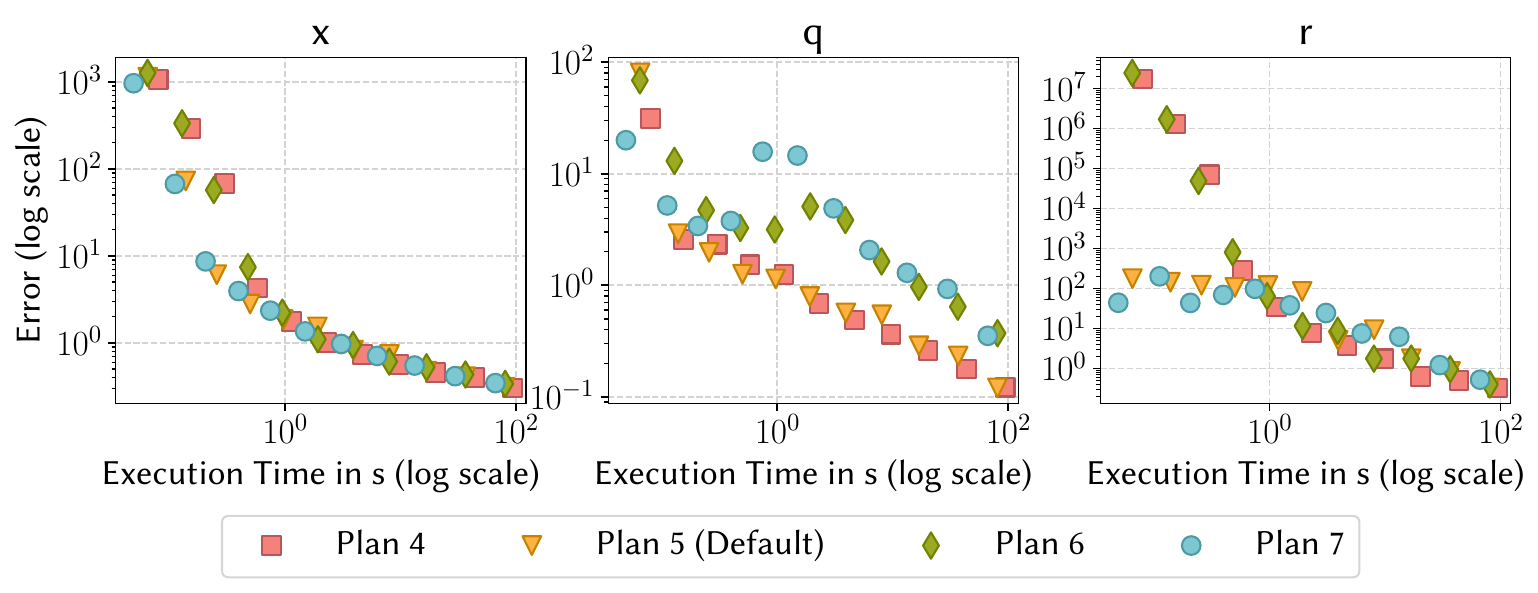}
    \caption{DS.}
  \end{subfigure}%
  \\
  \begin{subfigure}[c]{0.75\textwidth}
    \centering
    \includegraphics[width=1\textwidth]{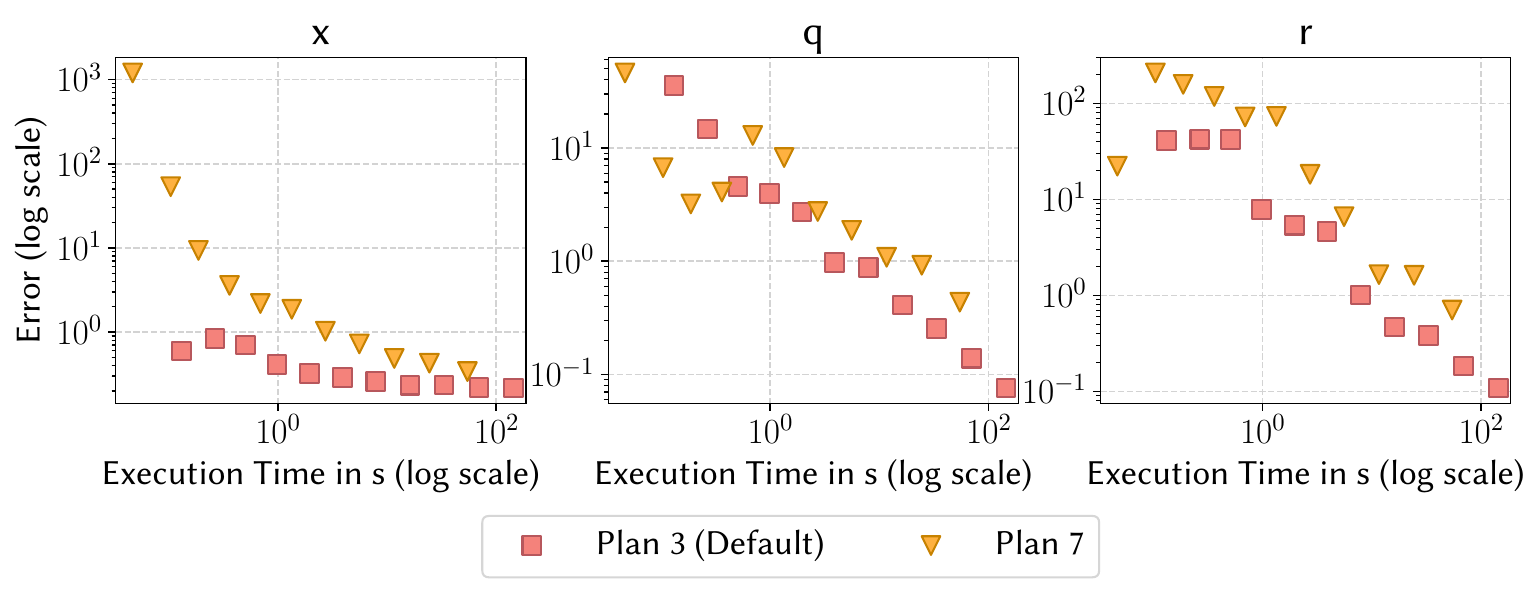}
    \caption{SMC w/ BP.}
  \end{subfigure}%
  \caption{\bNoise{}}
  \label{fig:performance-results-noise}
\end{figure}

\begin{figure}[H]
  \centering
  \begin{subfigure}[c]{0.75\textwidth}
    \centering
    \includegraphics[width=1\textwidth]{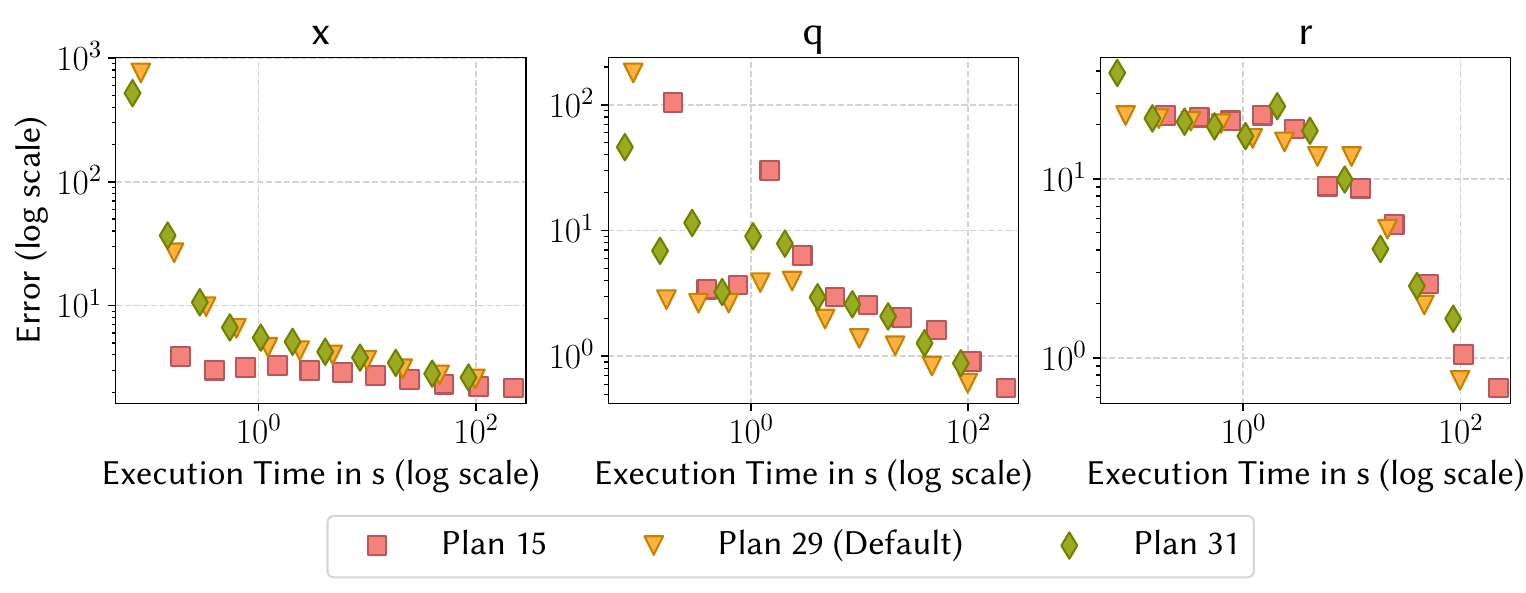}
    \caption{SSI.}
  \end{subfigure}%
  \\
  \begin{subfigure}[c]{0.75\textwidth}
    \centering
    \includegraphics[width=1\textwidth]{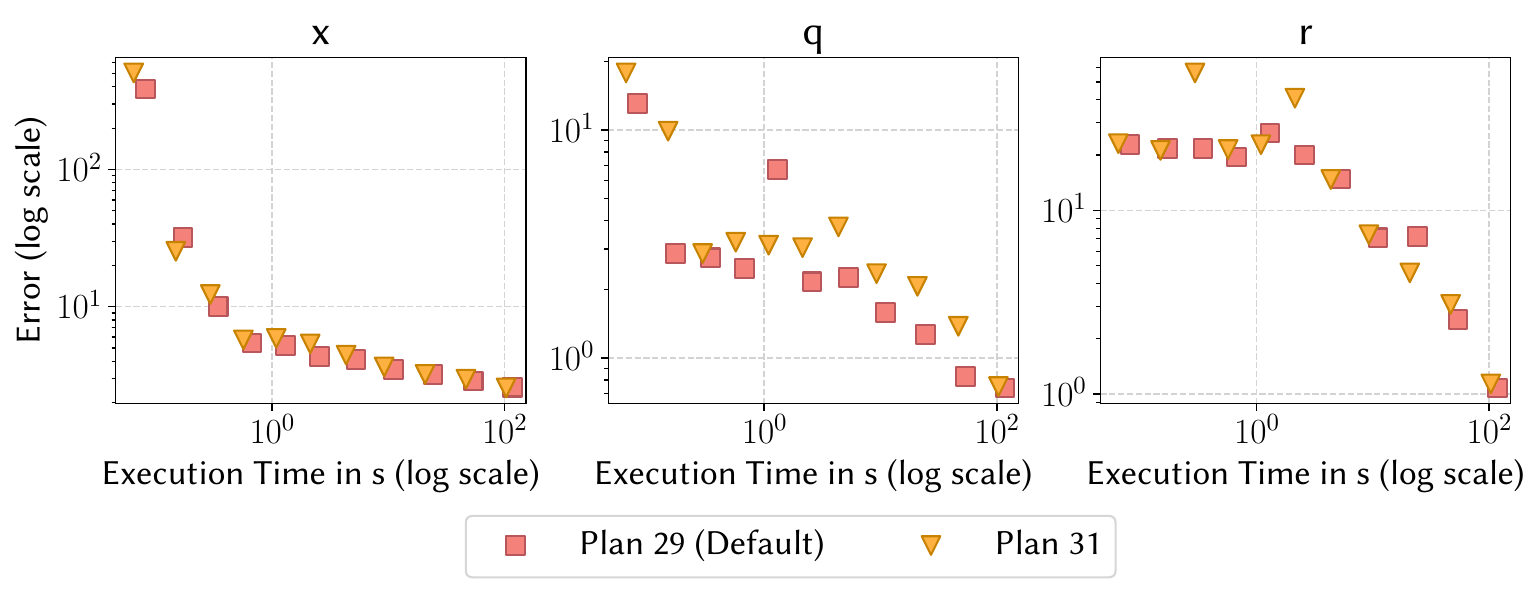}
    \caption{DS.}
  \end{subfigure}%
  \\
  \begin{subfigure}[c]{0.75\textwidth}
    \centering
    \includegraphics[width=1\textwidth]{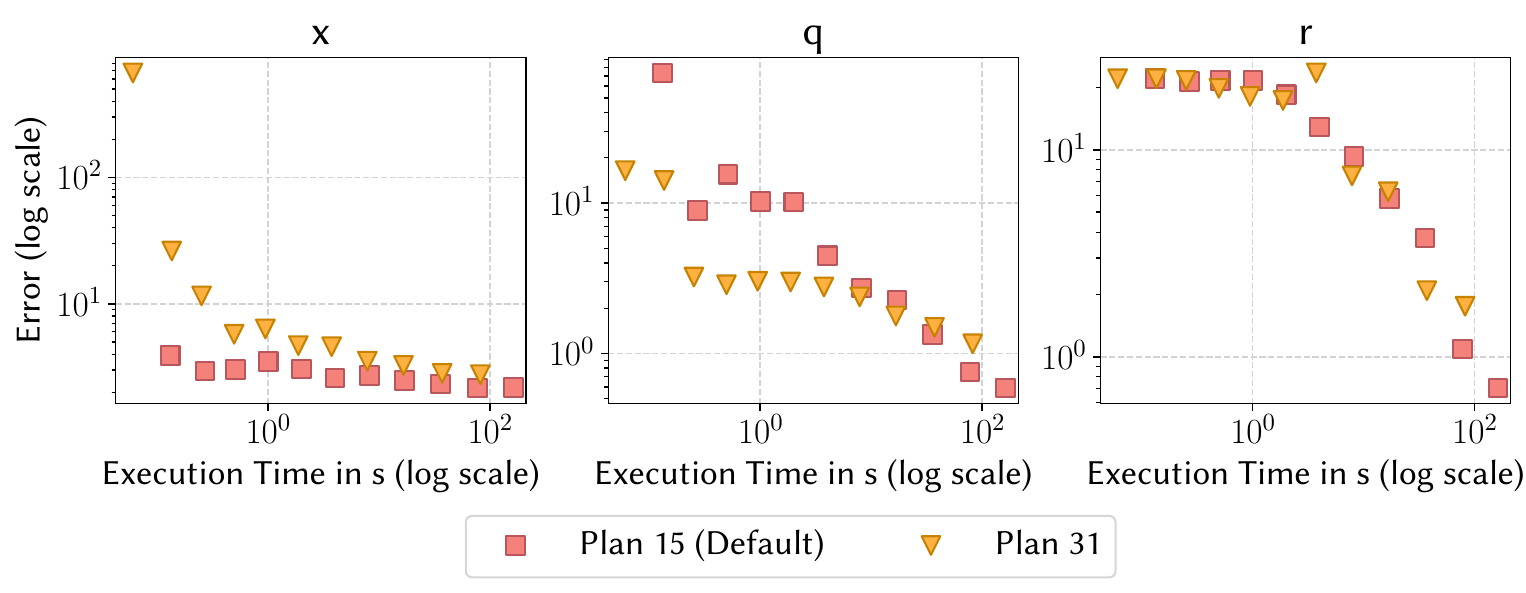}
    \caption{SMC w/ BP.}
  \end{subfigure}%
  \caption{\bRadar{}}
  \label{fig:performance-results-radar}
\end{figure}

\begin{figure}[H]
  \centering
  \begin{subfigure}[c]{0.75\textwidth}
    \centering
    \includegraphics[width=1\textwidth]{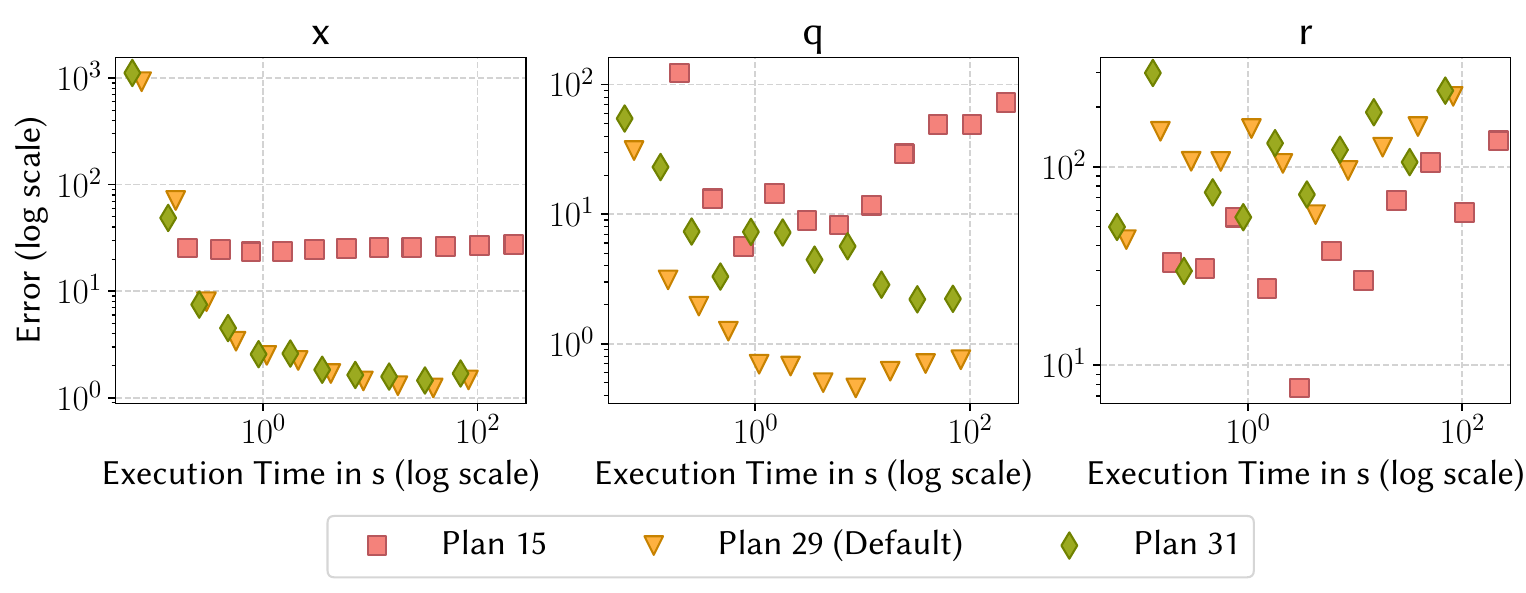}
    \caption{SSI.}
  \end{subfigure}%
  \\
  \begin{subfigure}[c]{0.75\textwidth}
    \centering
    \includegraphics[width=1\textwidth]{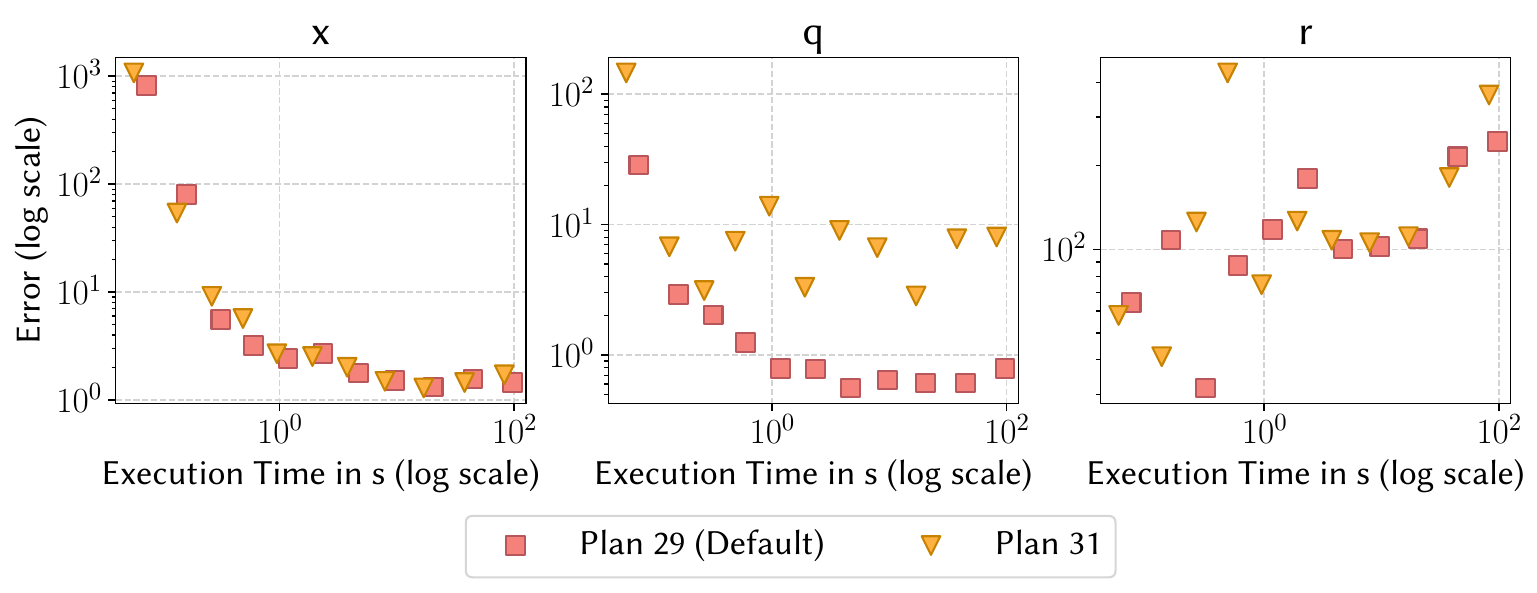}
    \caption{DS.}
  \end{subfigure}%
  \\
  \begin{subfigure}[c]{0.75\textwidth}
    \centering
    \includegraphics[width=1\textwidth]{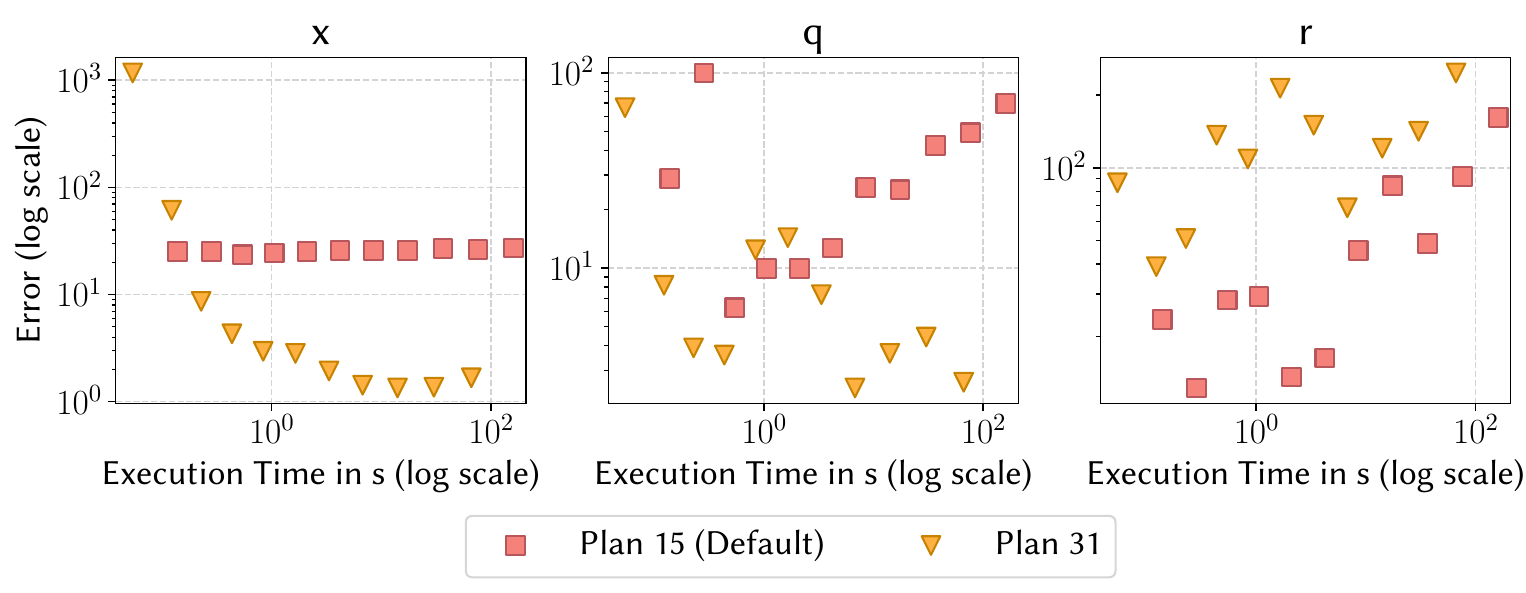}
    \caption{SMC w/ BP.}
  \end{subfigure}%
  \caption{\bEnvnoise{}}
  \label{fig:performance-results-envnoise}
\end{figure}

\begin{figure}[H]
  \centering
  \begin{subfigure}[c]{0.5\textwidth}
    \centering
    \includegraphics[width=1\textwidth]{figures/outlier/smc_ssi_particles.pdf}
    \caption{SSI.}
  \end{subfigure}%
  \\
  \begin{subfigure}[c]{0.5\textwidth}
    \centering
    \includegraphics[width=1\textwidth]{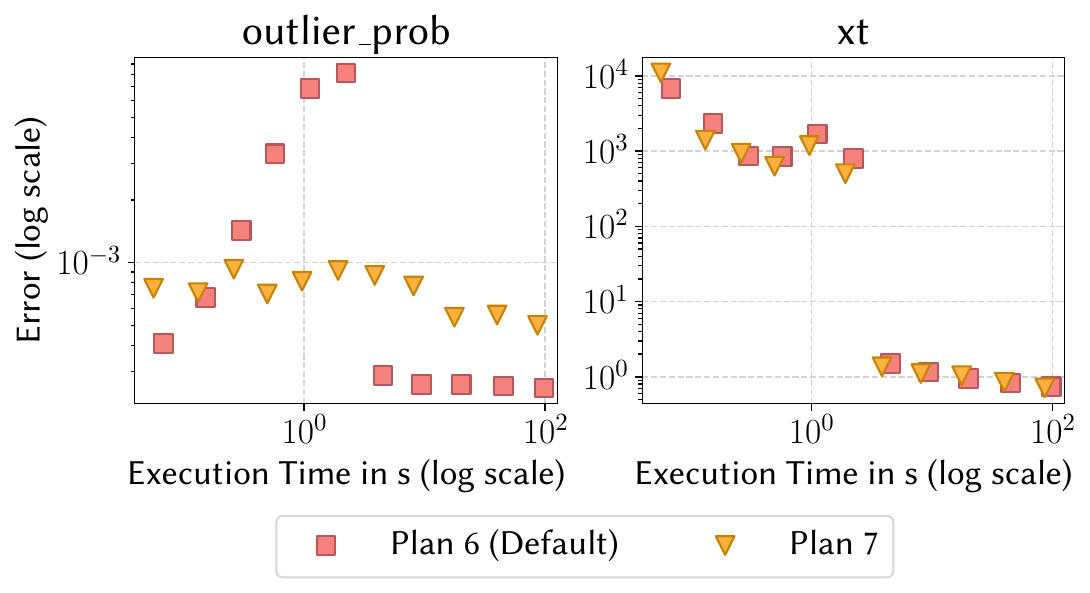}
    \caption{DS.}
  \end{subfigure}%
  \\
  \begin{subfigure}[c]{0.5\textwidth}
    \centering
    \includegraphics[width=0.99\textwidth]{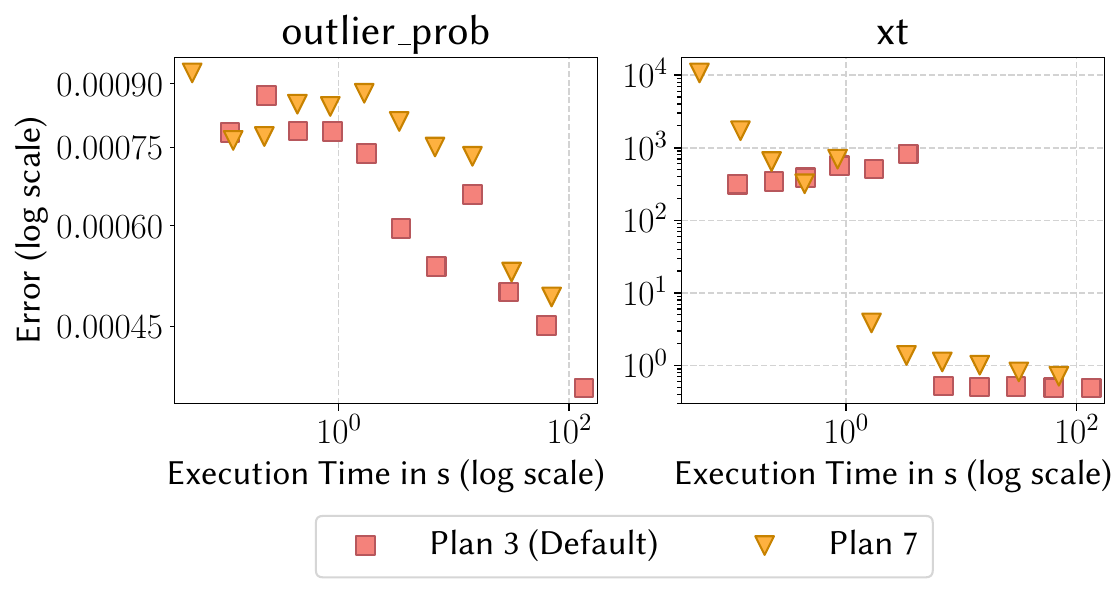}
    \caption{SMC w/ BP.}
  \end{subfigure}%
  \caption{\bOutlier{}}
  \label{fig:performance-results-outlier}
\end{figure}

\begin{figure}[H]
  \centering
  \begin{subfigure}[c]{0.5\textwidth}
    \centering
    \includegraphics[width=1\textwidth]{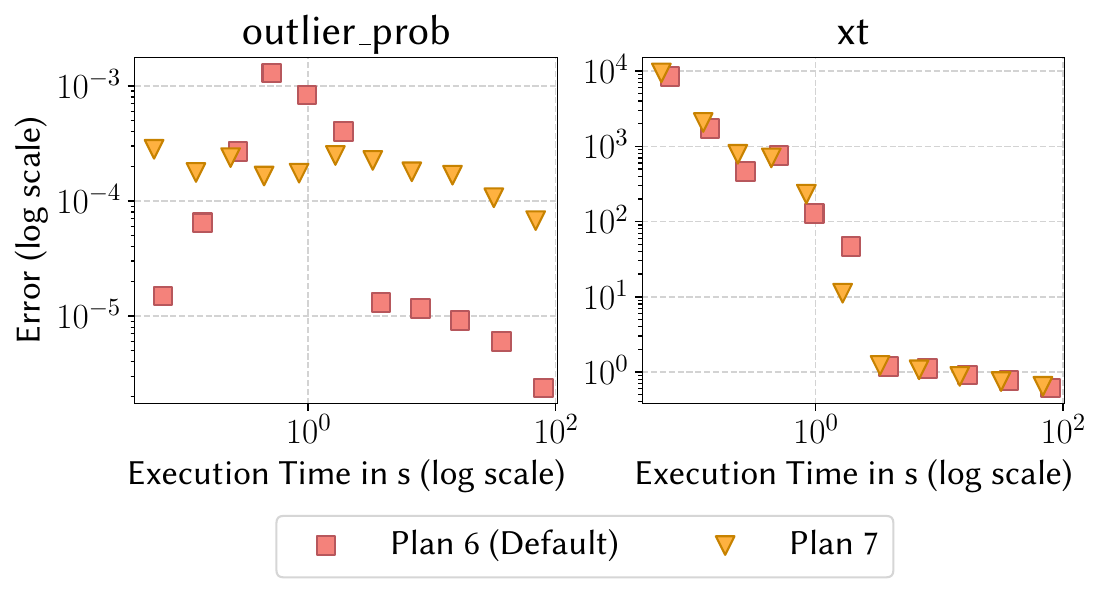}
    \caption{SSI.}
  \end{subfigure}%
  \\
  \begin{subfigure}[c]{0.5\textwidth}
    \centering
    \includegraphics[width=1\textwidth]{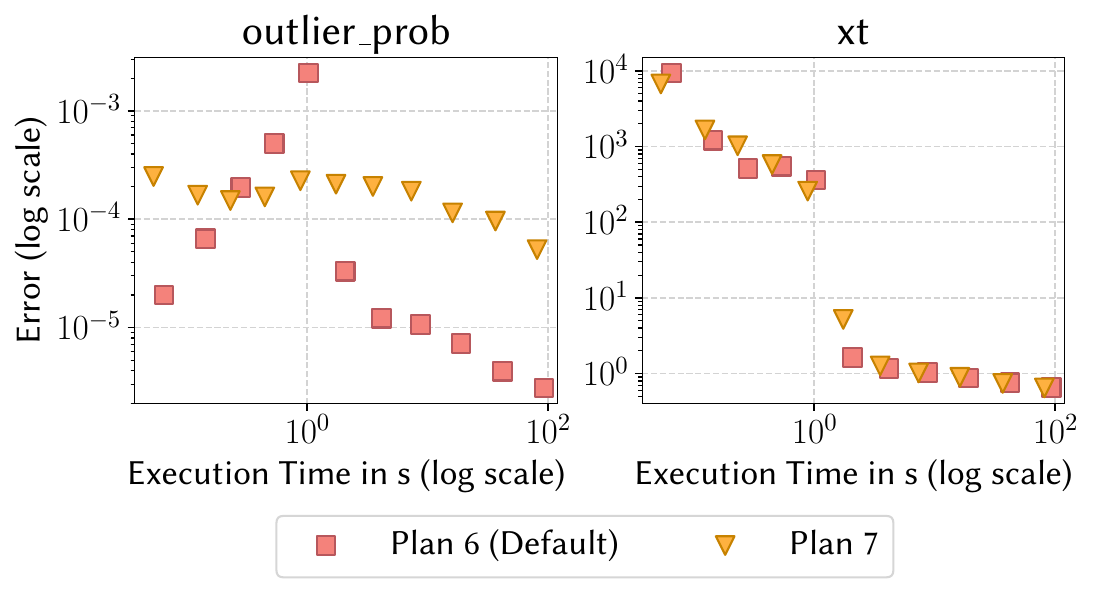}
    \caption{DS.}
  \end{subfigure}%
  \\
  \begin{subfigure}[c]{0.5\textwidth}
    \centering
    \includegraphics[width=1\textwidth]{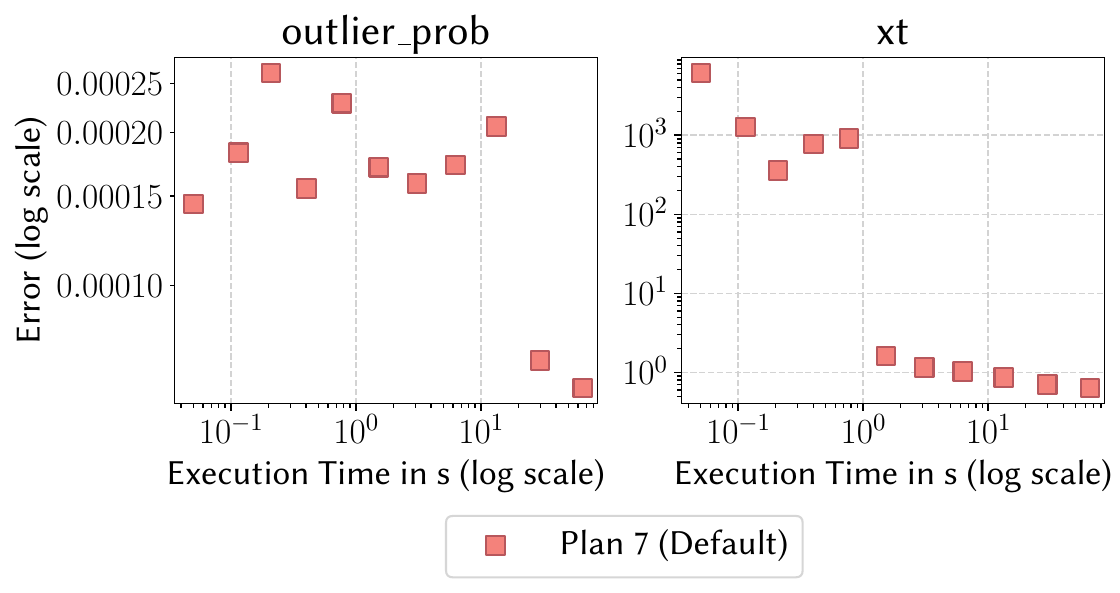}
    \caption{SMC w/ BP.}
  \end{subfigure}%
  \caption{\bOutlierheavy{}.}
  \label{fig:performance-results-outlierheavy}
\end{figure}

\begin{figure}[H]
  \centering
  \begin{subfigure}[c]{0.5\textwidth}
    \centering
    \includegraphics[width=1\textwidth]{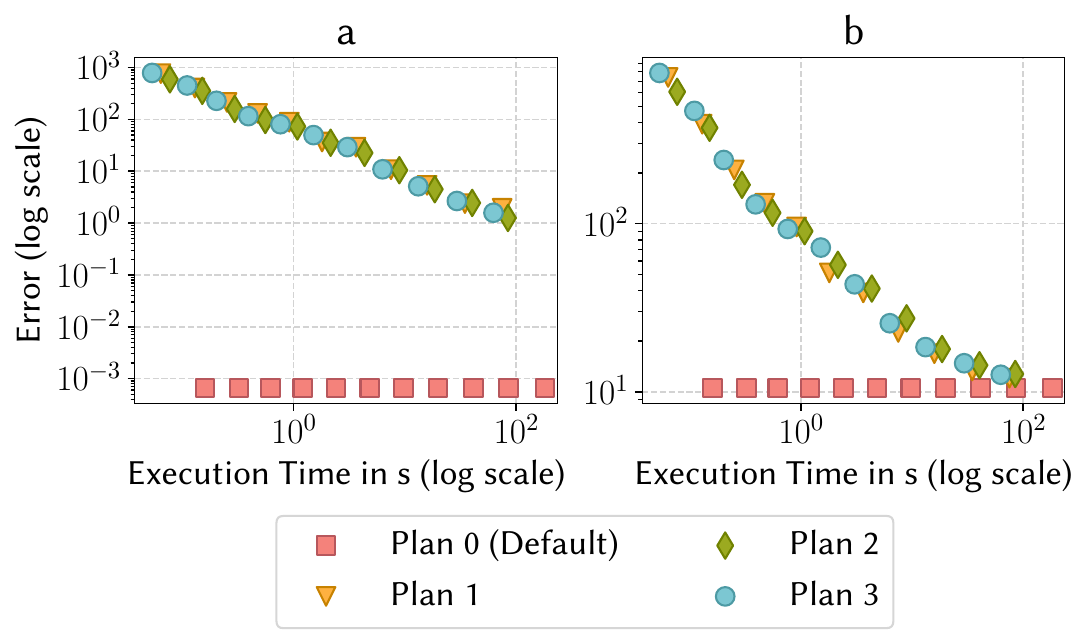}
    \caption{SSI.}
  \end{subfigure}%
  \\
  \begin{subfigure}[c]{0.5\textwidth}
    \centering
    \includegraphics[width=1\textwidth]{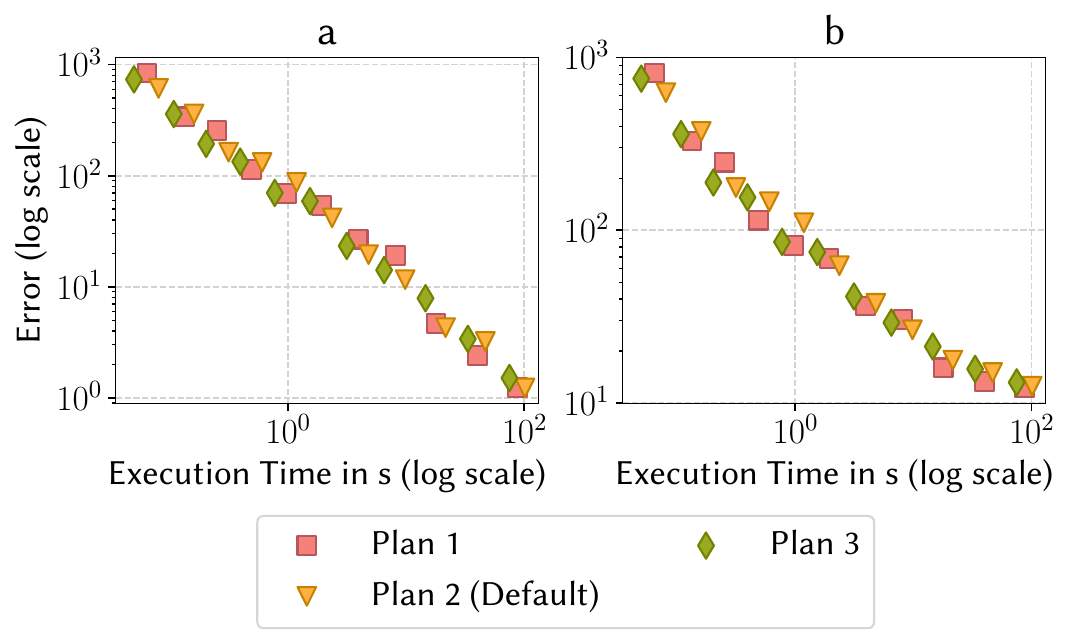}
    \caption{DS.}
  \end{subfigure}%
  \\
  \begin{subfigure}[c]{0.5\textwidth}
    \centering
    \includegraphics[width=1\textwidth]{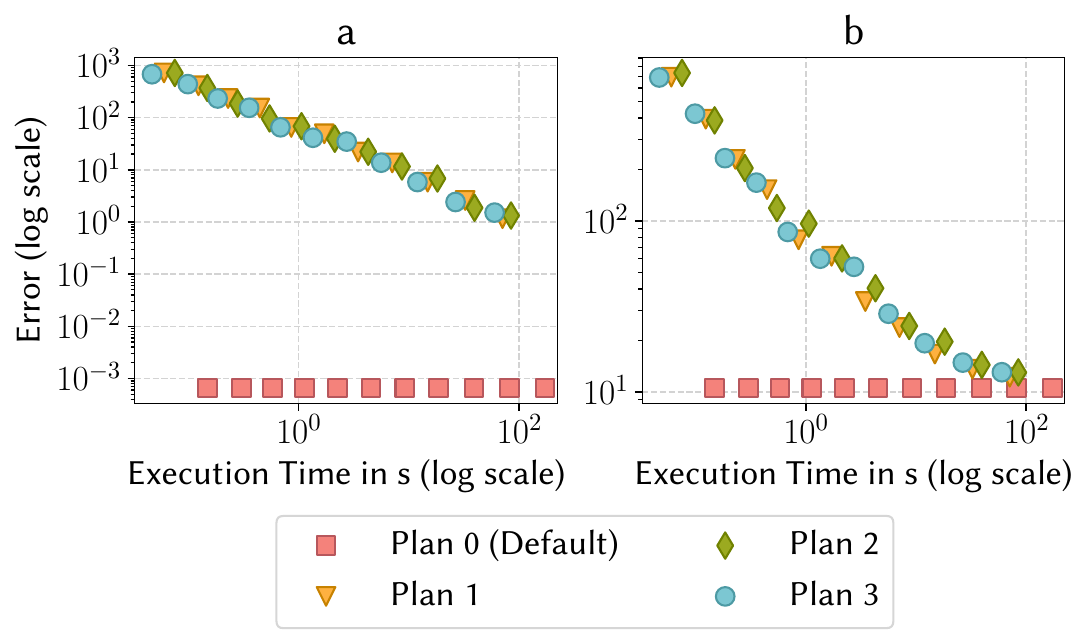}
    \caption{SMC w/ BP.}
  \end{subfigure}%
  \caption{\bGtree{}}
  \label{fig:performance-results-tree}
\end{figure}

\begin{figure}[H]
  \centering
  \begin{subfigure}[c]{0.75\textwidth}
    \centering
    \includegraphics[width=1\textwidth]{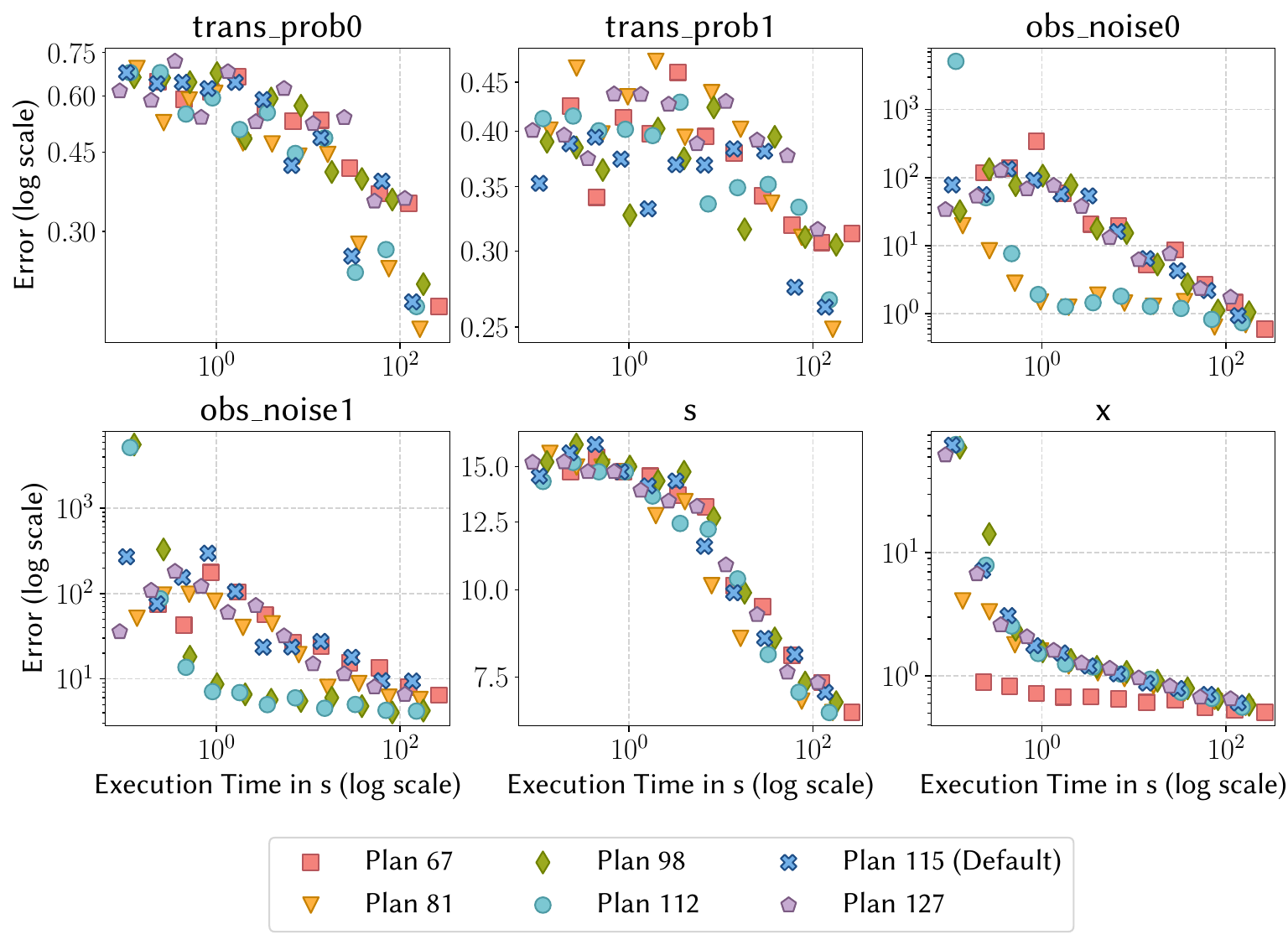}
    \caption{SSI.}
  \end{subfigure}%
  \\
  \begin{subfigure}[c]{0.75\textwidth}
    \centering
    \includegraphics[width=1\textwidth]{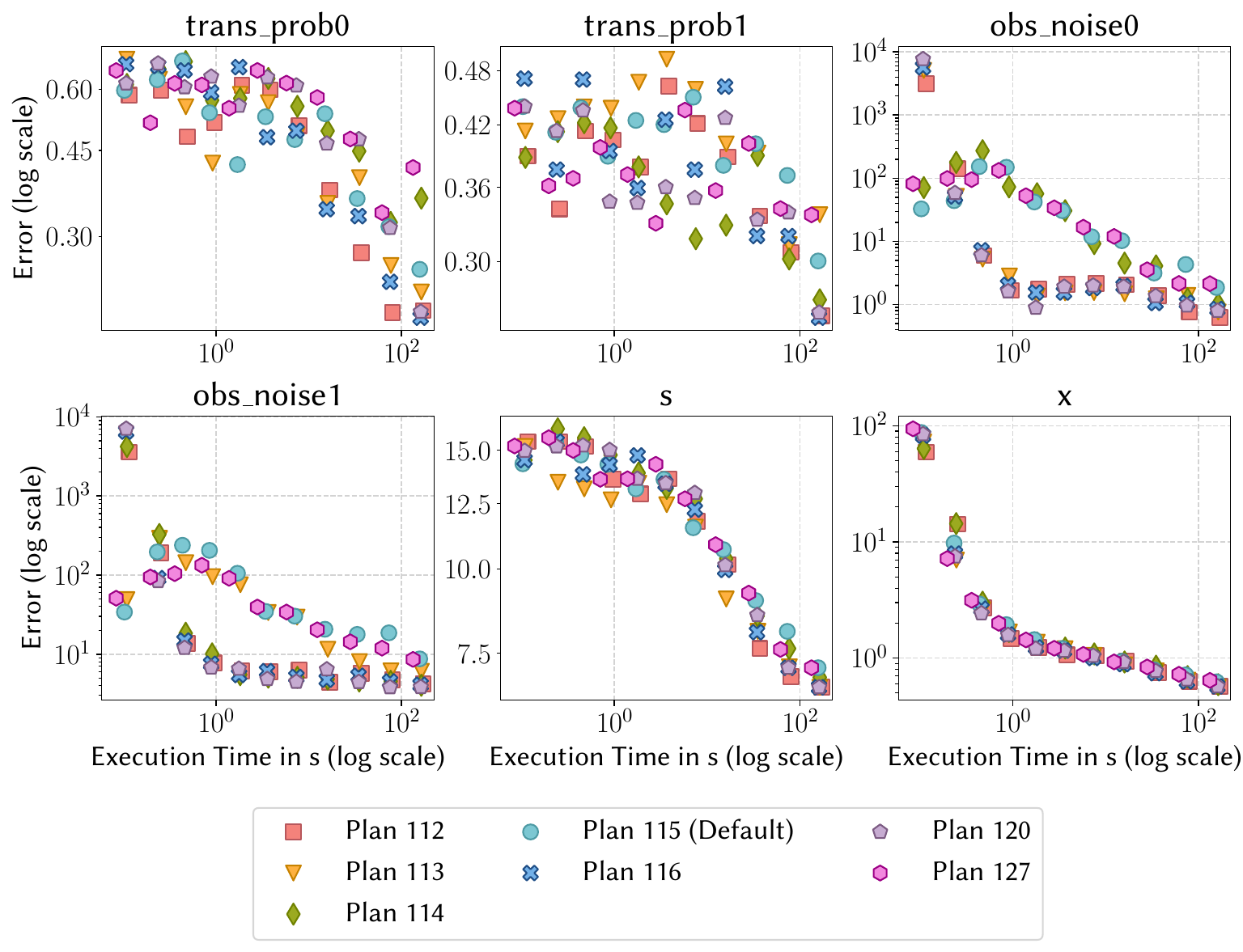}
    \caption{DS.}
  \end{subfigure}%
  \caption{\bSlds{}}
  \label{fig:performance-results-slds-1}
\end{figure}

\begin{figure}[H]
  \centering
  \begin{subfigure}[c]{0.75\textwidth}
    \centering
    \includegraphics[width=1\textwidth]{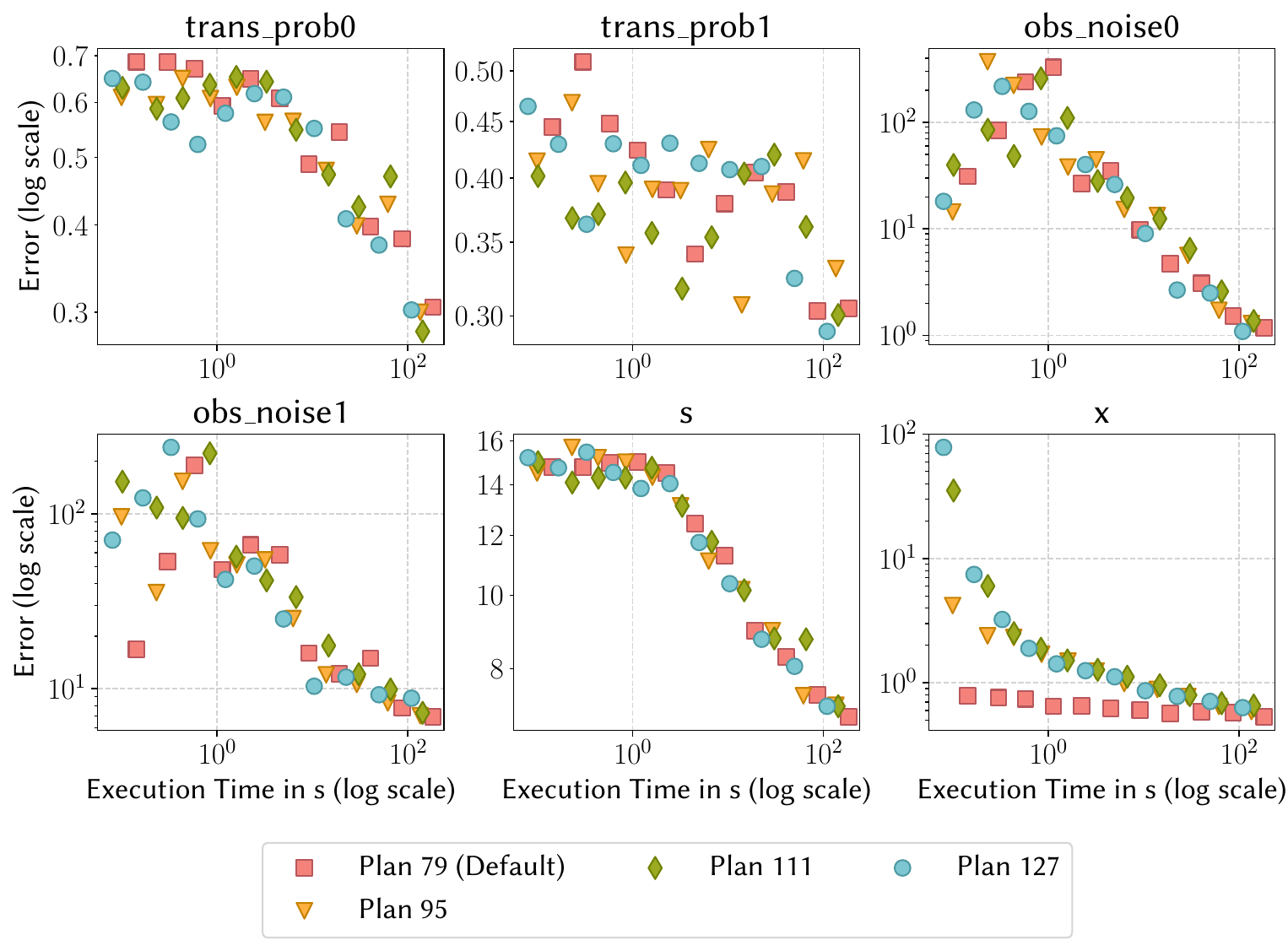}
    \caption{SMC w/ BP.}
  \end{subfigure}%
  \caption{\bSlds{} (continued)}
  \label{fig:performance-results-slds-2}
\end{figure}

\begin{figure}[H]
  \centering
  \begin{subfigure}[c]{1\textwidth}
    \centering
    \includegraphics[width=1\textwidth]{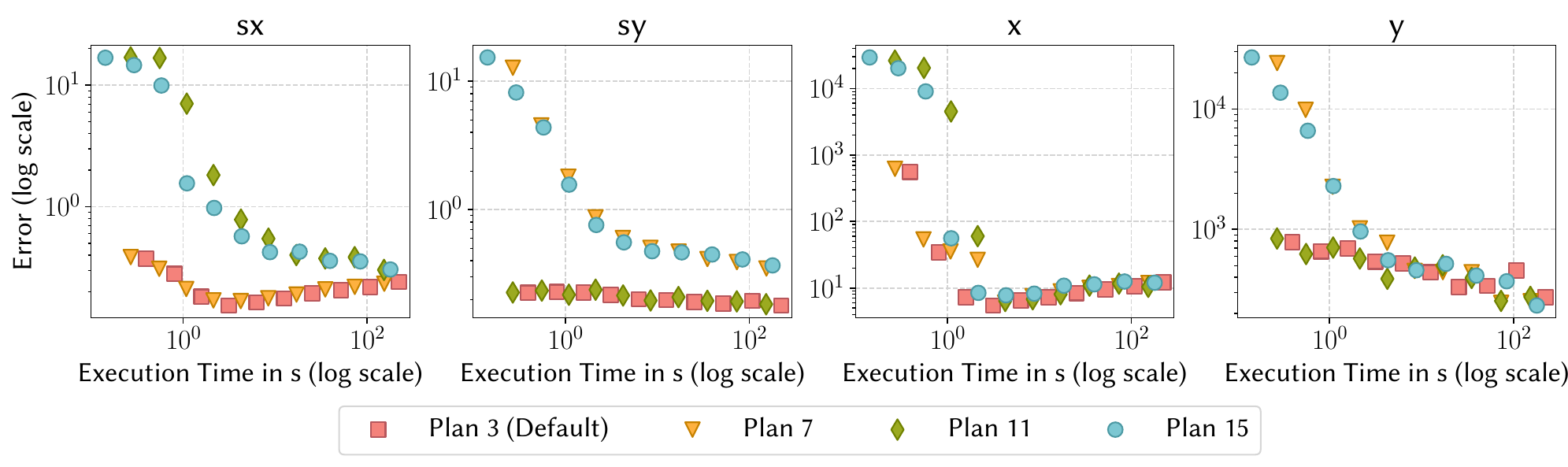}
    \caption{SSI.}
  \end{subfigure}%
  \\
  \begin{subfigure}[c]{1\textwidth}
    \centering
    \includegraphics[width=1\textwidth]{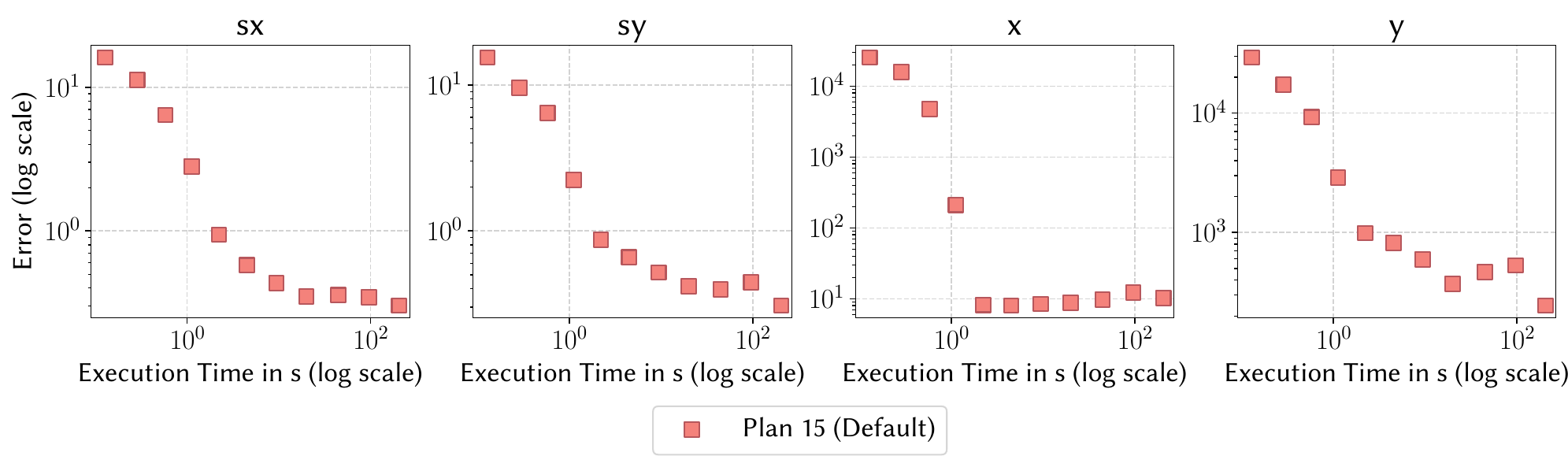}
    \caption{SSI.}
  \end{subfigure}%
  \\
  \begin{subfigure}[c]{1\textwidth}
    \centering
    \includegraphics[width=1\textwidth]{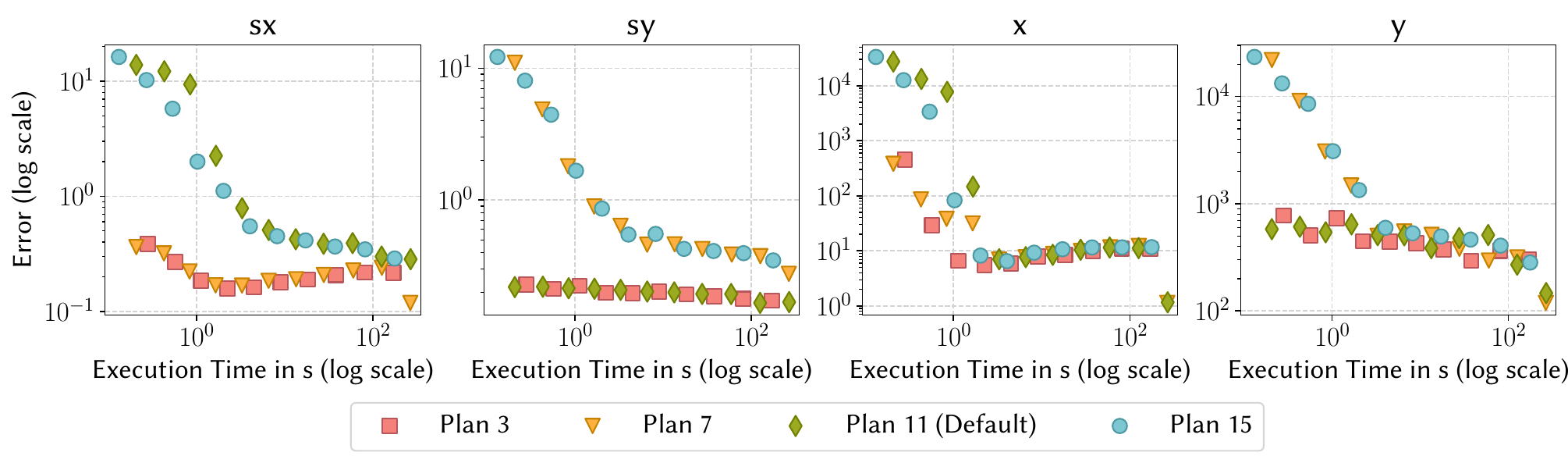}
    \caption{SMC w/ BP.}
  \end{subfigure}%
  \caption{\bRunner{}.}
  \label{fig:performance-results-runner}
\end{figure}

\begin{figure}[H]
  \centering
  \begin{subfigure}[c]{1\textwidth}
    \centering
    \includegraphics[width=0.5\textwidth]{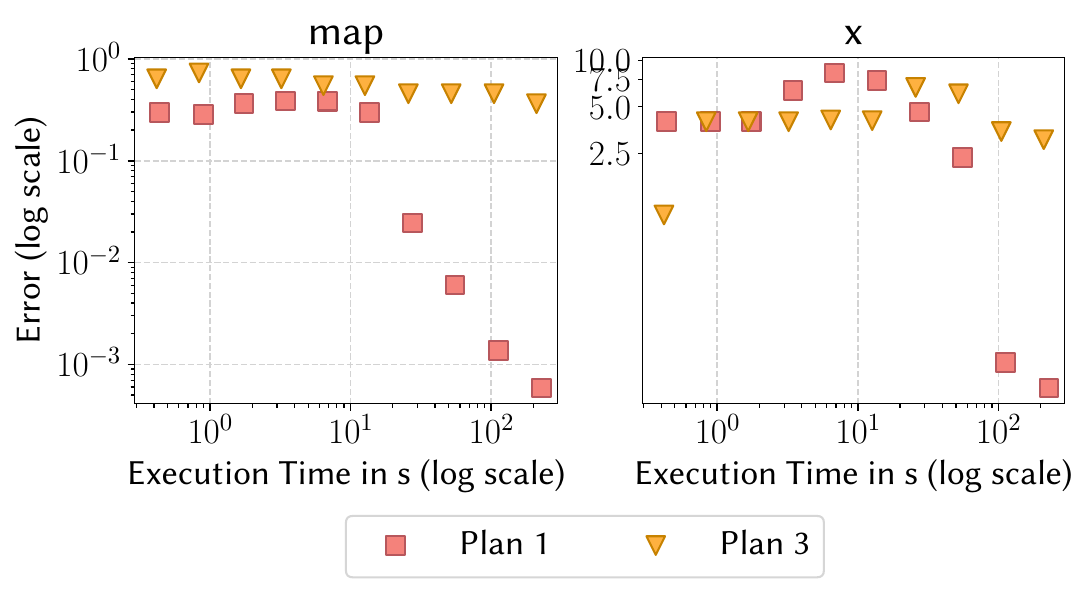}
    \caption{SSI. Plan 0 (Default) and Plan 2 time out for all particles.}
  \end{subfigure}%
  \\
  \begin{subfigure}[c]{1\textwidth}
    \centering
    \includegraphics[width=0.5\textwidth]{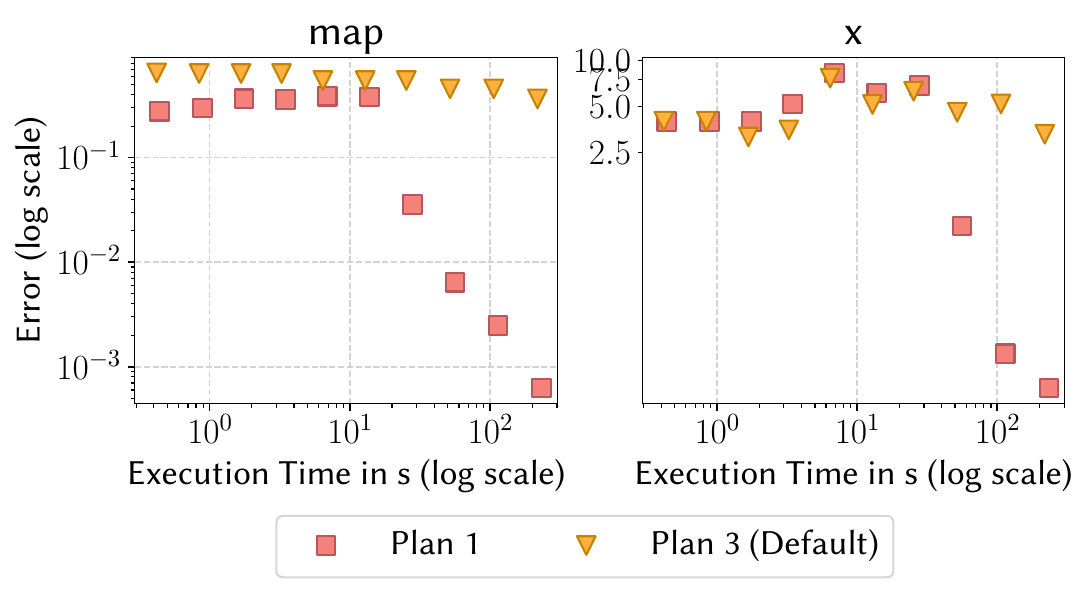}
    \caption{DS.}
  \end{subfigure}%
  \\
  \begin{subfigure}[c]{1\textwidth}
    \centering
    \includegraphics[width=0.5\textwidth]{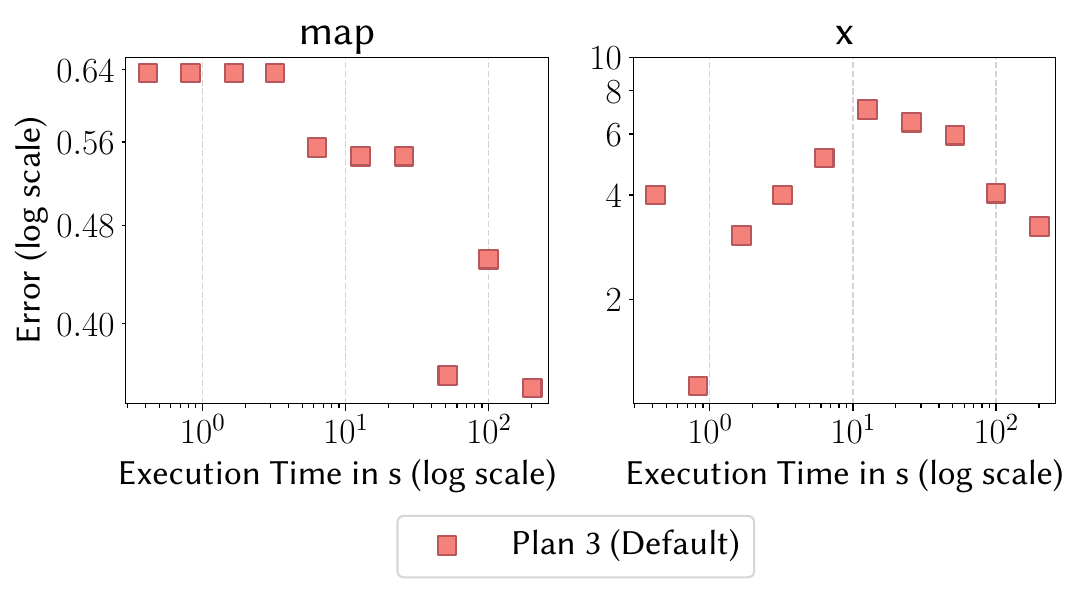}
    \caption{SMC w/ BP.}
  \end{subfigure}%
  \caption{\bSlam{}.}
  \label{fig:performance-results-slam}
\end{figure}

\begin{figure}[H]
  \centering
  \begin{subfigure}[c]{1\textwidth}
    \centering
    \includegraphics[width=0.5\textwidth]{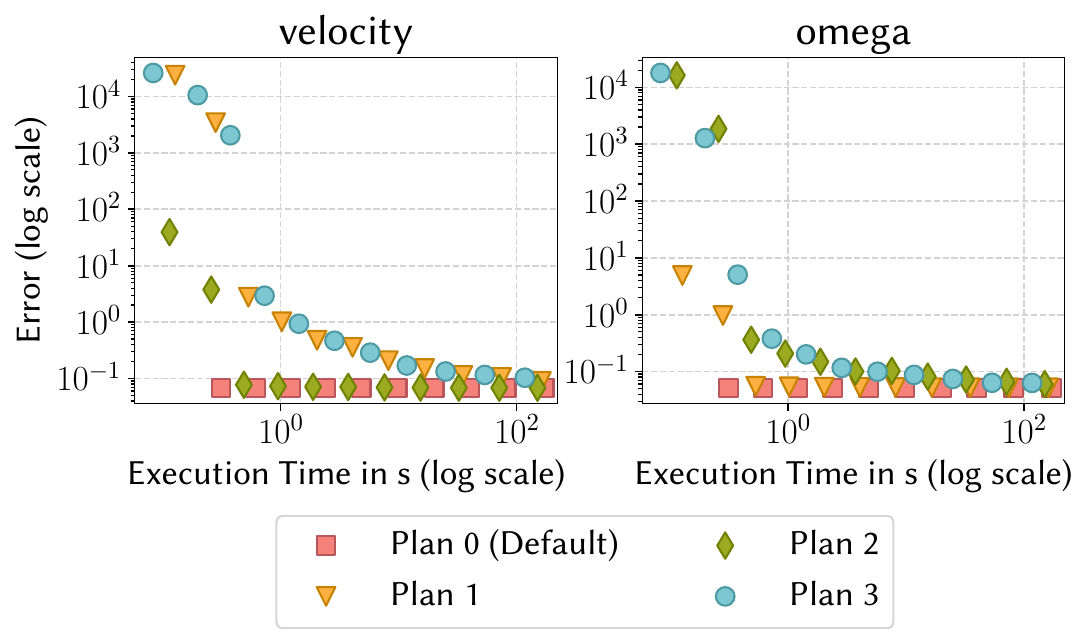}
    \caption{SSI. }
  \end{subfigure}%
  \\
  \begin{subfigure}[c]{1\textwidth}
    \centering
    \includegraphics[width=0.5\textwidth]{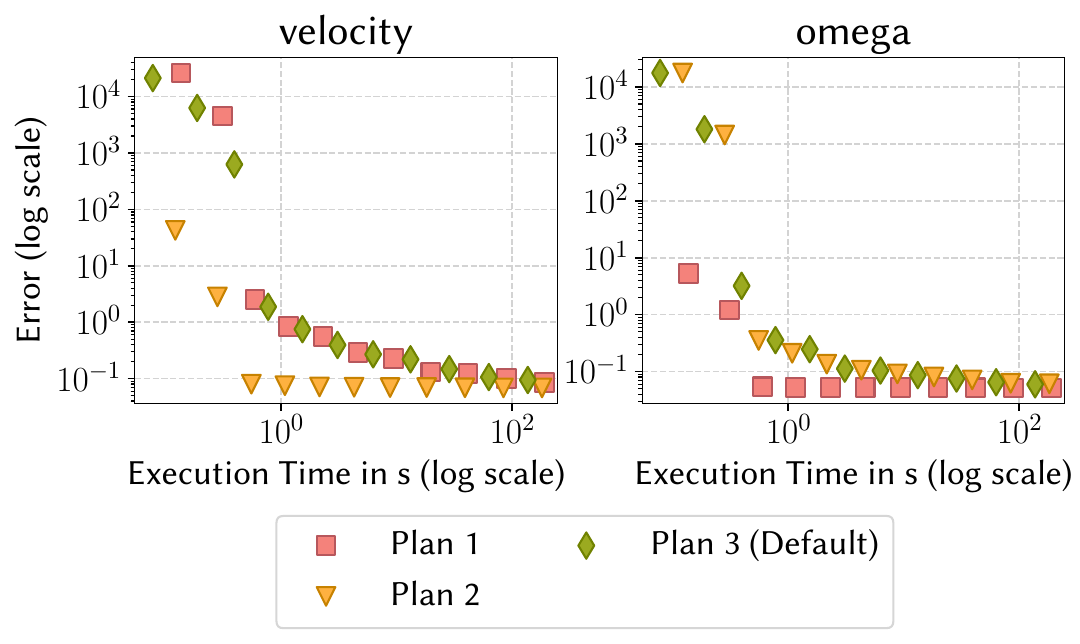}
    \caption{DS.}
  \end{subfigure}%
  \\
  \begin{subfigure}[c]{1\textwidth}
    \centering
    \includegraphics[width=0.5\textwidth]{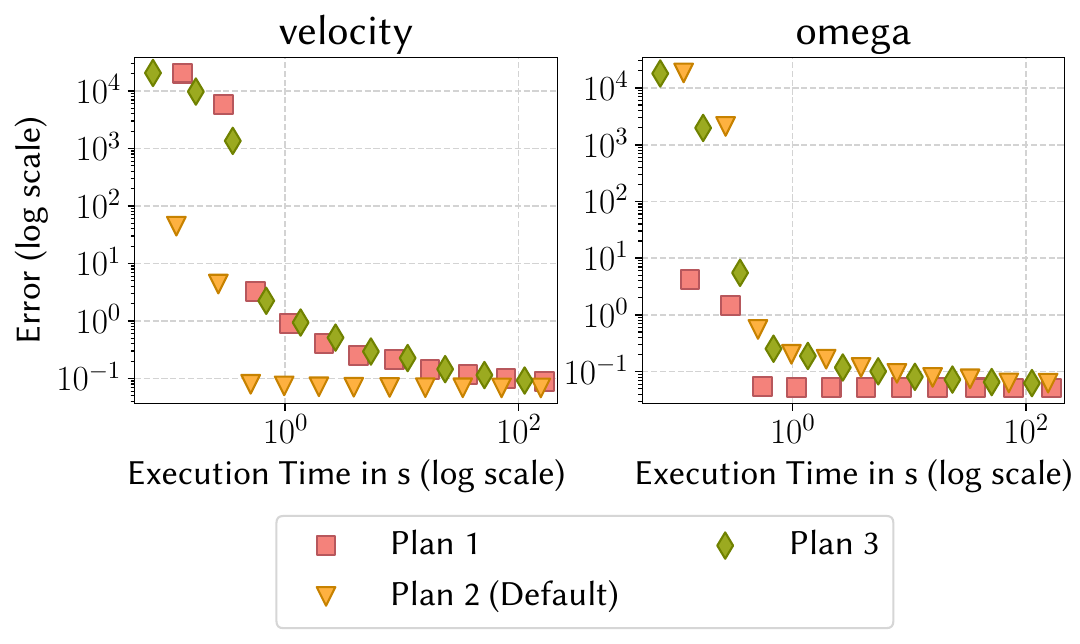}
    \caption{SMC w/ BP.}
  \end{subfigure}%
  \caption{\bWheels{}.}
  \label{fig:performance-results-wheels}
\end{figure}

\begin{figure}[H]
  \centering
  \begin{subfigure}[c]{1\textwidth}
    \centering
    \includegraphics[width=1\textwidth]{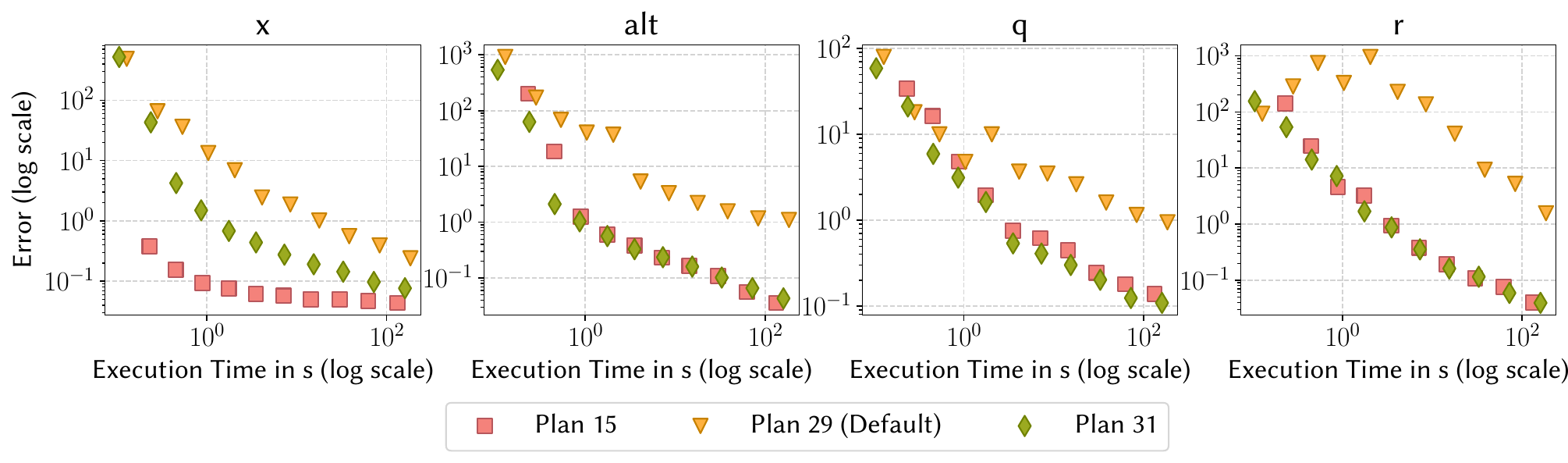}
    \caption{SSI.}
  \end{subfigure}%
  \\
  \begin{subfigure}[c]{1\textwidth}
    \centering
    \includegraphics[width=1\textwidth]{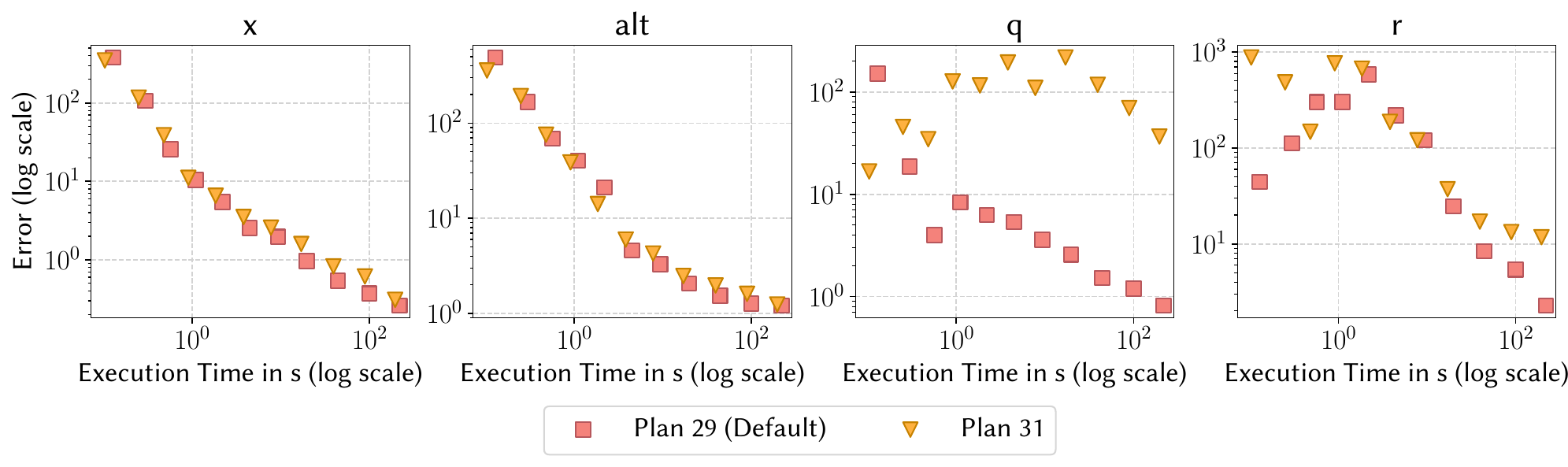}
    \caption{DS.}
  \end{subfigure}%
  \\
  \begin{subfigure}[c]{1\textwidth}
    \centering
    \includegraphics[width=1\textwidth]{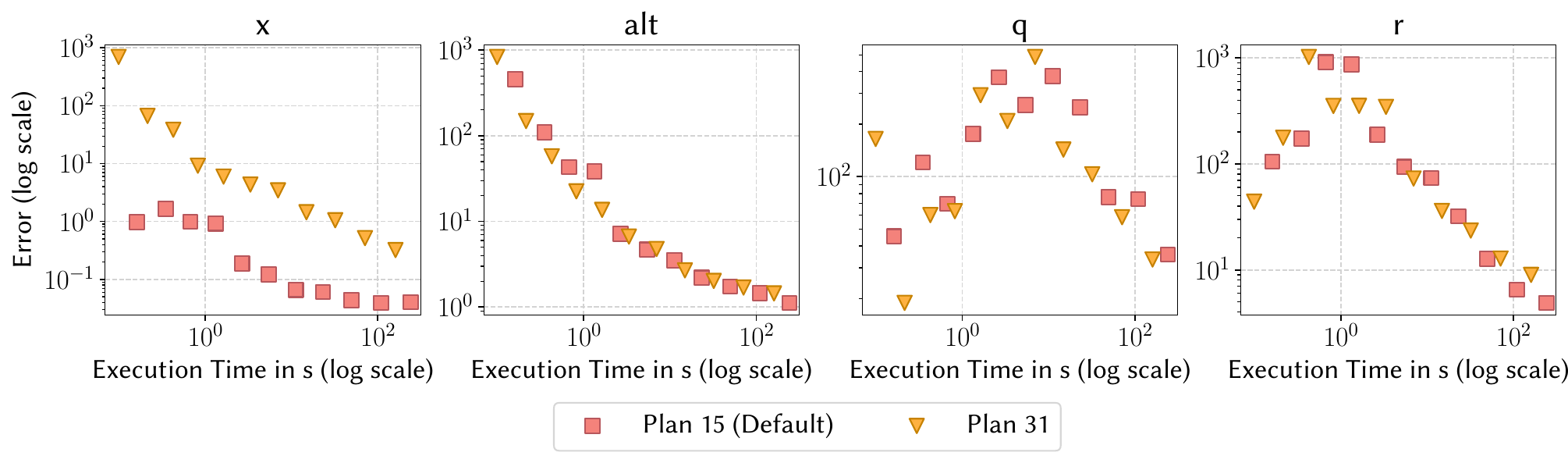}
    \caption{SMC w/ BP.}
  \end{subfigure}%
  \caption{\bAircraft{}. }
  \label{fig:performance-results-aircraft}
\end{figure}

\subsection{Performance Profiles with 10th, 50th, and 90th Perecentile Error}
\begin{figure}[H]
  \centering
  \begin{subfigure}[c]{0.75\textwidth}
    \centering
    \includegraphics[width=1\textwidth]{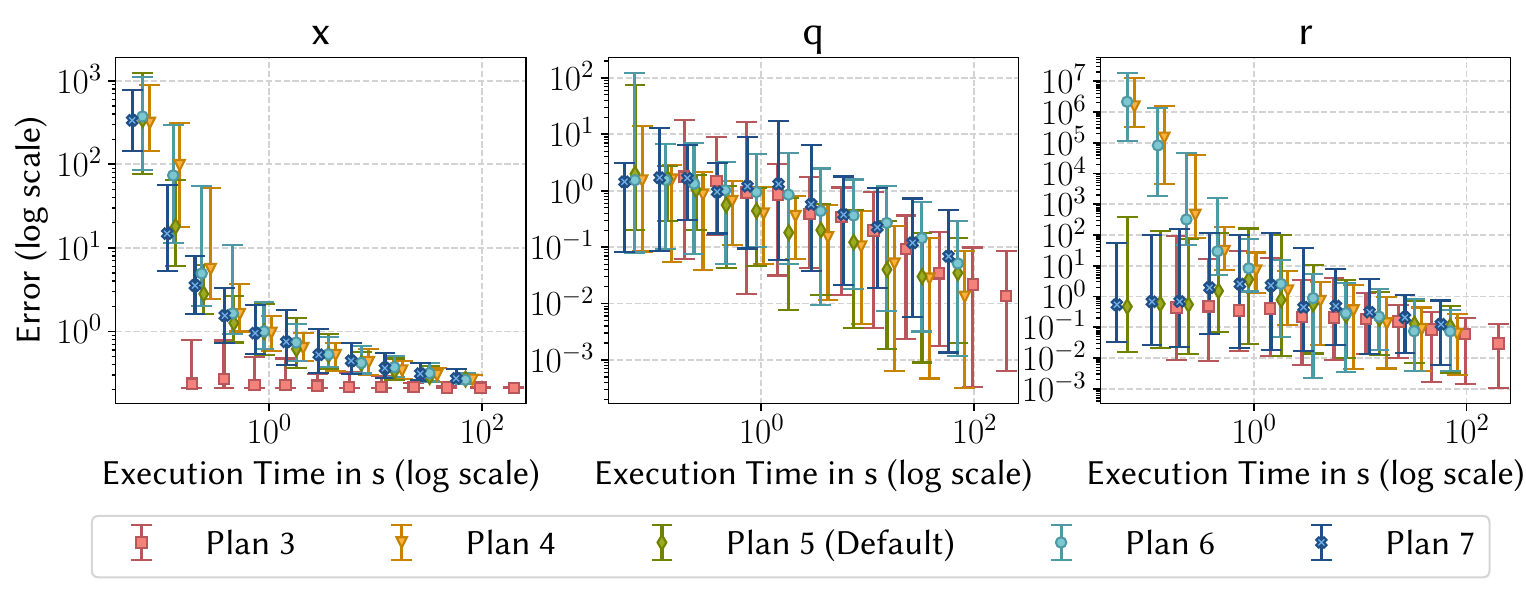}
    \caption{SSI.}
  \end{subfigure}%
  \\
  \begin{subfigure}[c]{0.75\textwidth}
    \centering
    \includegraphics[width=1\textwidth]{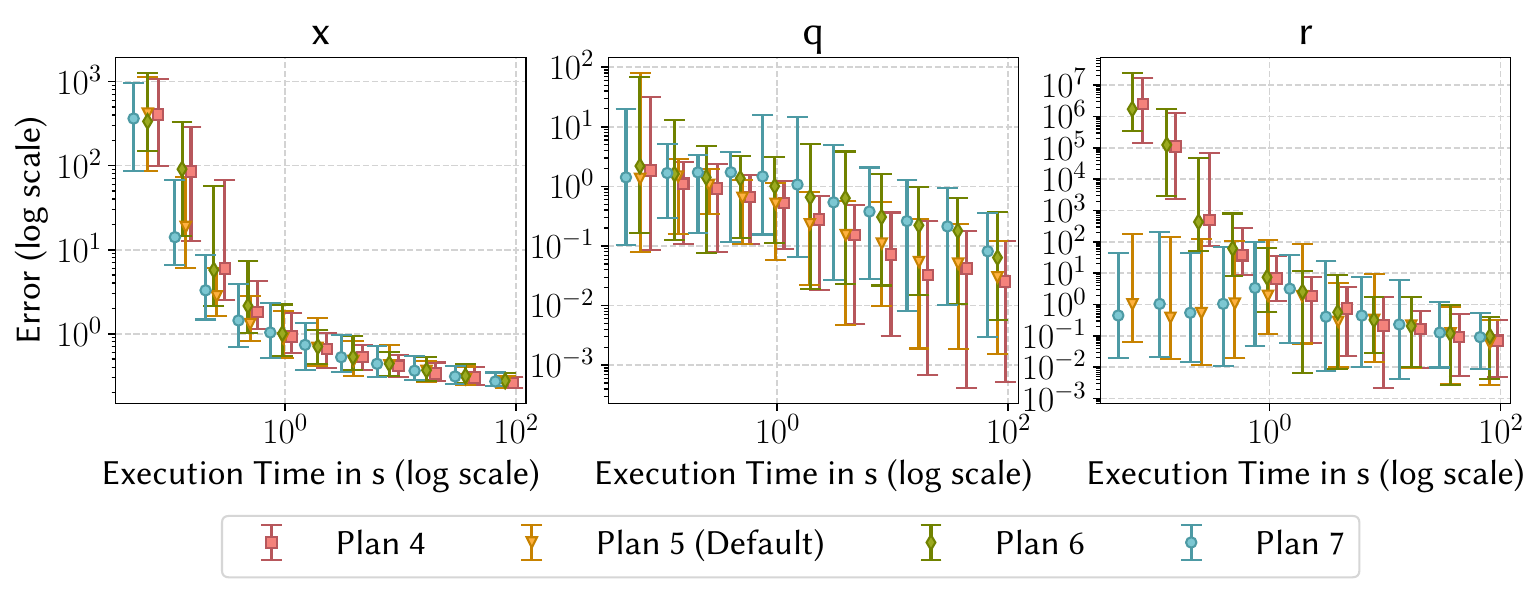}
    \caption{DS.}
  \end{subfigure}%
  \\
  \begin{subfigure}[c]{0.75\textwidth}
    \centering
    \includegraphics[width=1\textwidth]{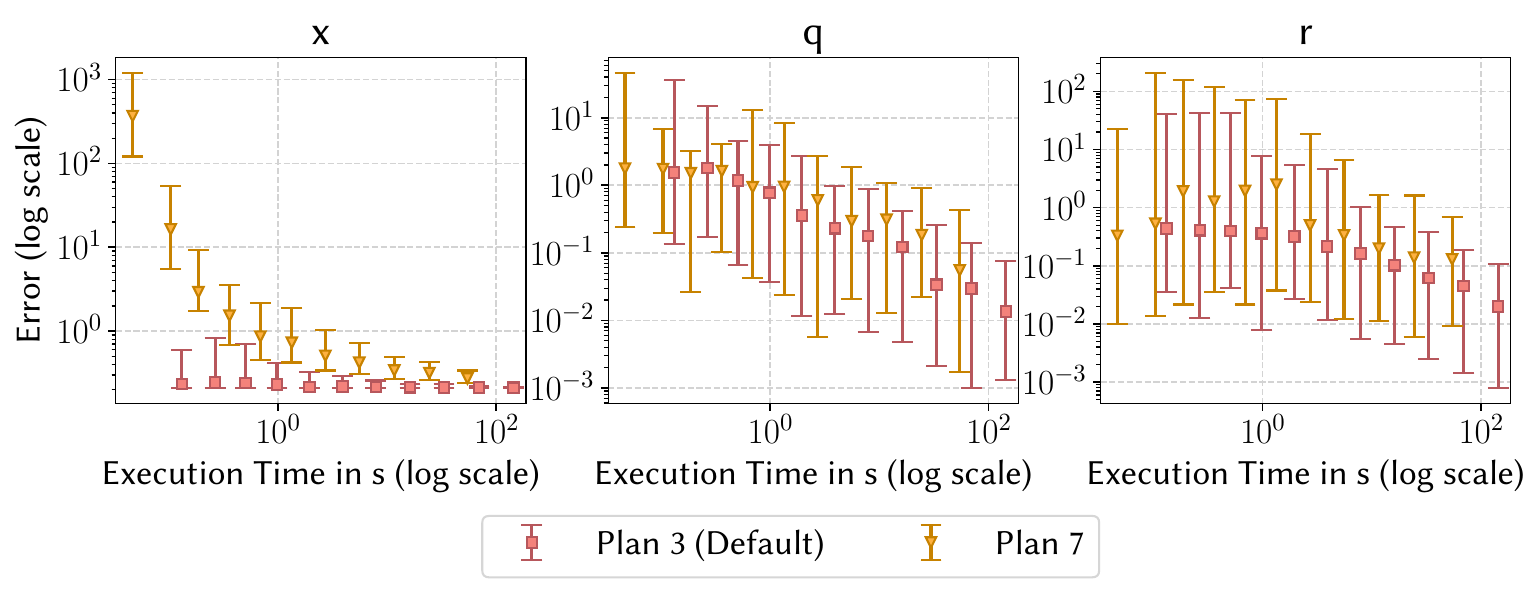}
    \caption{SMC w/ BP.}
  \end{subfigure}%
  \caption{\bNoise{}}
  \label{fig:performance-results-errorbar-noise}
\end{figure}

\begin{figure}[H]
  \centering
  \begin{subfigure}[c]{0.75\textwidth}
    \centering
    \includegraphics[width=1\textwidth]{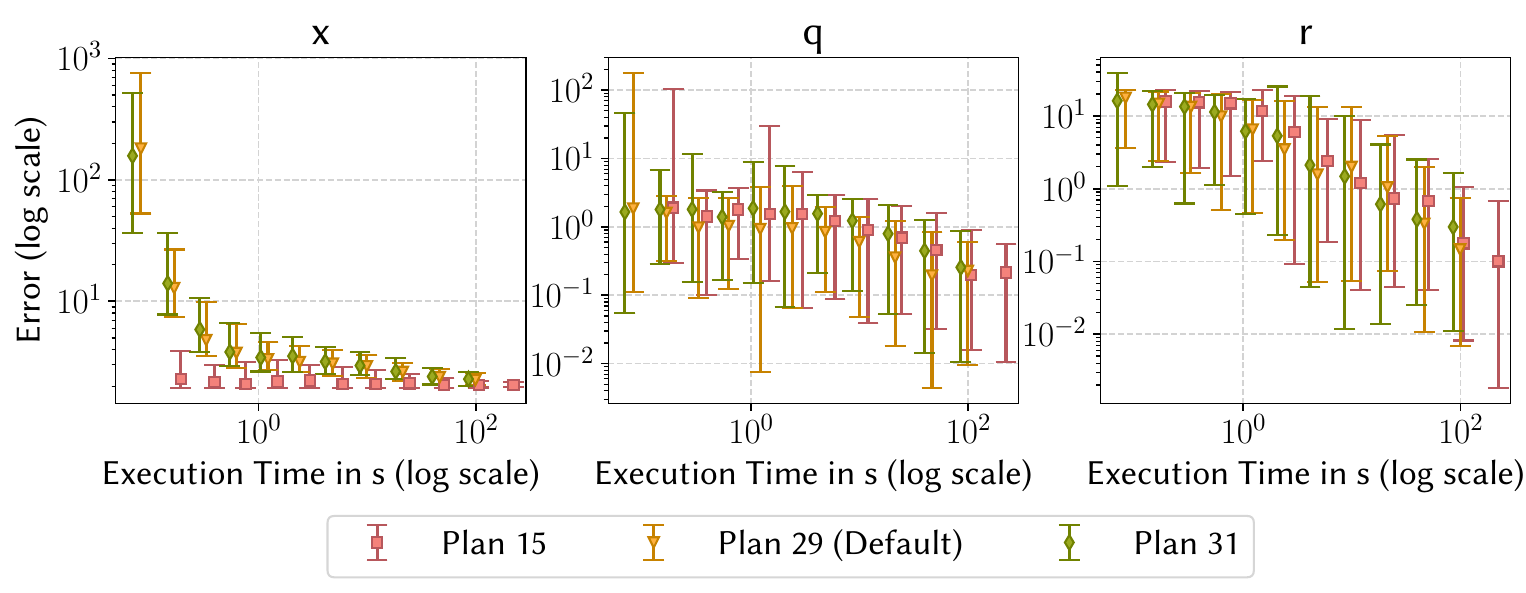}
    \caption{SSI.}
  \end{subfigure}%
  \\
  \begin{subfigure}[c]{0.75\textwidth}
    \centering
    \includegraphics[width=1\textwidth]{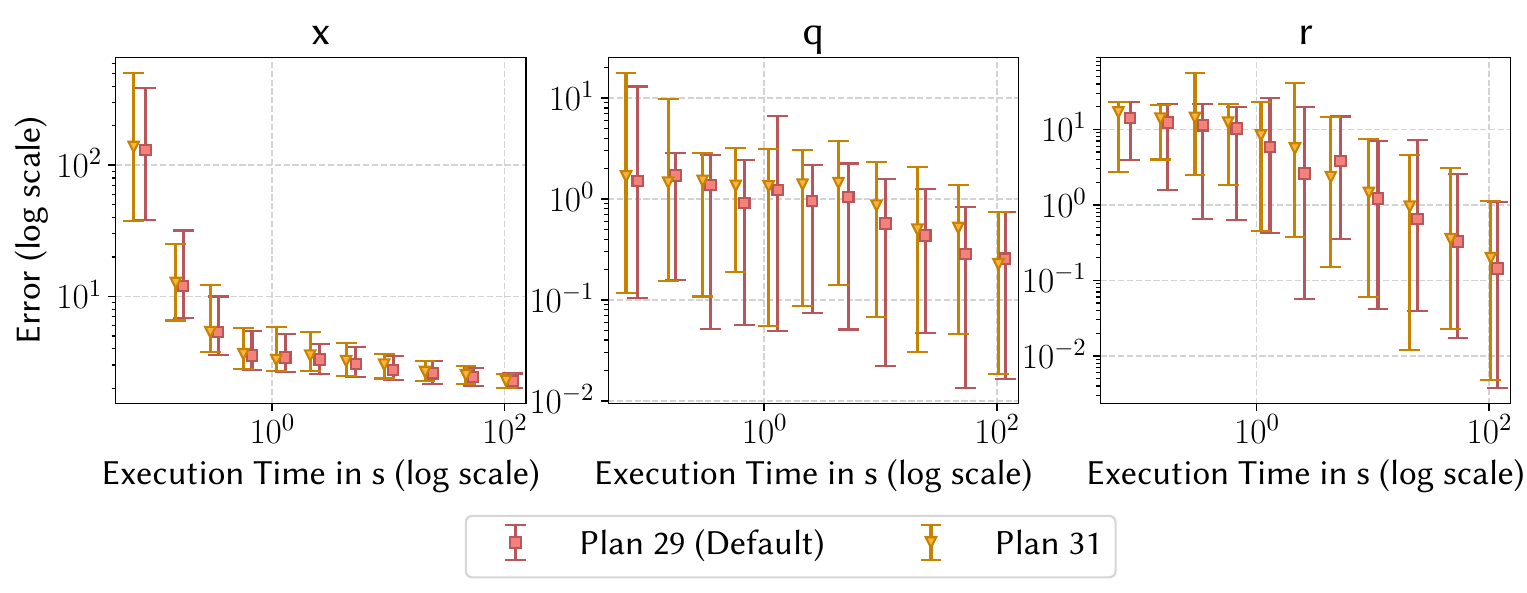}
    \caption{DS.}
  \end{subfigure}%
  \\
  \begin{subfigure}[c]{0.75\textwidth}
    \centering
    \includegraphics[width=1\textwidth]{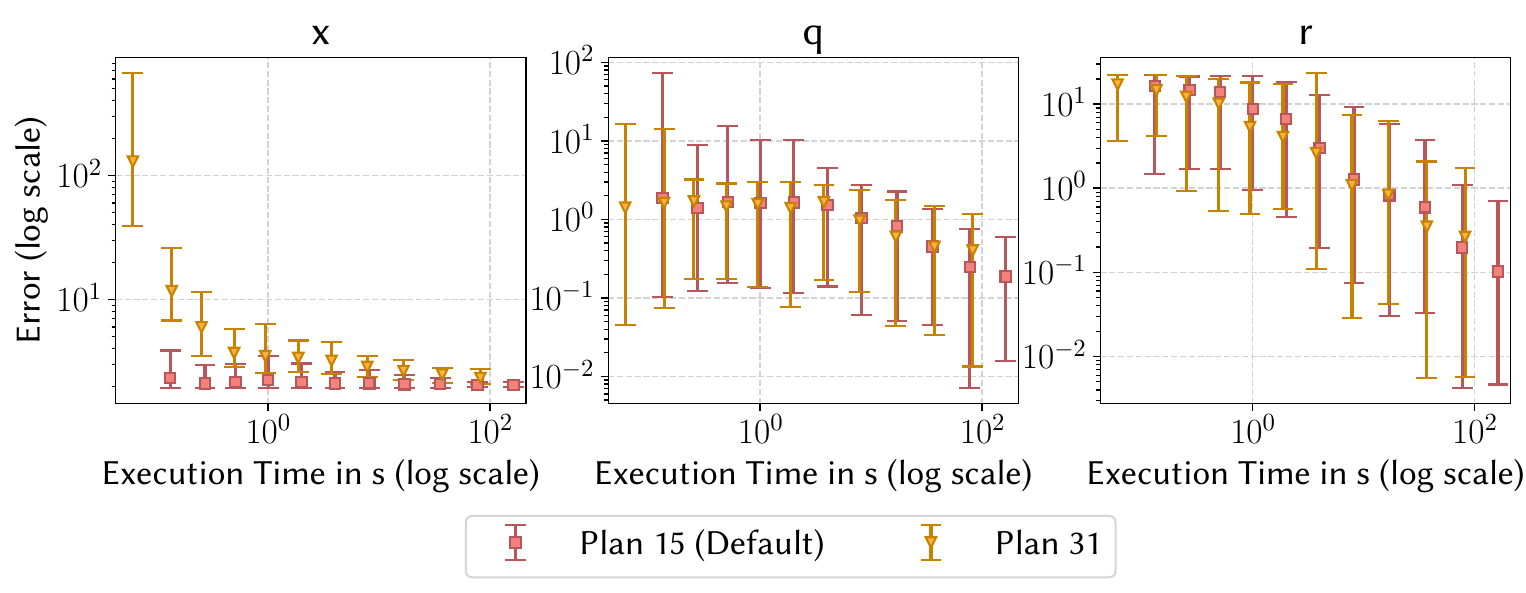}
    \caption{SMC w/ BP.}
  \end{subfigure}%
  \caption{\bRadar{}}
  \label{fig:performance-results-errorbar-radar}
\end{figure}

\begin{figure}[H]
  \centering
  \begin{subfigure}[c]{0.75\textwidth}
    \centering
    \includegraphics[width=1\textwidth]{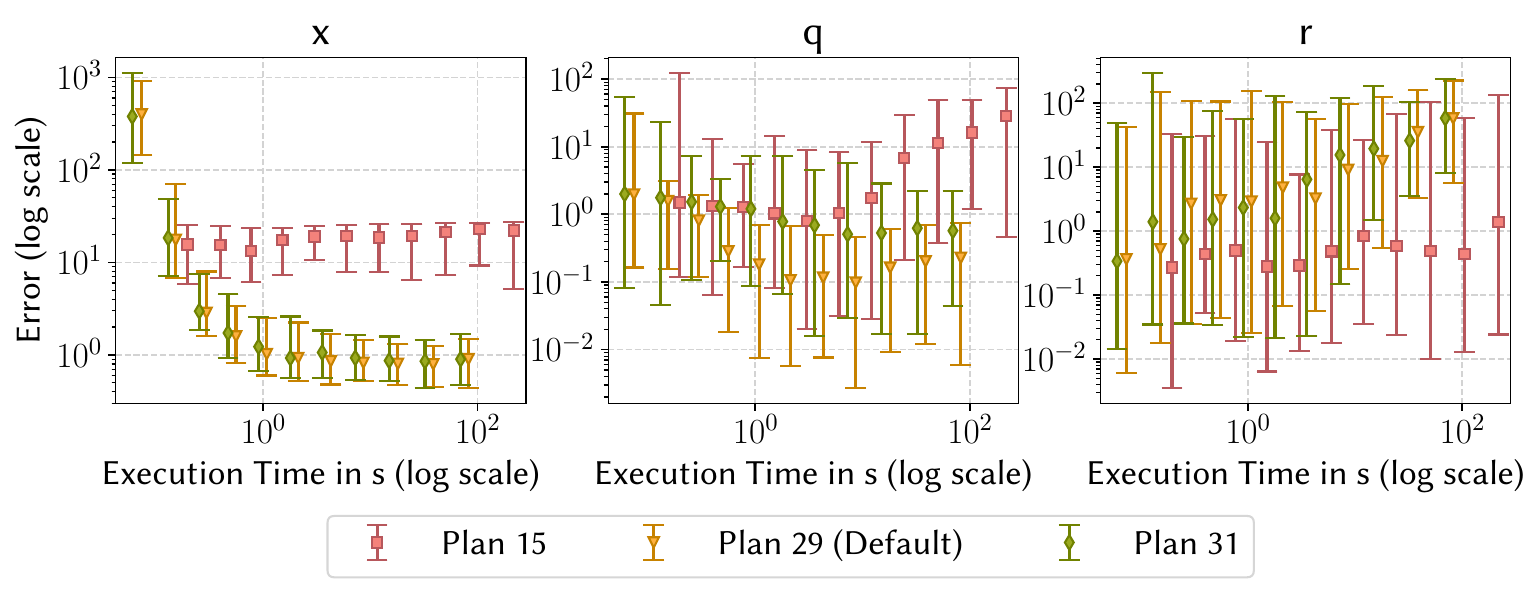}
    \caption{SSI.}
  \end{subfigure}%
  \\
  \begin{subfigure}[c]{0.75\textwidth}
    \centering
    \includegraphics[width=1\textwidth]{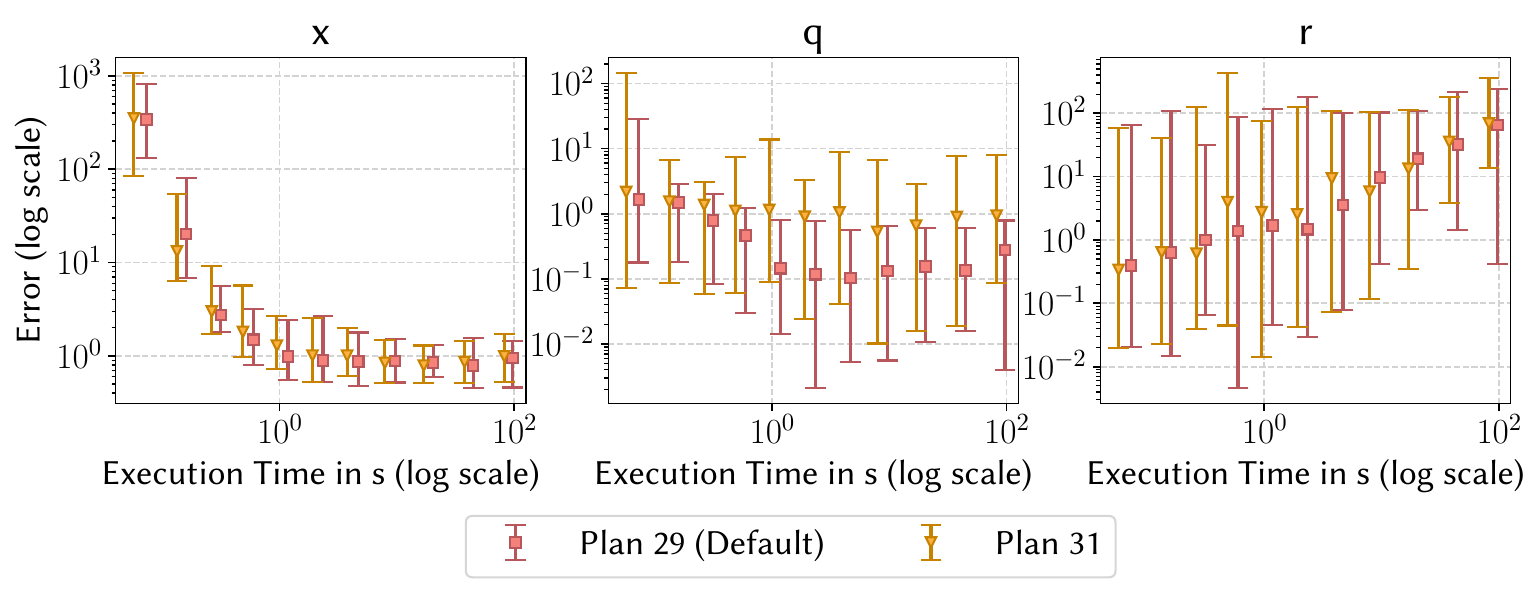}
    \caption{DS.}
  \end{subfigure}%
  \\
  \begin{subfigure}[c]{0.75\textwidth}
    \centering
    \includegraphics[width=1\textwidth]{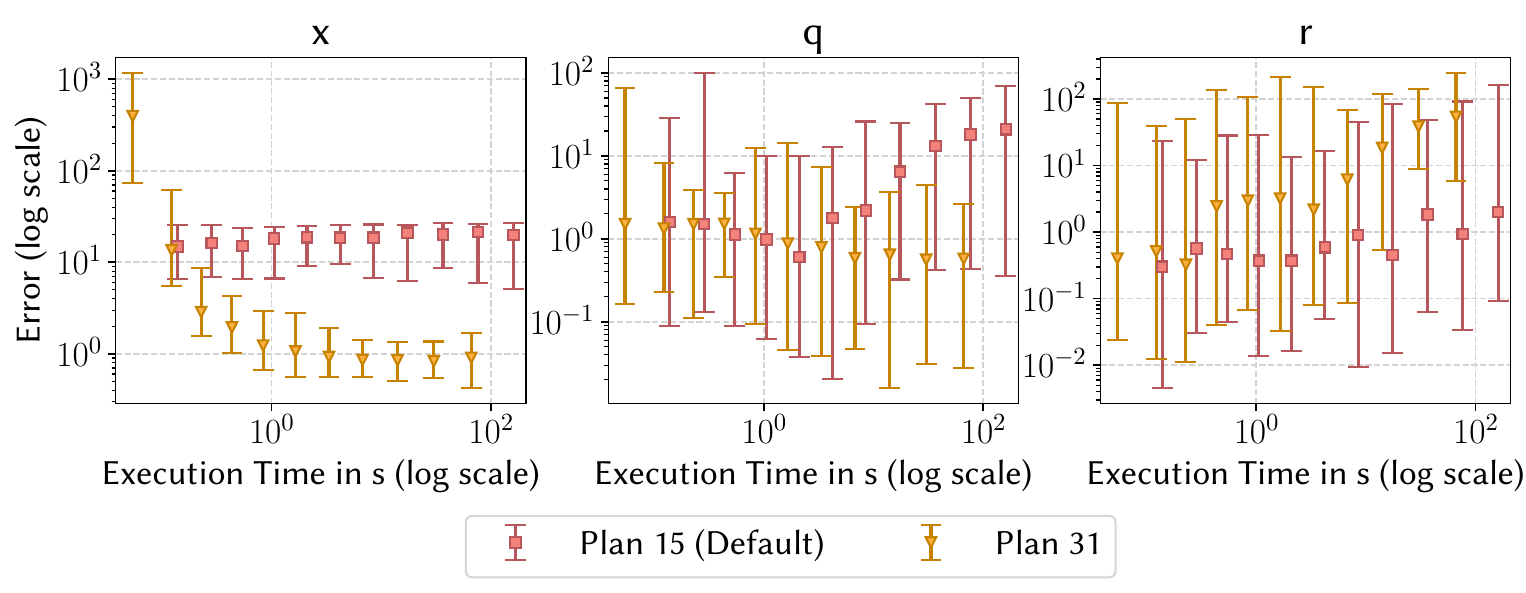}
    \caption{SMC w/ BP.}
  \end{subfigure}%
  \caption{\bEnvnoise{}}
  \label{fig:performance-results-errorbar-envnoise}
\end{figure}

\begin{figure}[H]
  \centering
  \begin{subfigure}[c]{0.5\textwidth}
    \centering
    \includegraphics[width=1\textwidth]{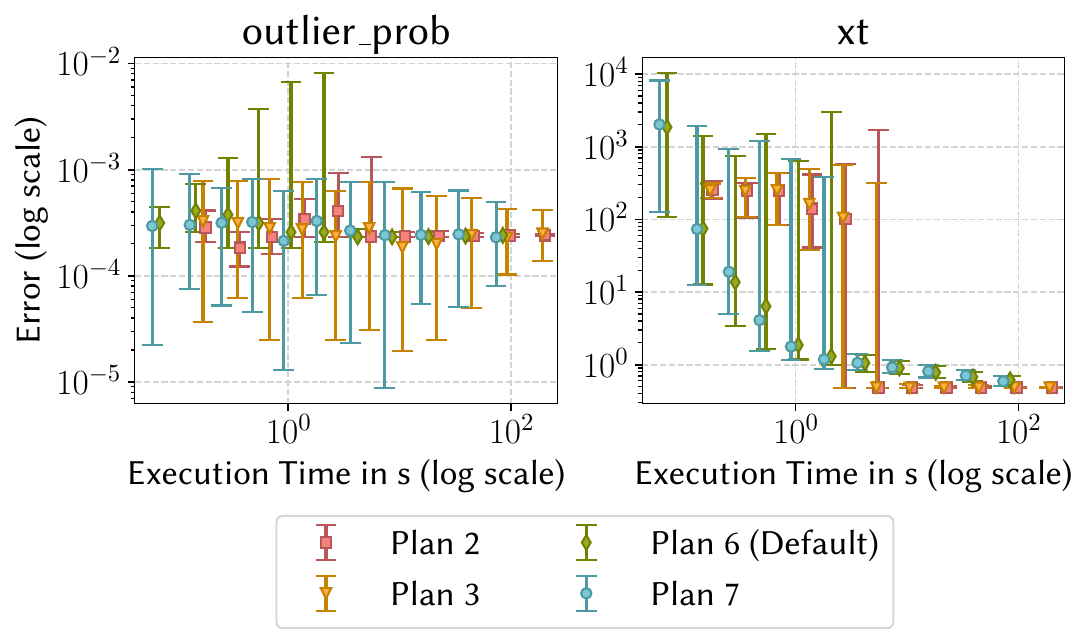}
    \caption{SSI.}
  \end{subfigure}%
  \\
  \begin{subfigure}[c]{0.5\textwidth}
    \centering
    \includegraphics[width=1\textwidth]{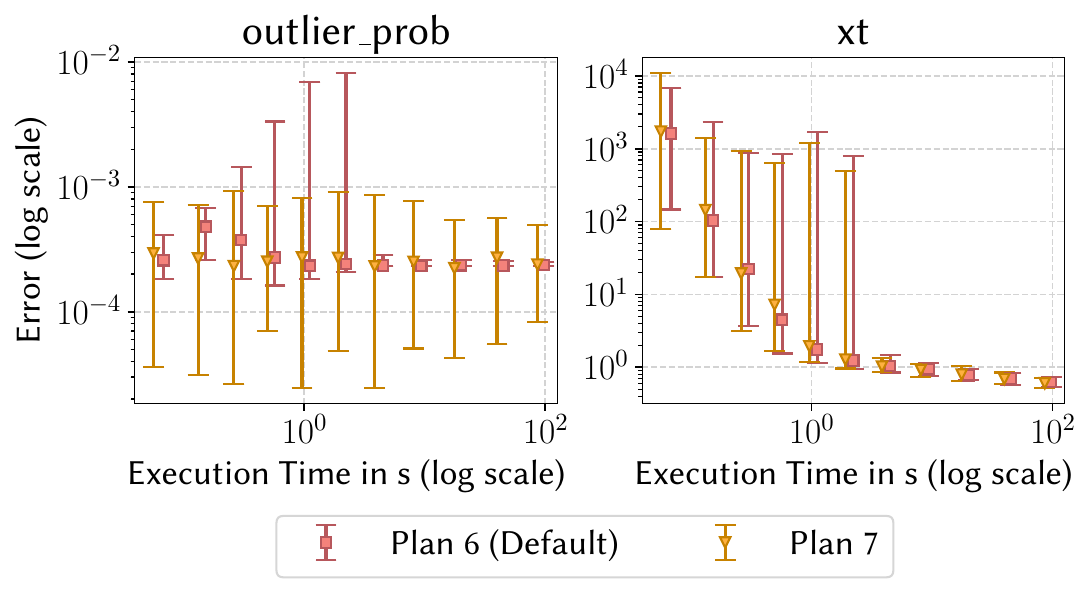}
    \caption{DS.}
  \end{subfigure}%
  \\
  \begin{subfigure}[c]{0.5\textwidth}
    \centering
    \includegraphics[width=0.99\textwidth]{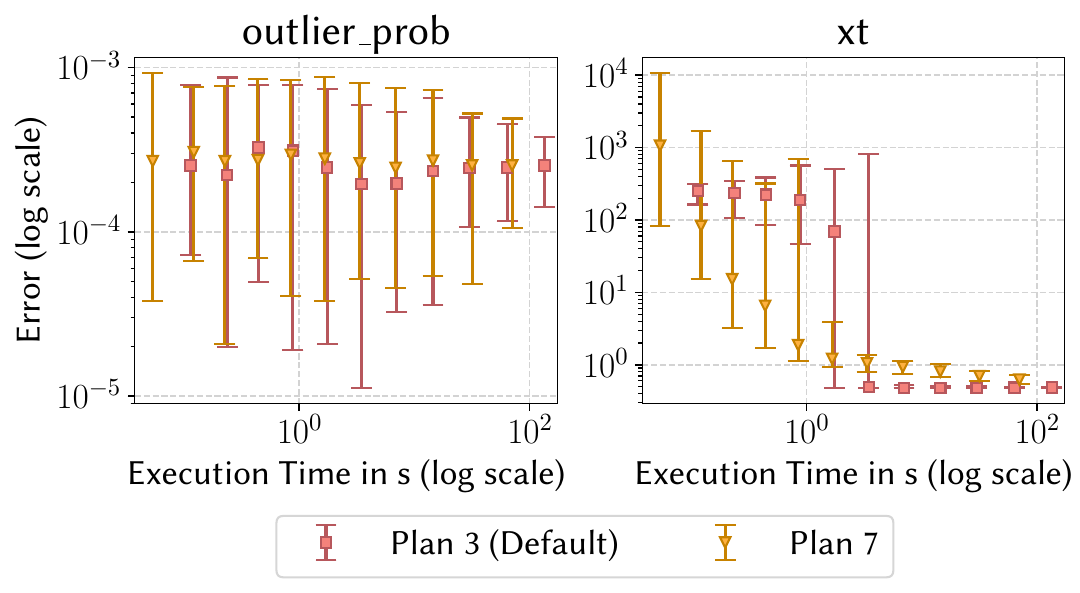}
    \caption{SMC w/ BP.}
  \end{subfigure}%
  \caption{\bOutlier{}}
  \label{fig:performance-results-errorbar-outlier}
\end{figure}

\begin{figure}[H]
  \centering
  \begin{subfigure}[c]{0.5\textwidth}
    \centering
    \includegraphics[width=1\textwidth]{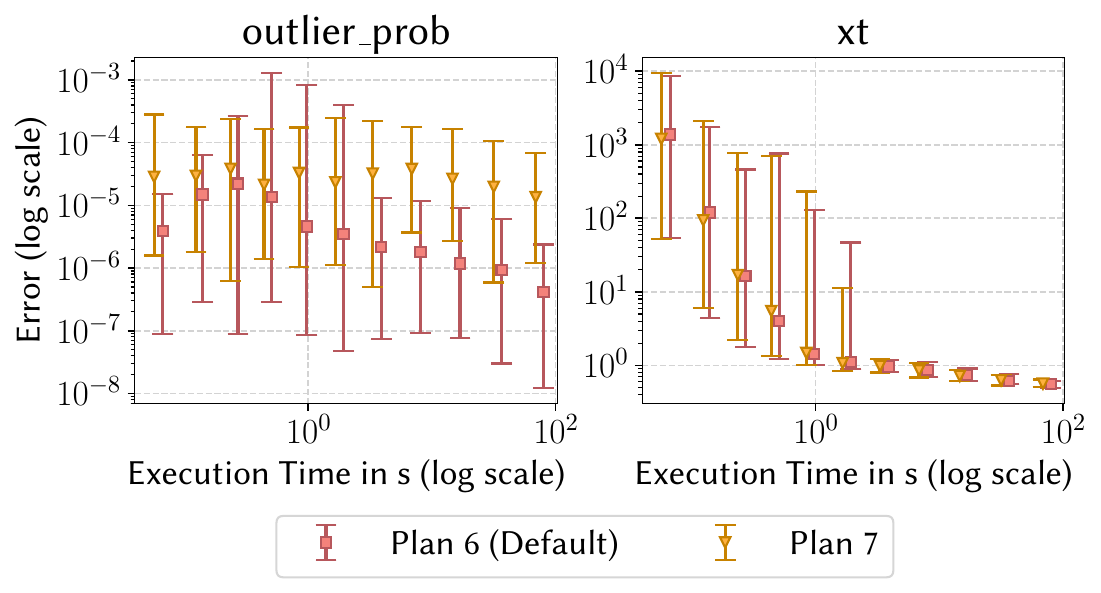}
    \caption{SSI.}
  \end{subfigure}%
  \\
  \begin{subfigure}[c]{0.5\textwidth}
    \centering
    \includegraphics[width=1\textwidth]{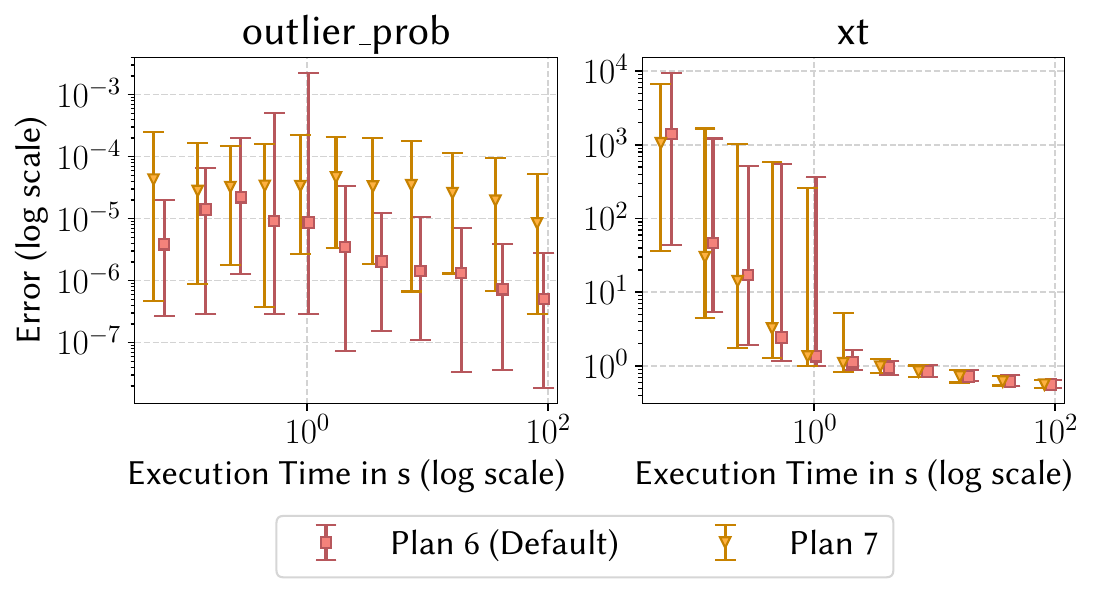}
    \caption{DS.}
  \end{subfigure}%
  \\
  \begin{subfigure}[c]{0.5\textwidth}
    \centering
    \includegraphics[width=1\textwidth]{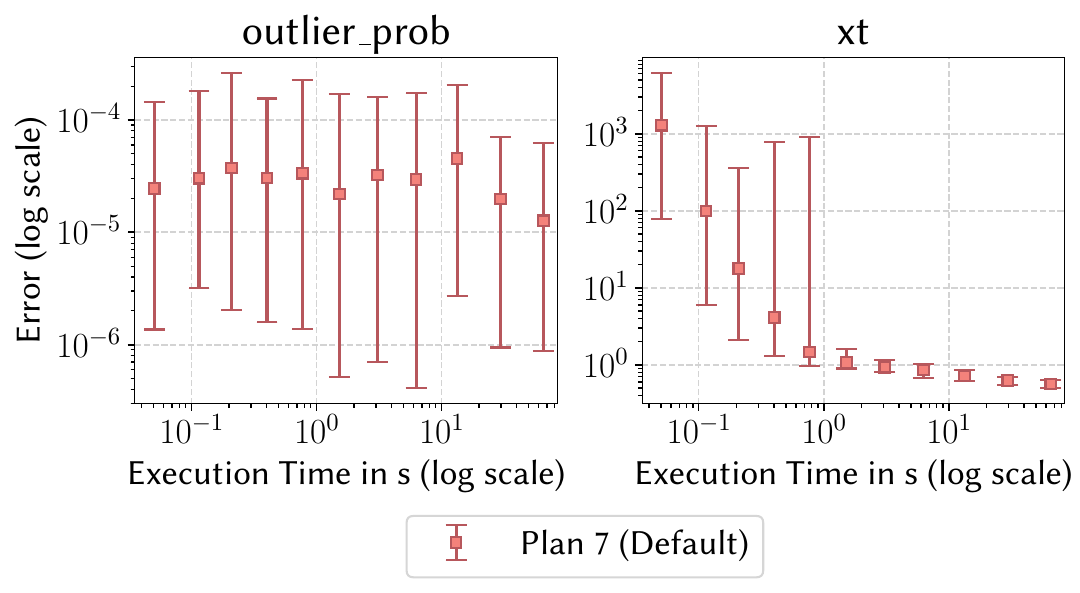}
    \caption{SMC w/ BP.}
  \end{subfigure}%
  \caption{\bOutlierheavy{}.}
  \label{fig:performance-results-errorbar-outlierheavy}
\end{figure}

\begin{figure}[H]
  \centering
  \begin{subfigure}[c]{0.5\textwidth}
    \centering
    \includegraphics[width=1\textwidth]{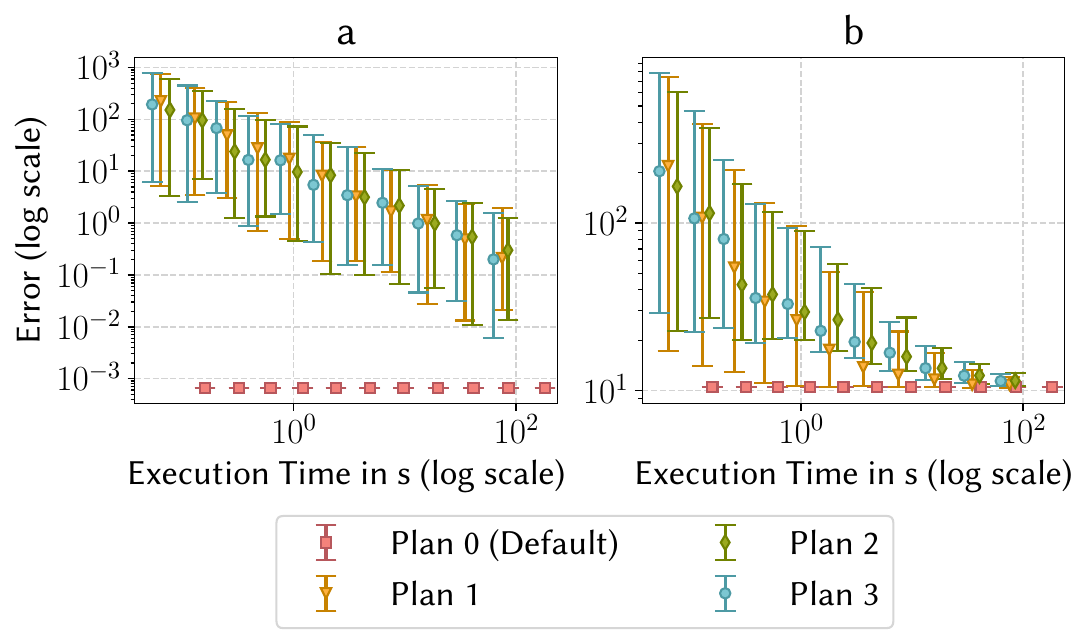}
    \caption{SSI.}
  \end{subfigure}%
  \\
  \begin{subfigure}[c]{0.5\textwidth}
    \centering
    \includegraphics[width=1\textwidth]{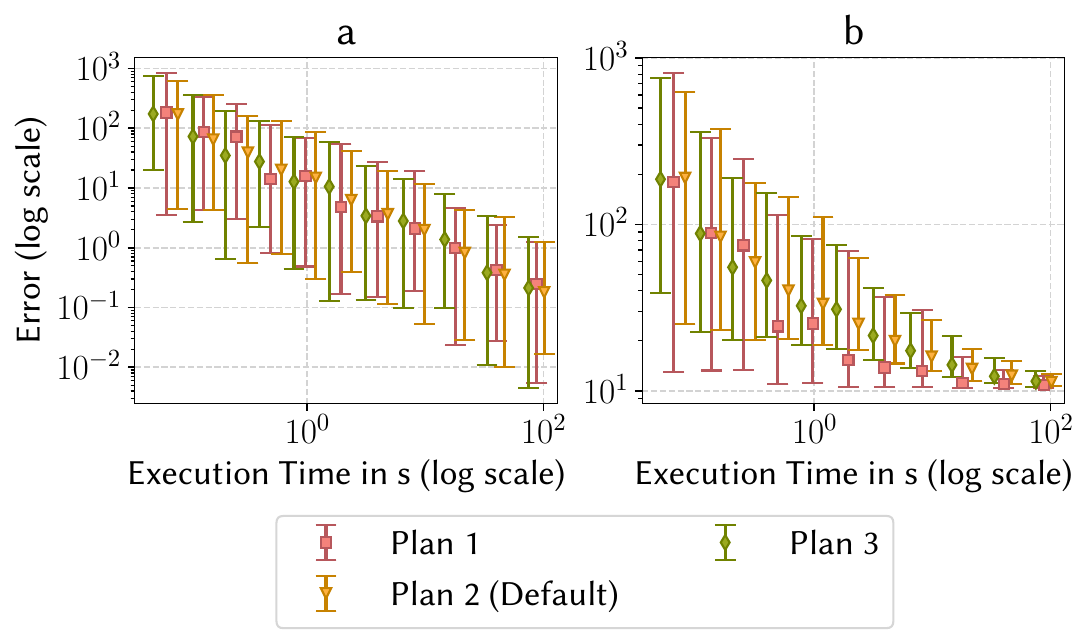}
    \caption{DS.}
  \end{subfigure}%
  \\
  \begin{subfigure}[c]{0.5\textwidth}
    \centering
    \includegraphics[width=1\textwidth]{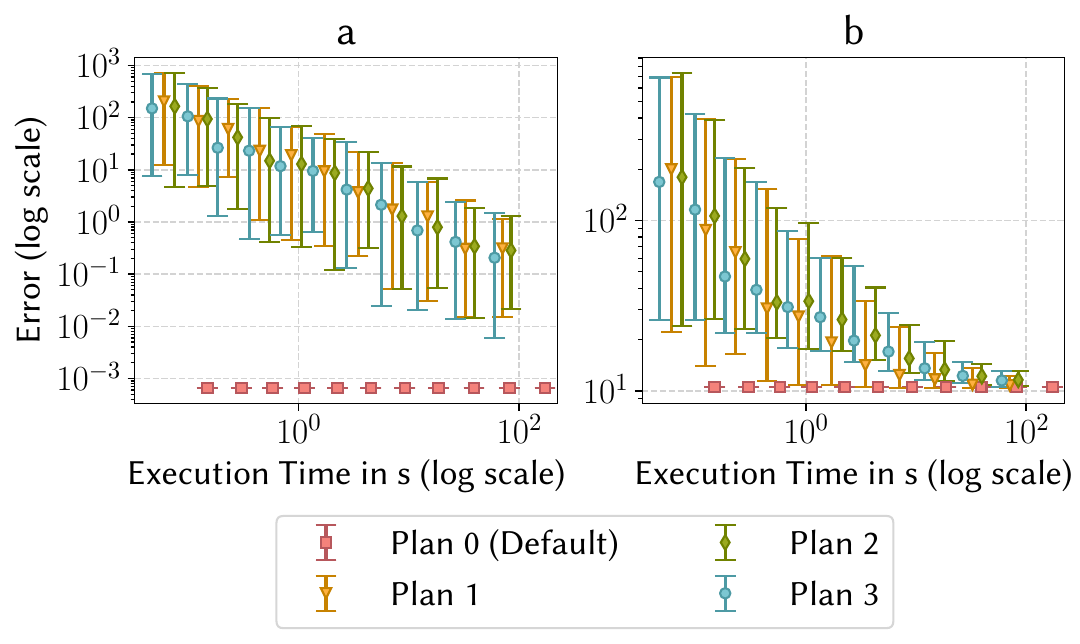}
    \caption{SMC w/ BP.}
  \end{subfigure}%
  \caption{\bGtree{}}
  \label{fig:performance-results-errorbar-tree}
\end{figure}

\begin{figure}[H]
  \centering
  \begin{subfigure}[c]{0.75\textwidth}
    \centering
    \includegraphics[width=1\textwidth]{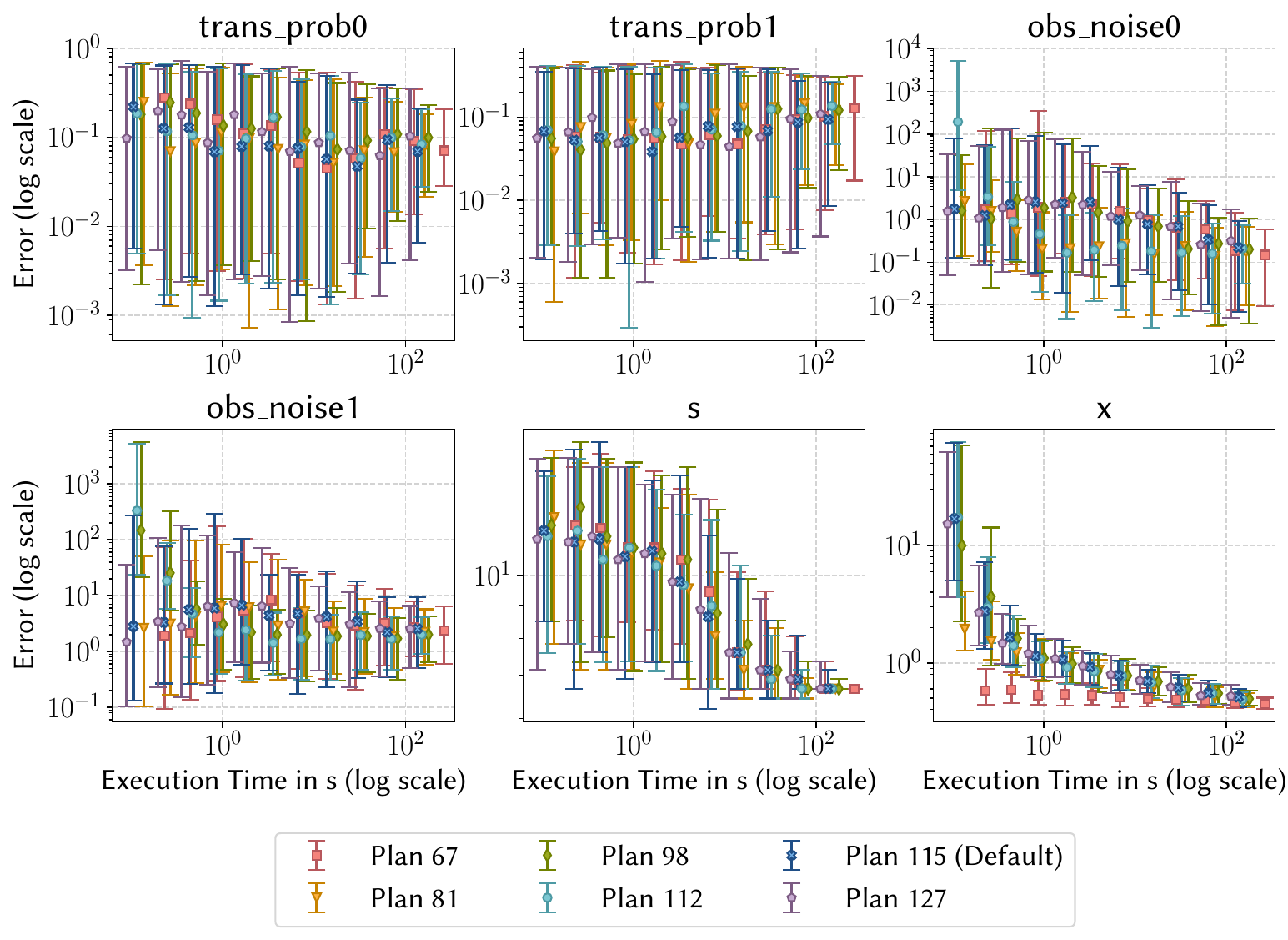}
    \caption{SSI.}
  \end{subfigure}%
  \\
  \begin{subfigure}[c]{0.75\textwidth}
    \centering
    \includegraphics[width=1\textwidth]{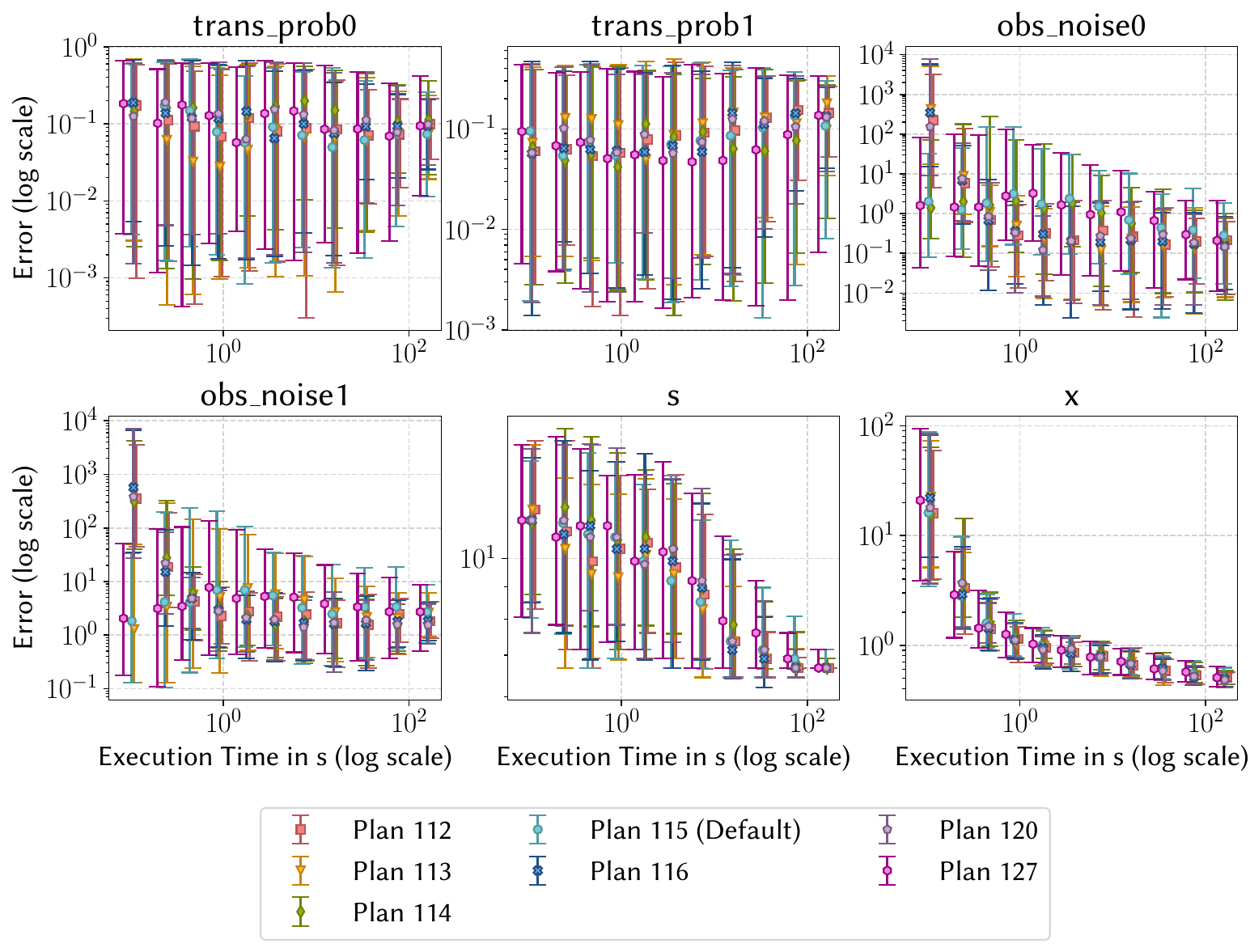}
    \caption{DS.}
  \end{subfigure}%
  \caption{\bSlds{}}
  \label{fig:performance-results-errorbar-slds-1}
\end{figure}

\begin{figure}[H]
  \centering
  \begin{subfigure}[c]{0.75\textwidth}
    \centering
    \includegraphics[width=1\textwidth]{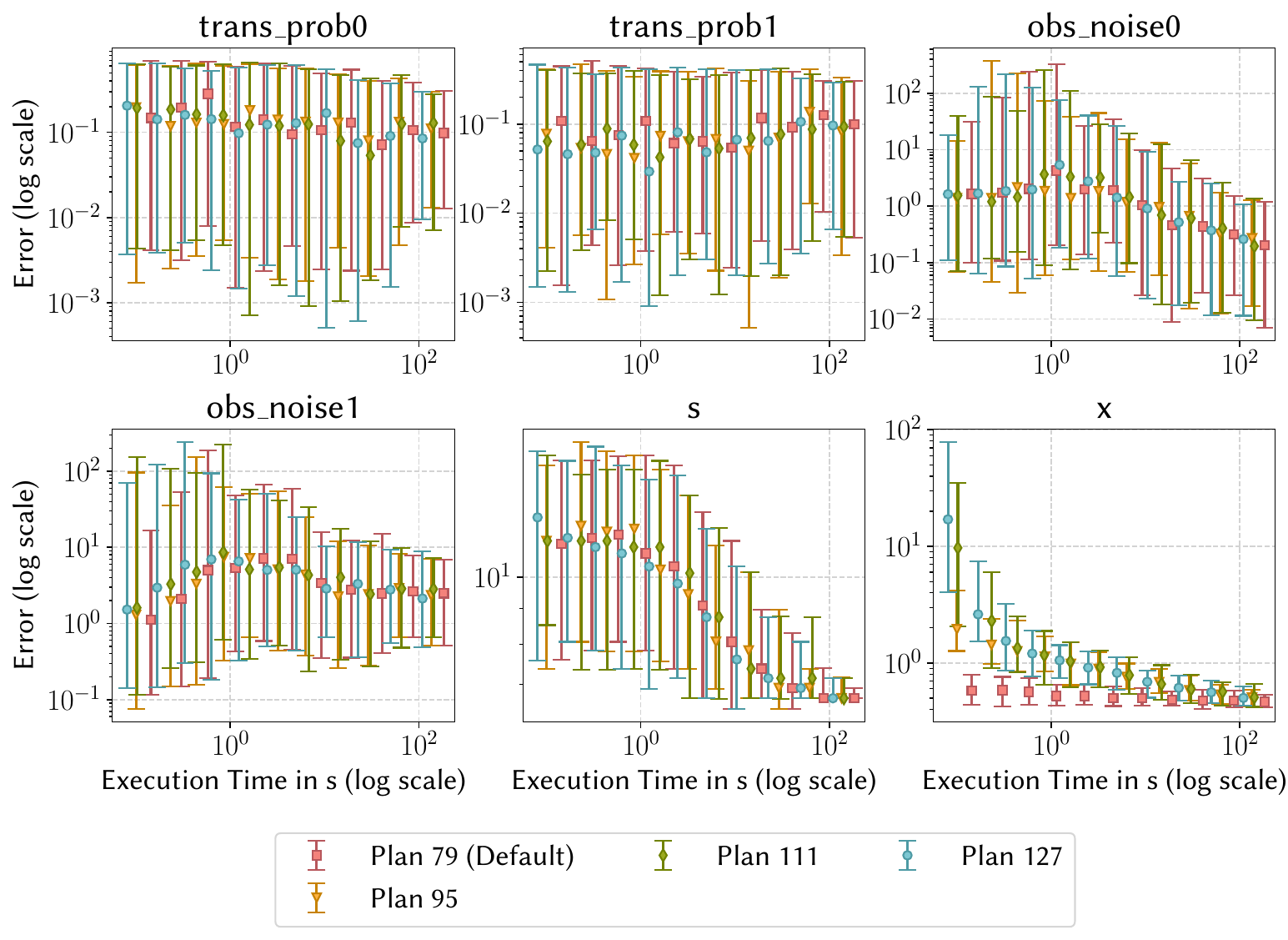}
    \caption{SMC w/ BP.}
  \end{subfigure}%
  \caption{\bSlds{} (continued)}
  \label{fig:performance-results-errorbar-slds-2}
\end{figure}

\begin{figure}[H]
  \centering
  \begin{subfigure}[c]{1\textwidth}
    \centering
    \includegraphics[width=1\textwidth]{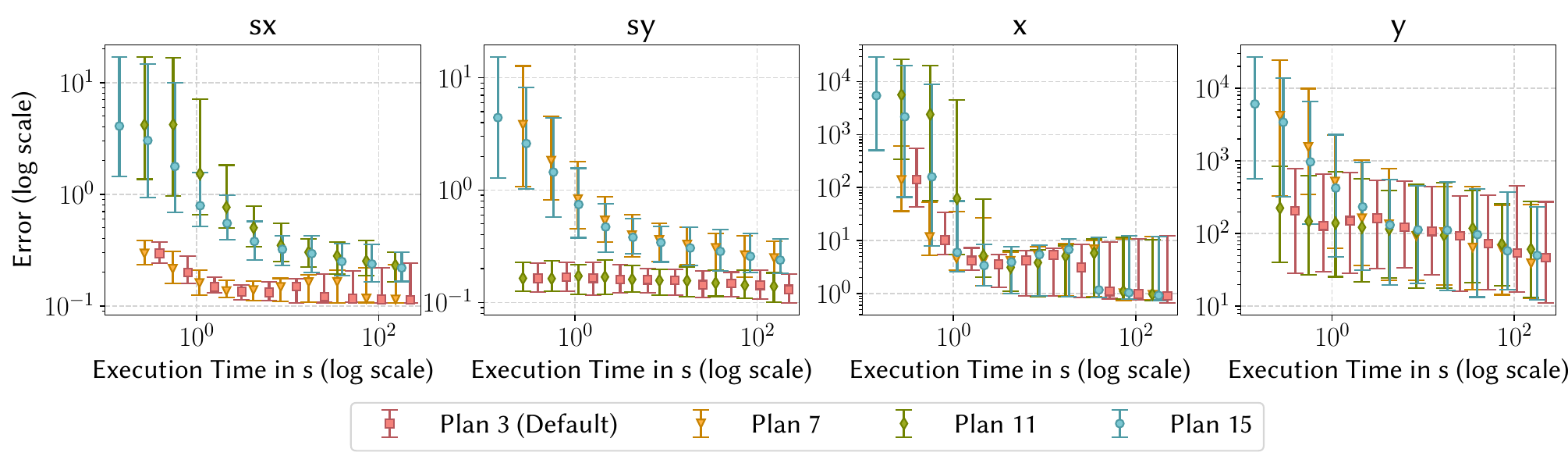}
    \caption{SSI.}
  \end{subfigure}%
  \\
  \begin{subfigure}[c]{1\textwidth}
    \centering
    \includegraphics[width=1\textwidth]{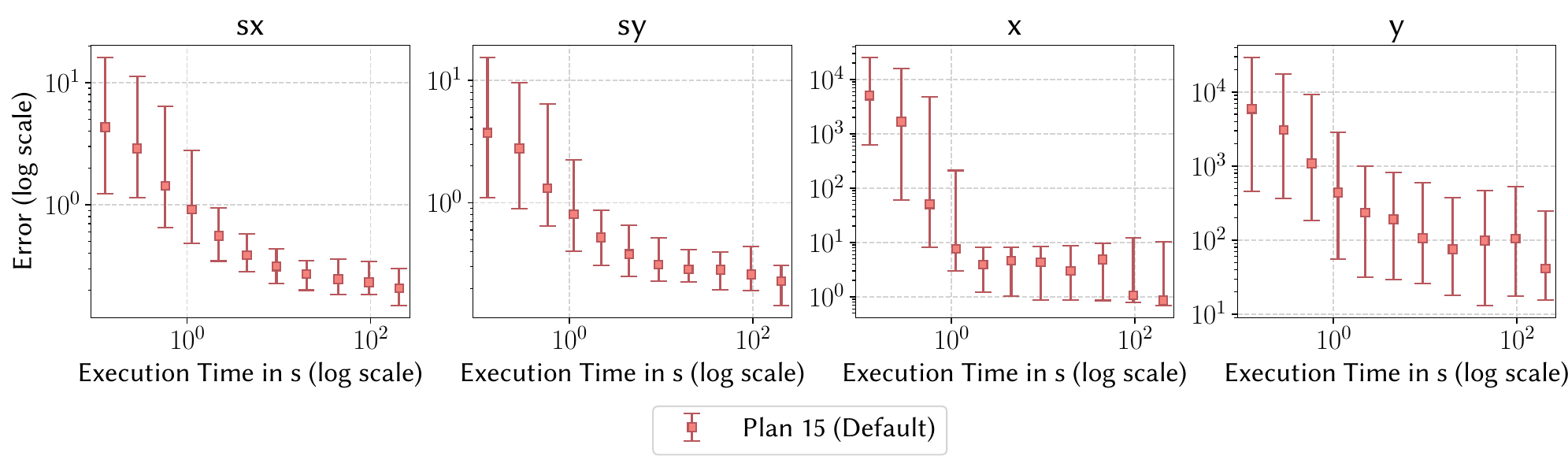}
    \caption{SSI.}
  \end{subfigure}%
  \\
  \begin{subfigure}[c]{1\textwidth}
    \centering
    \includegraphics[width=1\textwidth]{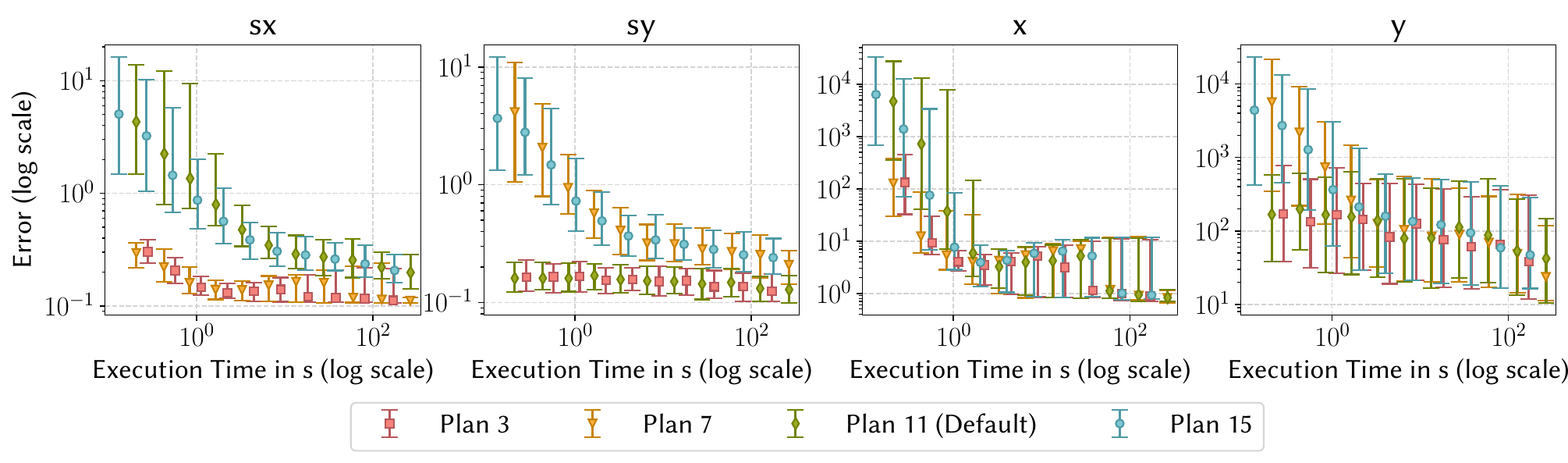}
    \caption{SMC w/ BP.}
  \end{subfigure}%
  \caption{\bRunner{}.}
  \label{fig:performance-results-errorbar-runner}
\end{figure}

\begin{figure}[H]
  \centering
  \begin{subfigure}[c]{1\textwidth}
    \centering
    \includegraphics[width=0.5\textwidth]{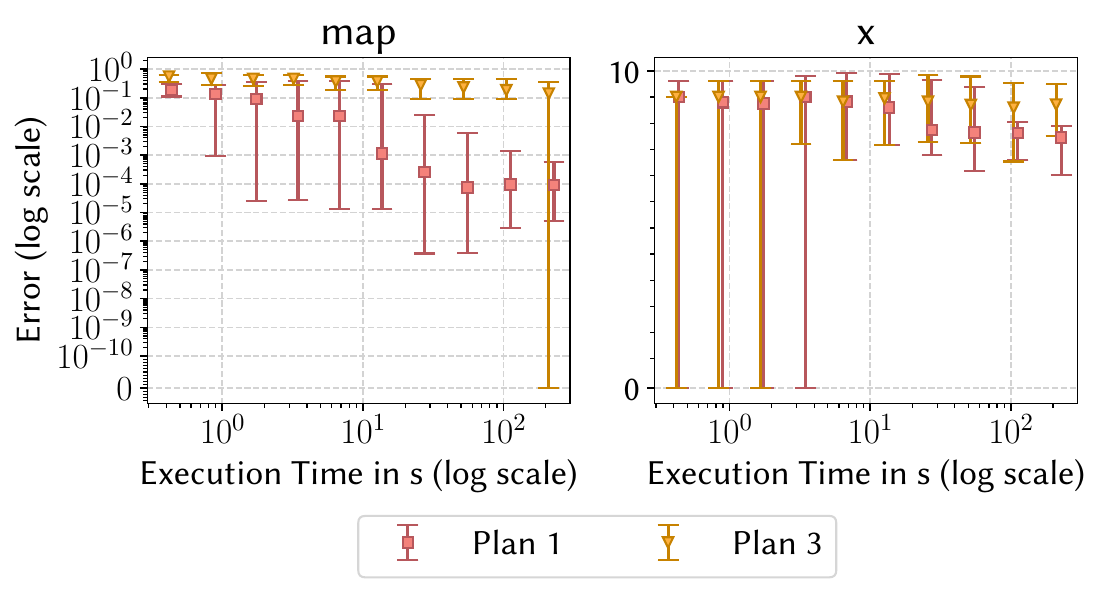}
    \caption{SSI. Plan 0 (Default) and Plan 2 time out for all particles.}
  \end{subfigure}%
  \\
  \begin{subfigure}[c]{1\textwidth}
    \centering
    \includegraphics[width=0.5\textwidth]{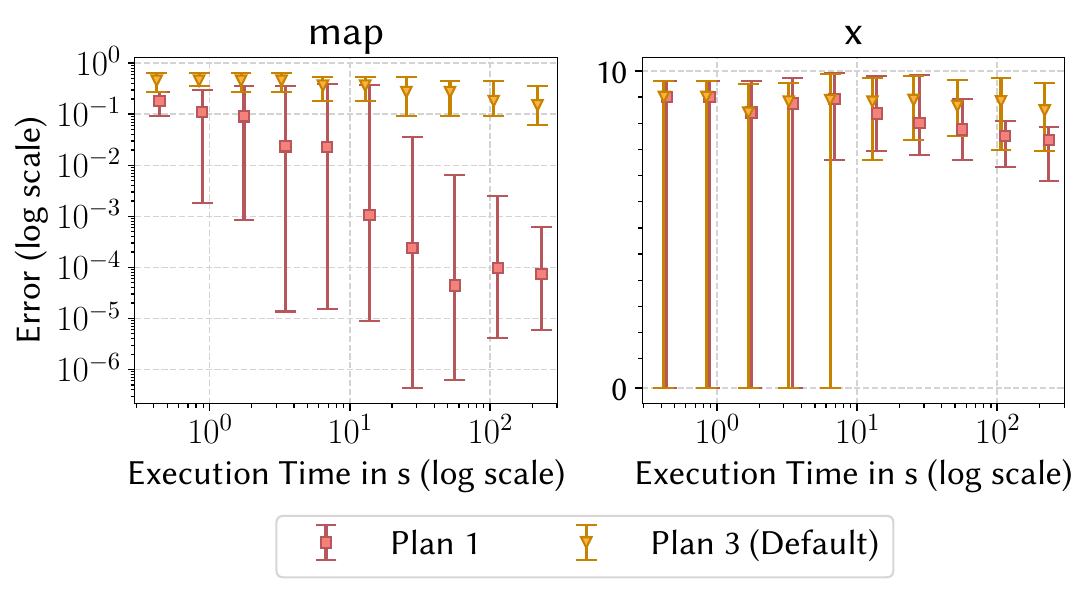}
    \caption{DS.}
  \end{subfigure}%
  \\
  \begin{subfigure}[c]{1\textwidth}
    \centering
    \includegraphics[width=0.5\textwidth]{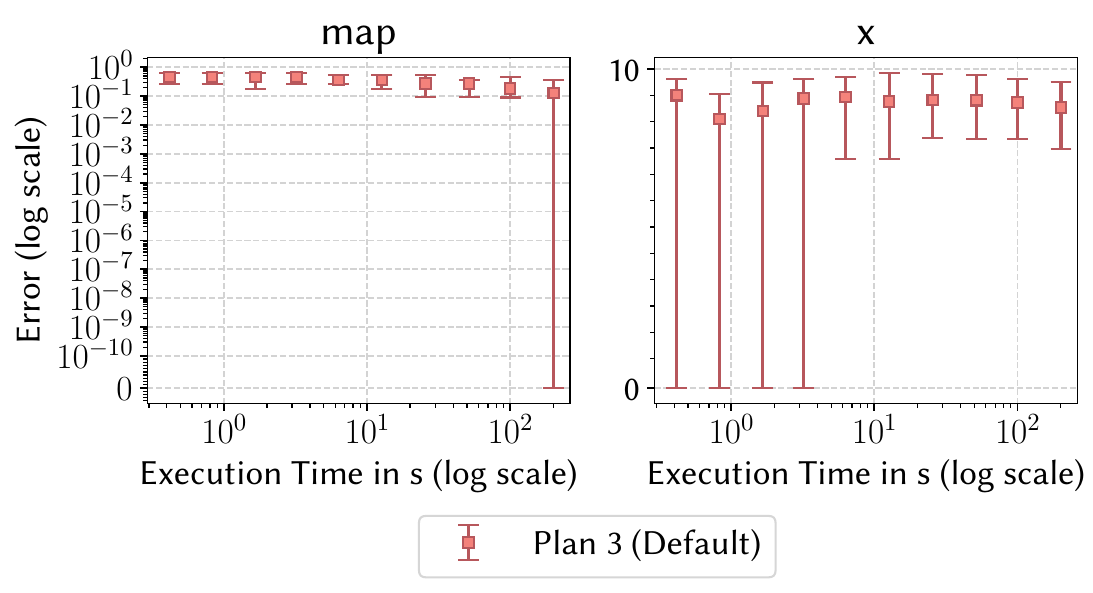}
    \caption{SMC w/ BP.}
  \end{subfigure}%
  \caption{\bSlam{}.}
  \label{fig:performance-results-errorbar-slam}
\end{figure}

\begin{figure}[H]
  \centering
  \begin{subfigure}[c]{1\textwidth}
    \centering
    \includegraphics[width=0.5\textwidth]{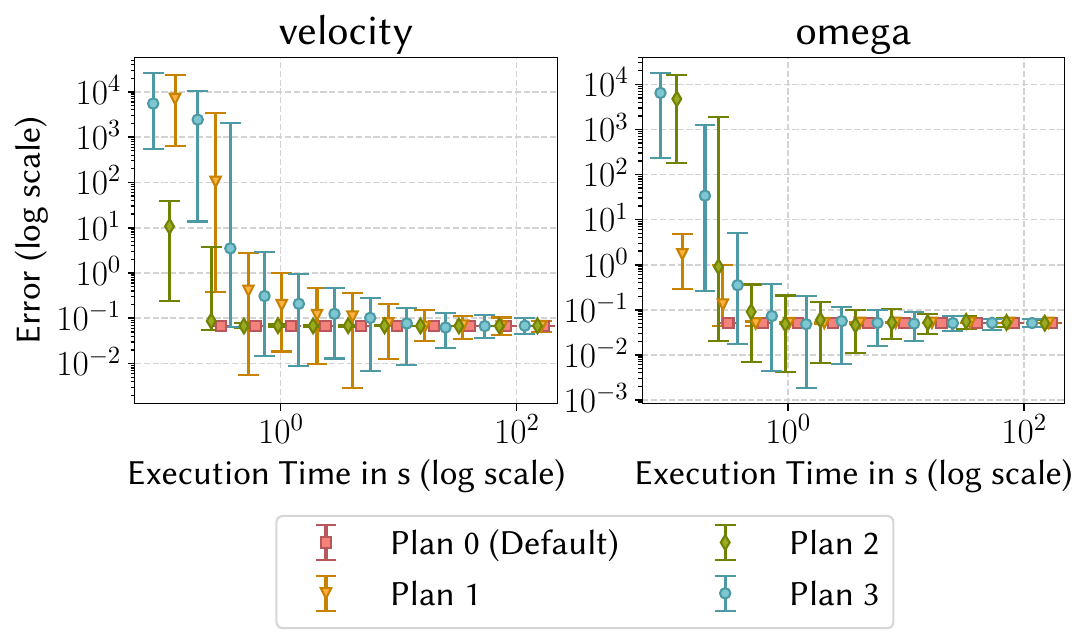}
    \caption{SSI. }
  \end{subfigure}%
  \\
  \begin{subfigure}[c]{1\textwidth}
    \centering
    \includegraphics[width=0.5\textwidth]{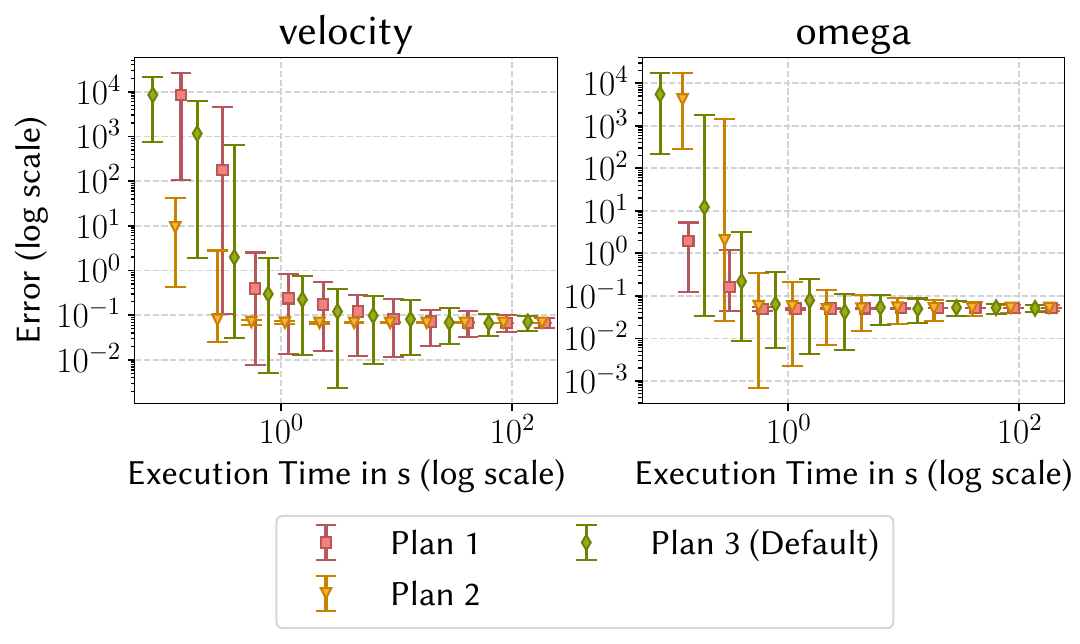}
    \caption{DS.}
  \end{subfigure}%
  \\
  \begin{subfigure}[c]{1\textwidth}
    \centering
    \includegraphics[width=0.5\textwidth]{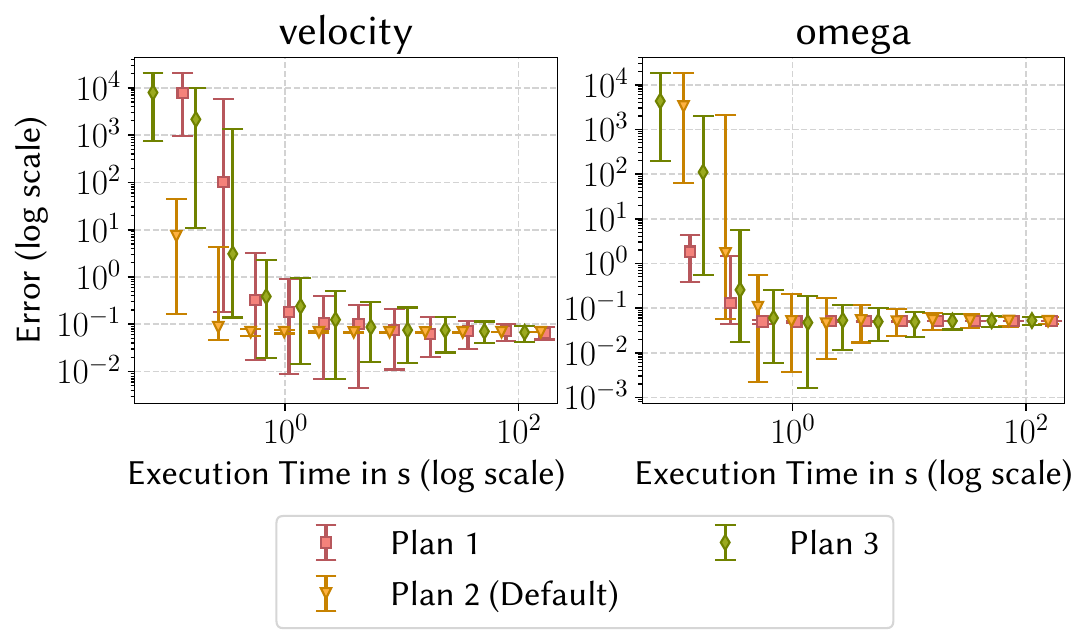}
    \caption{SMC w/ BP.}
  \end{subfigure}%
  \caption{\bWheels{}.}
  \label{fig:performance-results-errorbar-wheels}
\end{figure}

\begin{figure}[H]
  \centering
  \begin{subfigure}[c]{1\textwidth}
    \centering
    \includegraphics[width=1\textwidth]{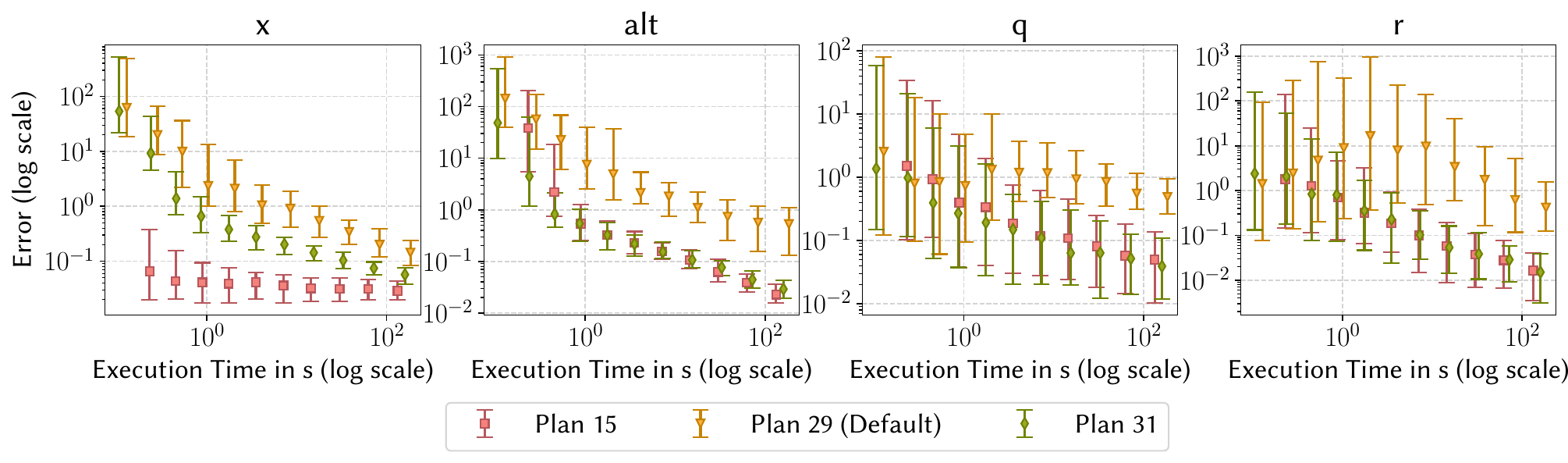}
    \caption{SSI.}
  \end{subfigure}%
  \\
  \begin{subfigure}[c]{1\textwidth}
    \centering
    \includegraphics[width=1\textwidth]{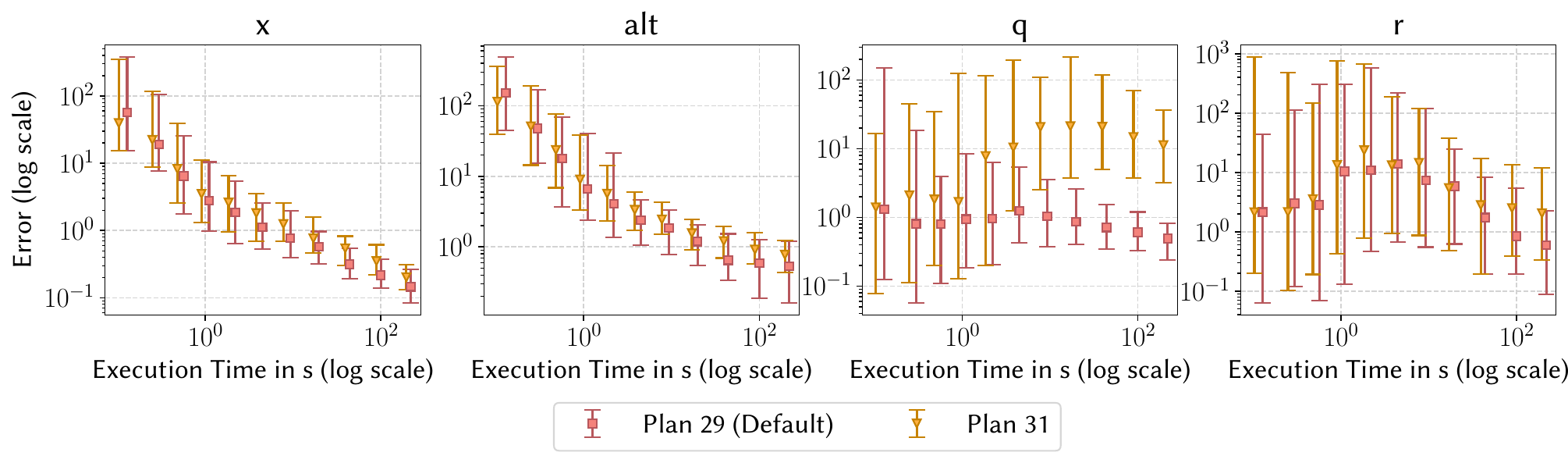}
    \caption{DS.}
  \end{subfigure}%
  \\
  \begin{subfigure}[c]{1\textwidth}
    \centering
    \includegraphics[width=1\textwidth]{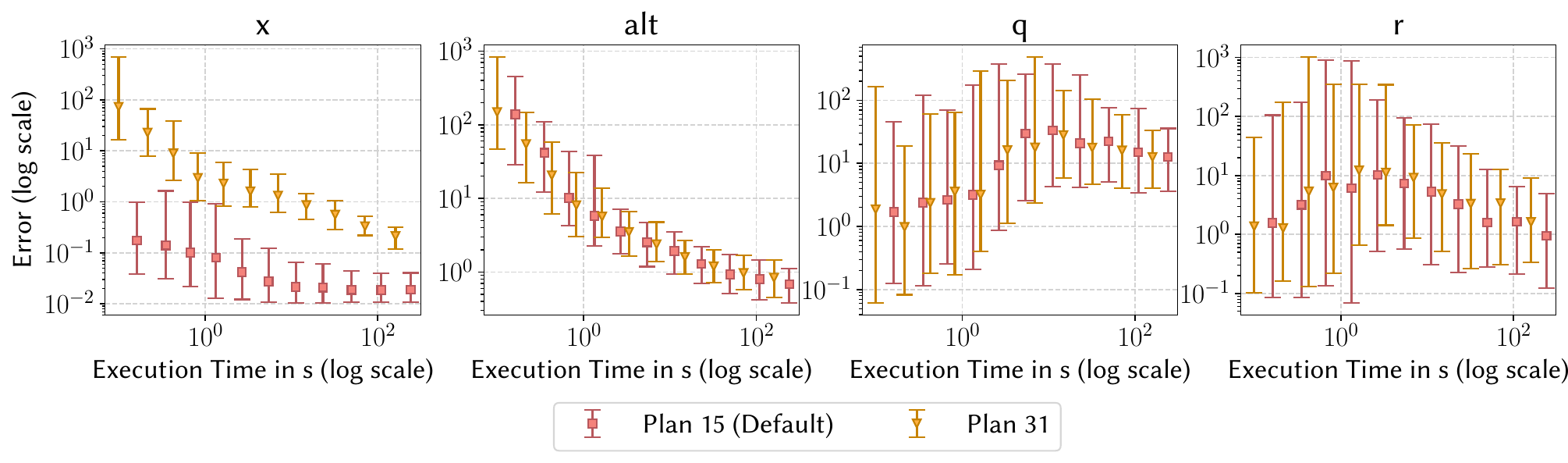}
    \caption{SMC w/ BP.}
  \end{subfigure}%
  \caption{\bAircraft{}. }
  \label{fig:performance-results-errorbar-aircraft}
\end{figure}

\subsection{Execution Time to Reach Target Accuracy}
\begin{table}[H]
  \small
  \centering
  \caption{Best execution time by any satisfiable inference plan and the best execution time by the default inference plan for 90\% of the 100 executions to reach target accuracy, defined as the 90th percentile of error by the default inference plan using greatest number of particles evaluated that did not produce timeouts. Following \citet{atkinson2022semi} and \citet{baudart2020reactive}, reaching target accuracy is defined as $\log(P_{90\%}(\mit{loss})) - \log(\mit{loss}_{\mit{target}}) < 0.5$. -- indicates the default plan timeouts for all particle numbers. \textbf{Bold} marks when an alternative plan performs better than the default plan.}
  \begin{tabular}[h]{llrrrrrr}
    \toprule
    & & \multicolumn{2}{c}{\ssi} & \multicolumn{2}{c}{\ds} & \multicolumn{2}{c}{\bp} \\
     \cmidrule(lr){3-4} \cmidrule(lr){5-6} \cmidrule(lr){7-8}
    Benchmark & Variable & Best (s) & Default (s) & Best (s) & Default (s) & Best (s) & Default (s) \\
    \midrule
\multirow{5}{*}{\bNoise{}}& \mkwm{x} & \textbf{0.19} & 3.62 & \textbf{3.08} & 3.94 & 0.13 & 0.13 \\[-0.4em]
& & \plan{3}& & \plan{7}& & \defaultplan{} \\
& \mkwm{q} & 7.37 & 7.37 & \textbf{9.68} & 16.9 & 69.44 & 69.44 \\[-0.4em]
& & \defaultplan{} & & \plan{4}& & \defaultplan{} \\
& \mkwm{r} & \textbf{11.35} & 15.15 & \textbf{20.2} & 36.95 & 69.44 & 69.44 \\[-0.4em]
& & \plan{3}& & \plan{4}& & \defaultplan{} \\
\midrule
\multirow{5}{*}{\bRadar{}}& \mkwm{x} & \textbf{0.19} & 0.63 & \textbf{0.57} & 0.67 & 0.13 & 0.13 \\[-0.4em]
& & \plan{15}& & \plan{31}& & \defaultplan{} \\
& \mkwm{q} & 9.96 & 9.96 & 2.58 & 2.58 & \textbf{16.66} & 35.71 \\[-0.4em]
& & \defaultplan{} & & \defaultplan{} & & \plan{31}\\
& \mkwm{r} & 46.57 & 46.57 & \textbf{46.73} & 53.95 & \textbf{37.16} & 77.43 \\[-0.4em]
& & \defaultplan{} & & \plan{31}& & \plan{31}\\
\midrule
\multirow{5}{*}{\bEnvnoise{}}& \mkwm{x} & \textbf{0.47} & 0.56 & 0.6 & 0.6 & \textbf{0.12} & 0.14 \\[-0.4em]
& & \plan{31}& & \defaultplan{} & & \plan{31}\\
& \mkwm{q} & 0.3 & 0.3 & 0.32 & 0.32 & \textbf{0.05} & 0.14 \\[-0.4em]
& & \defaultplan{} & & \defaultplan{} & & \plan{31}\\
& \mkwm{r} & \textbf{0.06} & 0.07 & \textbf{0.06} & 0.07 & \textbf{0.05} & 0.14 \\[-0.4em]
& & \plan{31}& & \plan{31}& & \plan{31}\\
\midrule
\multirow{3}{*}{\bOutlier{}}& \mkwm{outlier\_prob} & 0.07 & 0.07 & \textbf{0.06} & 0.07 & \textbf{0.05} & 0.11 \\[-0.4em]
& & \defaultplan{} & & \plan{7}& & \plan{7}\\
& \mkwm{xt} & \textbf{3.6} & 4.19 & \textbf{3.85} & 4.51 & \textbf{3.36} & 7.02 \\[-0.4em]
& & \plan{7}& & \plan{7}& & \plan{7}\\
\midrule
\multirow{3}{*}{\bOutlierheavy{}}& \mkwm{outlier\_prob} & 36.62 & 36.62 & 19.04 & 19.04 & 0.05 & 0.05 \\[-0.4em]
& & \defaultplan{} & & \defaultplan{} & & \defaultplan{} \\
& \mkwm{xt} & \textbf{3.32} & 3.88 & 2.08 & 2.08 & 1.53 & 1.53 \\[-0.4em]
& & \plan{7}& & \defaultplan{} & & \defaultplan{} \\
\midrule
\multirow{3}{*}{\bGtree{}}& \mkwm{a} & 0.16 & 0.16 & \textbf{33.41} & 46.93 & 0.15 & 0.15 \\[-0.4em]
& & \defaultplan{} & & \plan{3}& & \defaultplan{} \\
& \mkwm{b} & 0.16 & 0.16 & \textbf{3.94} & 4.83 & 0.15 & 0.15 \\[-0.4em]
& & \defaultplan{} & & \plan{1}& & \defaultplan{} \\
\midrule
\multirow{10}{*}{\bSlds{}}& \mkwm{trans\_prob0} & \textbf{0.09} & 0.23 & \textbf{0.08} & 0.1 & \textbf{0.08} & 0.14 \\[-0.4em]
& & \plan{127}& & \plan{127}& & \plan{127}\\
& \mkwm{trans\_prob1} & \textbf{0.09} & 0.11 & \textbf{0.08} & 0.1 & \textbf{0.08} & 0.14 \\[-0.4em]
& & \plan{127}& & \plan{127}& & \plan{127}\\
& \mkwm{obs\_noise0} & \textbf{0.51} & 63.32 & \textbf{0.47} & 33.25 & \textbf{22.51} & 40.53 \\[-0.4em]
& & \plan{81}& & \plan{113}& & \plan{127}\\
& \mkwm{obs\_noise1} & \textbf{0.47} & 3.23 & \textbf{0.46} & 14.94 & 0.14 & 0.14 \\[-0.4em]
& & \plan{112}& & \plan{120}& & \defaultplan{} \\
& \mkwm{s} & \textbf{0.09} & 0.11 & \textbf{0.08} & 0.1 & \textbf{0.08} & 0.14 \\[-0.4em]
& & \plan{127}& & \plan{127}& & \plan{127}\\
& \mkwm{x} & \textbf{0.23} & 0.81 & 0.85 & 0.85 & 0.14 & 0.14 \\[-0.4em]
& & \plan{67}& & \defaultplan{} & & \defaultplan{} \\
  \bottomrule
  \end{tabular}
  \label{tab:comparison-results-full-1}
\end{table}

\begin{table}[H]
  \small
  \centering
  \caption{Best execution time by any satisfiable inference plan and the best execution time by the default inference plan for 90\% of the 100 executions to reach target accuracy, defined as the 90th percentile of error by the default inference plan using greatest number of particles evaluated that did not produce timeouts. Following \citet{atkinson2022semi} and \citet{baudart2020reactive}, reaching target accuracy is defined as $\log(P_{90\%}(\mit{loss})) - \log(\mit{loss}_{\mit{target}}) < 0.5$. -- indicates the default plan timeouts for all particle counts. \textbf{Bold} marks when an alternative plan performs better than the default plan.}
  \begin{tabular}[h]{llrrrrrr}
    \toprule
    & & \multicolumn{2}{c}{\ssi} & \multicolumn{2}{c}{\ds} & \multicolumn{2}{c}{\bp} \\
     \cmidrule(lr){3-4} \cmidrule(lr){5-6} \cmidrule(lr){7-8}
    Benchmark & Variable & Best (s) & Default (s) & Best (s) & Default (s) & Best (s) & Default (s) \\
    \midrule
\multirow{7}{*}{\bRunner{}}& \mkwm{sx} & \textbf{0.27} & 0.39 & 2.22 & 2.22 & \textbf{0.21} & 3.29 \\[-0.4em]
& & \plan{7}& & \defaultplan{} & & \plan{7}\\
& \mkwm{sy} & \textbf{0.27} & 0.39 & 2.22 & 2.22 & 0.21 & 0.21 \\[-0.4em]
& & \plan{11}& & \defaultplan{} & & \defaultplan{} \\
& \mkwm{x} & 0.81 & 0.81 & 2.22 & 2.22 & \textbf{265.94} & 267.9 \\[-0.4em]
& & \defaultplan{} & & \defaultplan{} & & \plan{7}\\
& \mkwm{y} & \textbf{0.27} & 0.39 & 9.39 & 9.39 & \textbf{2.25} & 13.35 \\[-0.4em]
& & \plan{11}& & \defaultplan{} & & \plan{3}\\
\midrule
\multirow{3}{*}{\bWheels{}}& \mkwm{velocity} & 0.31 & 0.31 & \textbf{0.55} & 6.28 & 0.51 & 0.51 \\[-0.4em]
& & \defaultplan{} & & \plan{2}& & \defaultplan{} \\
& \mkwm{omega} & 0.31 & 0.31 & \textbf{0.6} & 3.09 & \textbf{0.56} & 1.95 \\[-0.4em]
& & \defaultplan{} & & \plan{1}& & \plan{1}\\
\midrule
\multirow{3}{*}{\bSlam{}}& \mkwm{map} & -- & -- & 0.42 & 0.42 & 0.42 & 0.42 \\[-0.4em]
& & & & \defaultplan{} & & \defaultplan{} \\
& \mkwm{x} & -- & -- & 0.42 & 0.42 & 0.42 & 0.42 \\[-0.4em]
& & & & \defaultplan{} & & \defaultplan{} \\
\midrule
\multirow{5}{*}{\bAircraft{}}& \mkwm{x} & \textbf{0.23} & 38.31 & \textbf{39.34} & 44.06 & 5.43 & 5.43 \\[-0.4em]
& & \plan{15}& & \plan{31}& & \defaultplan{} \\
& \mkwm{alt} & \textbf{0.46} & 8.5 & 9.32 & 9.32 & 11.21 & 11.21 \\[-0.4em]
& & \plan{31}& & \defaultplan{} & & \defaultplan{} \\
& \mkwm{q} & \textbf{1.75} & 17.8 & 19.74 & 19.74 & 0.16 & 0.16 \\[-0.4em]
& & \plan{15}& & \defaultplan{} & & \defaultplan{} \\
& \mkwm{r} & \textbf{0.89} & 183.78 & 99.8 & 99.8 & 49.67 & 49.67 \\[-0.4em]
& & \plan{15}& & \defaultplan{} & & \defaultplan{} \\
  \bottomrule
  \end{tabular}
  \label{tab:comparison-results-full-2}
\end{table}

\subsection{Best Accuracy Given Runtime}

\begin{table}[H]
  \small
  \centering
  \caption{Ratio of the best accuracy achieved by any satisfiable inference plan with median runtime less than or equal to the median runtime of the default inference plan, aggregated across particle counts using geometric mean.  Accuracy is measured as the 90th percentile error. -- indicates the default plan timeouts for all particle counts. }
  \begin{tabular}[h]{llrrrrrr}
    \toprule
    & & \multicolumn{3}{c}{Gmean of Best Accuracy Ratio} \\
     \cmidrule(lr){3-5}
    Benchmark & Variable & \ssi & \ds & \bp \\
\midrule
\multirow{3}{*}{\bNoise{}}& \mkwm{x} & 2.36x & 1.05x & 1.05x \\
& \mkwm{q} & 1.33x & 1.15x & 1.41x \\
& \mkwm{r} & 2.83x & 2.13x & 1.19x \\
\midrule
\multirow{3}{*}{\bRadar{}}& \mkwm{x} & 1.4x & 1.02x & 1.03x \\
& \mkwm{q} & 1.22x & 1.1x & 2.0x \\
& \mkwm{r} & 1.06x & 1.08x & 1.05x \\
\midrule
\multirow{3}{*}{\bEnvnoise{}}& \mkwm{x} & 1.06x & 1.08x & 9.28x \\
& \mkwm{q} & 1.12x & 1.06x & 7.59x \\
& \mkwm{r} & 6.53x & 3.16x & 2.79x \\
\midrule
\multirow{2}{*}{\bOutlier{}}& \mkwm{outlier\_prob} & 2.75x & 2.41x & 1.04x \\
& \mkwm{xt} & 1.89x & 1.25x & 3.02x \\
\midrule
\multirow{2}{*}{\bOutlierheavy{}}& \mkwm{outlier\_prob} & 4.33x & 2.96x & 1.21x \\
& \mkwm{xt} & 1.2x & 1.07x & 1.17x \\
\midrule
\multirow{2}{*}{\bGtree{}}& \mkwm{a} & 1.0x & 1.07x & 1.0x \\
& \mkwm{b} & 1.0x & 1.09x & 1.0x \\
\midrule
\multirow{6}{*}{\bSlds{}}& \mkwm{trans\_prob0} & 1.16x & 1.15x & 1.11x \\
& \mkwm{trans\_prob1} & 1.09x & 1.18x & 1.18x \\
& \mkwm{obs\_noise0} & 7.39x & 6.62x & 2.52x \\
& \mkwm{obs\_noise1} & 4.93x & 5.21x & 2.03x \\
& \mkwm{s} & 1.04x & 1.04x & 1.02x \\
& \mkwm{x} & 1.64x & 1.03x & 1.0x \\
\midrule
\multirow{4}{*}{\bRunner{}}& \mkwm{sx} & 1.16x & 1.0x & 4.72x \\
& \mkwm{sy} & 1.01x & 1.01x & 1.02x \\
& \mkwm{x} & 1.34x & 1.1x & 4.61x \\
& \mkwm{y} & 1.15x & 1.05x & 1.16x \\
\midrule
\multirow{2}{*}{\bWheels{}}& \mkwm{velocity} & 1.0x & 7.12x & 1.0x \\
& \mkwm{omega} & 1.0x & 3.24x & 2.77x \\
\midrule
\multirow{2}{*}{\bSlam{}}& \mkwm{map} & -- & 5.1x & 1.02x \\
& \mkwm{x} & -- & 2.15x & 3.12x \\
\midrule
\multirow{4}{*}{\bAircraft{}}& \mkwm{x} & 27.55x & 1.01x & 1.05x \\
& \mkwm{alt} & 15.46x & 1.07x & 1.05x \\
& \mkwm{q} & 4.19x & 1.41x & 6.15x \\
& \mkwm{r} & 65.0x & 2.47x & 2.71x \\
  \bottomrule
  \end{tabular}
  \label{tab:comparison-results-best-acc-full}
\end{table}

\section{Extending to Other Monte Carlo Methods}
\label{appendix:mh-siren}

We have formalized \siren{} and the inference plan analysis in terms of particle filtering because the existing hybrid inference algorithms that can be implemented with the hybrid inference interface and also support control flow all use particle filtering, to the best of our knowledge. However, the principles and applicability of inference plans can extend to other Monte Carlo-based hybrid inference algorithms that are compatible with the hybrid inference interface. We show this with a basic implementation of SSI, DS, and SMC with BP using the Metropolis Hasting (MH) algorithm as the approximate inference algorithm instead of particle filtering. Devising an efficient implementation of SSI, DS, and SMC with BP using MH is out of the scope of this paper and is left for future work. 

\subsection{MH-\siren{} Operational Semantics}
We present in \Cref{fig:op:sem-full-mh} an interpretation of the \siren{} semantics using a basic implementation of combining MH with the hybrid inference interface. We elide the rules that are the same as in the regular \siren{} semantics. The particle evaluation rules remain the same as in \Cref{sec:operational-semantics} and \Cref{fig:op:sem-full}, except that $\mfResample$ is a no-op and returns a $\false{}$ flag instead of checkpointing execution. 

The particle set evaluation rules now accrue a set of particles\footnote{Typically, MH algorithms call executions \emph{traces}. Here, we use \emph{particle} to refer to an expression and a symbolic state as before to remain consistent.} and weights by iteratively executing the program, using the samples drawn for the previous execution in the next iteration, except for one randomly chosen random variable. The samples and sample scores are stored in the additional information field of the symbolic state, alongside other information used by the instantiated inference algorithm. We denote a lookup in the symbolic state for the sample and sample score using $\symbstateP{}{\randomvar}$. To ensure the samples are assigned to the correct random variable between different particles, each instantiated random variable must be assigned a unique name, as done in \citet{wingate2011lightweight}. The samples are used during $\symvalue$ computation to use instead of drawing fresh samples.

The rules accept new particles according to their weights. The \textsc{Accept} operation returns \true{} with probability $$\alpha = \frac{w'}{w}\cdot\frac{\prod_{\randomvar' \in \dom(\symbstate'), \symbstate'(\randomvar)_\pi = (\val', \weight_\val')} \weight_\val'}{\prod_{\randomvar \in \dom(\symbstate), \symbstateP{}{\randomvar} = (\val, \weight_\val)} \weight_\val} \cdot \frac{|\dom(\symbstate)|}{|\dom(\symbstate')|}$$
An accepted particle is collected into a set of particles. When $N$ particles have been collected, they are collected into an unweighted average of their distributions. The model evaluation initiates the particle set evaluation. The MH-\siren{} particle set and model evaluation rules are shown in \Cref{fig:op:sem-full-mh}.

\begin{figure}[H]
  \begin{small}
  \begin{mathpar}
    \inferrule%
    { }
    {\pclstep{\mfResample}{\symbstate}
             {\mfUnit}{\symbstate}{1}{\false}}

  \inferrule{
    |\traceset| = N \\
    \set{\distribution(\val_i, \symbstate_i) = \distr{}_i \;|\; i \leq N, (\val_i, \symbstate_i, \weight_i) \in \traceset}\\
  }
  {\pclsstepmh{\progexpr, \traceset}{\textstyle{\sum_{i \le N+M}} \dfrac{1}{N} \times \distr{}_i }{N}}
  
  \inferrule{
    |\traceset| < N \\
    (\val, \symbstate, \weight) = \textsc{pop}(\traceset)\\
    \mu = \categorical\left(\set{ (1, \randomvar) \;|\; \randomvar \in \dom(\symbstate)}\right) \\
    \symbstate_1 = \{ \randomvar \mapsto (\emptyset, \emptyset, \symbstateP{}{\randomvar}) \;|\; \randomvar \in \symbstate \}\\
    \symbstate_2 = \symbstate_1[\draw(\mu) \mapsto \emptyset]\\
    \pclstep{\progexpr}{ \symbstate_2}
                  {\val'}{\symbstate'}{\weight'}{\false} \\
    \textsc{Accept}(\symbstate, \weight, \symbstate', \weight')\\
    \pclsstepmh{\progexpr, \traceset \cup \set{(\val', \symbstate', \weight')}}{\distr{}}{N}\\
  }
  {\pclsstepmh{\progexpr, \traceset}{\distr{}}{N}}

  \inferrule{
    |\traceset| < N \\
    (\val, \symbstate, \weight) = \textsc{pop}(\traceset)\\
    \mu = \categorical\left(\set{ (1, \randomvar) \;|\; \randomvar \in \dom(\symbstate)}\right) \\
    \symbstate_1 = \{ \randomvar \mapsto (\emptyset, \emptyset, \symbstateP{}{\randomvar}) \;|\; \randomvar \in \symbstate \}\\
    \symbstate_2 = \symbstate_1[\draw(\mu) \mapsto \emptyset]\\
    \pclstep{\progexpr}{ \symbstate_2 }
                  {\val'}{\symbstate'}{\weight'}{\false} \\
    \neg\textsc{Accept}(\symbstate, \weight, \symbstate', \weight')\\
    \pclsstepmh{\progexpr, \traceset \cup \set{(\val, \symbstate, \weight)}}{\distr{}}{N}\\
  }
  {\pclsstepmh{\progexpr, \traceset}{\distr{}}{N}}

  \inferrule{
      \pclsstepmh{\progexpr, \emptyset}{\distr{}}{N}
  }
  {\pclpstep{N}{\progexpr}{\distr{}}}
  
  \end{mathpar}
  \end{small}
  \caption{Big-step semantics of MH-\siren{}, with a basic implementation of the Metropolis Hasting algorithm. The particle evaluation rules are the same as in \Cref{fig:op:sem-full}, except for the $\mfResample$ rule. We elide the unchanged rules. The particle set evaluation rules and model evaluation rules differ from the particle filtering based \siren{}.}
  \label{fig:op:sem-full-mh}
\end{figure}

The hybrid interface operation $\symvalue$ must use the value stored in $\symbstateP{}{\randomvar}$ rather than drawing new samples if the value is not null. For example, in semi-symbolic inference, after hoisting the input random variable, $\symvalue$ retrieves the cached $\symbstateP{}{\randomvar}$ value. If the value is null, $\symvalue$ draws a sample from the symbolic distribution and intervenes on the random variable with the sample. Otherwise, $\symvalue$ uses the cached $\symbstateP{}{\randomvar}$ value to intervene with. The operation also scores the drawn or pre-stored value on the symbolic distribution of the random variable and stores the sample score in $\symbstateP{}{\randomvar}$ to use during acceptance computation. 
We show here how the implementation of the hybrid inference interface must use the cached samples and sample weights for semi-symbolic inference and delayed sampling. 

\paragraph{Semi-symbolic inference}

\begin{align*}
\symassume(\annotation, \distr{}, \symbstate) &= \; 
    \letin{\symbstate' = \symbstate[\randomvar_\mit{new} \mapsto (\annotation,\distr{}, \symbstateP{}{\randomvar_\mit{new}})]}
    (\randomvar_\mit{new}, \symbstate')\\
\symvalue(\randomvar, \symbstate) &= \;
    \begin{array}[t]{@{}l@{}}
      \mit{let} \; \symbstate' = \textsc{hoist}(\randomvar, \symbstate) \; \mit{in}\; \\
      \mit{let}\; (v, \_) = \symbstateP{}{\randomvar} \;\mit{in}\;\\
      \mit{let}\; v' = \mit{if}\; v = \emptyset \; \mit{then}\; \textsc{draw}(\symbstate'(\randomvar)_d) \; \mit{else}\; v \;\mit{in} \;\\
      \mit{let}\; w = \score(\symbstate'(\randomvar)_d, \val') \; \mit{in}\;\\
      (\val', \textsc{intervene}(\randomvar, \SSIdeltasample{v'}, \symbstate'[\randomvar \mapsto (\symbstateK{}{\randomvar}, \symbstateD{}{\randomvar}, (\val',  w))]))
      \end{array}\\
\symobserve(\randomvar, \val, \symbstate) &= \;
    \begin{array}[t]{@{}l@{}}
    \mit{let} \; \symbstate' = \textsc{hoist}(\randomvar, \symbstate) \; \mit{in}\; 
    \mit{let} \; \scoreval = \score(\symbstate'(\randomvar)_d, \val) \;\mit{in} \;\\
    (\textsc{intervene}(\randomvar, \deltad{\val}, \symbstate'), \scoreval)
    \end{array}
\end{align*}

\paragraph{Delayed Sampling}
\begin{align*}
  \symassume(\annotation, \distr{}, \symbstate) &= \; 
      \begin{array}[t]{@{}l@{}}
        \letin{\distr{\prime}, \symbstate' = \conjdistr(\distr{}, \symbstate)}\\
        \letin{\symbstate'' = \initialize(\randomvar_\mit{new}, \annotation, \distr{\prime}, \symbstate')}
      (\randomvar_\mit{new}, \symbstate'')\\
      \end{array}\\
  \symvalue(\randomvar, \symbstate) &= \;
      \begin{array}[t]{@{}l@{}}
        \mit{let} \; \symbstate' = \textsc{graft}(\randomvar, \symbstate) \; \mit{in}\; \\
        \mit{let}\; (\val, \_) = \symbstateP{}{\randomvar} \;\mit{in}\;\\
        \mit{let}\; \val' = \mit{if}\; \val = \emptyset \; \mit{then}\; \textsc{draw}(\symbstate'(\randomvar)_d) \; \mit{else}\; \val \;\mit{in} \;\\
        \mit{let}\; w = \score(\symbstate'(\randomvar)_d, \val') \; \mit{in}\;\\
        \letin{\symbstate'' = \textsc{realize}(\randomvar, \SSIdeltasample{\val'}, \symbstate')}\\
        (\val', \symbstate'')
        \end{array}\\
  \symobserve(\randomvar, \val, \symbstate) &= \;
      \begin{array}[t]{@{}l@{}}
      \mit{let} \; \symbstate' = \textsc{graft}(\randomvar, \symbstate) \; \mit{in}\;\\
      \mit{let} \; \scoreval = \score(\symbstate'(\randomvar)_d, \val) \mit{in} \;\\
      \letin{\symbstate'' = \textsc{realize}(\randomvar, \deltad{\val}, \symbstate')}\\
      (\symbstate'', \scoreval)
      \end{array}
  \end{align*}

\subsection{Inference Plan Satisfiability Analysis}
The abstract interpretation rules for MH-\siren{} is the same as the particle filtering-based \siren{} shown in \Cref{sec:analysis} and \Cref{fig:abstract-interp-full}. This extensibility falls out from the hybrid inference interface, the abstract domain, and the fact that both MH and particle filtering are Monte Carlo methods. 
The only difference between the particle evaluation rules of \siren{} and MH-\siren{} is whether $\mfResample$ pauses the execution by setting the resample flag to $\true$. The abstract particle evaluation already does not create checkpoints in \siren{}, so the rules are the same for MH-\siren{}. The biggest difference between \siren{} and MH-\siren{} is the particle set evaluation rules. However, at its core, the MH-\siren{} particle set evaluation rule evaluates the program multiple times, similar to particle filtering. The only additional operation is tracking the samples and sample scores. The samples serve only as a cache for the $\symvalue$ function to retrieve a specific constant value instead of drawing a fresh constant value. In either case, the constant values are soundly approximated as $\cunk$. Also, scores are not used in the analysis, as they do not affect the symbolic computation. The abstract interpretation can then ignore the sample sites. This leaves the analysis entirely the same as presented in \Cref{sec:analysis} and \Cref{fig:abstract-interp-full}.

\subsubsection{Collecting Semantics and Soundness}
Because the particle evaluation rules are the same in \siren{} and MH-\siren{} except that $\mfResample$ does not checkpoint in MH-\siren{}, we can reuse the collecting semantics for the particle evaluation rules. 
The collecting semantics of the particle set evaluation still operates on a single set of particles, because the analysis is still interested in the possible particles produced during program execution. As before, if any of the particles return $\fail$, the semantics return a singleton set containing \fail. However, unlike before, all the configurations returned by the rule have $\false$ resample flags. Thus, the collecting particle set evaluation rules return a set of distributions using the \textsc{Distribution} operation. The model evaluation rules are unchanged.

\begin{small}
    \begin{mathpar}
    \inferrule%
    { 
        \pset = \set{(\val',\symbstate') \;|\;  (\progexpr,\symbstate) \in \pset, (\cpclstep{\progexpr, \symbstate}{\evalset'}), (\val',\symbstate',\false) \in \evalset'}\\
        \textstyle{\bigwedge_{(\val, \symbstate) \in \evalset}}\ (\val \neq \fail) \\
        \dset = \set{\distribution(\val,\symbstate) \;|\; (\val, \symbstate) \in \pset}
    }
    {\cpclsstep{\pset}{\dset}}

    \inferrule%
    { 
        \pset = \set{(\val',\symbstate') \;|\;  (\progexpr,\symbstate) \in \pset, (\cpclstep{\progexpr, \symbstate}{\evalset'}), (\val',\symbstate',\false) \in \evalset'}\\
        \textstyle{\bigvee_{(\val, \symbstate) \in \pset}} (\val = \fail)
    }
    {\cpclsstep{\pset}{\set{\fail}}}
    \end{mathpar}
\end{small}

Finally, using the lemmas and theorems in \Cref{appendix:proofs}, we can show that the analysis is sound for the alternative MH-\siren{} semantics.

\begin{theorem}[MH-\siren{} Particle Set Evaluation Soundness]
  \label{thm:particle-set-mh}
  For every particle set $\pset$, and distribution set $\dset$ such that $(\cpclsstep{\pset}{\dset})$, we have that \apclsstep{\set{\progexpr, \abstr(\{\symbstate\}) \;|\; (\progexpr, \symbstate) \in \pset}}{\adistr{}} and $\dset \subseteq \concret(\adistr{})$.
\end{theorem}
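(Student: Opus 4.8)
The plan is to replay the proof of \Cref{thm:particle-set} almost verbatim, exploiting the fact that the abstract side of MH-\siren{} is literally unchanged: the abstract interpretation rules of \Cref{fig:abstract-interp-full}, the abstract hybrid inference interface, and the abstract particle-set and model evaluation rules are identical to the particle-filtering case (the abstract $\mfResample$ is already a no-op and abstract model evaluation already spawns a singleton particle). Consequently all abstract-only lemmas --- \Cref{lem:abs-weakening,lem:rename,lem:join,lem:substitution} --- hold for MH-\siren{} without any change, since their statements and proofs refer only to $\hat{\downarrow}$ and the abstract operations. On the concrete side, only three things differ: the $\mfResample$ rule (now reducing directly to $\mfUnit$ with resample flag $\false$), the collecting particle-set evaluation rules, and the concrete interface operations $\symassume$, $\symvalue$, $\symobserve$, which thread a cached sample and its score through the $\pi$-component $\symbstateP{}{\cdot}$ of the data field.

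First I would re-establish the particle-level soundness results --- \Cref{lem:terminating-particle,lem:preservation,lem:particle} --- for MH-\siren{}'s particle evaluation relation. Since MH-\siren{} reuses the particle evaluation rules of \Cref{fig:op:sem-full} with the sole exception of $\mfResample$, the structural inductions of \Cref{appendix:proofs} go through unchanged except for the $\mfResample$ case, which is now immediate: $\mfResample$ reduces in one step to $\mfUnit$ with $\doresample=\false$ and is matched by the abstract $\mfResample$ rule reducing to $\amfUnit$. I must also check that the collecting $\symvalue$ for MH-\siren{} still returns $\fail$ whenever its input is $\mfSymbolic$-annotated (it does, by construction), so that $\fail$-propagation and the $\afail$-returning abstract rule still correspond, and that \Cref{assumption:syminterface} still holds for the MH-adapted interface operations. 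Because MH-\siren{} has no $\mfResample$ checkpoint, no particle evaluation ever produces a $\true$ flag, so $\tilde{\downarrow}^*$ degenerates to $\tilde{\downarrow}$ and \Cref{lem:particle} collapses into \Cref{lem:terminating-particle}.

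Next I would prove the MH-\siren{} analog of \Cref{lem:particle-trace}: for $\cpclsstep{\pset}{\dset}$, every $\distr{}\in\dset$ is $\distribution(\val,\symbstate)$ for some configuration $(\val,\symbstate,\false)$ collected by $\cpclstep{\progexpr,\symbstate_0}{\evalset}$ for some $(\progexpr,\symbstate_0)\in\pset$. Unlike the particle-filtering version, this is immediate from the shape of the MH collecting particle-set rules, which compute $\dset$ in one shot as $\set{\distribution(\val,\symbstate)\mid(\val,\symbstate)\in\pset'}$ where $\pset'$ is the union over particles in $\pset$ of their collected $\false$-flag configurations; the $\fail$ case is handled by the second collecting rule, since if any particle collects $\fail$ then $\dset=\set{\fail}$, the particle-level soundness lemma gives $\afail$ on that particle, the abstract particle-set evaluation produces $\afail$, and $\set{\fail}\subseteq\concret(\afail)$. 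I would then assemble the theorem exactly as in the proof of \Cref{thm:particle-set}: pick the originating particle from the trace lemma; apply the particle-level soundness lemma to get $\apclstep{\progexpr}{\abstr(\{\symbstate_0\})}{\aval}{\asymbstate{}}$ together with a weakly equivalent configuration set $\evalset'$ with $\forgetr(\evalset')\subseteq\concret((\aval,\asymbstate{}))$; use that $\distribution$ ignores unreachable variables and that $\adistribution$ over-approximates $\distribution$ (via \Cref{assumption:syminterface}) to get $\distr{}\in\concret(\adistribution(\aval,\asymbstate{}))$; and finally, since the abstract particle-set rule joins the $\adistribution$ outputs, $\adistribution(\aval,\asymbstate{})\leq\adistr{}$, so $\distr{}\in\concret(\adistr{})$. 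Taking the union over $\distr{}\in\dset$ yields $\dset\subseteq\concret(\adistr{})$.

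The main obstacle is the only genuinely non-cosmetic step: verifying that the MH-adapted concrete interface operations still satisfy \Cref{assumption:syminterface}. The subtlety is that $\symvalue$ now deterministically returns a value cached in $\symbstateP{}{\randomvar}$ rather than a fresh draw, and that both $\symassume$ and $\symvalue$ write MCMC metadata (the sample and its score) into the data field. I need to confirm that (i) the returned value is still an element of $\Val$ and hence lies in $\concret(\cunk)$, keeping the abstract result $\cunk$ of $\asymvalue$ sound, and (ii) the sample/score bookkeeping never alters the abstract-visible part of the symbolic state --- the annotations, the abstract distributions, and (for delayed sampling) the node-type structure that $\aStuff$ abstracts --- so it cannot invalidate the over-approximation. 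Given the paper's standing assumption that the interface implementation is sound, this reduces to a routine inspection of the operation definitions in \Cref{appendix:mh-siren}, but it is where the actual content of the MH extension resides.
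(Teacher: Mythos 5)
Your proposal is correct and follows essentially the same route as the paper: the paper's own proof is just the two-line observation that every configuration collected under MH-\siren{} carries a $\false$ resample flag, so \Cref{lem:terminating-particle} applies directly and the inclusion $\dset \subseteq \concret(\adistr{})$ follows from the definitions of abstraction and concretization. Your additional care about the $\mfResample$ case, the degenerate trace lemma, and re-checking \Cref{assumption:syminterface} for the cache-threading interface operations is a more explicit spelling-out of steps the paper leaves implicit (the last being absorbed into the standing interface-soundness assumption), not a different argument.
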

\begin{proof}
  Every configuration produced by the collecting particle evaluation rules has the $\false$ resample flag. Then, by \Cref{lem:terminating-particle} and the definition of abstraction and concretization, we have that $\dset \subseteq \concret(\adistr{})$. 

\end{proof}

\begin{corollary}[MH-\siren{} Model Evaluation Soundness]
  If $\cpclpstep{\progexpr}{\{\fail\}}$, then $\apclpstep{\progexpr}{\afail}$.
\end{corollary}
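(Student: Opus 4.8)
The plan is to obtain the corollary as an immediate specialization of \Cref{thm:particle-set-mh} (MH-\siren{} Particle Set Evaluation Soundness), mirroring the argument used for the Model Evaluation Soundness corollary of the particle-filtering \siren{}. First I would unfold the collecting model evaluation rule for MH-\siren{}: the hypothesis $\cpclpstep{\progexpr}{\{\fail\}}$ can only have been derived from $\cpclsstep{\set{\progexpr, \emptyset}}{\{\fail\}}$, so from the assumption we extract a collecting particle-set derivation that produces the singleton $\{\fail\}$ starting from the single particle $(\progexpr, \emptyset)$ with an empty symbolic state.

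Next I would instantiate \Cref{thm:particle-set-mh} with $\pset = \set{\progexpr, \emptyset}$ and $\dset = \{\fail\}$. This yields an abstract derivation $\apclsstep{\set{\progexpr, \abstr(\{\emptyset\})}}{\adistr{}}$ together with $\{\fail\} \subseteq \concret(\adistr{})$. Two small facts then close the argument: first, $\abstr(\{\emptyset\}) = \emptyset$, i.e.\ abstracting the empty concrete symbolic state gives the empty abstract symbolic state (immediate from the definition of $\abstr$ on symbolic states), so the abstract derivation is exactly $\apclsstep{\set{\progexpr, \emptyset}}{\adistr{}}$; second, $\fail \in \concret(\adistr{})$ forces $\adistr{} = \afail$. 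The latter holds because $\fail$ is the distinguished error value: it arises in the concretization of no ordinary abstract distribution, and $\afail$ is the top of the abstract lattice with $\abstr(\{\fail\}) = \afail$. One can also see this directly from the abstract particle-set rule, where $\adistr{}$ is a join $\bigsqcup_i \adistr{}_i$ and $\fail$ can appear in the concretization only if some abstract particle evaluation produced $\afail$, which collapses the join to $\afail$. Having established $\apclsstep{\set{\progexpr, \emptyset}}{\afail}$, the first abstract model evaluation rule gives $\apclpstep{\progexpr}{\afail}$, which is the desired conclusion.

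I do not anticipate a genuine obstacle: the corollary is a routine corollary of the particle-set soundness theorem, and the bulk of the work has already been done in \Cref{thm:particle-set-mh} and the lemmas it depends on. The only place requiring minor care is the step that $\fail$ concretizes only from $\afail$; this relies solely on the stated special status of $\fail$ and $\afail$ (propagation through all operations, top of the lattice, $\abstr(\{\fail\}) = \afail$) and introduces no new reasoning beyond what is already used in the analogous corollary for the particle-filtering semantics.
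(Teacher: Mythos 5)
Your proposal is correct and follows essentially the same route the paper intends: the corollary is an immediate specialization of Theorem~\ref{thm:particle-set-mh} to the singleton particle set $\set{\progexpr, \emptyset}$, using $\abstr(\{\emptyset\}) = \emptyset$ and the fact that $\fail$ lies only in the concretization of the top element $\afail$, after which the first abstract model evaluation rule applies. The paper leaves this derivation implicit, and your handling of the one delicate step (that $\fail \in \concret(\adistr{})$ forces $\adistr{} = \afail$) is exactly the justification needed.
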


\section{MH-\siren{} Performance Evaluation}
We evaluate and produce the performance profiles of the satisfiable inference plans over the benchmark suite using these algorithms using the same methodology described in \Cref{sec:eval}, except for the number of timesteps. The basic implementation of MH is less efficient at solving these benchmarks than particle filtering and requires longer runtime for the estimation variance to decrease. Instead, we execute each benchmark for 10 timesteps only. Across all benchmarks and the hybrid inference algorithms, the best inference plan achieves a 1.65x speedup compared to the default inference plan, with the maximum speedup being 75x the default plan. Across all benchmarks, variables, and inference algorithms, the best inference plans achieve 2.91x better accuracy with equal or less runtime compared to the default plans, with a maximum of 31475x better accuracy.

\subsection{Performance Profiles with 90th Perecentile Error}

\begin{figure}[H]
  \centering
  \begin{subfigure}[c]{0.75\textwidth}
    \centering
    \includegraphics[width=1\textwidth]{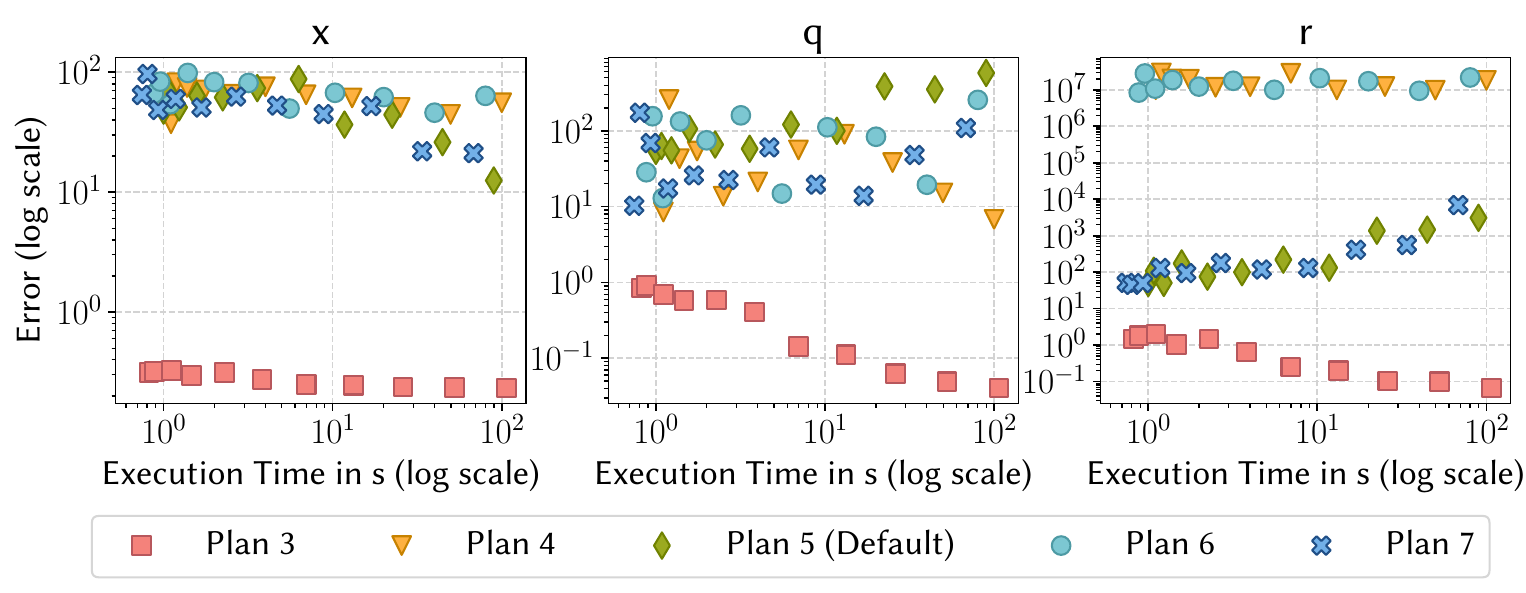}
    \caption{SSI.}
  \end{subfigure}%
  \\
  \begin{subfigure}[c]{0.75\textwidth}
    \centering
    \includegraphics[width=1\textwidth]{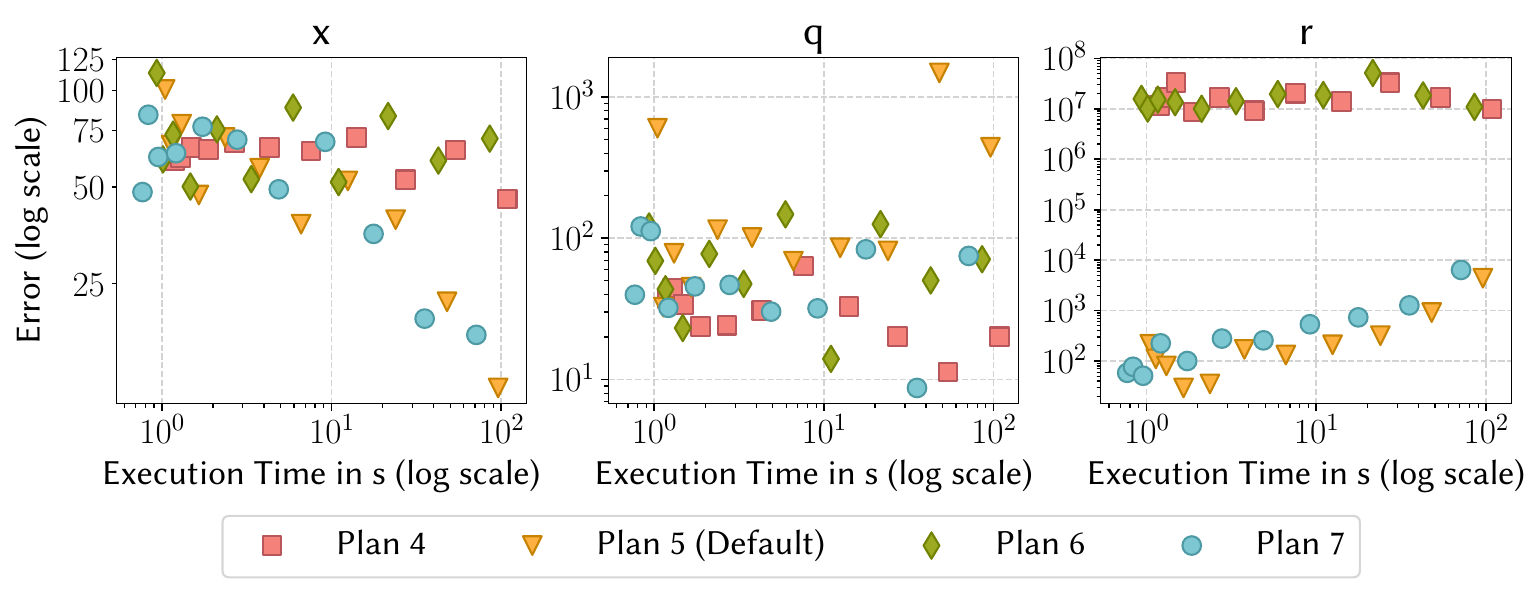}
    \caption{DS.}
  \end{subfigure}%
  \\
  \begin{subfigure}[c]{0.75\textwidth}
    \centering
    \includegraphics[width=1\textwidth]{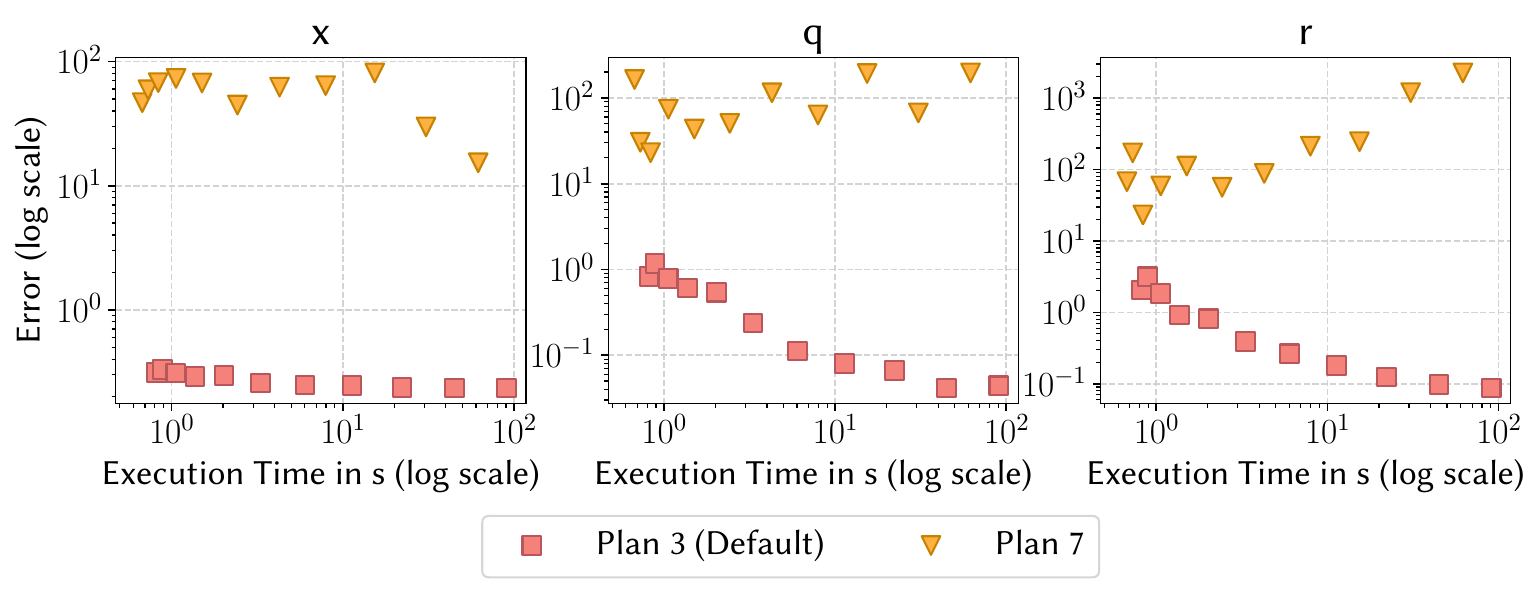}
    \caption{SMC w/ BP.}
  \end{subfigure}%
  \caption{\bNoise{}}
  \label{fig:mh-performance-results-noise}
\end{figure}

\begin{figure}[H]
  \centering
  \begin{subfigure}[c]{0.75\textwidth}
    \centering
    \includegraphics[width=1\textwidth]{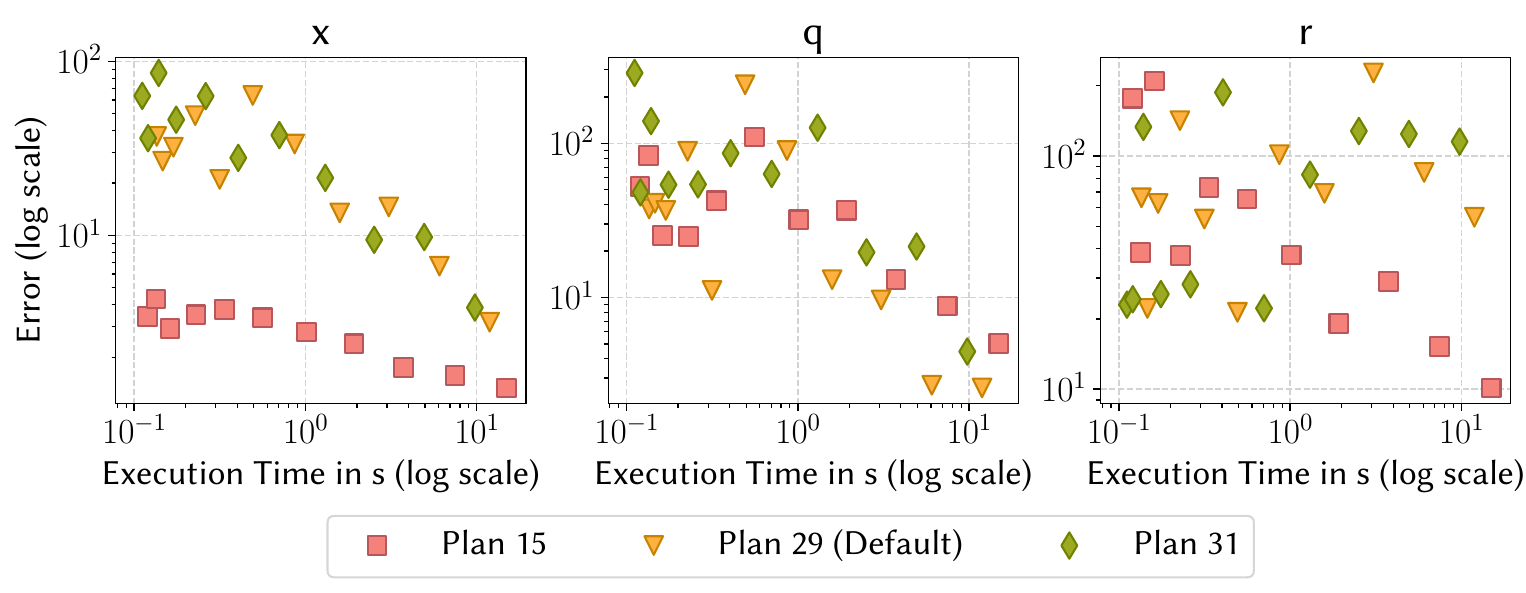}
    \caption{SSI.}
  \end{subfigure}%
  \\
  \begin{subfigure}[c]{0.75\textwidth}
    \centering
    \includegraphics[width=1\textwidth]{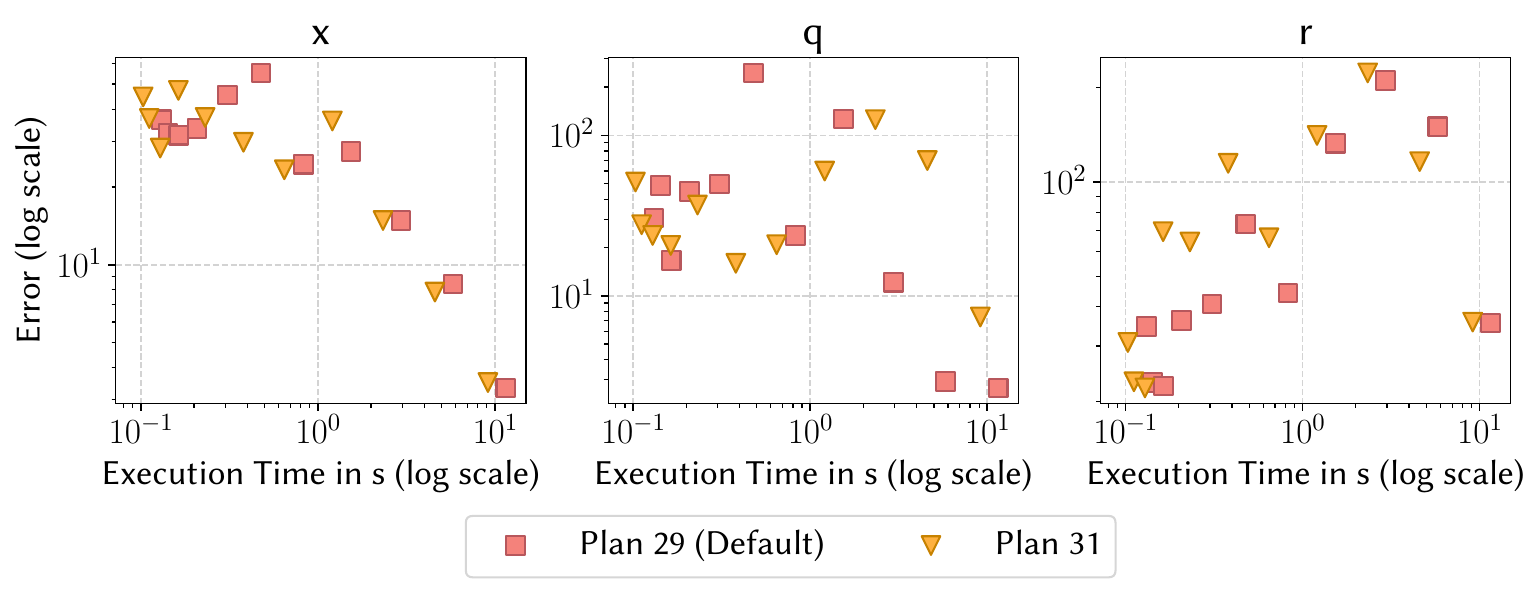}
    \caption{DS.}
  \end{subfigure}%
  \\
  \begin{subfigure}[c]{0.75\textwidth}
    \centering
    \includegraphics[width=1\textwidth]{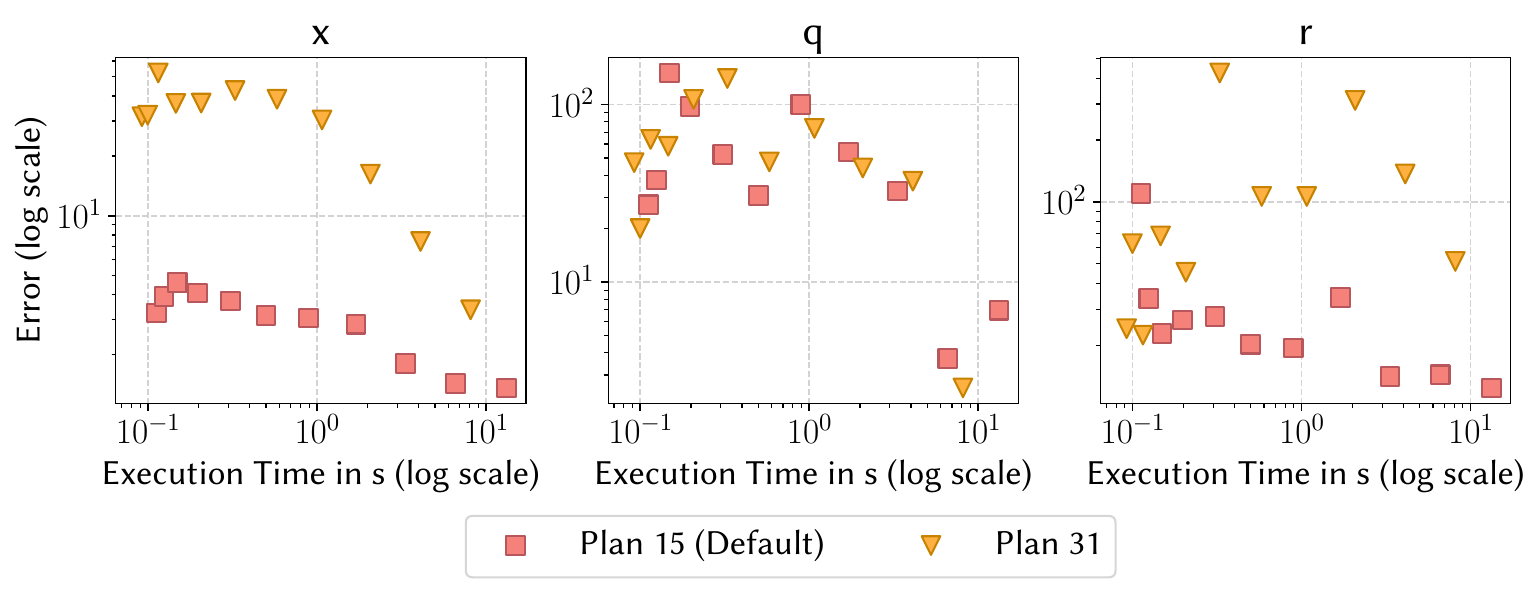}
    \caption{SMC w/ BP.}
  \end{subfigure}%
  \caption{\bRadar{}}
  \label{fig:mh-performance-results-radar}
\end{figure}

\begin{figure}[H]
  \centering
  \begin{subfigure}[c]{0.75\textwidth}
    \centering
    \includegraphics[width=1\textwidth]{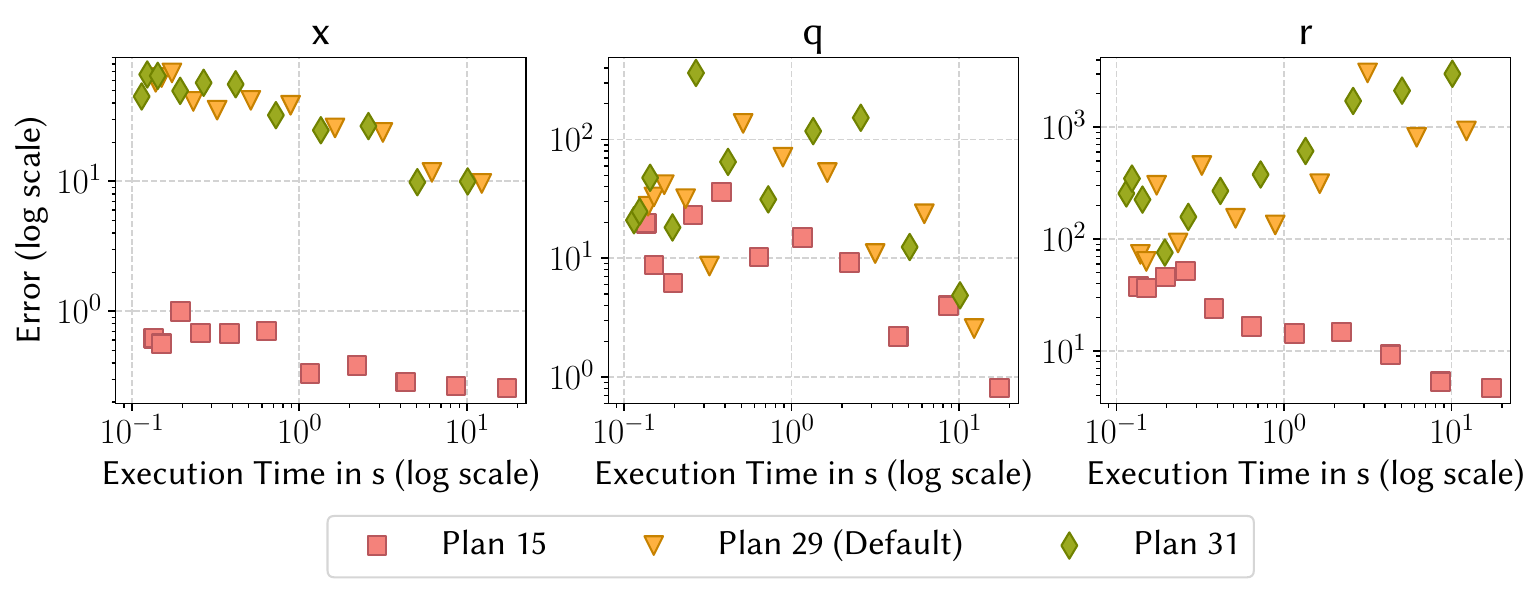}
    \caption{SSI.}
  \end{subfigure}%
  \\
  \begin{subfigure}[c]{0.75\textwidth}
    \centering
    \includegraphics[width=1\textwidth]{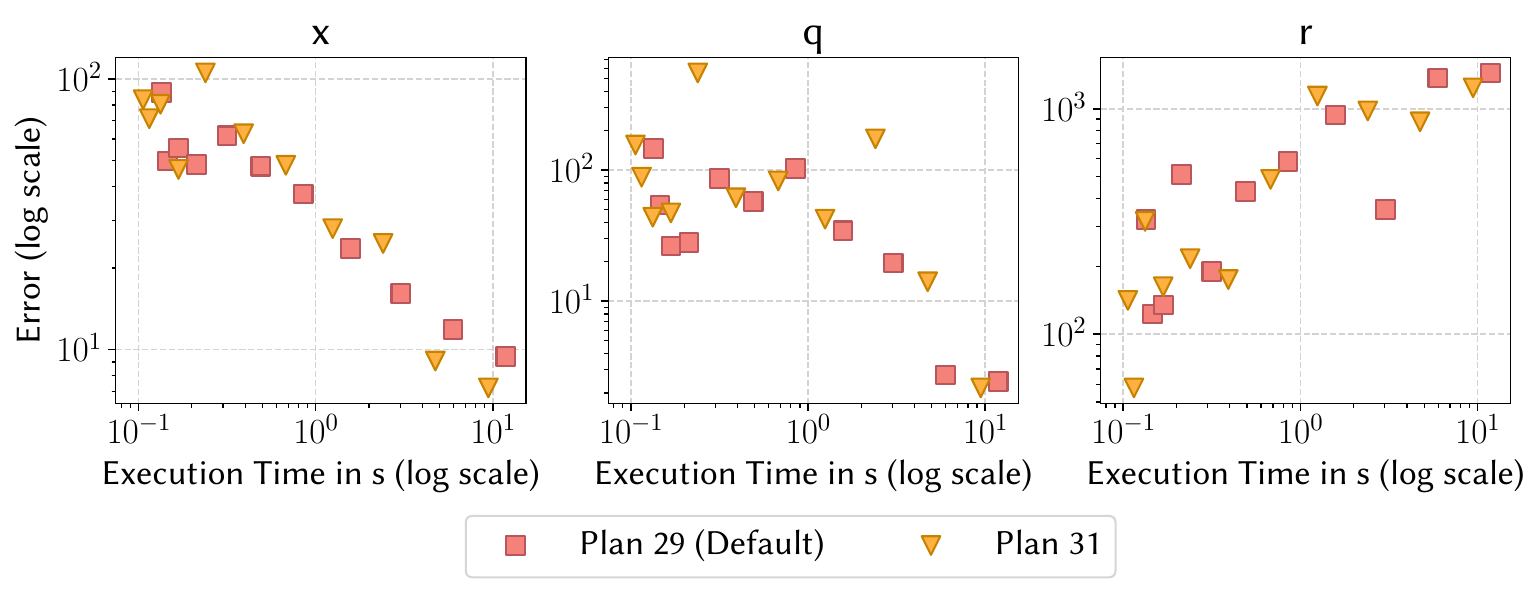}
    \caption{DS.}
  \end{subfigure}%
  \\
  \begin{subfigure}[c]{0.75\textwidth}
    \centering
    \includegraphics[width=1\textwidth]{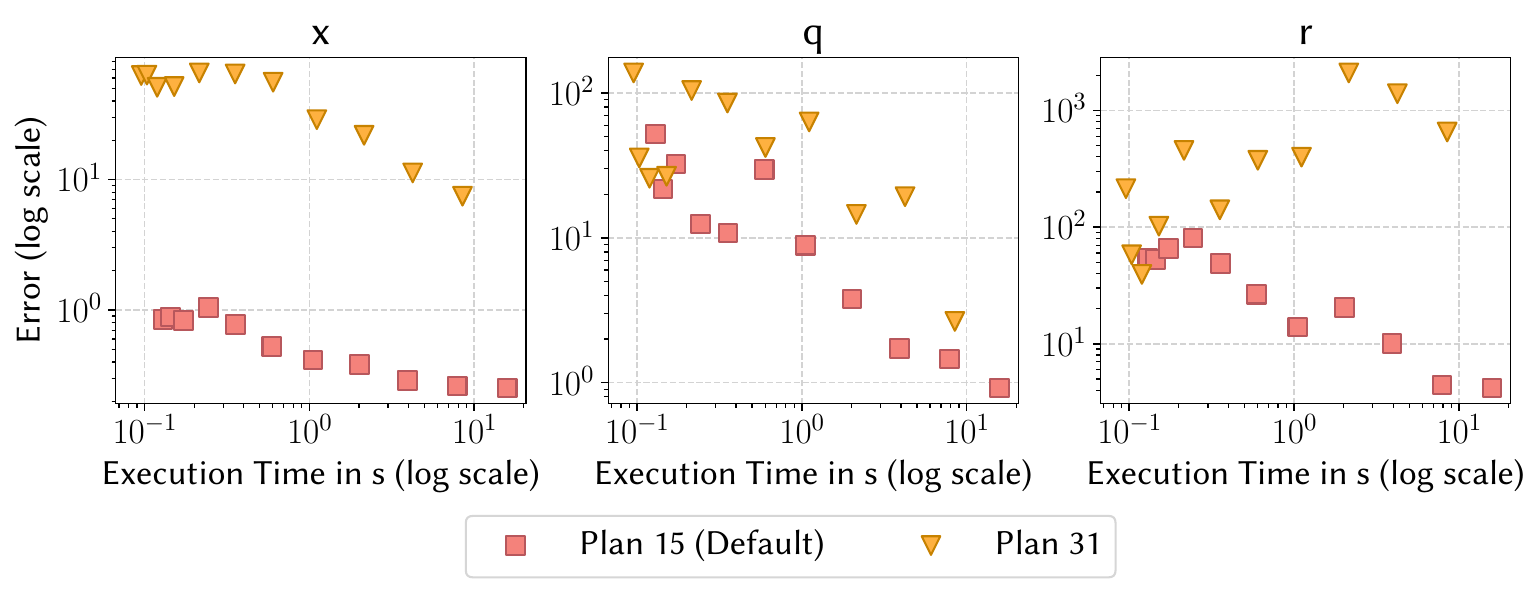}
    \caption{SMC w/ BP.}
  \end{subfigure}%
  \caption{\bEnvnoise{}}
  \label{fig:mh-performance-results-envnoise}
\end{figure}

\begin{figure}[H]
  \centering
  \begin{subfigure}[c]{0.5\textwidth}
    \centering
    \includegraphics[width=1\textwidth]{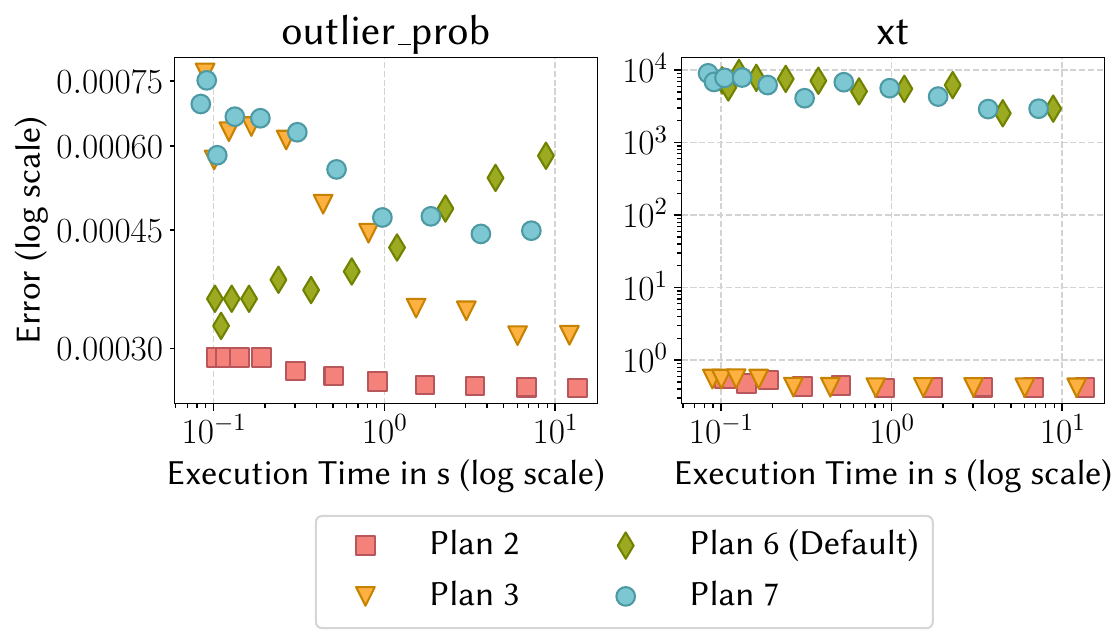}
    \caption{SSI.}
  \end{subfigure}%
  \\
  \begin{subfigure}[c]{0.5\textwidth}
    \centering
    \includegraphics[width=1\textwidth]{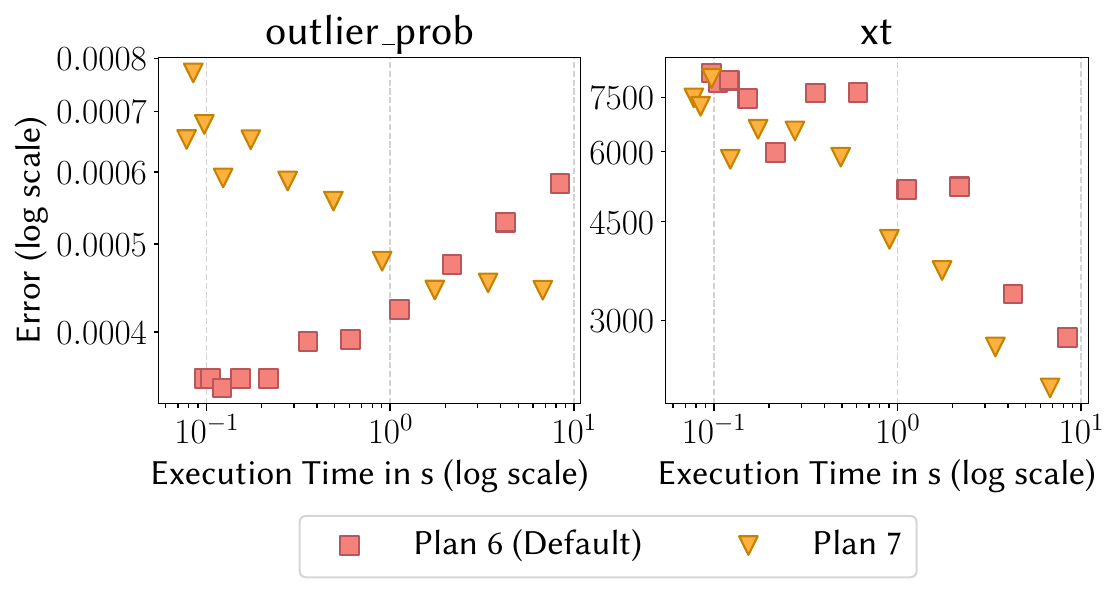}
    \caption{DS.}
  \end{subfigure}%
  \\
  \begin{subfigure}[c]{0.5\textwidth}
    \centering
    \includegraphics[width=0.99\textwidth]{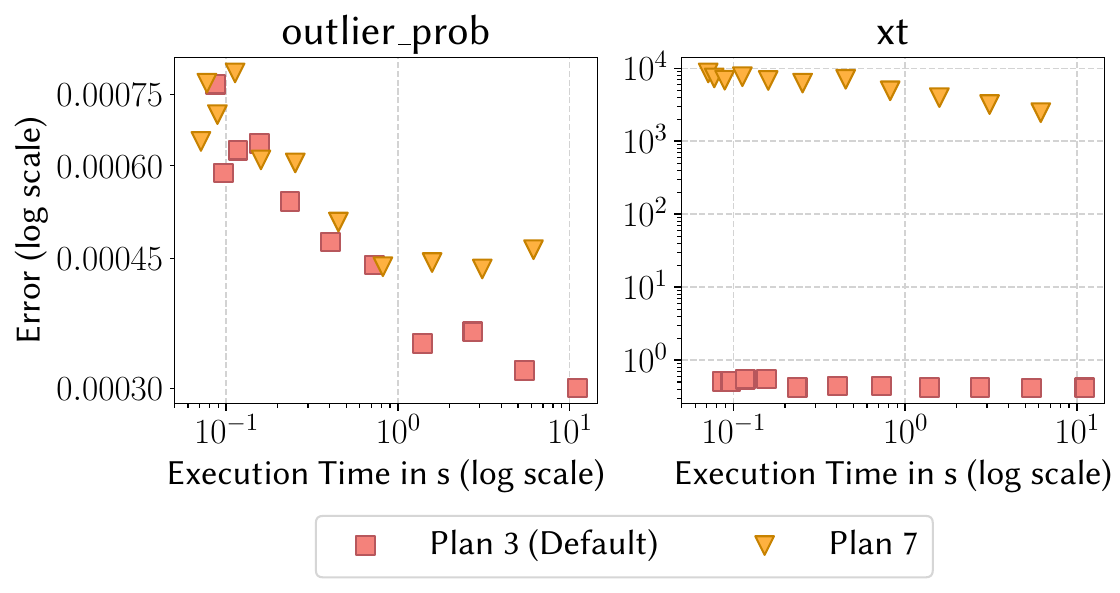}
    \caption{SMC w/ BP.}
  \end{subfigure}%
  \caption{\bOutlier{}}
  \label{fig:mh-performance-results-outlier}
\end{figure}

\begin{figure}[H]
  \centering
  \begin{subfigure}[c]{0.5\textwidth}
    \centering
    \includegraphics[width=1\textwidth]{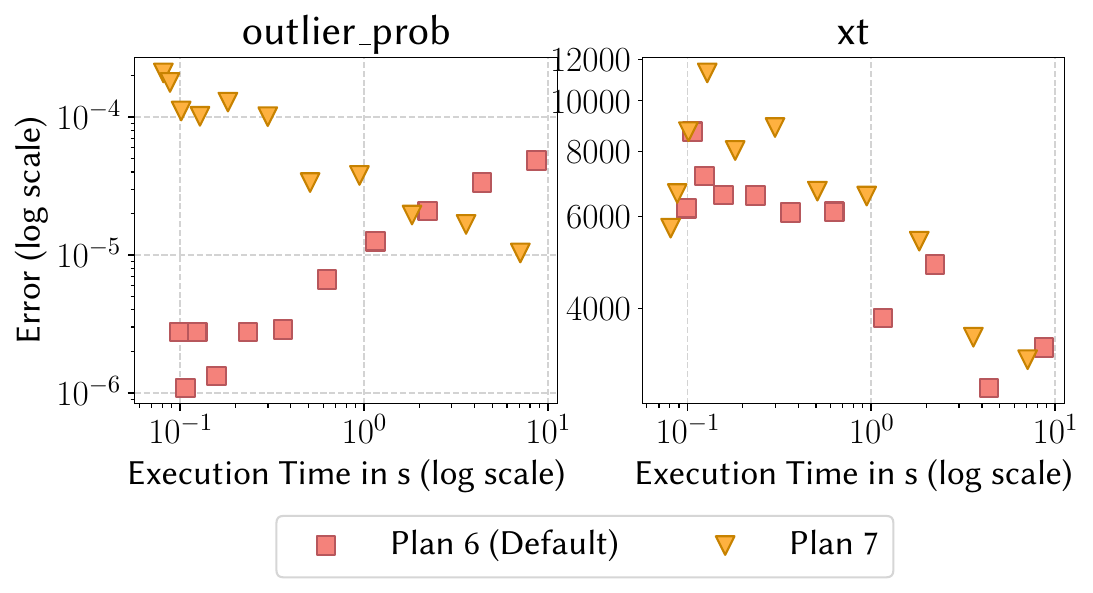}
    \caption{SSI.}
  \end{subfigure}%
  \\
  \begin{subfigure}[c]{0.5\textwidth}
    \centering
    \includegraphics[width=1\textwidth]{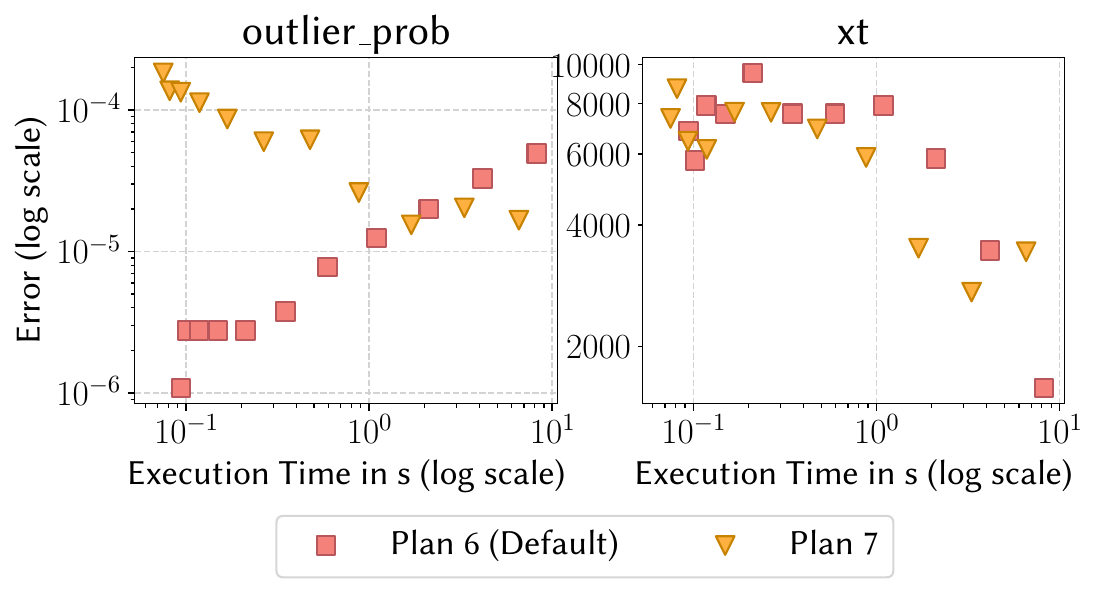}
    \caption{DS.}
  \end{subfigure}%
  \\
  \begin{subfigure}[c]{0.5\textwidth}
    \centering
    \includegraphics[width=1\textwidth]{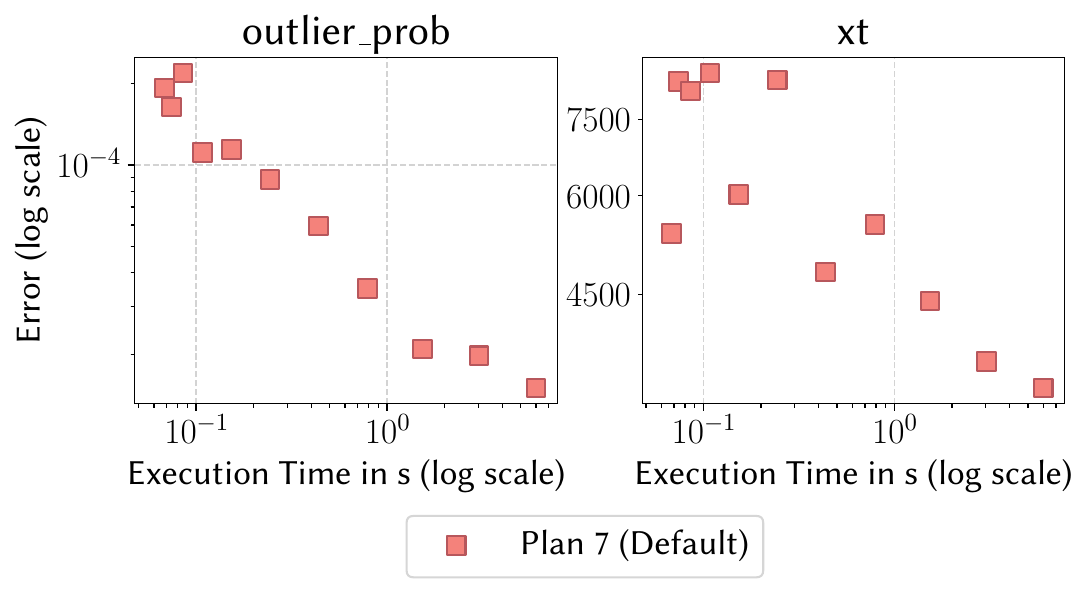}
    \caption{SMC w/ BP.}
  \end{subfigure}%
  \caption{\bOutlierheavy{}.}
  \label{fig:mh-performance-results-outlierheavy}
\end{figure}

\begin{figure}[H]
  \centering
  \begin{subfigure}[c]{0.5\textwidth}
    \centering
    \includegraphics[width=1\textwidth]{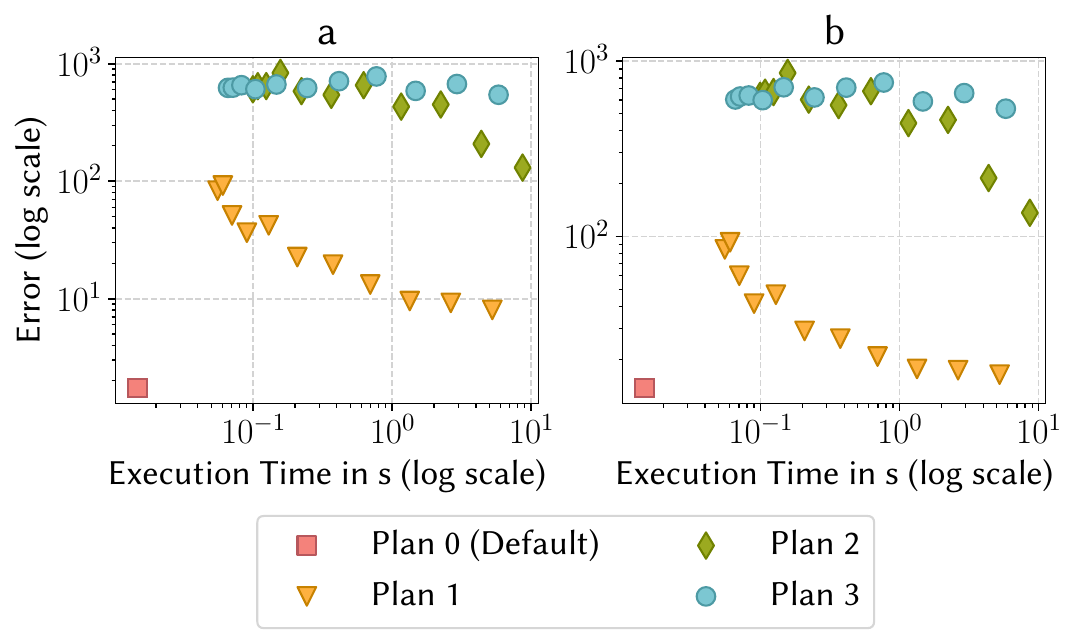}
    \caption{SSI.}
  \end{subfigure}%
  \\
  \begin{subfigure}[c]{0.5\textwidth}
    \centering
    \includegraphics[width=1\textwidth]{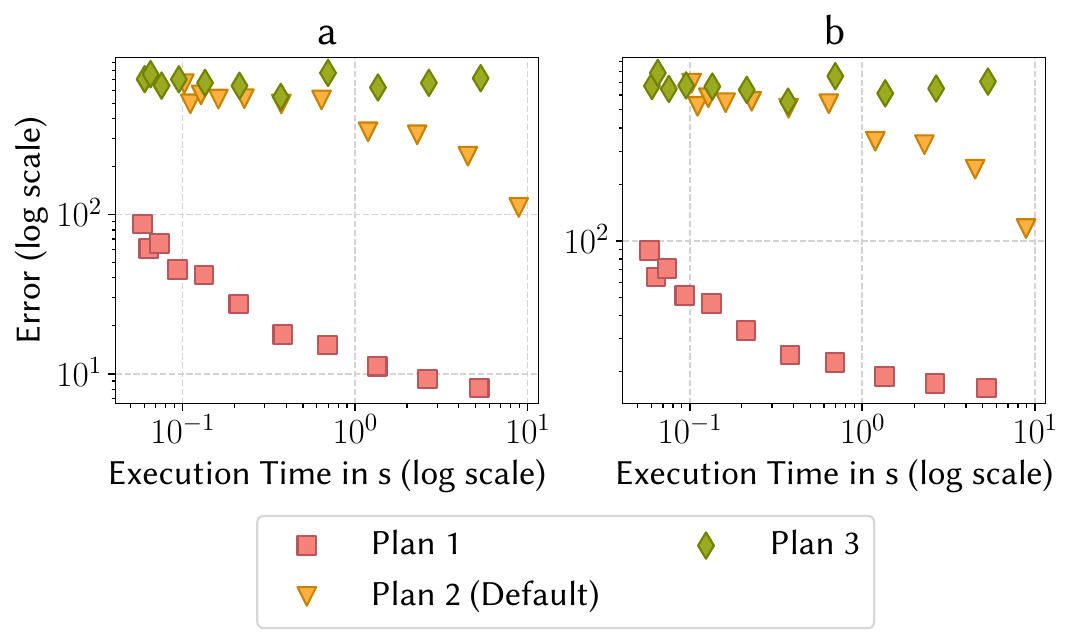}
    \caption{DS.}
  \end{subfigure}%
  \\
  \begin{subfigure}[c]{0.5\textwidth}
    \centering
    \includegraphics[width=1\textwidth]{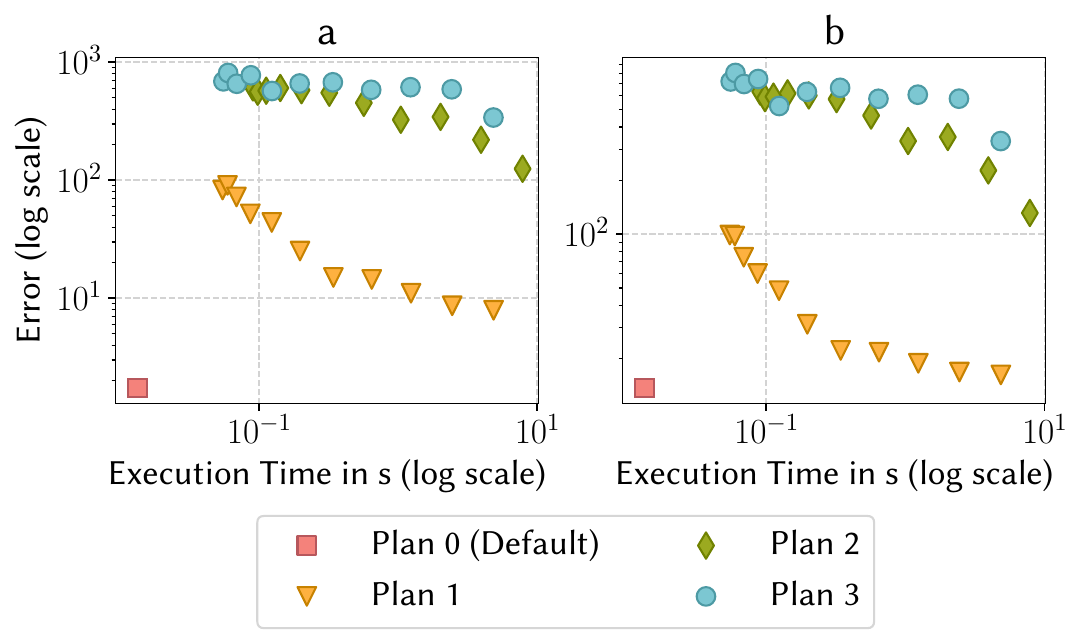}
    \caption{SMC w/ BP.}
  \end{subfigure}%
  \caption{\bGtree{}}
  \label{fig:mh-performance-results-tree}
\end{figure}

\begin{figure}[H]
  \centering
  \begin{subfigure}[c]{0.75\textwidth}
    \centering
    \includegraphics[width=1\textwidth]{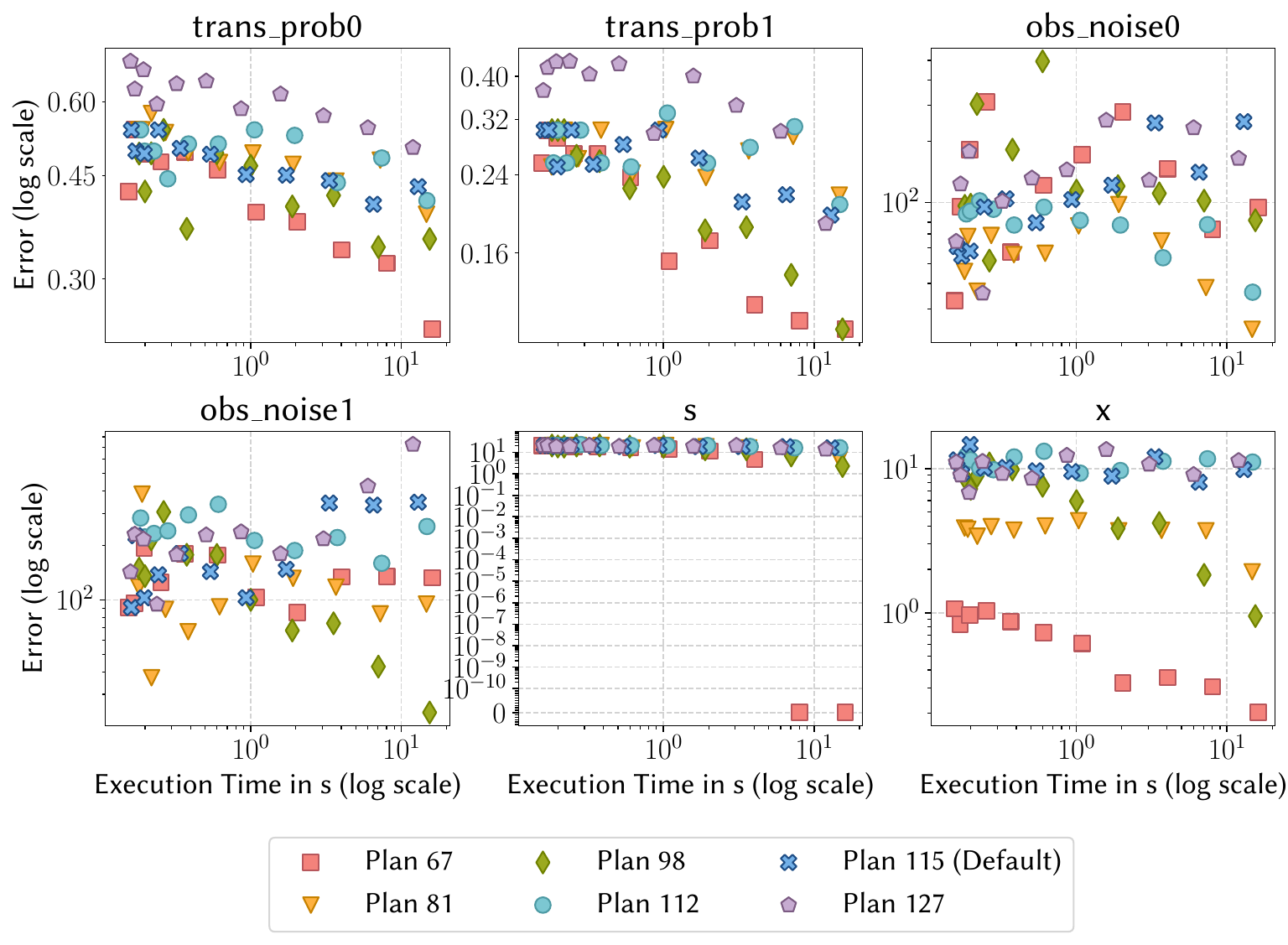}
    \caption{SSI.}
  \end{subfigure}%
  \\
  \begin{subfigure}[c]{0.75\textwidth}
    \centering
    \includegraphics[width=1\textwidth]{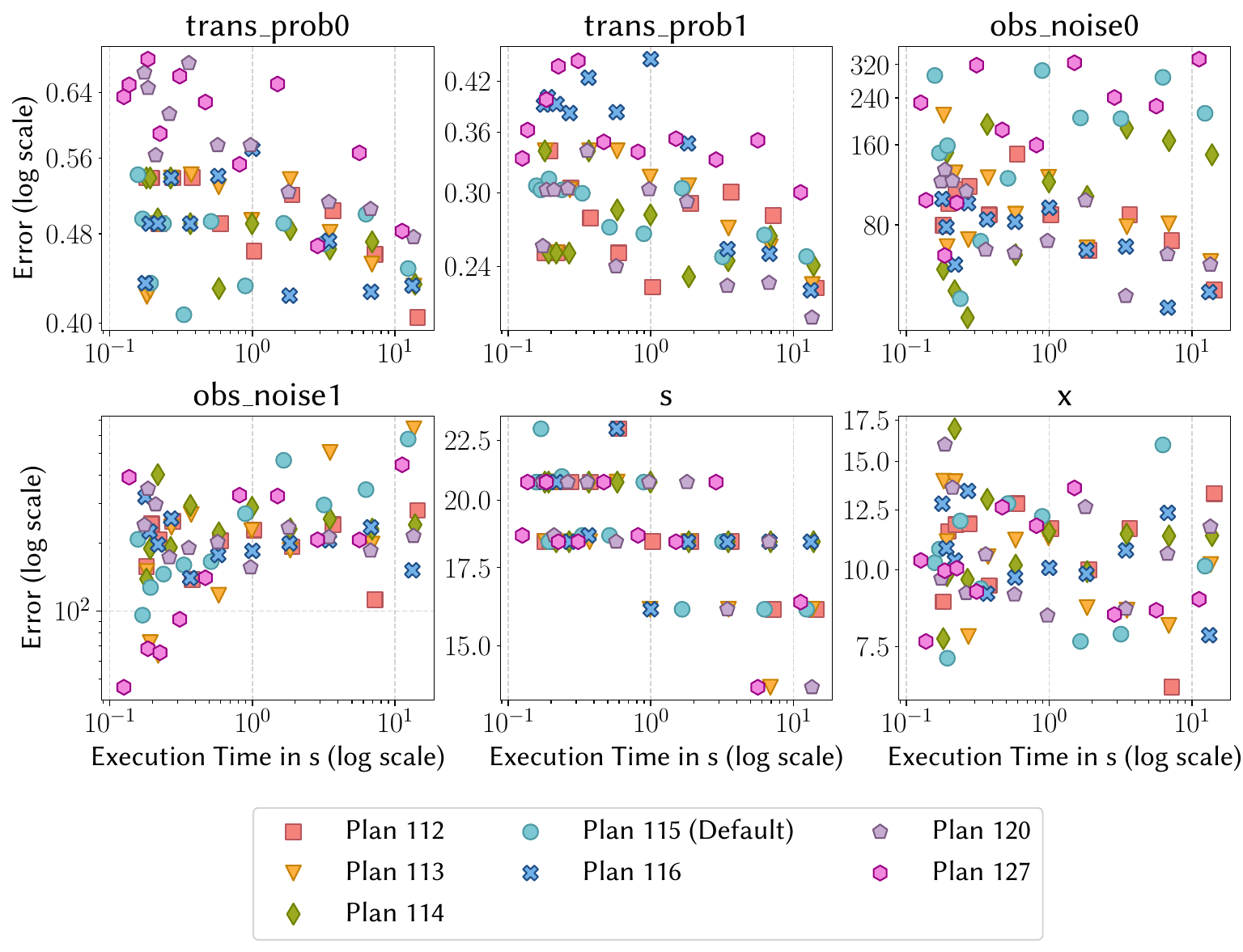}
    \caption{DS.}
  \end{subfigure}%
  \caption{\bSlds{}}
  \label{fig:mh-performance-results-slds-1}
\end{figure}

\begin{figure}[H]
  \centering
  \begin{subfigure}[c]{0.75\textwidth}
    \centering
    \includegraphics[width=1\textwidth]{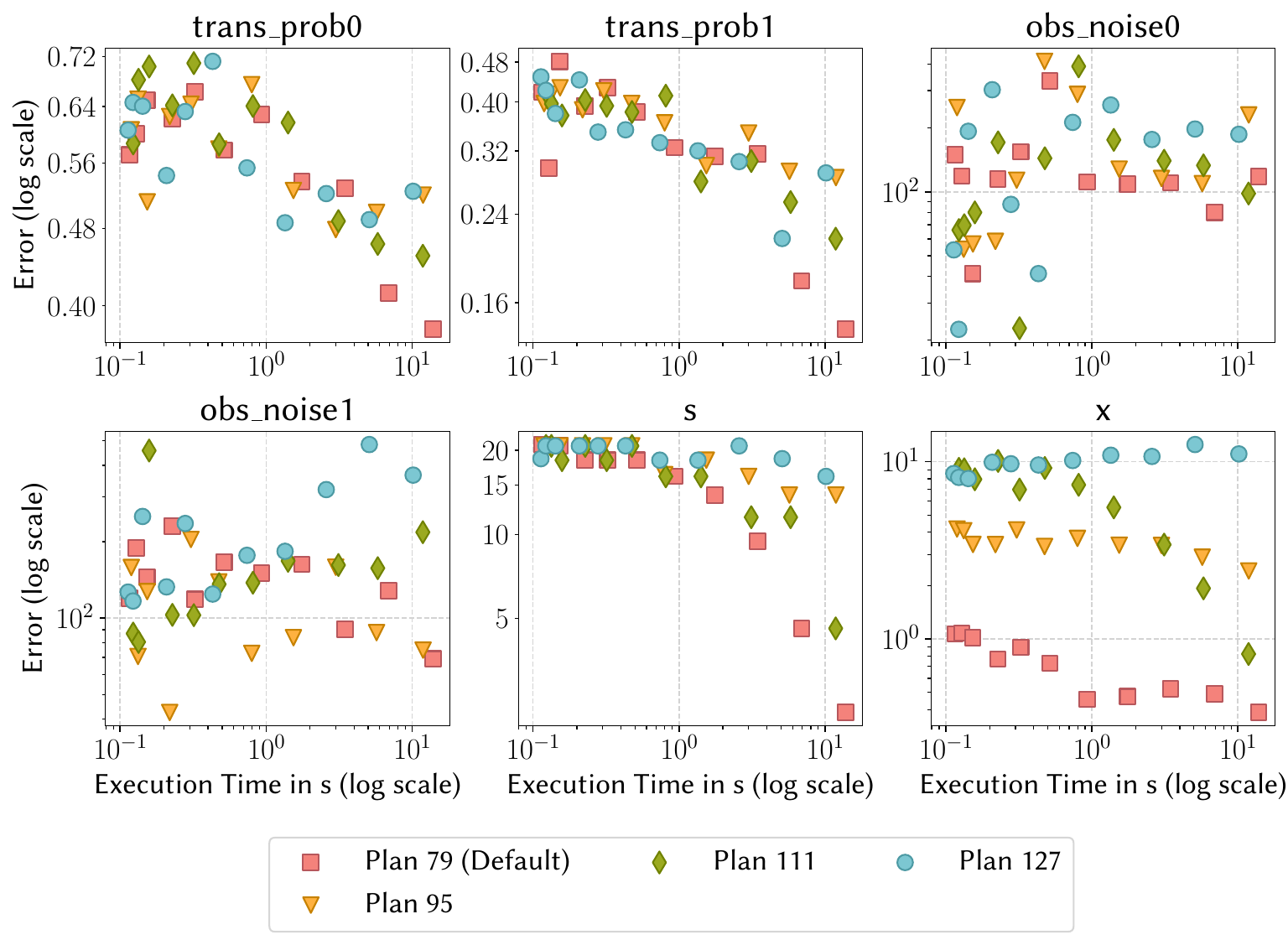}
    \caption{SMC w/ BP.}
  \end{subfigure}%
  \caption{\bSlds{} (continued)}
  \label{fig:mh-performance-results-slds-2}
\end{figure}

\begin{figure}[H]
  \centering
  \begin{subfigure}[c]{1\textwidth}
    \centering
    \includegraphics[width=1\textwidth]{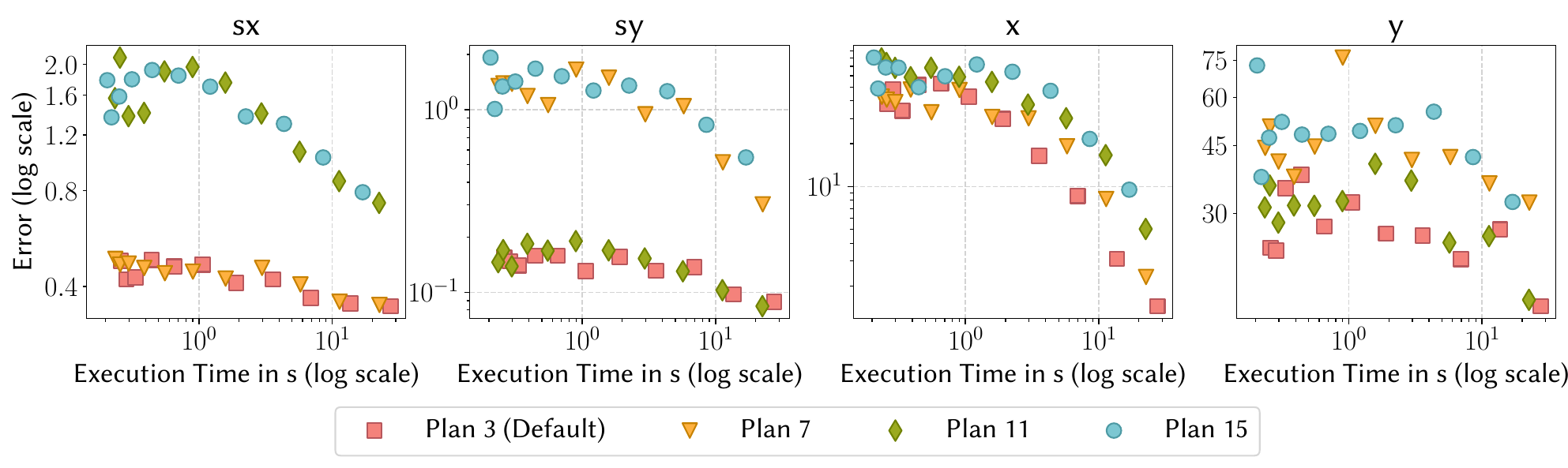}
    \caption{SSI.}
  \end{subfigure}%
  \\
  \begin{subfigure}[c]{1\textwidth}
    \centering
    \includegraphics[width=1\textwidth]{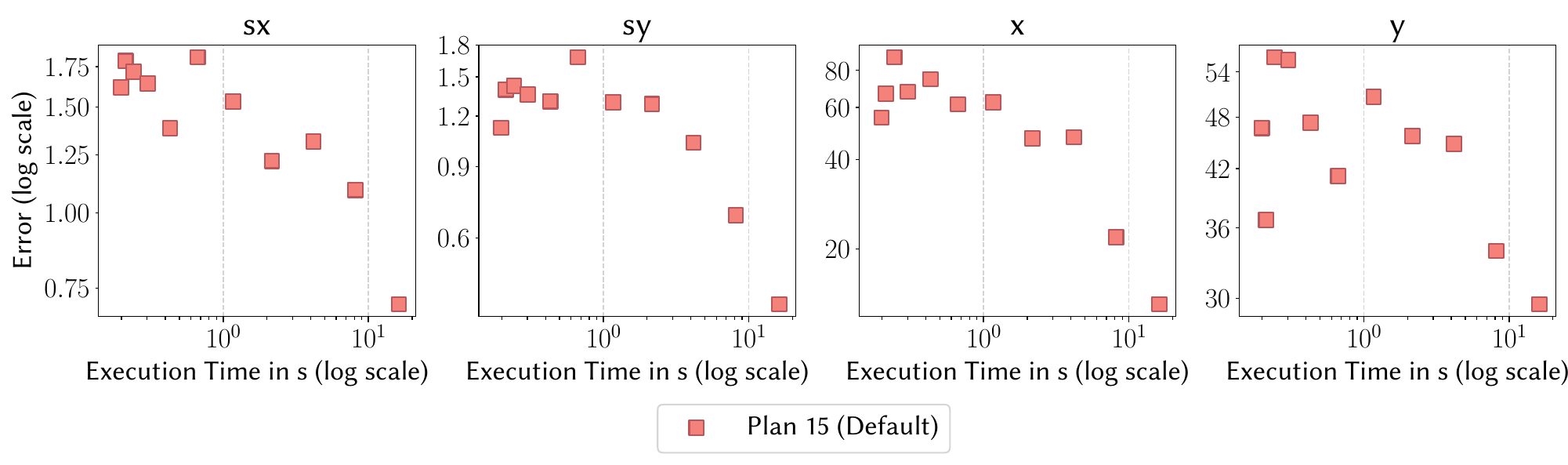}
    \caption{SSI.}
  \end{subfigure}%
  \\
  \begin{subfigure}[c]{1\textwidth}
    \centering
    \includegraphics[width=1\textwidth]{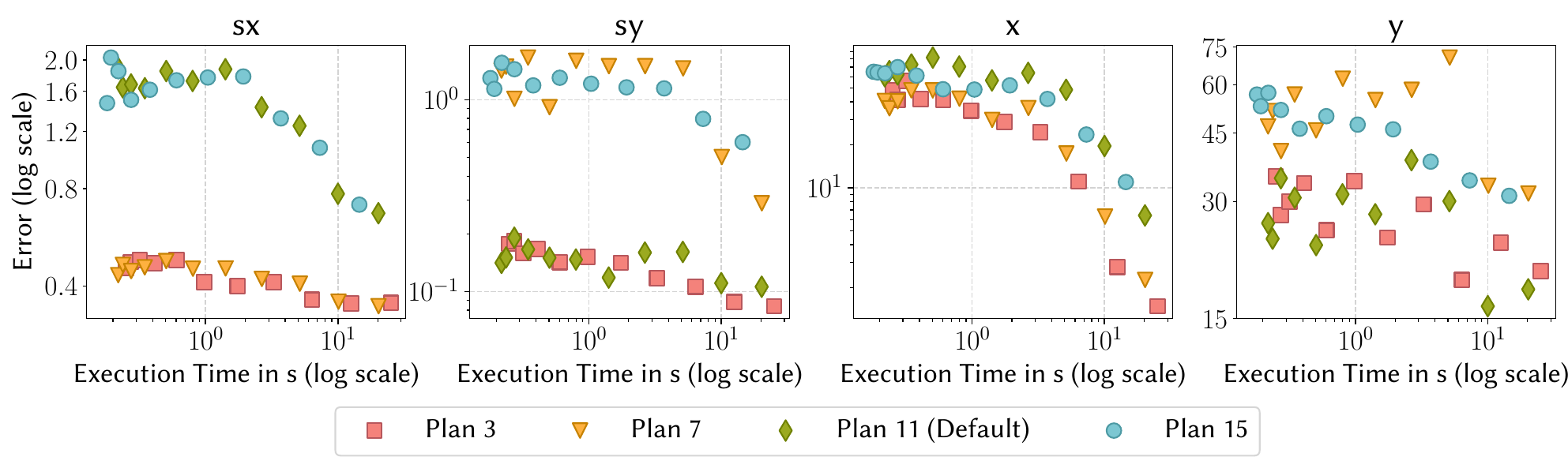}
    \caption{SMC w/ BP.}
  \end{subfigure}%
  \caption{\bRunner{}.}
  \label{fig:mh-performance-results-runner}
\end{figure}

\begin{figure}[H]
  \centering
  \begin{subfigure}[c]{1\textwidth}
    \centering
    \includegraphics[width=0.5\textwidth]{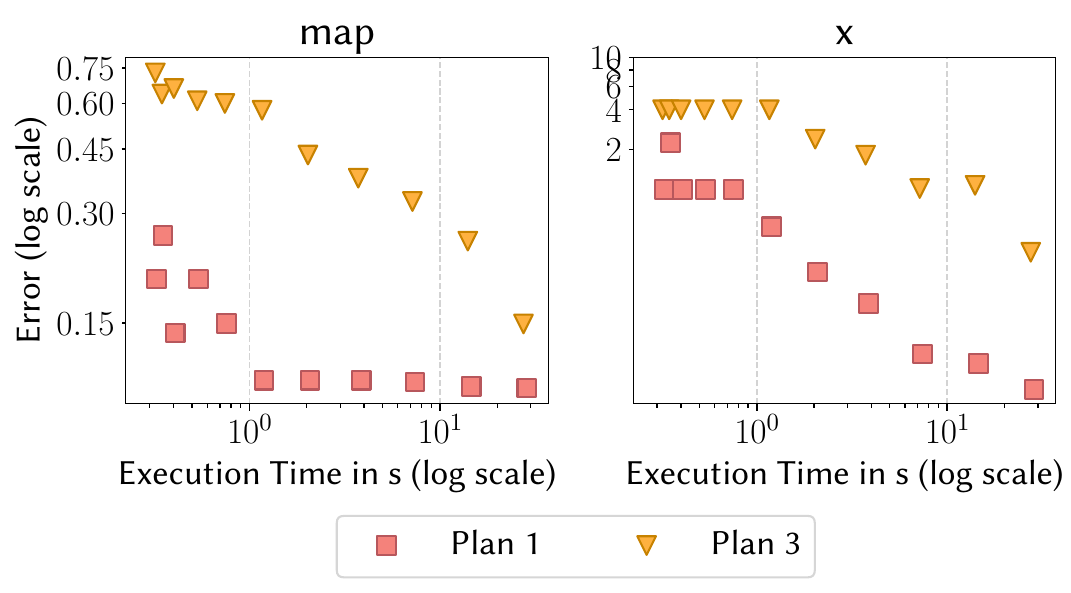}
    \caption{SSI. Plan 0 (Default) and Plan 2 time out for all particles.}
  \end{subfigure}%
  \\
  \begin{subfigure}[c]{1\textwidth}
    \centering
    \includegraphics[width=0.5\textwidth]{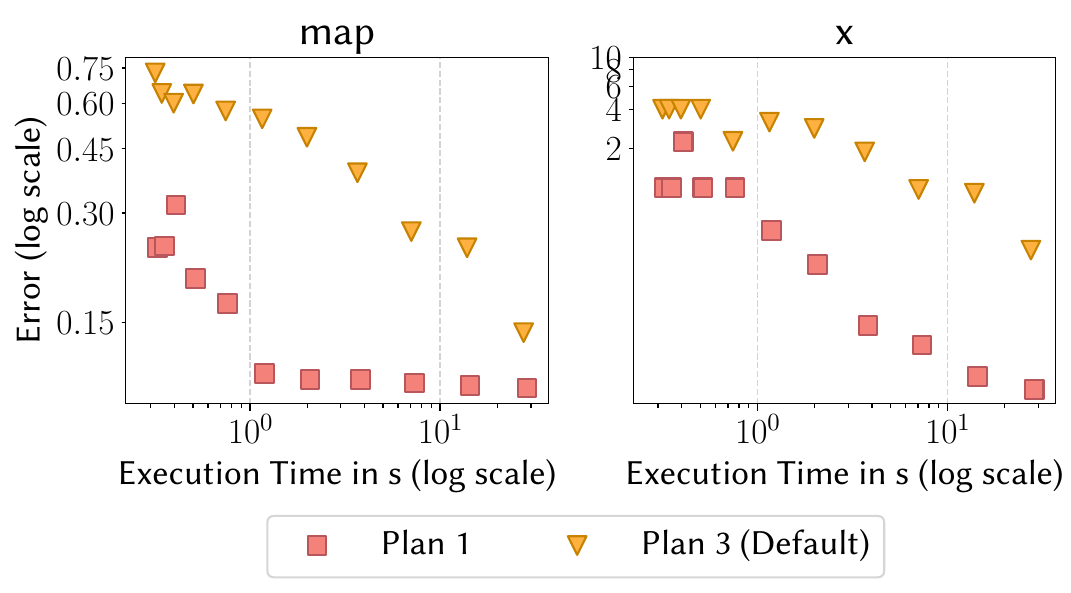}
    \caption{DS.}
  \end{subfigure}%
  \\
  \begin{subfigure}[c]{1\textwidth}
    \centering
    \includegraphics[width=0.5\textwidth]{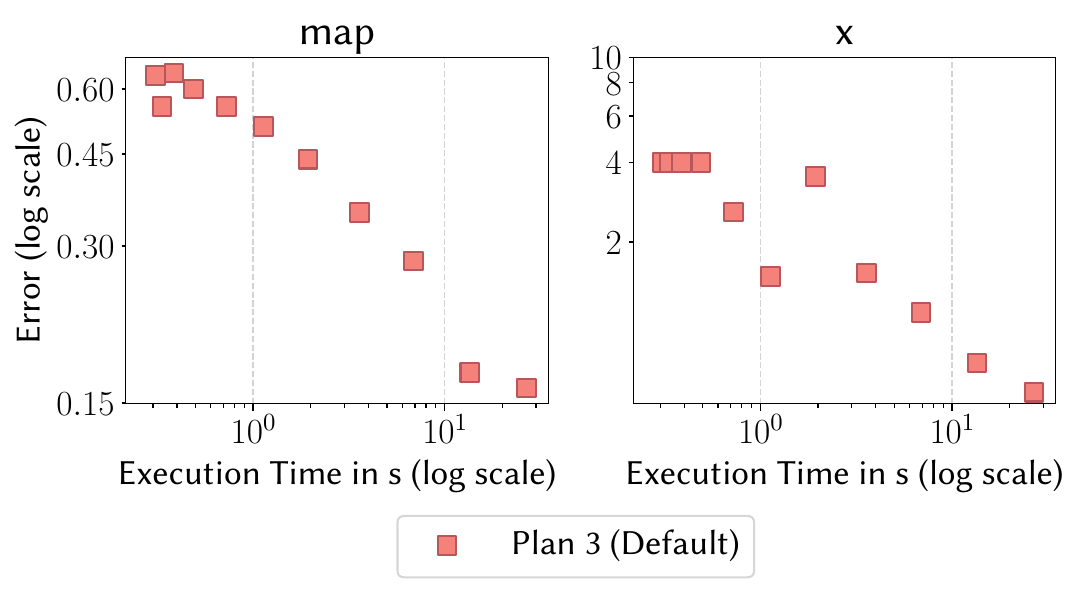}
    \caption{SMC w/ BP.}
  \end{subfigure}%
  \caption{\bSlam{}.}
  \label{fig:mh-performance-results-slam}
\end{figure}

\begin{figure}[H]
  \centering
  \begin{subfigure}[c]{1\textwidth}
    \centering
    \includegraphics[width=0.5\textwidth]{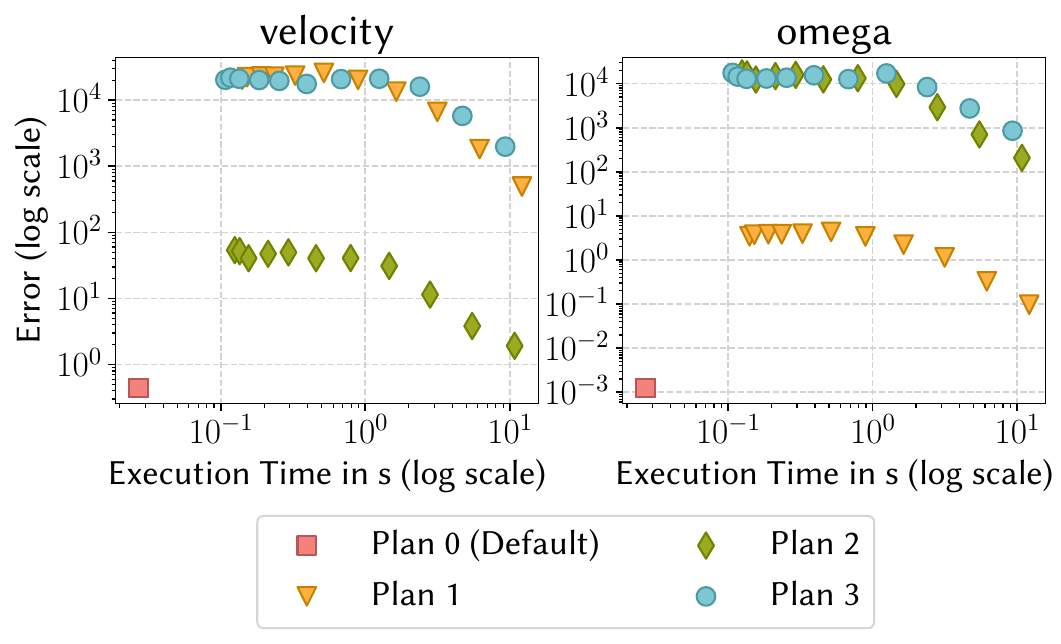}
    \caption{SSI. }
  \end{subfigure}%
  \\
  \begin{subfigure}[c]{1\textwidth}
    \centering
    \includegraphics[width=0.5\textwidth]{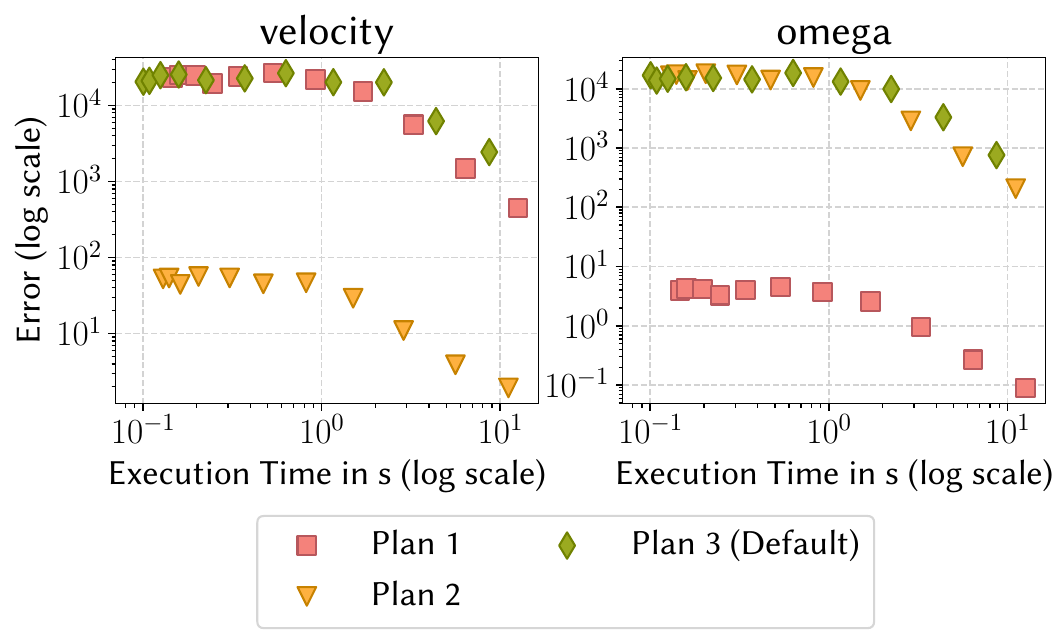}
    \caption{DS.}
  \end{subfigure}%
  \\
  \begin{subfigure}[c]{1\textwidth}
    \centering
    \includegraphics[width=0.5\textwidth]{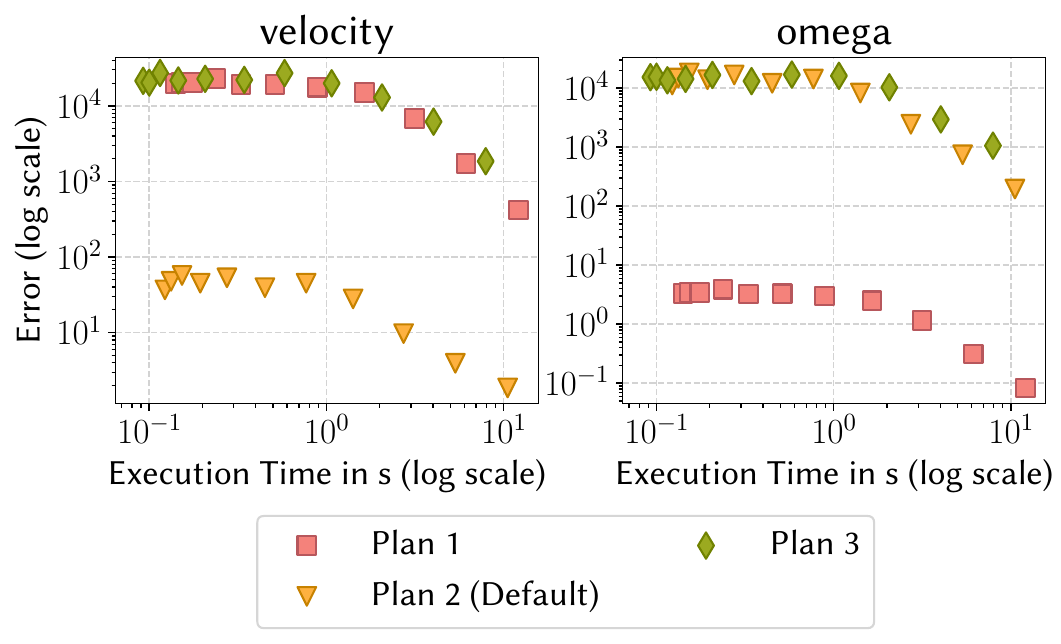}
    \caption{SMC w/ BP.}
  \end{subfigure}%
  \caption{\bWheels{}.}
  \label{fig:mh-performance-results-wheels}
\end{figure}

\begin{figure}[H]
  \centering
  \begin{subfigure}[c]{1\textwidth}
    \centering
    \includegraphics[width=1\textwidth]{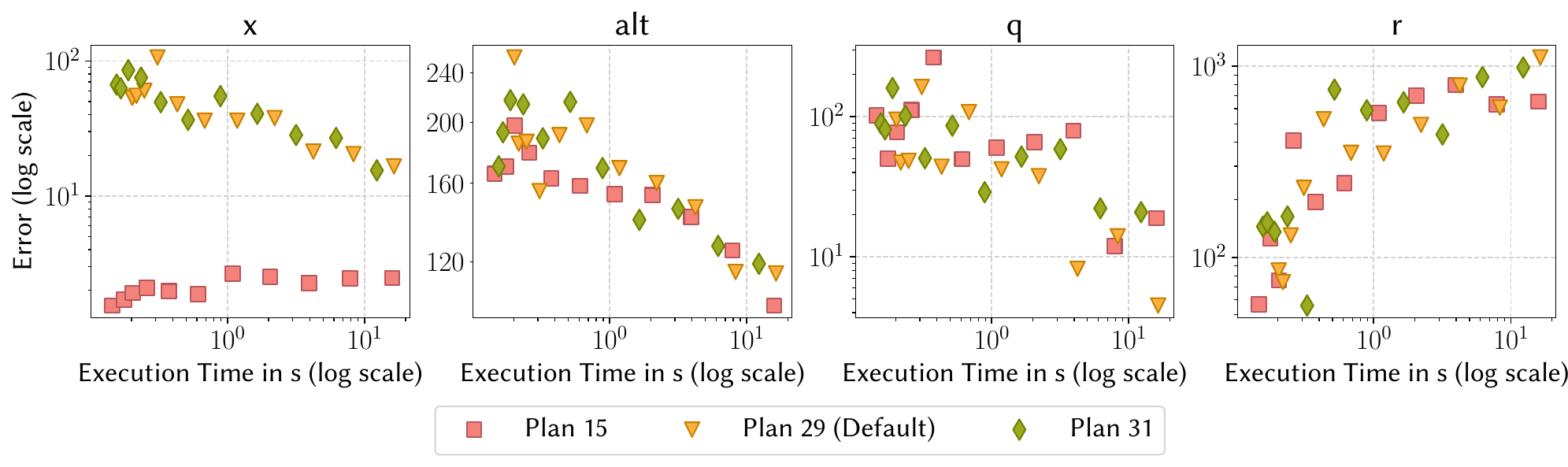}
    \caption{SSI.}
  \end{subfigure}%
  \\
  \begin{subfigure}[c]{1\textwidth}
    \centering
    \includegraphics[width=1\textwidth]{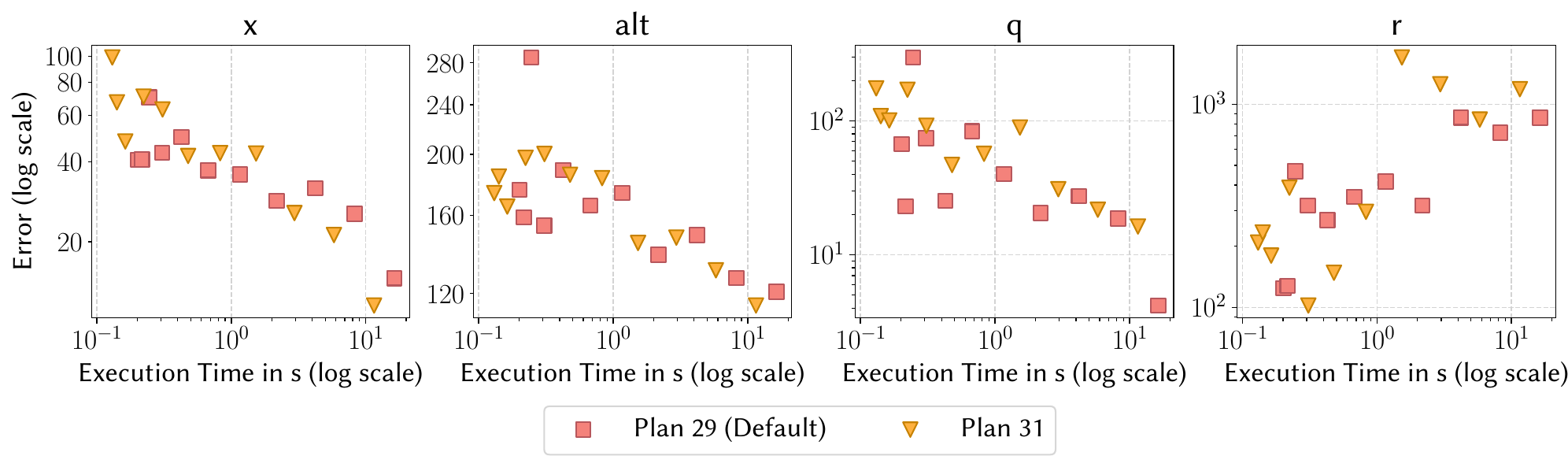}
    \caption{DS.}
  \end{subfigure}%
  \\
  \begin{subfigure}[c]{1\textwidth}
    \centering
    \includegraphics[width=1\textwidth]{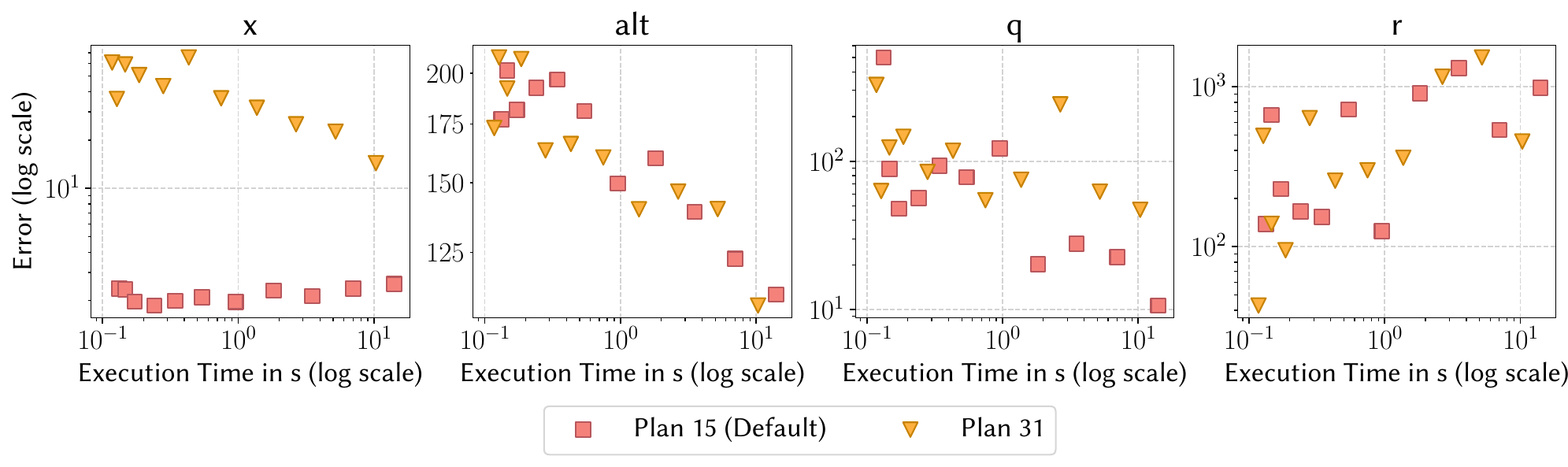}
    \caption{SMC w/ BP.}
  \end{subfigure}%
  \caption{\bAircraft{}. }
  \label{fig:mh-performance-results-aircraft}
\end{figure}

\subsection{Performance Profiles with 10th, 50th, and 90th Perecentile Error}
\begin{figure}[H]
  \centering
  \begin{subfigure}[c]{0.75\textwidth}
    \centering
    \includegraphics[width=1\textwidth]{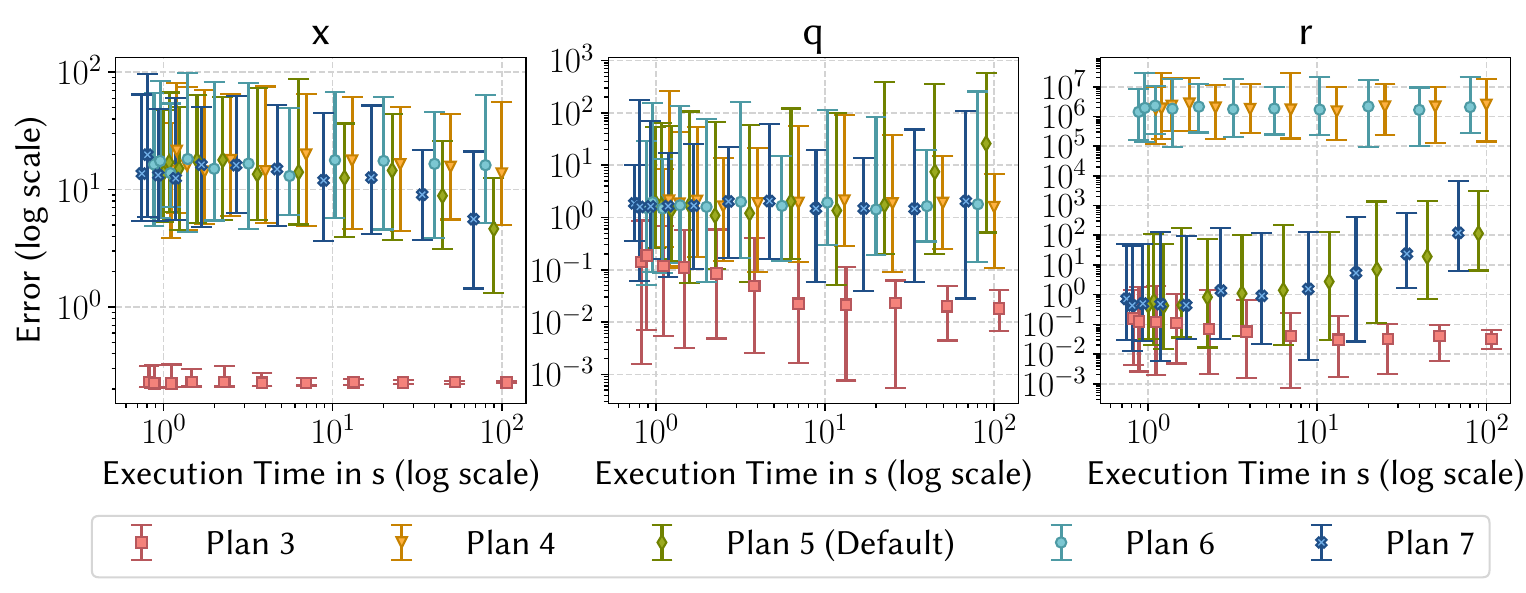}
    \caption{SSI.}
  \end{subfigure}%
  \\
  \begin{subfigure}[c]{0.75\textwidth}
    \centering
    \includegraphics[width=1\textwidth]{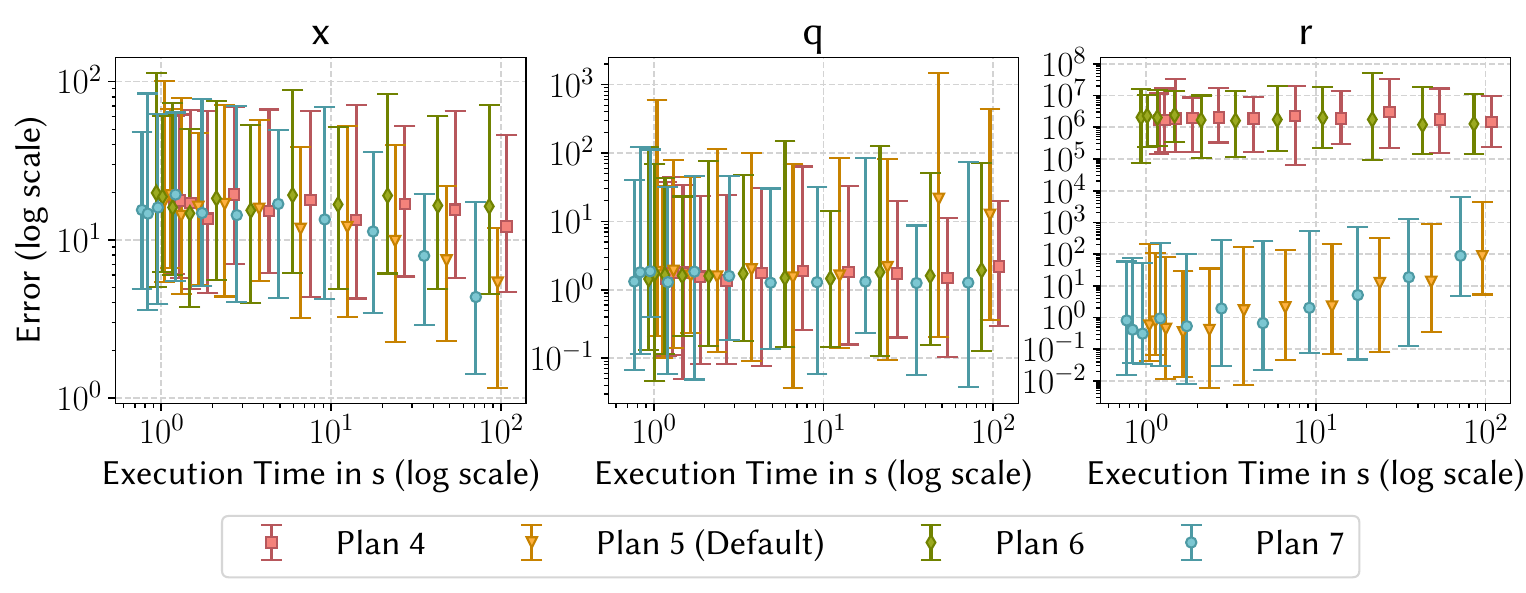}
    \caption{DS.}
  \end{subfigure}%
  \\
  \begin{subfigure}[c]{0.75\textwidth}
    \centering
    \includegraphics[width=1\textwidth]{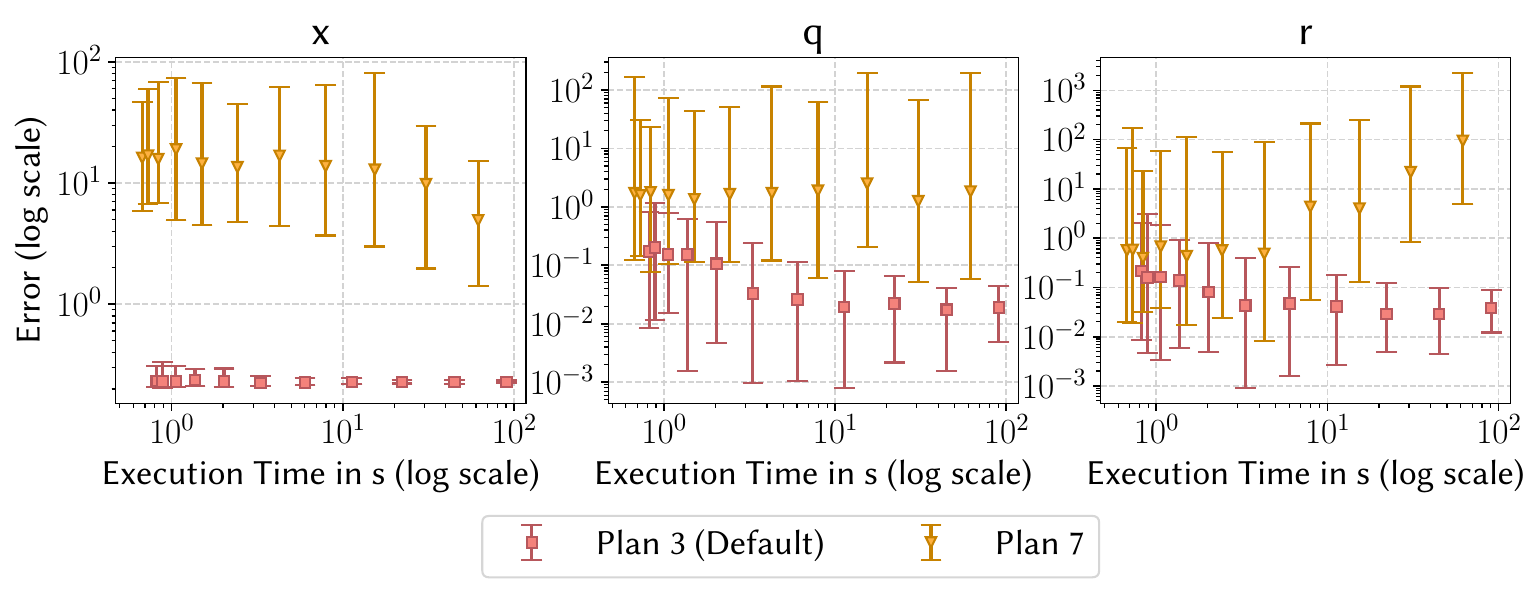}
    \caption{SMC w/ BP.}
  \end{subfigure}%
  \caption{\bNoise{}}
  \label{fig:mh-performance-results-errorbar-noise}
\end{figure}

\begin{figure}[H]
  \centering
  \begin{subfigure}[c]{0.75\textwidth}
    \centering
    \includegraphics[width=1\textwidth]{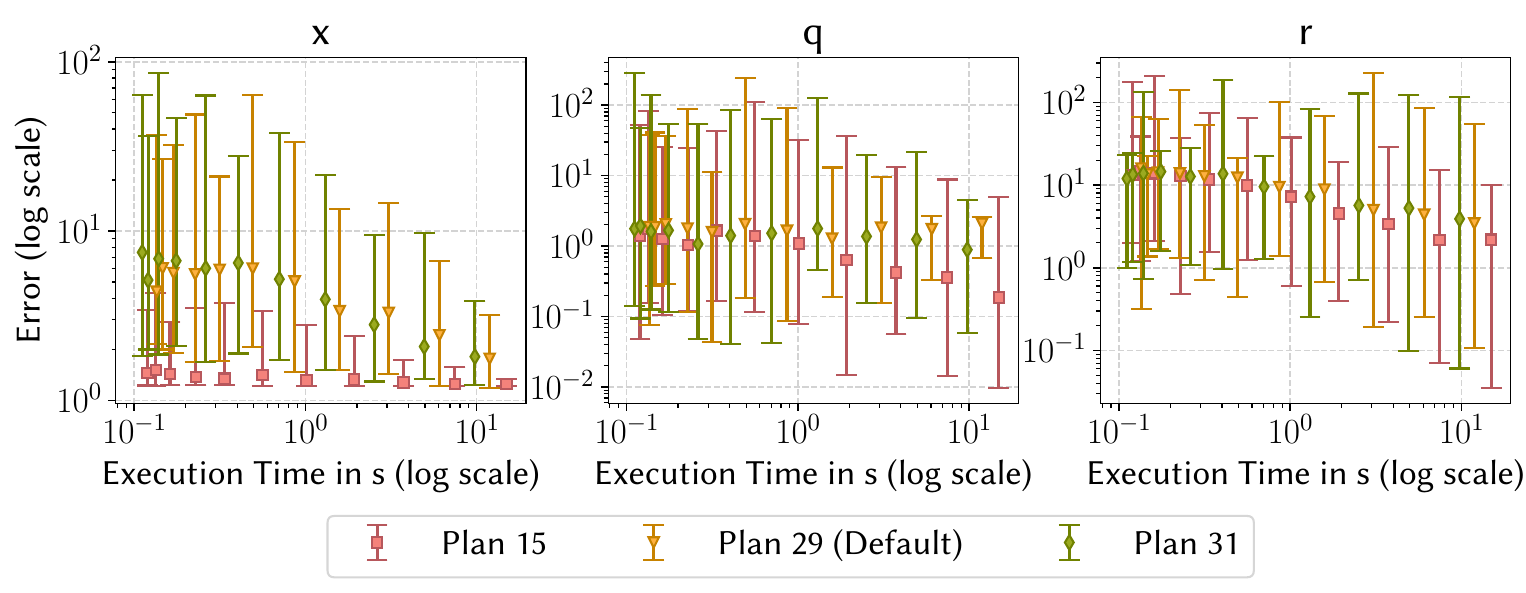}
    \caption{SSI.}
  \end{subfigure}%
  \\
  \begin{subfigure}[c]{0.75\textwidth}
    \centering
    \includegraphics[width=1\textwidth]{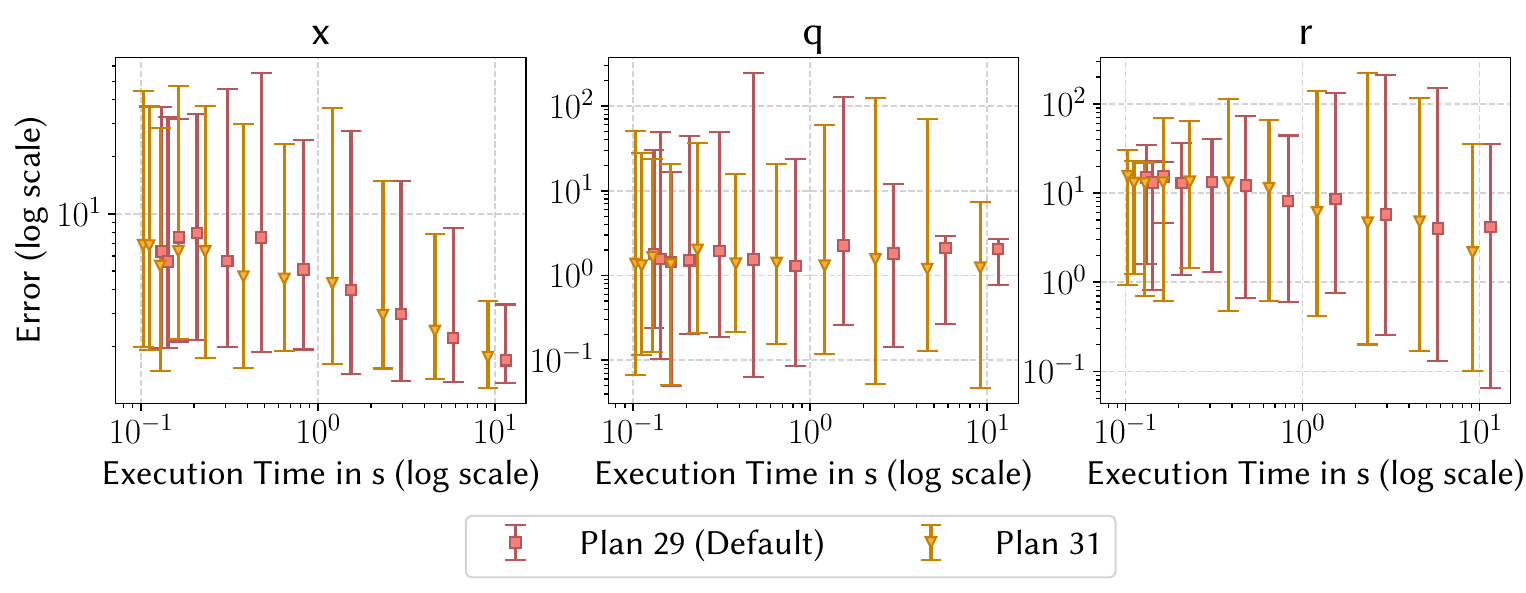}
    \caption{DS.}
  \end{subfigure}%
  \\
  \begin{subfigure}[c]{0.75\textwidth}
    \centering
    \includegraphics[width=1\textwidth]{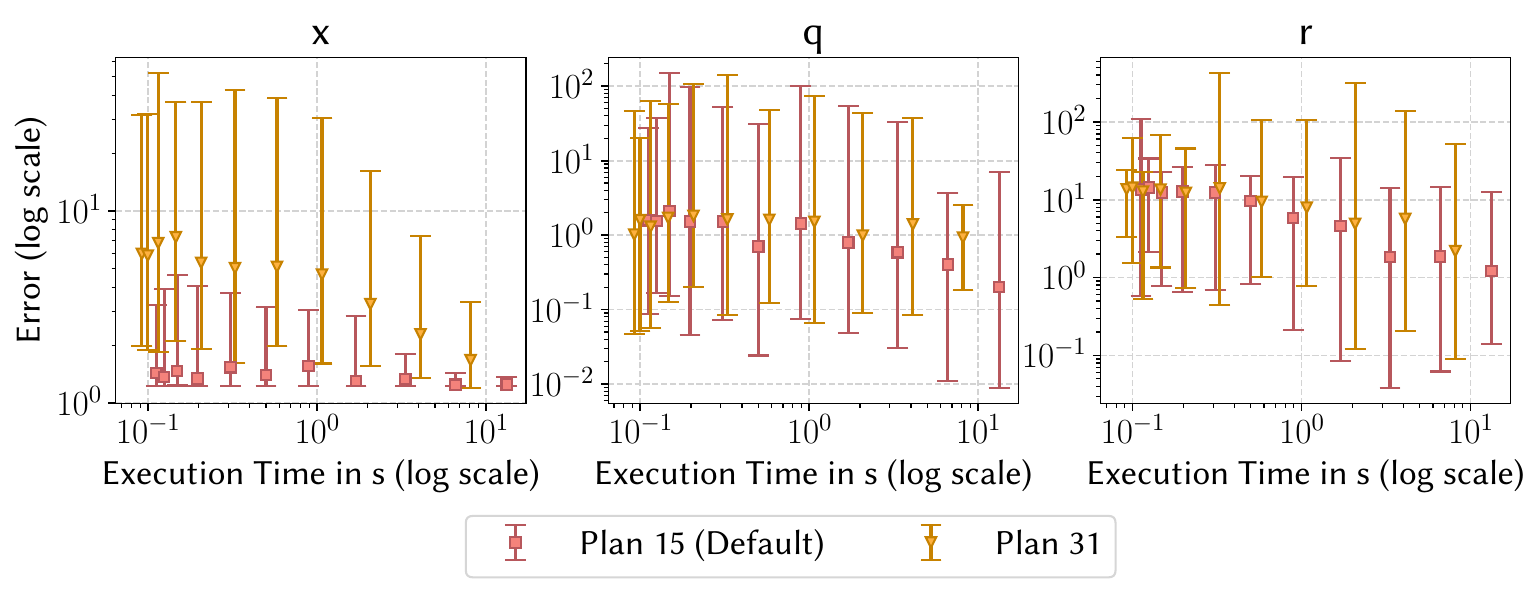}
    \caption{SMC w/ BP.}
  \end{subfigure}%
  \caption{\bRadar{}}
  \label{fig:mh-performance-results-errorbar-radar}
\end{figure}

\begin{figure}[H]
  \centering
  \begin{subfigure}[c]{0.75\textwidth}
    \centering
    \includegraphics[width=1\textwidth]{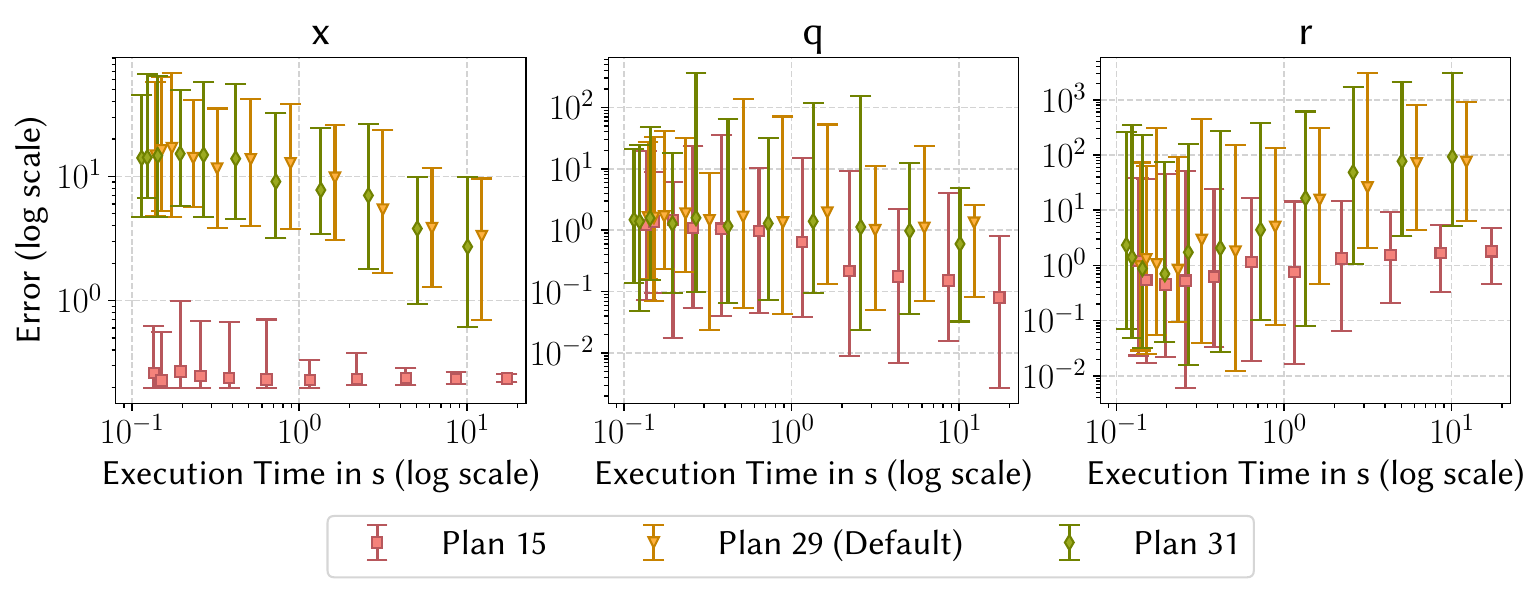}
    \caption{SSI.}
  \end{subfigure}%
  \\
  \begin{subfigure}[c]{0.75\textwidth}
    \centering
    \includegraphics[width=1\textwidth]{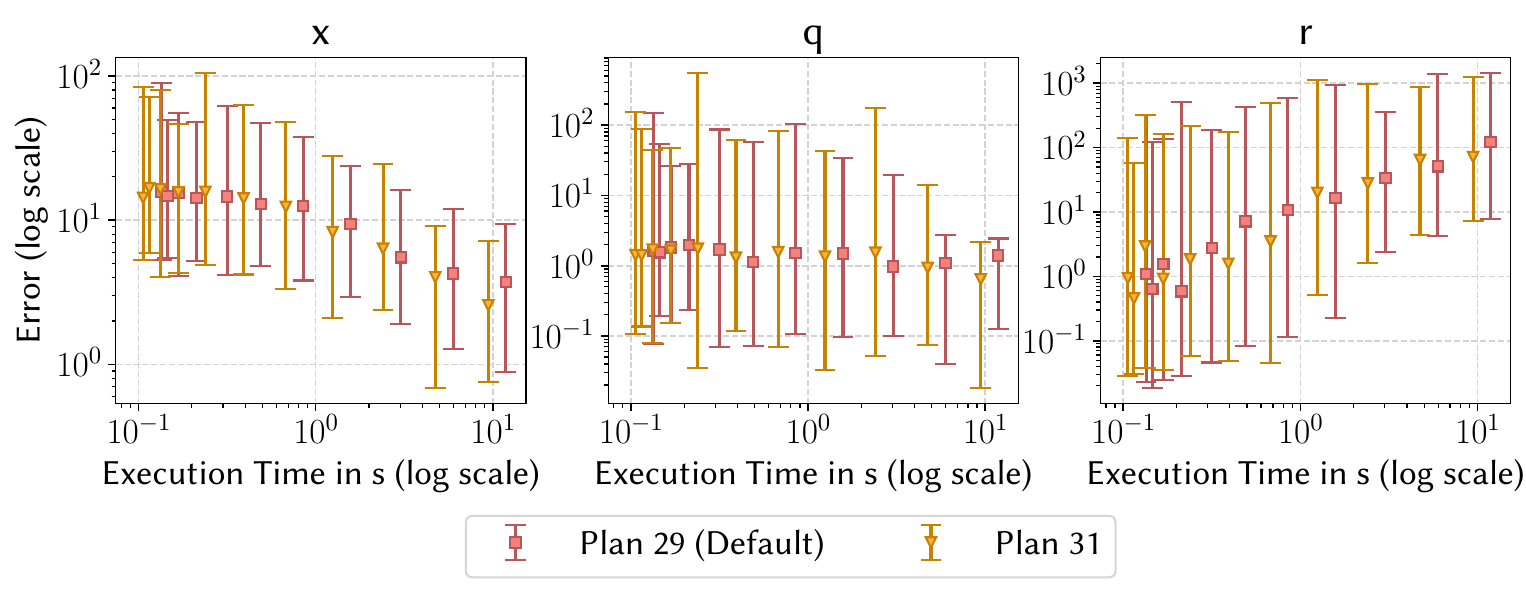}
    \caption{DS.}
  \end{subfigure}%
  \\
  \begin{subfigure}[c]{0.75\textwidth}
    \centering
    \includegraphics[width=1\textwidth]{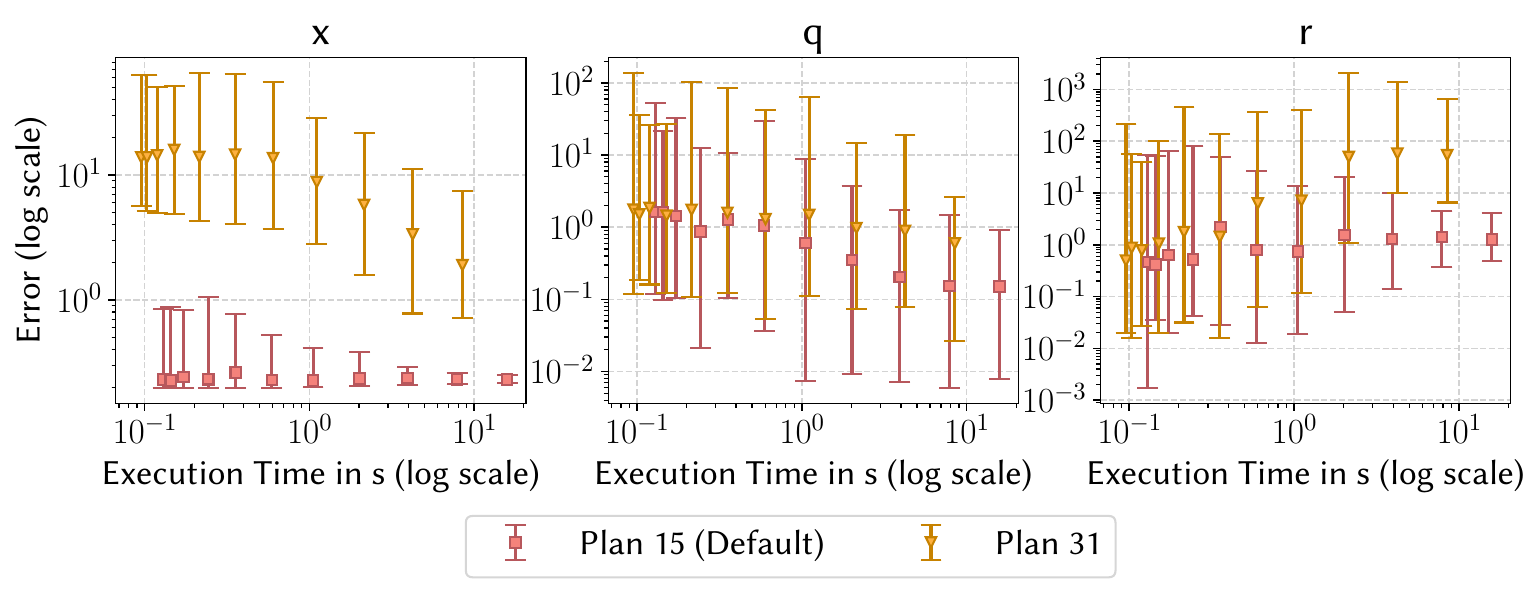}
    \caption{SMC w/ BP.}
  \end{subfigure}%
  \caption{\bEnvnoise{}}
  \label{fig:mh-performance-results-errorbar-envnoise}
\end{figure}

\begin{figure}[H]
  \centering
  \begin{subfigure}[c]{0.5\textwidth}
    \centering
    \includegraphics[width=1\textwidth]{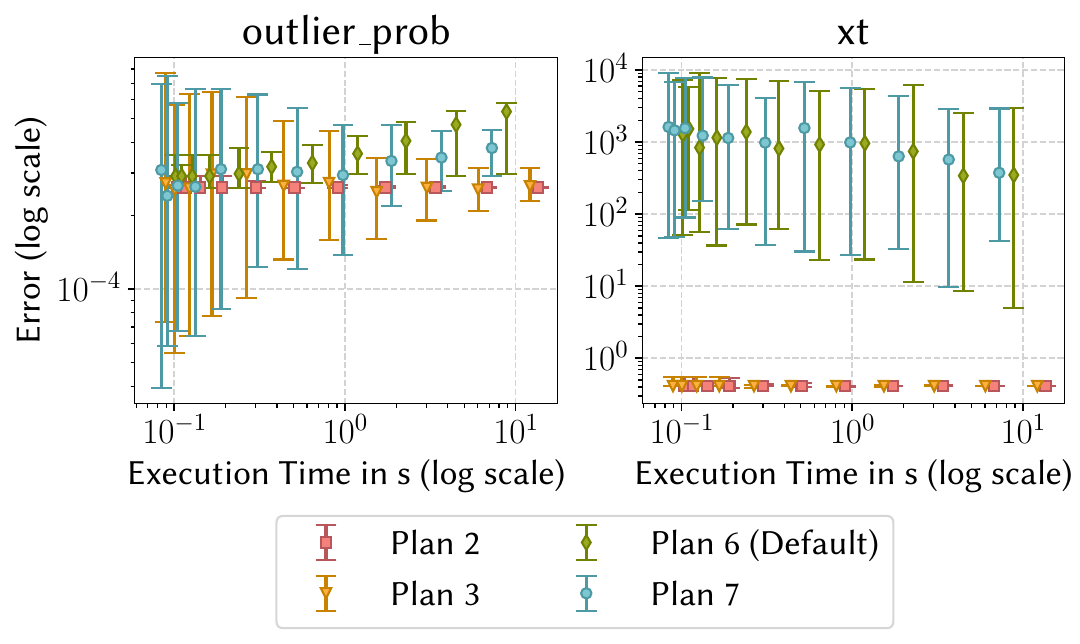}
    \caption{SSI.}
  \end{subfigure}%
  \\
  \begin{subfigure}[c]{0.5\textwidth}
    \centering
    \includegraphics[width=1\textwidth]{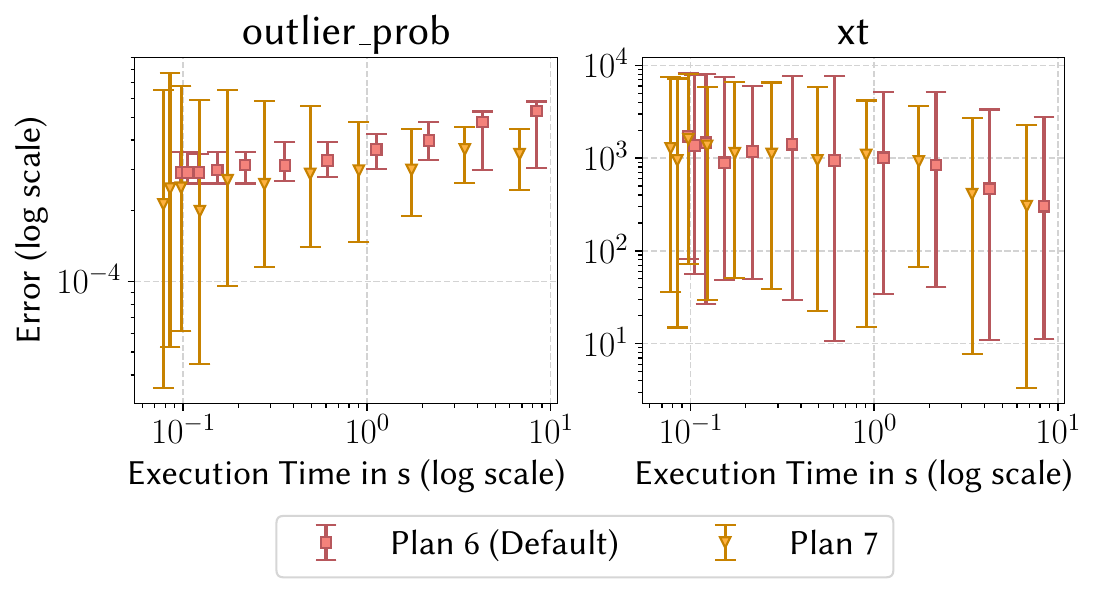}
    \caption{DS.}
  \end{subfigure}%
  \\
  \begin{subfigure}[c]{0.5\textwidth}
    \centering
    \includegraphics[width=0.99\textwidth]{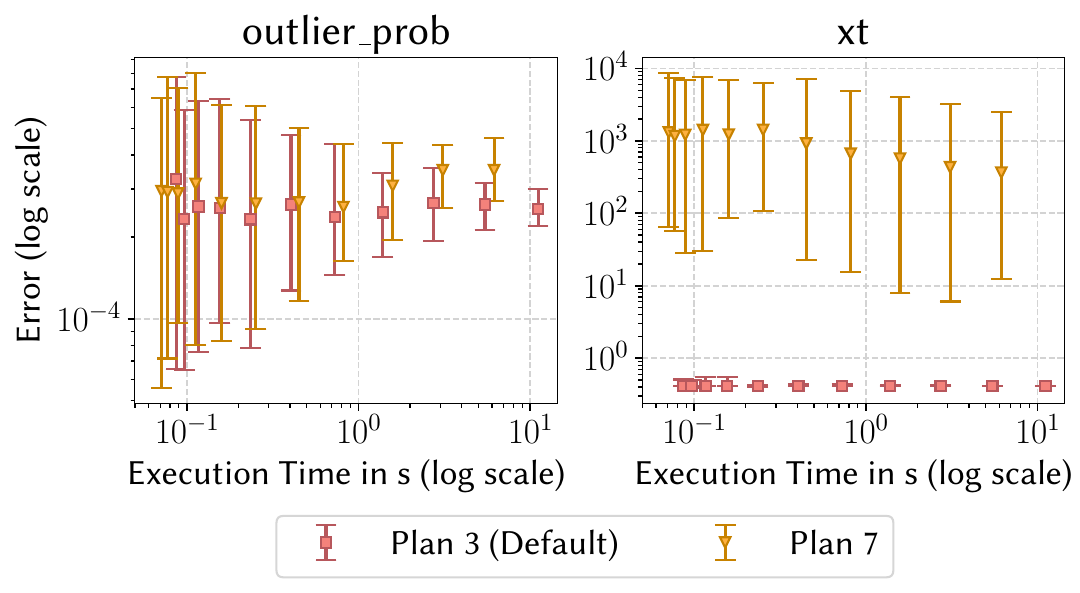}
    \caption{SMC w/ BP.}
  \end{subfigure}%
  \caption{\bOutlier{}}
  \label{fig:mh-performance-results-errorbar-outlier}
\end{figure}

\begin{figure}[H]
  \centering
  \begin{subfigure}[c]{0.5\textwidth}
    \centering
    \includegraphics[width=1\textwidth]{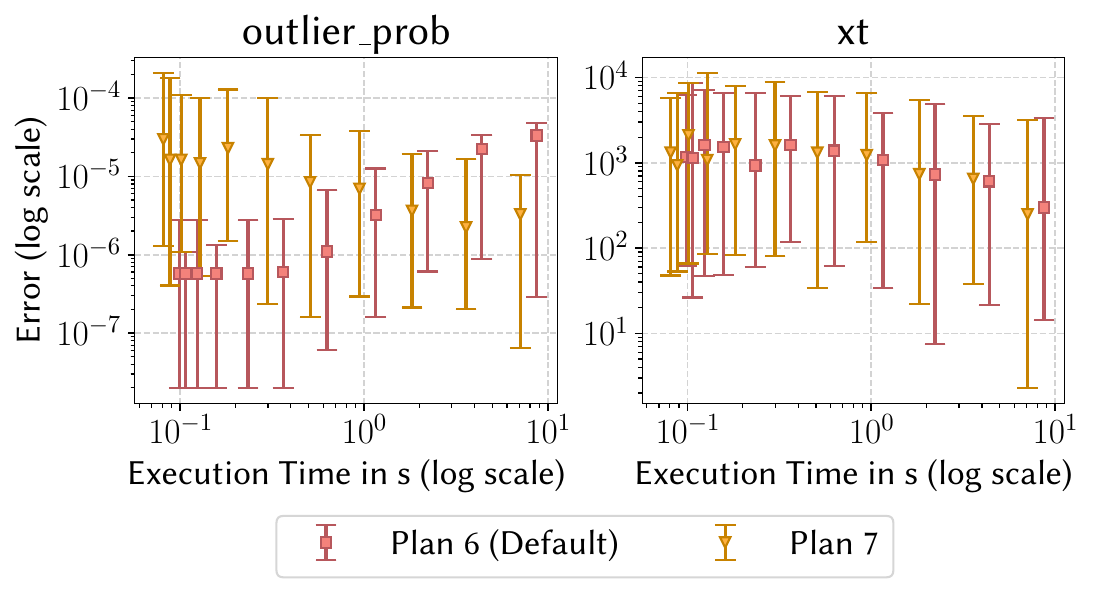}
    \caption{SSI.}
  \end{subfigure}%
  \\
  \begin{subfigure}[c]{0.5\textwidth}
    \centering
    \includegraphics[width=1\textwidth]{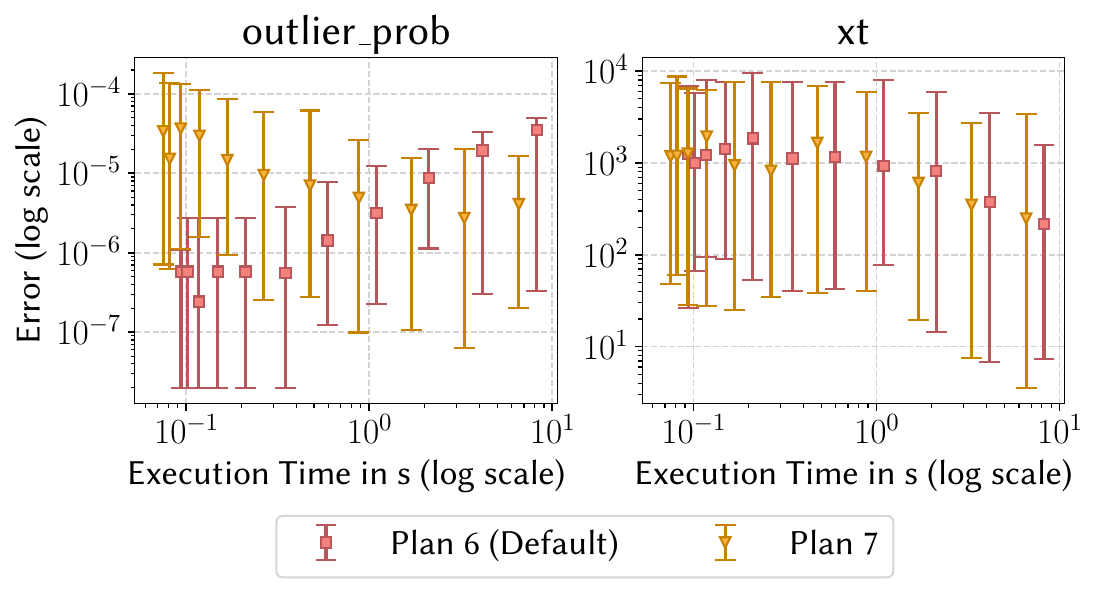}
    \caption{DS.}
  \end{subfigure}%
  \\
  \begin{subfigure}[c]{0.5\textwidth}
    \centering
    \includegraphics[width=1\textwidth]{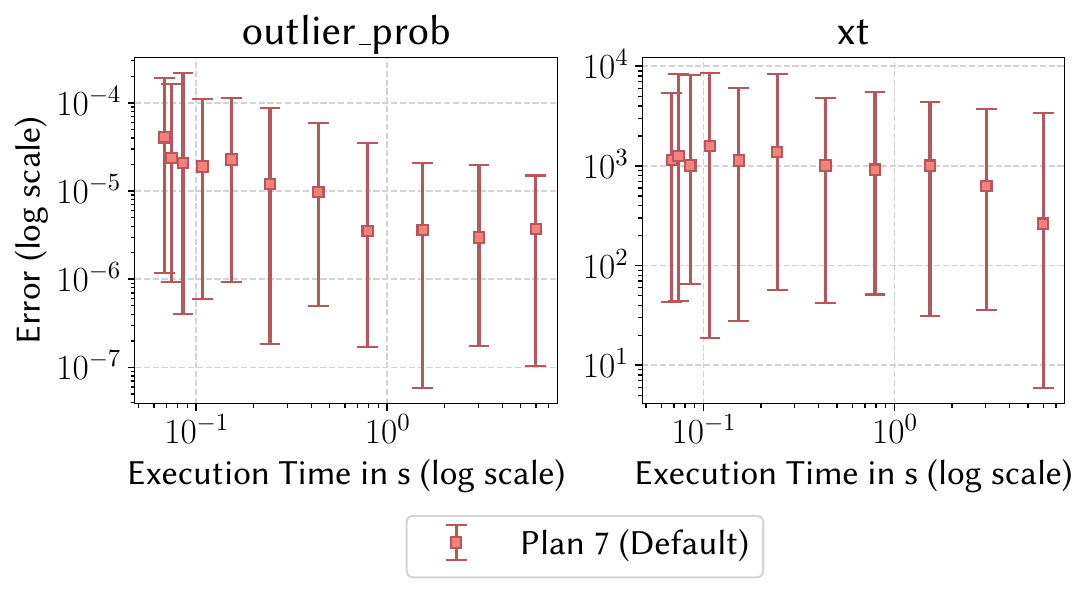}
    \caption{SMC w/ BP.}
  \end{subfigure}%
  \caption{\bOutlierheavy{}.}
  \label{fig:mh-performance-results-errorbar-outlierheavy}
\end{figure}

\begin{figure}[H]
  \centering
  \begin{subfigure}[c]{0.5\textwidth}
    \centering
    \includegraphics[width=1\textwidth]{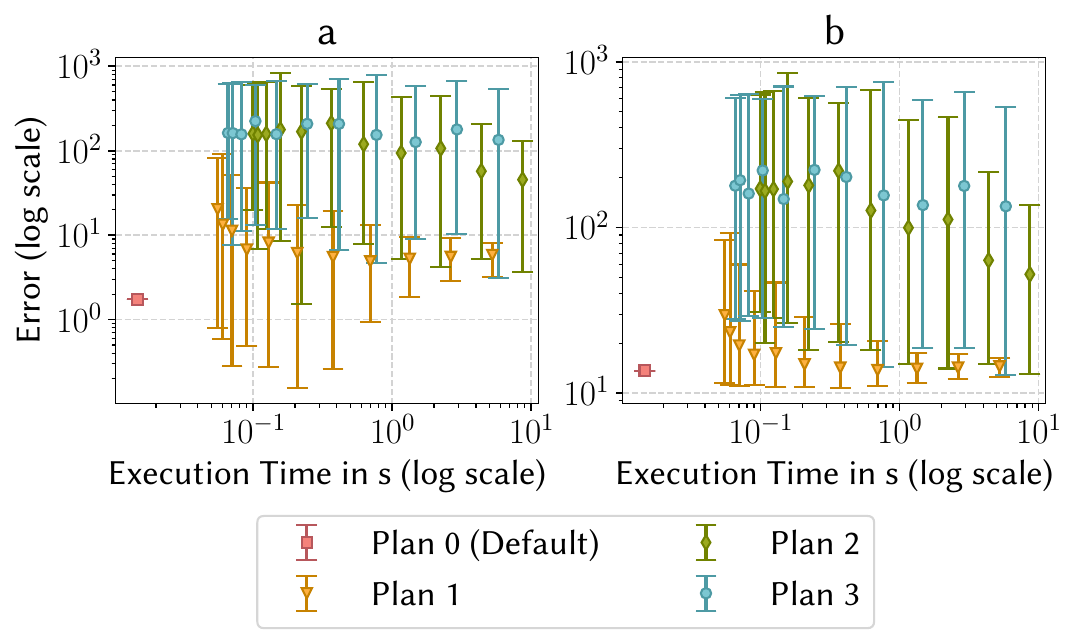}
    \caption{SSI.}
  \end{subfigure}%
  \\
  \begin{subfigure}[c]{0.5\textwidth}
    \centering
    \includegraphics[width=1\textwidth]{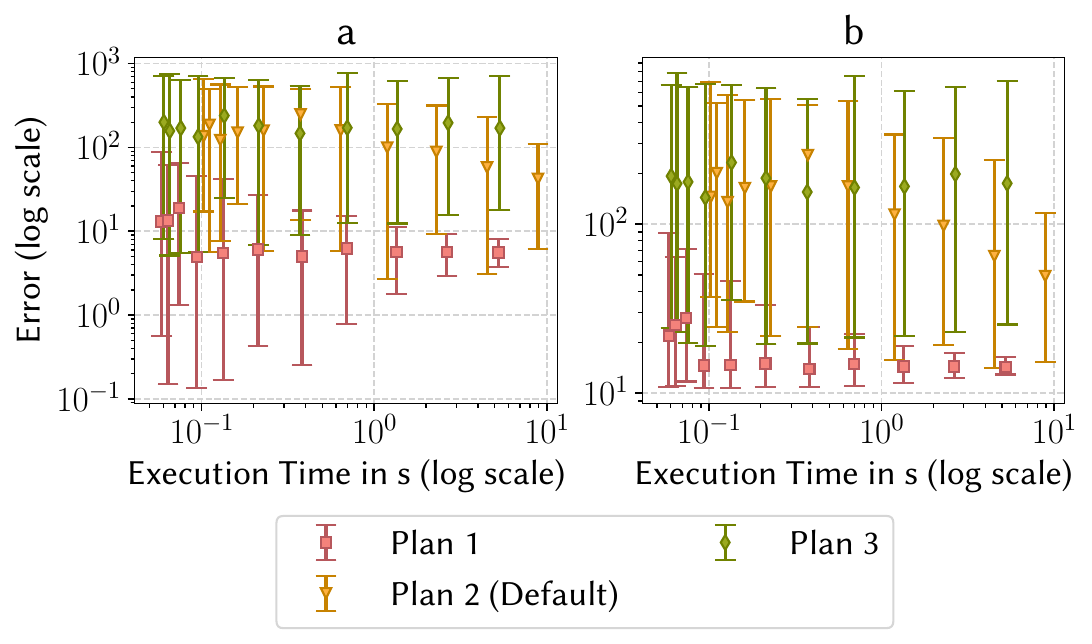}
    \caption{DS.}
  \end{subfigure}%
  \\
  \begin{subfigure}[c]{0.5\textwidth}
    \centering
    \includegraphics[width=1\textwidth]{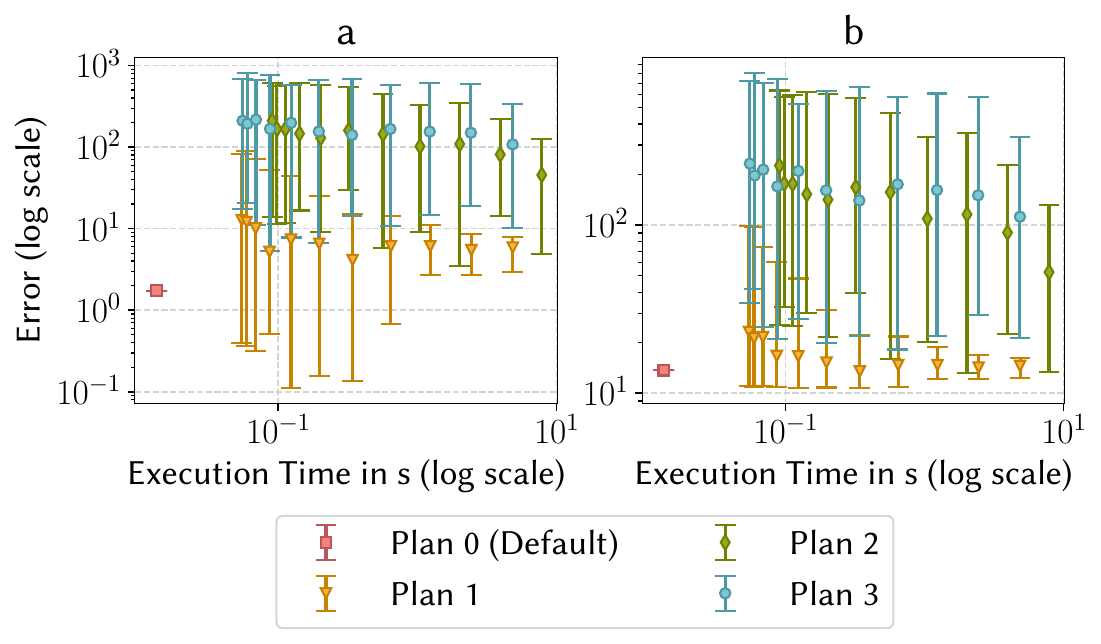}
    \caption{SMC w/ BP.}
  \end{subfigure}%
  \caption{\bGtree{}}
  \label{fig:mh-performance-results-errorbar-tree}
\end{figure}

\begin{figure}[H]
  \centering
  \begin{subfigure}[c]{0.75\textwidth}
    \centering
    \includegraphics[width=1\textwidth]{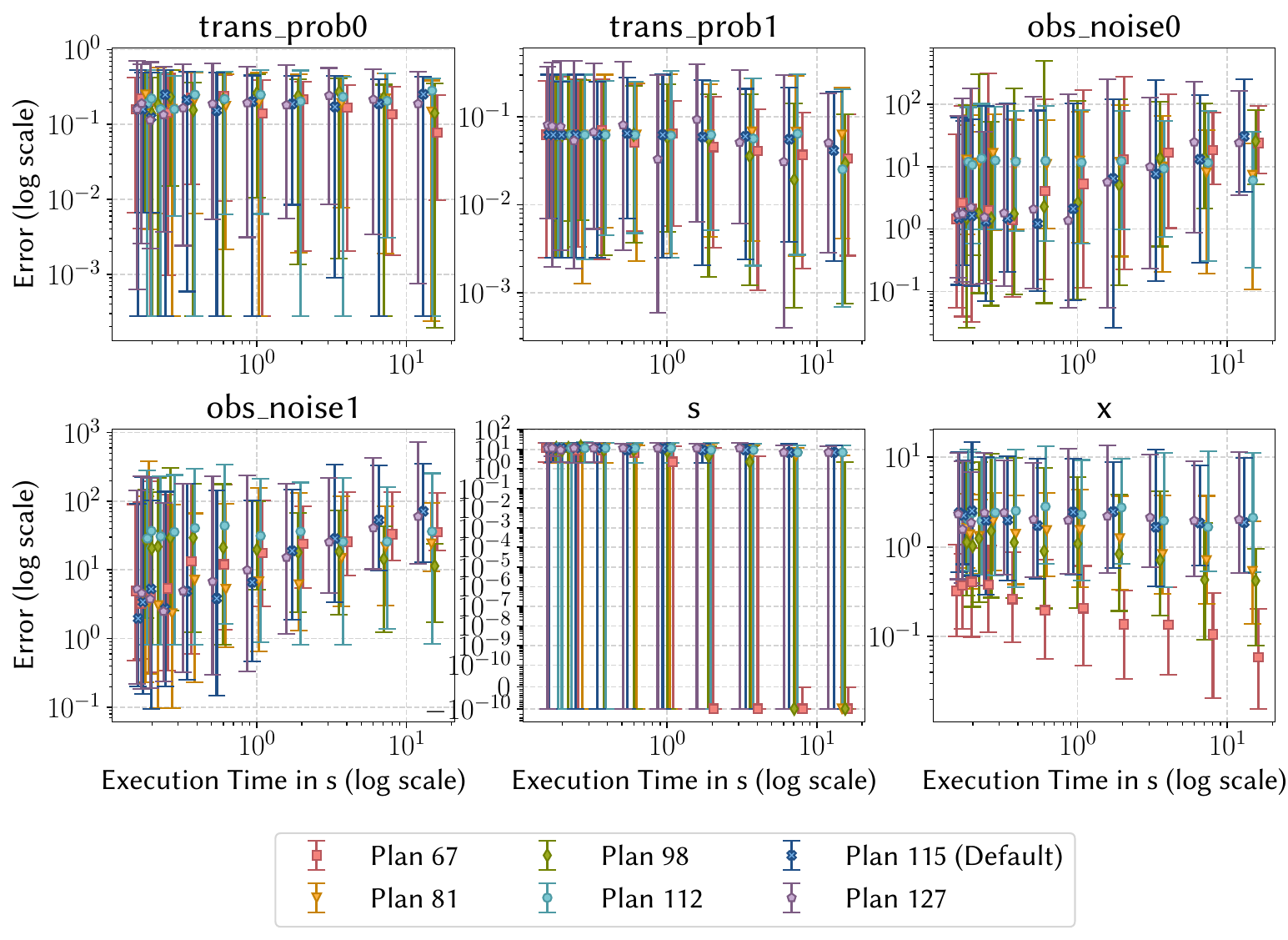}
    \caption{SSI.}
  \end{subfigure}%
  \\
  \begin{subfigure}[c]{0.75\textwidth}
    \centering
    \includegraphics[width=1\textwidth]{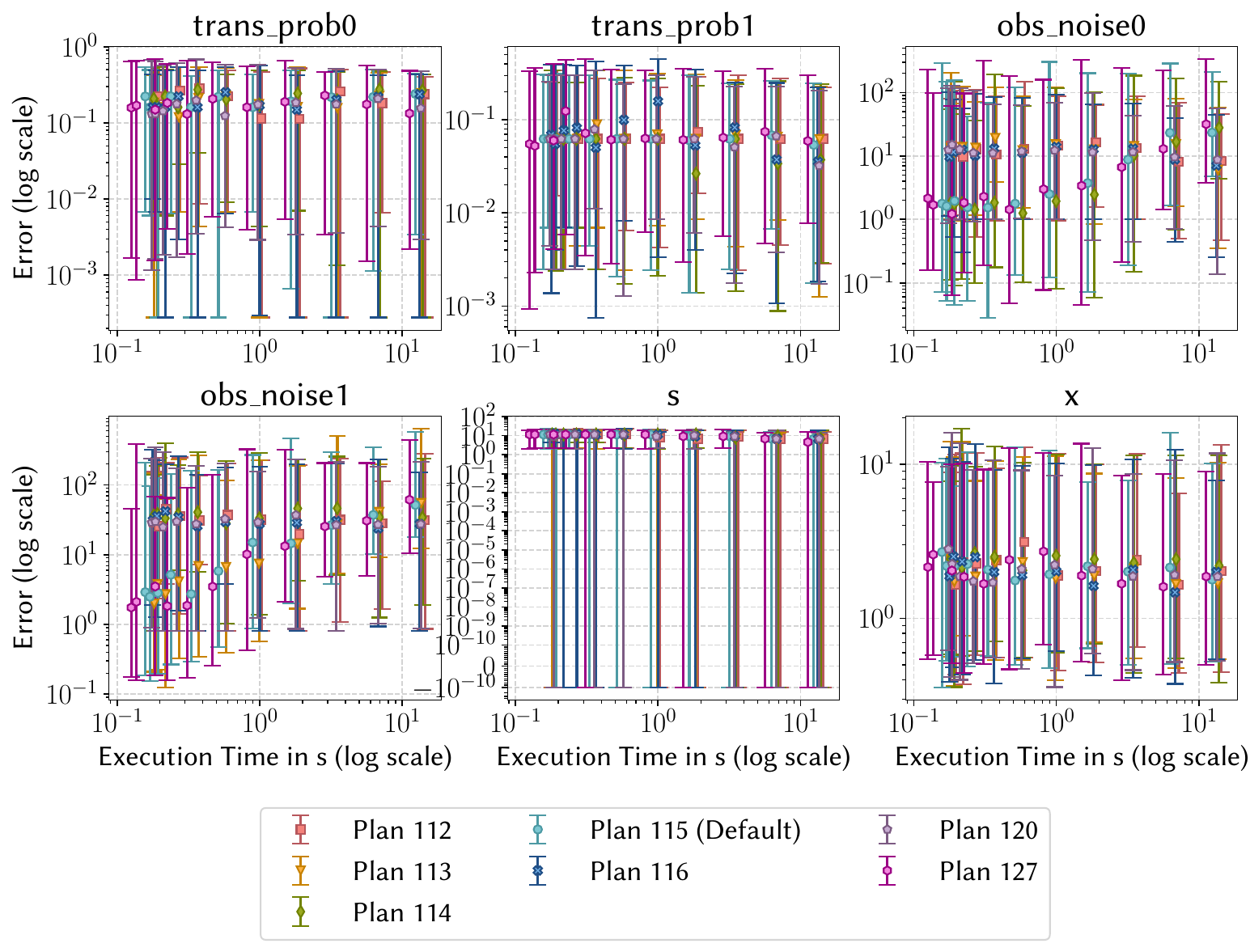}
    \caption{DS.}
  \end{subfigure}%
  \caption{\bSlds{}}
  \label{fig:mh-performance-results-errorbar-slds-1}
\end{figure}

\begin{figure}[H]
  \centering
  \begin{subfigure}[c]{0.75\textwidth}
    \centering
    \includegraphics[width=1\textwidth]{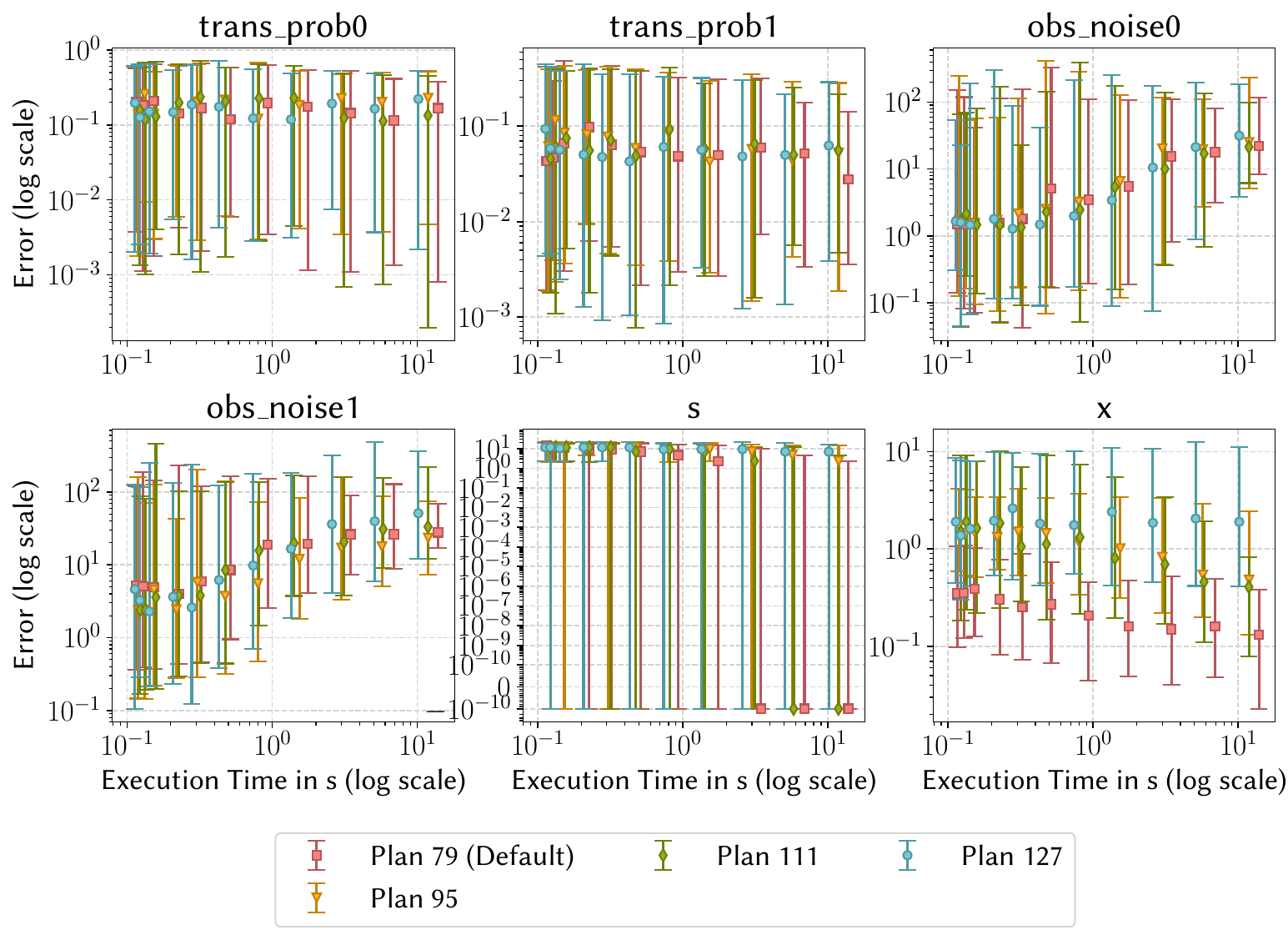}
    \caption{SMC w/ BP.}
  \end{subfigure}%
  \caption{\bSlds{} (continued)}
  \label{fig:mh-performance-results-errorbar-slds-2}
\end{figure}

\begin{figure}[H]
  \centering
  \begin{subfigure}[c]{1\textwidth}
    \centering
    \includegraphics[width=1\textwidth]{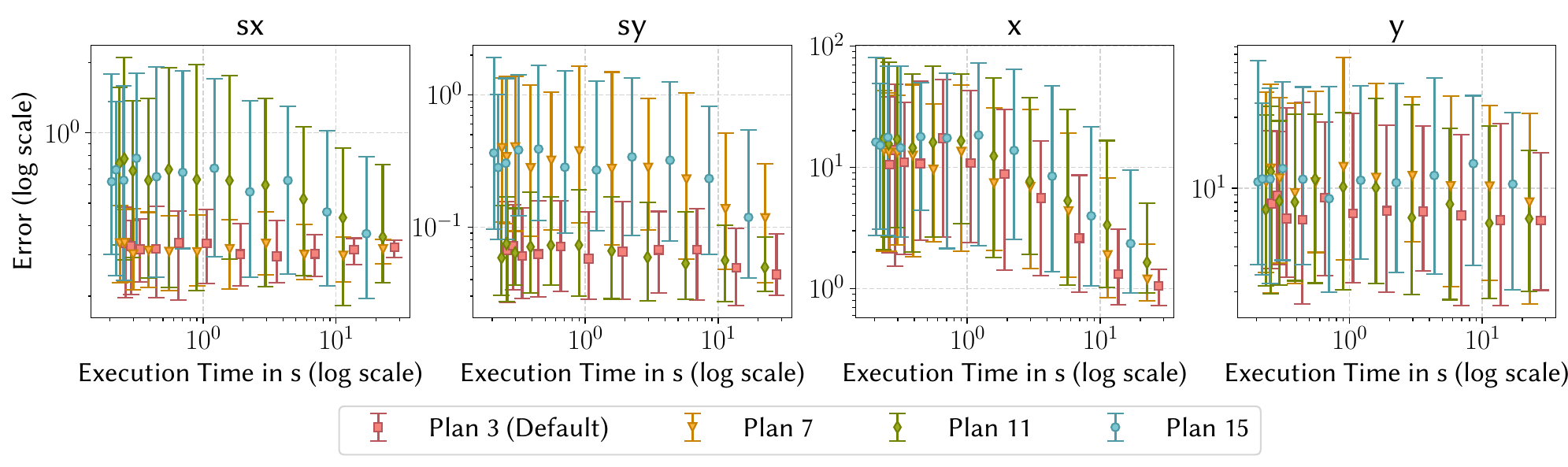}
    \caption{SSI.}
  \end{subfigure}%
  \\
  \begin{subfigure}[c]{1\textwidth}
    \centering
    \includegraphics[width=1\textwidth]{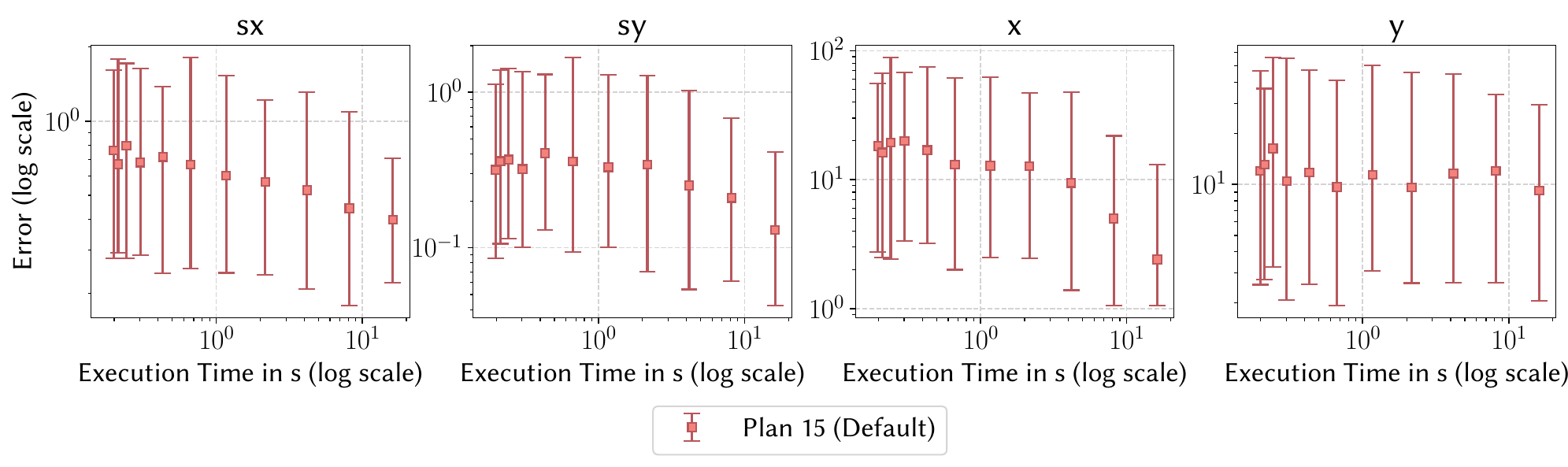}
    \caption{SSI.}
  \end{subfigure}%
  \\
  \begin{subfigure}[c]{1\textwidth}
    \centering
    \includegraphics[width=1\textwidth]{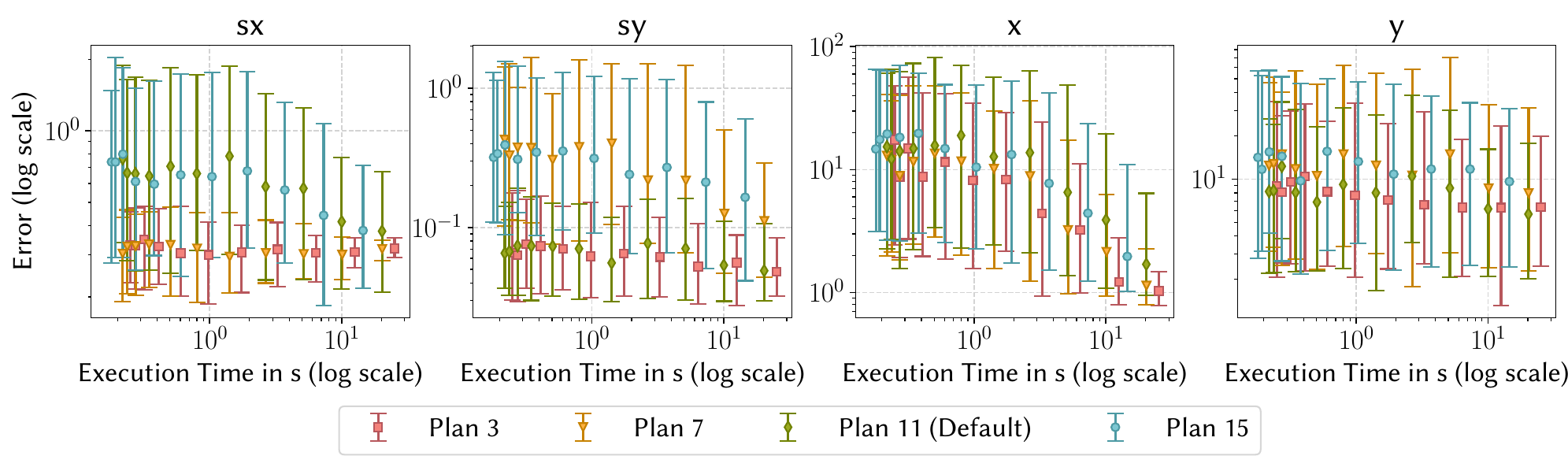}
    \caption{SMC w/ BP.}
  \end{subfigure}%
  \caption{\bRunner{}.}
  \label{fig:mh-performance-results-errorbar-runner}
\end{figure}

\begin{figure}[H]
  \centering
  \begin{subfigure}[c]{1\textwidth}
    \centering
    \includegraphics[width=0.5\textwidth]{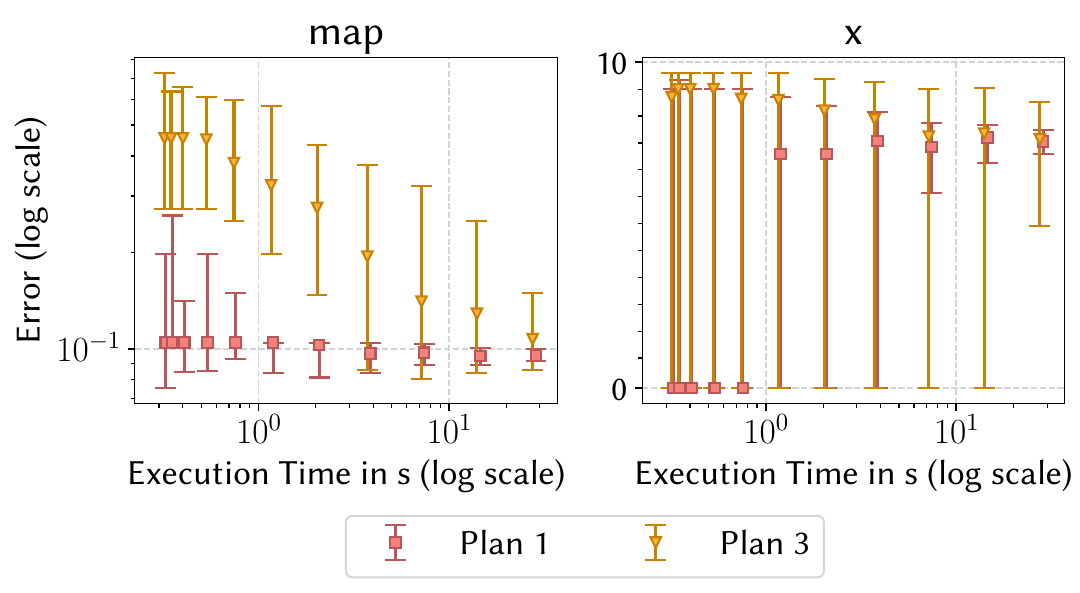}
    \caption{SSI. Plan 0 (Default) and Plan 2 time out for all particles.}
  \end{subfigure}%
  \\
  \begin{subfigure}[c]{1\textwidth}
    \centering
    \includegraphics[width=0.5\textwidth]{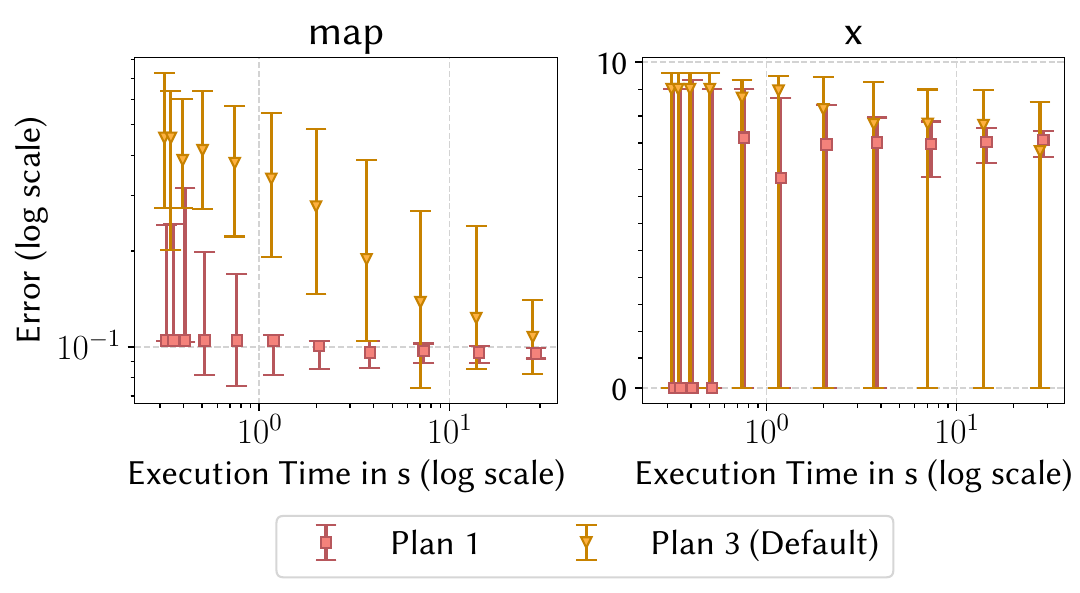}
    \caption{DS.}
  \end{subfigure}%
  \\
  \begin{subfigure}[c]{1\textwidth}
    \centering
    \includegraphics[width=0.5\textwidth]{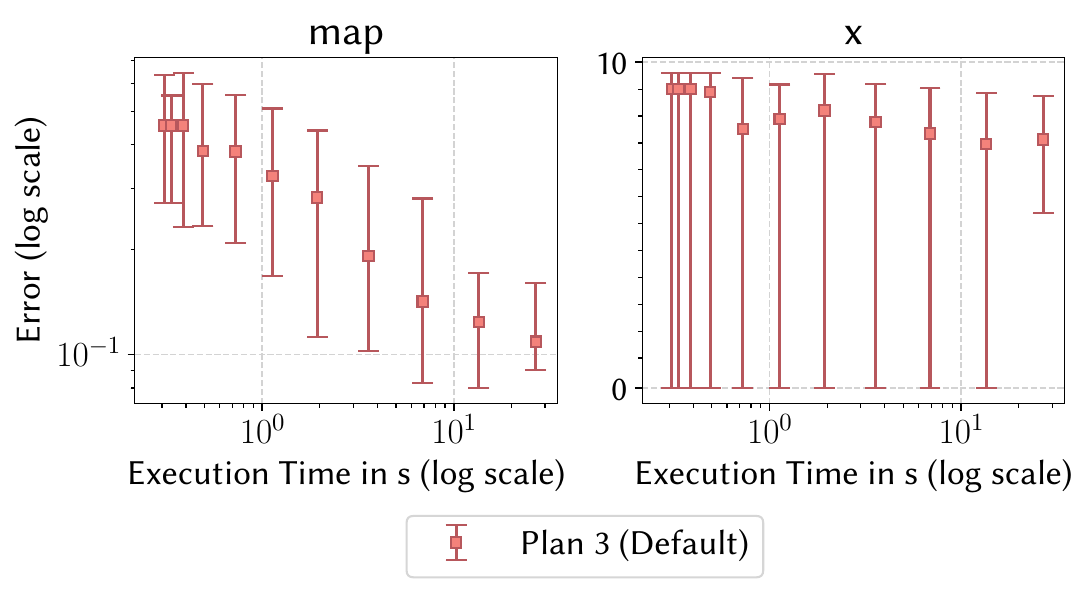}
    \caption{SMC w/ BP.}
  \end{subfigure}%
  \caption{\bSlam{}.}
  \label{fig:mh-performance-results-errorbar-slam}
\end{figure}

\begin{figure}[H]
  \centering
  \begin{subfigure}[c]{1\textwidth}
    \centering
    \includegraphics[width=0.5\textwidth]{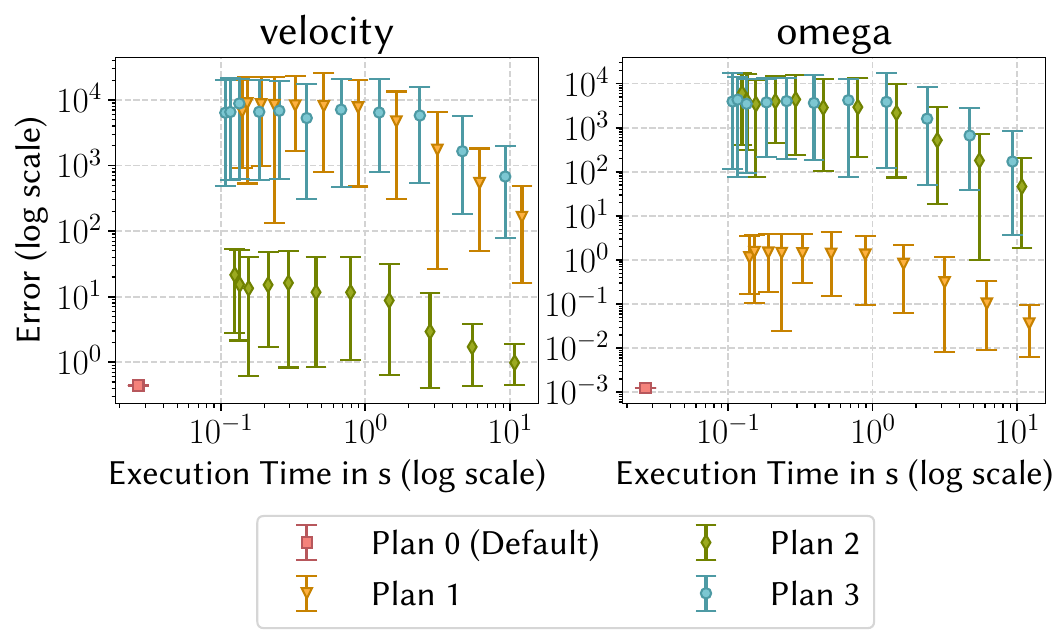}
    \caption{SSI. }
  \end{subfigure}%
  \\
  \begin{subfigure}[c]{1\textwidth}
    \centering
    \includegraphics[width=0.5\textwidth]{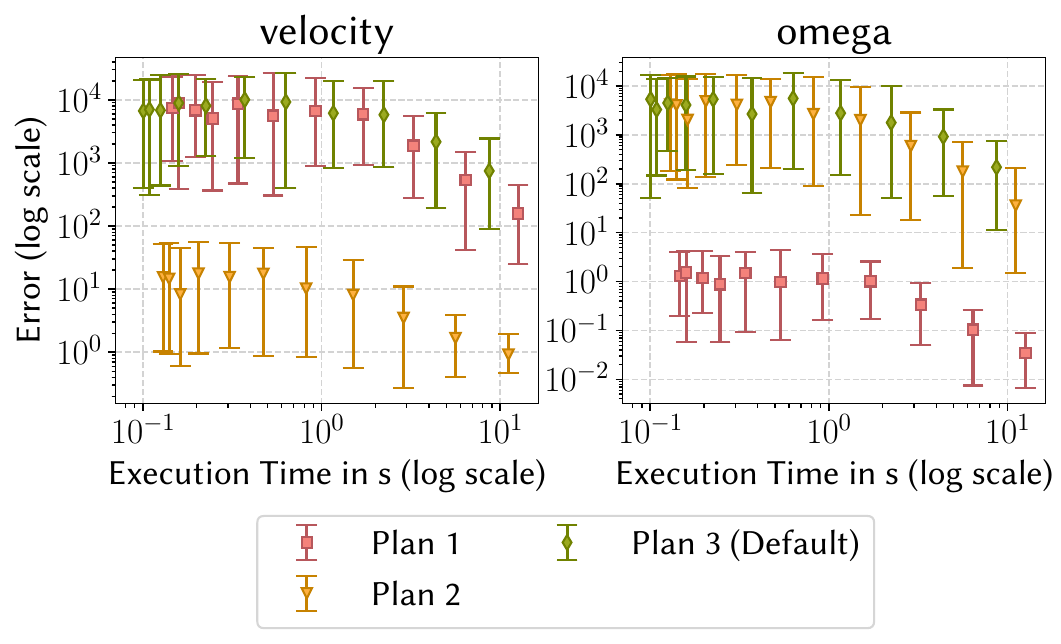}
    \caption{DS.}
  \end{subfigure}%
  \\
  \begin{subfigure}[c]{1\textwidth}
    \centering
    \includegraphics[width=0.5\textwidth]{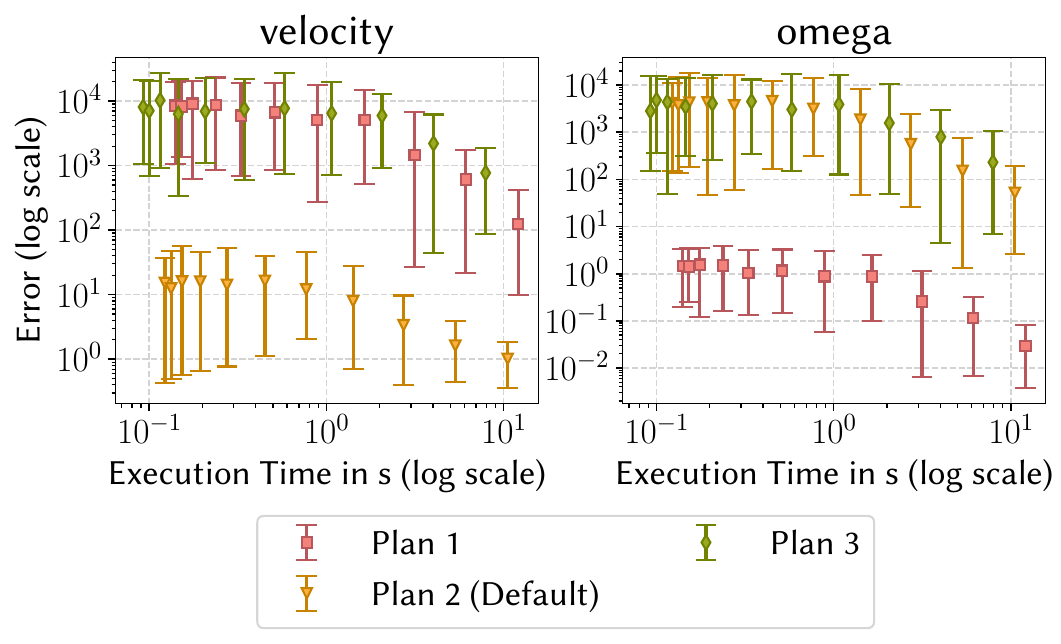}
    \caption{SMC w/ BP.}
  \end{subfigure}%
  \caption{\bWheels{}.}
  \label{fig:mh-performance-results-errorbar-wheels}
\end{figure}

\begin{figure}[H]
  \centering
  \begin{subfigure}[c]{1\textwidth}
    \centering
    \includegraphics[width=1\textwidth]{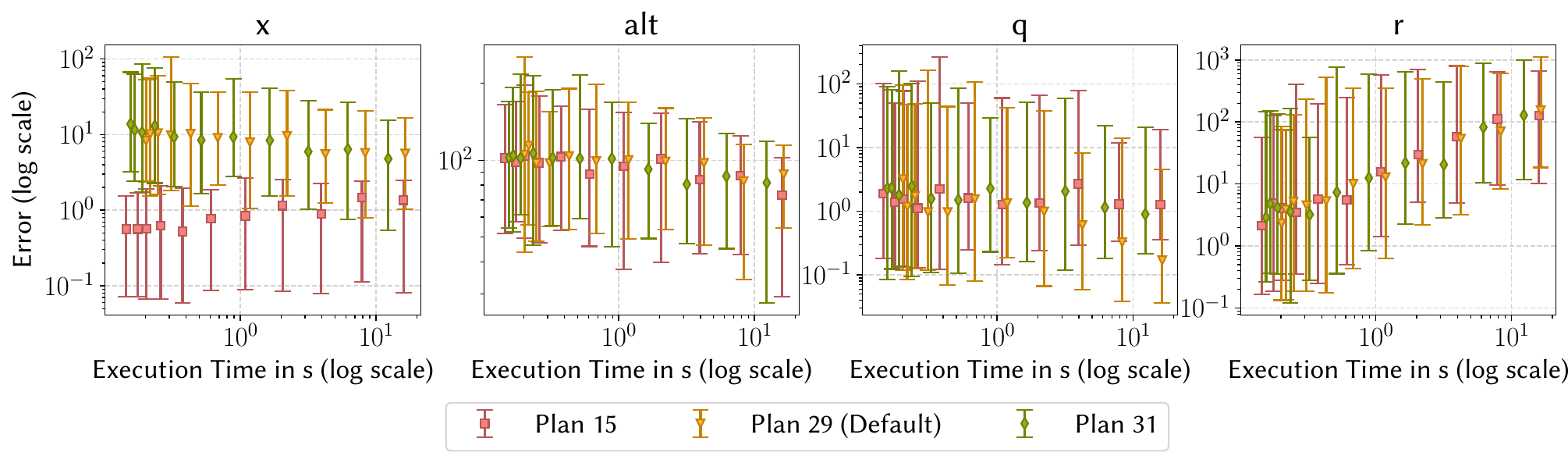}
    \caption{SSI.}
  \end{subfigure}%
  \\
  \begin{subfigure}[c]{1\textwidth}
    \centering
    \includegraphics[width=1\textwidth]{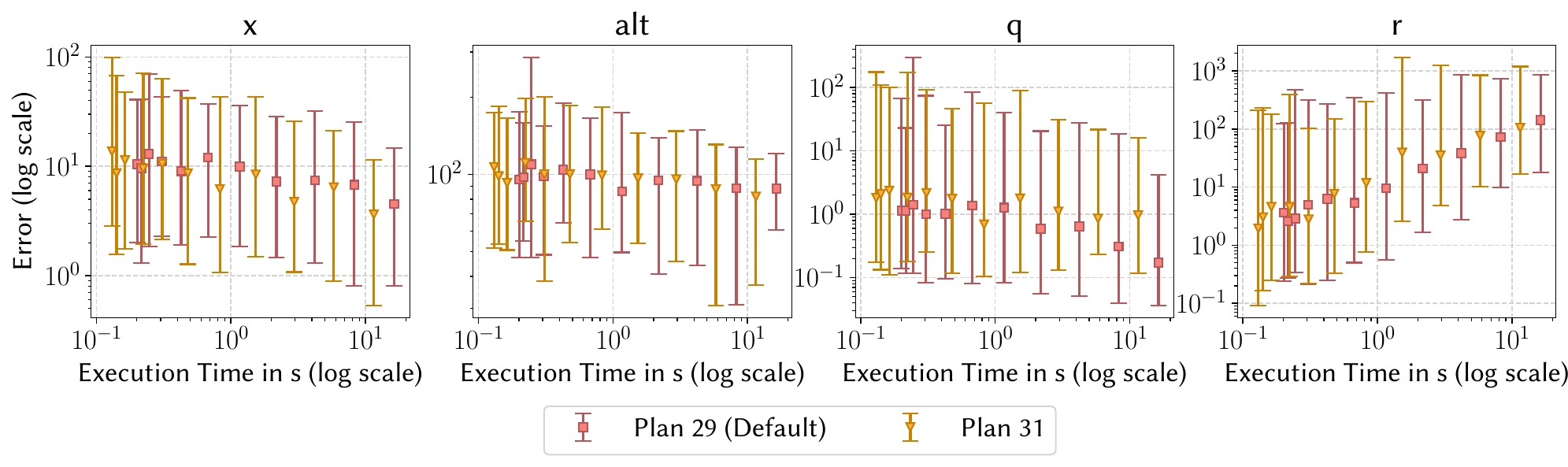}
    \caption{DS.}
  \end{subfigure}%
  \\
  \begin{subfigure}[c]{1\textwidth}
    \centering
    \includegraphics[width=1\textwidth]{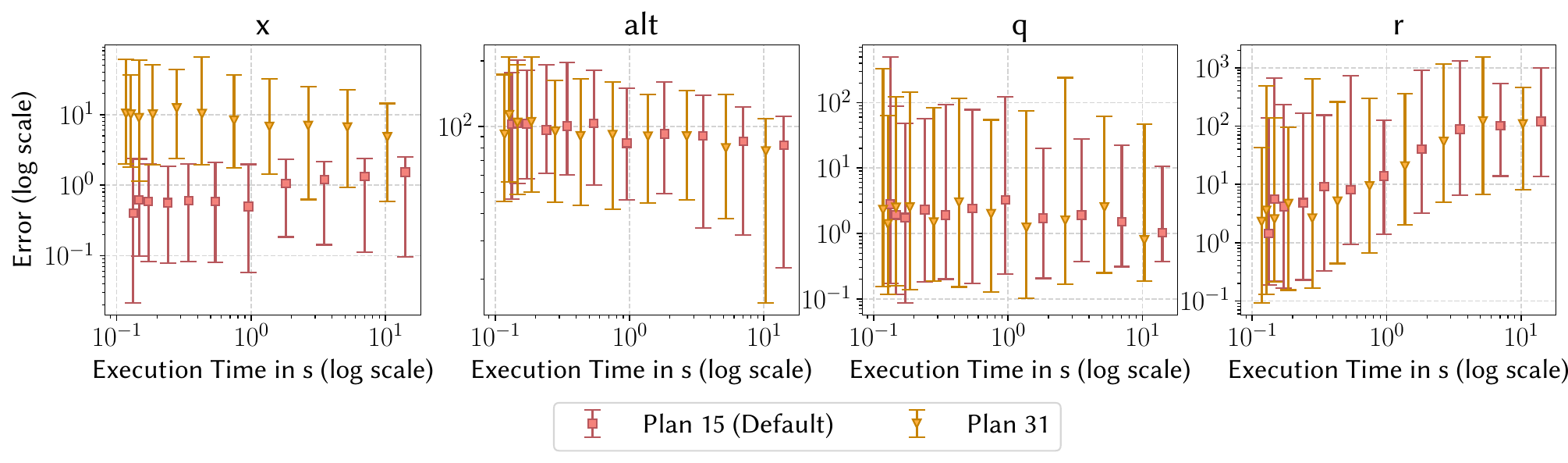}
    \caption{SMC w/ BP.}
  \end{subfigure}%
  \caption{\bAircraft{}. }
  \label{fig:mh-performance-results-errorbar-aircraft}
\end{figure}

\subsection{Execution Time to Reach Target Accuracy}
\begin{table}[H]
  \small
  \centering
  \caption{Best execution time by any satisfiable inference plan and the best execution time by the default inference plan for 90\% of the 100 executions to reach target accuracy, defined as the 90th percentile of error by the default inference plan using greatest number of particles evaluated that did not produce timeouts. Following \citet{atkinson2022semi} and \citet{baudart2020reactive}, reaching target accuracy is defined as $\log(P_{90\%}(\mit{loss})) - \log(\mit{loss}_{\mit{target}}) < 0.5$. -- indicates the default plan timeouts for all particle numbers. \textbf{Bold} marks when an alternative plan performs better than the default plan.}
  \begin{tabular}[h]{llrrrrrr}
    \toprule
    & & \multicolumn{2}{c}{\ssi} & \multicolumn{2}{c}{\ds} & \multicolumn{2}{c}{\bp} \\
     \cmidrule(lr){3-4} \cmidrule(lr){5-6} \cmidrule(lr){7-8}
    Benchmark & Variable & Best (s) & Default (s) & Best (s) & Default (s) & Best (s) & Default (s) \\
\midrule
\multirow{5}{*}{\bNoise{}}& \mkwm{x} & \textbf{0.82} & 11.75 & \textbf{17.68} & 47.89 & 0.82 & 0.82 \\[-0.4em]
& & \plan{3}& & \plan{7}& & \defaultplan{} \\
& \mkwm{q} & \textbf{0.75} & 1.0 & \textbf{0.77} & 1.05 & 6.03 & 6.03 \\[-0.4em]
& & \plan{7}& & \plan{7}& & \defaultplan{} \\
& \mkwm{r} & \textbf{0.75} & 1.0 & \textbf{0.77} & 1.05 & 6.03 & 6.03 \\[-0.4em]
& & \plan{7}& & \plan{7}& & \defaultplan{} \\
\midrule
\multirow{5}{*}{\bRadar{}}& \mkwm{x} & \textbf{0.12} & 6.07 & \textbf{4.59} & 5.81 & 0.11 & 0.11 \\[-0.4em]
& & \plan{15}& & \plan{31}& & \defaultplan{} \\
& \mkwm{q} & 6.07 & 6.07 & 5.81 & 5.81 & \textbf{0.1} & 6.61 \\[-0.4em]
& & \defaultplan{} & & \defaultplan{} & & \plan{31}\\
& \mkwm{r} & \textbf{0.11} & 0.14 & \textbf{0.1} & 0.13 & \textbf{0.09} & 0.12 \\[-0.4em]
& & \plan{31}& & \plan{31}& & \plan{31}\\
\midrule
\multirow{5}{*}{\bEnvnoise{}}& \mkwm{x} & \textbf{0.14} & 1.63 & \textbf{1.25} & 1.58 & 0.36 & 0.36 \\[-0.4em]
& & \plan{15}& & \plan{31}& & \defaultplan{} \\
& \mkwm{q} & \textbf{0.2} & 12.27 & 5.97 & 5.97 & 3.93 & 3.93 \\[-0.4em]
& & \plan{15}& & \defaultplan{} & & \defaultplan{} \\
& \mkwm{r} & \textbf{0.11} & 0.14 & \textbf{0.11} & 0.13 & 3.93 & 3.93 \\[-0.4em]
& & \plan{31}& & \plan{31}& & \defaultplan{} \\
\midrule
\multirow{3}{*}{\bOutlier{}}& \mkwm{outlier\_prob} & \textbf{0.08} & 0.1 & \textbf{0.08} & 0.1 & \textbf{0.07} & 0.09 \\[-0.4em]
& & \plan{7}& & \plan{7}& & \plan{7}\\
& \mkwm{xt} & \textbf{0.08} & 0.1 & \textbf{0.08} & 0.1 & 0.09 & 0.09 \\[-0.4em]
& & \plan{7}& & \plan{7}& & \defaultplan{} \\
\midrule
\multirow{3}{*}{\bOutlierheavy{}}& \mkwm{outlier\_prob} & 0.1 & 0.1 & \textbf{0.08} & 0.09 & 0.79 & 0.79 \\[-0.4em]
& & \defaultplan{} & & \plan{7}& & \defaultplan{} \\
& \mkwm{xt} & \textbf{0.08} & 0.1 & \textbf{1.7} & 4.17 & 0.07 & 0.07 \\[-0.4em]
& & \plan{7}& & \plan{7}& & \defaultplan{} \\
\midrule
\multirow{3}{*}{\bGtree{}}& \mkwm{a} & 0.01 & 0.01 & \textbf{0.06} & 1.19 & 0.01 & 0.01 \\[-0.4em]
& & \defaultplan{} & & \plan{1}& & \defaultplan{} \\
& \mkwm{b} & 0.01 & 0.01 & \textbf{0.06} & 1.19 & 0.01 & 0.01 \\[-0.4em]
& & \defaultplan{} & & \plan{1}& & \defaultplan{} \\
\midrule
\multirow{10}{*}{\bSlds{}}& \mkwm{trans\_prob0} & 0.16 & 0.16 & \textbf{0.13} & 0.16 & \textbf{0.11} & 0.12 \\[-0.4em]
& & \plan{67}& & \plan{127}& & \plan{127}\\
& \mkwm{trans\_prob1} & 0.16 & 0.16 & \textbf{0.13} & 0.16 & 0.12 & 0.12 \\[-0.4em]
& & \plan{67}& & \plan{127}& & \defaultplan{} \\
& \mkwm{obs\_noise0} & 0.16 & 0.16 & \textbf{0.13} & 0.16 & \textbf{0.11} & 0.12 \\[-0.4em]
& & \plan{67}& & \plan{127}& & \plan{127}\\
& \mkwm{obs\_noise1} & 0.16 & 0.16 & \textbf{0.13} & 0.16 & \textbf{0.11} & 0.12 \\[-0.4em]
& & \plan{67}& & \plan{127}& & \plan{127}\\
& \mkwm{s} & 0.16 & 0.16 & \textbf{0.13} & 0.16 & 6.91 & 6.91 \\[-0.4em]
& & \plan{67}& & \plan{127}& & \defaultplan{} \\
& \mkwm{x} & 0.16 & 0.16 & \textbf{0.13} & 0.16 & 0.12 & 0.12 \\[-0.4em]
& & \plan{67}& & \plan{127}& & \defaultplan{} \\
  \bottomrule
  \end{tabular}
  \label{tab:mh-comparison-results-full-1}
\end{table}

\begin{table}[H]
  \small
  \centering
  \caption{Best execution time by any satisfiable inference plan and the best execution time by the default inference plan for 90\% of the 100 executions to reach target accuracy, defined as the 90th percentile of error by the default inference plan using greatest number of particles evaluated that did not produce timeouts. Following \citet{atkinson2022semi} and \citet{baudart2020reactive}, reaching target accuracy is defined as $\log(P_{90\%}(\mit{loss})) - \log(\mit{loss}_{\mit{target}}) < 0.5$. -- indicates the default plan timeouts for all particle numbers. \textbf{Bold} marks when an alternative plan performs better than the default plan.}
  \begin{tabular}[h]{llrrrrrr}
    \toprule
    & & \multicolumn{2}{c}{\ssi} & \multicolumn{2}{c}{\ds} & \multicolumn{2}{c}{\bp} \\
     \cmidrule(lr){3-4} \cmidrule(lr){5-6} \cmidrule(lr){7-8}
    Benchmark & Variable & Best (s) & Default (s) & Best (s) & Default (s) & Best (s) & Default (s) \\
    \midrule
\multirow{7}{*}{\bRunner{}}& \mkwm{sx} & \textbf{0.24} & 0.26 & 0.2 & 0.2 & \textbf{0.18} & 0.22 \\[-0.4em]
& & \plan{7}& & \defaultplan{} & & \plan{15}\\
& \mkwm{sy} & \textbf{0.24} & 0.26 & 0.2 & 0.2 & 0.22 & 0.22 \\[-0.4em]
& & \plan{11}& & \defaultplan{} & & \defaultplan{} \\
& \mkwm{x} & 13.67 & 13.67 & 8.15 & 8.15 & \textbf{5.16} & 10.01 \\[-0.4em]
& & \defaultplan{} & & \defaultplan{} & & \plan{7}\\
& \mkwm{y} & \textbf{0.22} & 0.26 & 0.2 & 0.2 & \textbf{0.19} & 0.22 \\[-0.4em]
& & \plan{15}& & \defaultplan{} & & \plan{15}\\
\midrule
\multirow{3}{*}{\bWheels{}}& \mkwm{velocity} & 0.03 & 0.03 & \textbf{0.13} & 4.38 & 5.35 & 5.35 \\[-0.4em]
& & \defaultplan{} & & \plan{2}& & \defaultplan{} \\
& \mkwm{omega} & 0.03 & 0.03 & \textbf{0.15} & 8.7 & \textbf{0.14} & 10.55 \\[-0.4em]
& & \defaultplan{} & & \plan{1}& & \plan{1}\\
\midrule
\multirow{3}{*}{\bSlam{}}& \mkwm{map} & -- & -- & \textbf{0.32} & 3.67 & 1.94 & 1.94 \\[-0.4em]
& & & & \plan{1}& & \defaultplan{} \\
& \mkwm{x} & -- & -- & \textbf{0.32} & 7.05 & 1.13 & 1.13 \\[-0.4em]
& & & & \plan{1}& & \defaultplan{} \\
\midrule
\multirow{5}{*}{\bAircraft{}}& \mkwm{x} & \textbf{0.14} & 0.43 & 0.2 & 0.2 & 0.13 & 0.13 \\[-0.4em]
& & \plan{15}& & \defaultplan{} & & \defaultplan{} \\
& \mkwm{alt} & \textbf{0.14} & 0.2 & \textbf{0.13} & 0.2 & \textbf{0.12} & 0.13 \\[-0.4em]
& & \plan{15}& & \plan{31}& & \plan{31}\\
& \mkwm{q} & 4.24 & 4.24 & 16.36 & 16.36 & 1.82 & 1.82 \\[-0.4em]
& & \defaultplan{} & & \defaultplan{} & & \defaultplan{} \\
& \mkwm{r} & \textbf{0.14} & 0.2 & \textbf{0.13} & 0.2 & \textbf{0.12} & 0.13 \\[-0.4em]
& & \plan{15}& & \plan{31}& & \plan{31}\\
  \bottomrule
  \end{tabular}
  \label{tab:mh-comparison-results-full-2}
\end{table}

\subsection{Best Accuracy Given Runtime}

\begin{table}[H]
  \small
  \centering
  \caption{Ratio of the best accuracy achieved by any satisfiable inference plan with median runtime less than or equal to the median runtime of the default inference plan, aggregated across particle counts using geometric mean.  Accuracy is measured as the 90th percentile error. -- indicates the default plan timeouts for all particle counts. }
  \begin{tabular}[h]{llrrrrrr}
    \toprule
    & & \multicolumn{3}{c}{Gmean of Best Accuracy Ratio} \\
     \cmidrule(lr){3-5}
    Benchmark & Variable & \ssi & \ds & \bp \\
\midrule
\multirow{3}{*}{\bNoise{}}& \mkwm{x} & 167.82x & 1.28x & 1.01x \\
& \mkwm{q} & 396.85x & 6.79x & 1.04x \\
& \mkwm{r} & 412.56x & 5.57x & 1.04x \\
\midrule
\multirow{3}{*}{\bRadar{}}& \mkwm{x} & 7.79x & 1.21x & 1.09x \\
& \mkwm{q} & 1.9x & 2.12x & 2.49x \\
& \mkwm{r} & 3.2x & 2.47x & 1.3x \\
\midrule
\multirow{3}{*}{\bEnvnoise{}}& \mkwm{x} & 70.65x & 1.13x & 1.03x \\
& \mkwm{q} & 3.97x & 1.6x & 1.21x \\
& \mkwm{r} & 13.67x & 7.39x & 1.25x \\
\midrule
\multirow{2}{*}{\bOutlier{}}& \mkwm{outlier\_prob} & 1.44x & 1.17x & 1.04x \\
& \mkwm{xt} & 12383.78x & 1.21x & 1.02x \\
\midrule
\multirow{2}{*}{\bOutlierheavy{}}& \mkwm{outlier\_prob} & 4.9x & 5.9x & 1.03x \\
& \mkwm{xt} & 1.16x & 1.28x & 1.2x \\
\midrule
\multirow{2}{*}{\bGtree{}}& \mkwm{a} & 1.0x & 17.75x & 1.0x \\
& \mkwm{b} & 1.0x & 13.47x & 1.0x \\
\midrule
\multirow{6}{*}{\bSlds{}}& \mkwm{trans\_prob0} & 1.24x & 1.1x & 1.11x \\
& \mkwm{trans\_prob1} & 1.3x & 1.14x & 1.17x \\
& \mkwm{obs\_noise0} & 3.19x & 3.52x & 4.79x \\
& \mkwm{obs\_noise1} & 3.69x & 4.87x & 2.61x \\
& \mkwm{s} & 1.37x & 1.1x & 1.03x \\
& \mkwm{x} & 16.23x & 1.43x & 1.04x \\
\midrule
\multirow{4}{*}{\bRunner{}}& \mkwm{sx} & 1.04x & 1.06x & 3.46x \\
& \mkwm{sy} & 1.06x & 1.14x & 1.14x \\
& \mkwm{x} & 1.13x & 1.13x & 1.89x \\
& \mkwm{y} & 1.16x & 1.19x & 1.19x \\
\midrule
\multirow{2}{*}{\bWheels{}}& \mkwm{velocity} & 1.0x & 100.12x & 1.15x \\
& \mkwm{omega} & 1.0x & 413.01x & 583.57x \\
\midrule
\multirow{2}{*}{\bSlam{}}& \mkwm{map} & -- & 2.48x & 1.02x \\
& \mkwm{x} & -- & 4.78x & 1.08x \\
\midrule
\multirow{4}{*}{\bAircraft{}}& \mkwm{x} & 25.31x & 1.14x & 1.11x \\
& \mkwm{alt} & 1.15x & 1.11x & 1.07x \\
& \mkwm{q} & 1.44x & 1.72x & 1.59x \\
& \mkwm{r} & 5.65x & 3.33x & 9.08x \\
  \bottomrule
  \end{tabular}
  \label{tab:mh-comparison-results-best-acc-full}
\end{table}

\subsection{Analysis Evaluation}
We also evaluate the precision of the MH-\siren{} analysis using the same methodology as before. We enumerated all satisfiable and unsatisfiable inference plans, and measured if the analysis correctly determines the satisfiability of each plan. We also measure the time taken by the analysis averaged acorss all inference plans for each setting. The MH-\siren{} analysis is as precise as the \siren{} analysis. This is not surprising as the particle evaluation semantics are the same (except for the $\mfResample$ rule). 

\begin{table}[h]
  \small
  \centering
  \caption{Number of satisfiable inference plans identified by the inference plan satisfiability analysis out of the total number of satisfiable inference plans for each benchmark and algorithm. }
  \begin{tabular}[h]{lccccccccccc}
    \toprule
    & \multicolumn{11}{c}{Identified Satisfiable Plans / Satisfiable Plans}\\
    \cmidrule(lr){2-12}
    Algorithm & \bNoise{} & \bRadar{} & \emph{EnvN} & \emph{Out} & \emph{OutH} & \bGtree{} & \bSlds{} & \bRunner{} & \bWheels{} & \bSlam{} & \emph{Air} \\
\midrule
\ssi & 5/5 & 3/3 & 3/3 & 4/4 & 2/2 & 3/4 & 28/36 & 4/4 & 3/4 & 3/4 & 3/3 \\
\ds & 4/4 & 2/2 & 2/2 & 2/2 & 2/2 & 3/3 & 16/16 & 1/1 & 1/3 & 2/2 & 2/2 \\
\bp & 2/2 & 2/2 & 2/2 & 2/2 & 1/1 & 3/4 & 4/4 & 4/4 & 3/3 & 1/1 & 2/2 \\
\midrule
Total Plans & 8 & 32 & 32 & 8 & 8 & 4 & 128 & 16 & 4 & 4 & 32 \\
  \bottomrule
  \end{tabular}
  \label{tab:analysis-results-mh}
\end{table}

\begin{table}[h]
  \small
  \centering
  \caption{Time taken by the analysis averaged over all inference plans for each benchmark and algorithm in seconds. The analysis time for \bSlam{} with SSI is exceptionally long due to the intractable symbolic computation of full exact inference.}
  \begin{tabular}[h]{lccccccccccc}
    \toprule
    Algorithm & \bNoise{} & \bRadar{} & \emph{EnvN} & \emph{Out} & \emph{OutH} & \bGtree{} & \bSlds{} & \bRunner{} & \bWheels{} & \bSlam{} & \emph{Air} \\
\midrule
\ssi & 0.39 & 0.40 & 0.42 & 0.41 & 0.40 & 0.39 & 0.50 & 0.54 & 0.45 & 41.74 & 0.46 \\
\ds & 0.40 & 0.40 & 0.42 & 0.41 & 0.40 & 0.38 & 0.50 & 0.54 & 0.45 & 0.53 & 0.46 \\
\bp & 0.39 & 0.40 & 0.42 & 0.41 & 0.40 & 0.38 & 0.49 & 0.54 & 0.45 & 0.51 & 0.46 \\
  \bottomrule
  \end{tabular}
  \label{tab:analysis-time-mh}
\end{table}

\end{document}